\title{Testable Learning with Distribution Shift}
\author{
     Adam R. Klivans\thanks{\texttt{klivans@cs.utexas.edu}. Supported by NSF award AF-1909204 and the NSF AI Institute for Foundations of Machine Learning (IFML).} \\
	 UT Austin
	 \and Konstantinos Stavropoulos\thanks{\texttt{kstavrop@cs.utexas.edu}. Supported by NSF award AF-1909204, the NSF AI Institute for Foundations of Machine Learning (IFML) and by scholarships from Bodossaki Foundation and Leventis Foundation.} \\
	 UT Austin
     \and
    Arsen Vasilyan\thanks{\texttt{vasilyan@mit.edu}. Supported in part by NSF awards CCF-2006664, DMS-2022448, CCF-1565235, CCF-1955217,\\ CCF-2310818, Big George Fellowship and Fintech@CSAIL. Work done in part while visiting UT Austin.} \\
	 MIT
}
\date{\today}
\theoremstyle{plain}
\newtheorem{theorem}{Theorem}[section]
\newtheorem{lemma}[theorem]{Lemma}
\newtheorem{corollary}[theorem]{Corollary}
\newtheorem{proposition}[theorem]{Proposition}
\newtheorem{fact}[theorem]{Fact}
\newtheorem*{claim}{Claim}
\theoremstyle{definition}
\newtheorem{definition}[theorem]{Definition}
\theoremstyle{remark}
\newtheorem{remark}[theorem]{Remark}
\numberwithin{equation}{section}
\def\A{\mathcal{A}}
\def\B{\mathcal{B}}
\def\C{\mathcal{C}}
\def\D{\mathcal{D}}
\def\L{\mathcal{L}}
\def\X{\mathcal{X}}
\def\Y{\mathcal{Y}}
\def\S{\mathbb{S}}
\newcommand*{\N}{{\mathbb{N}}}
\newcommand*{\R}{{\mathbb{R}}}
\let\eps\epsilon
\let\phi\varphi
\DeclareMathOperator*{\pr}{\mathbb{P}}
\DeclareMathOperator*{\ex}{\mathbb{E}}
\DeclareMathOperator*{\E}{\mathbb{E}}
\DeclareMathOperator*{\argmax}{arg\,max}
\DeclareMathOperator*{\argmin}{arg\,min}
\DeclarePairedDelimiter{\norm}{\|}{\|}
\let\hat\widehat
\DeclareMathOperator{\poly}{poly}
\DeclareMathOperator{\sign}{sign}
\DeclareMathOperator{\ind}{\mathbbm{1}}
\newcommand{\cube}[1]{\{\pm 1\}^{#1}}
\newcommand{\ignore}[1]{} 
\newcommand*{\w}{\mathbf{w}}
\newcommand*{\vv}{\mathbf{v}}
\newcommand*{\x}{\mathbf{x}}
\newcommand*{\z}{\mathbf{z}}
\newcommand*{\Dgeneric}{\D}
\newcommand*{\Dtrainjoint}{\D_{\X\Y}^{\mathrm{train}}}
\newcommand*{\Dtestjoint}{\D_{\X\Y}^{\mathrm{test}}}
\newcommand*{\Dtrainmarginal}{\D_{\X}^{\mathrm{train}}}
\newcommand*{\Dtestmarginal}{\D_{\X}^{\mathrm{test}}}
\newcommand*{\Dgenericjoint}{\D_{\X\Y}}
\newcommand*{\Dgenericmarginal}{\D_{\X}}
\newcommand*{\error}{\mathrm{err}}
\newcommand{\train}{\mathrm{train}}
\newcommand{\test}{\mathrm{test}}
\newcommand{\optcommon}{\lambda}
\newcommand{\optcommontrain}{\optcommon_{\train}}
\newcommand{\optcommontest}{\optcommon_{\test}}
\newcommand{\concept}{f}
\newcommand{\coptcommon}{\concept^*}
\newcommand{\Slabelled}{S}
\newcommand{\Sunlabelled}{X}
\newcommand{\Strain}{\Slabelled_\train}
\newcommand{\Stest}{\Sunlabelled_\test}
\newcommand{\pup}{p_{\mathrm{up}}}
\newcommand{\pdown}{p_{\mathrm{down}}}
\newcommand{\Gauss}{\mathcal{N}}
\newcommand{\Unif}{\mathrm{Unif}}
\newcommand{\pbound}{B}
\newcommand{\mslack}{\Delta}
\newcommand{\degbound}{k}
\newcommand{\mindex}{\alpha}
\newcommand{\moment}{\mathrm{M}}
\newcommand{\momentempirical}{\hat{\moment}}
\newcommand{\mtrain}{m_{\train}}
\newcommand{\mtest}{m_{\test}}
\newcommand{\mconc}{m_{\mathrm{conc}}}
\newcommand{\mgen}{m_{\mathrm{gen}}}
\newcommand{\Polys}{\mathcal{P}}
\newcommand{\event}{\mathcal{E}}
\newcommand{\nhalfspaces}{\ell}
\newcommand{\dtsize}{s}
\newcommand{\aczsize}{s}
\newcommand{\aczdepth}{\ell}
\newcommand{\disc}{\mathrm{disc}}
\newcommand{\rademacher}{\mathfrak{R}}
\newcommand{\rademacherempirical}{\hat{\mathfrak{R}}}
\newcommand{\T}{\mathcal{T}}
\newcommand{\e}{\mathbf{e}}
\newcommand{\paramneighborhood}{\mathcal{V}}
\newcommand{\Sphere}{\mathbb{S}}
\newcommand{\vc}{\mathrm{VC}}
\newcommand{\disagreementregion}{\mathbf{D}}
\newcommand{\disagreementcoef}{\theta}
\newcommand{\Xregion}{\mathbf{X}}
\newcommand{\majority}{\mathrm{maj}}
\newcommand{\conv}{\mathrm{conv}}
\newcommand{\vopt}{\vv^*}
\newcommand{\badregion}{\mathbf{Z}}
\newcommand{\goodevent}{\mathcal{G}}
\begin{document}
\maketitle

\begin{abstract}%
  We revisit the fundamental problem of learning with distribution
shift, in which a learner is given labeled samples from training
distribution $\Dgeneric$, {\em unlabeled} samples from test distribution $\Dgeneric'$
and is asked to output a classifier with low test error.  The standard
approach in this setting is to bound the loss of a classifier in terms
of some notion of distance between $\Dgeneric$ and $\Dgeneric'$.  These distances, however, seem difficult to compute and do not lead to efficient algorithms.

We depart from this paradigm and define a new model called {\em testable learning with distribution shift}, where we {\em can} obtain provably efficient
algorithms for certifying the performance of a classifier on a test
distribution.  In this model, a learner outputs a classifier with low
test error whenever samples from $\Dgeneric$ and $\Dgeneric'$ pass an associated test;
moreover, the test must accept (with high probability) if the marginal of $\Dgeneric$ equals the marginal of $\Dgeneric'$.
We give several positive results for learning well-studied concept
classes such as halfspaces, intersections of halfspaces, and decision
trees when the marginal of $\Dgeneric$ is Gaussian or uniform on $\cube{d}$.  Prior to our
work, no efficient algorithms for these basic cases were known without strong assumptions on $\Dgeneric'$.

For halfspaces in the realizable case (where there exists a halfspace consistent with both $\Dgeneric$ and $\Dgeneric'$), we combine a moment-matching approach with ideas from active learning to simulate an efficient oracle for
 estimating disagreement regions. To extend to the non-realizable setting, we apply recent work from testable (agnostic) learning.  More generally, we prove that any
 function class with low-degree $\L_2$-sandwiching polynomial
 approximators can be learned in our model.  Since we require $\L_2$- sandwiching (instead of the usual $\L_1$ loss), we cannot
 directly appeal to convex duality and instead apply constructions
 from the pseudorandomness literature to obtain the required
 approximators.  We also provide lower bounds to show that the guarantees we obtain on the performance of our output hypotheses are best possible up to constant factors, as well as a separation showing that realizable learning in our model is incomparable to (ordinary) agnostic learning.
\end{abstract}
\thispagestyle{empty}
%\newpage
%\tableofcontents
%\thispagestyle{empty}
\newpage
\setcounter{page}{1}

\section{Introduction}
Mitigating distribution shift remains one of the major challenges of machine learning.  Training distributions can deviate significantly from test distributions, and pre-trained models are commonly deployed without a precise
understanding of these differences.  In such cases, a model may have poor performance with potentially dangerous consequences.  For example, several recent studies in the AI/healthcare community highlight the lack of generalization among many AI models trained to detect disease (e.g., skin cancer or pneumonia), often due to distribution shift.  As such, developing best practices for using these models in a clinical setting remains a vexing and
difficult problem \cite{zech2018variable, wong2021external, ternov2022generalizability}.

The computational landscape of traditional supervised learning--- where
training sets and tests are drawn from the same distribution--- is by
now well understood.  There is a rich literature of efficient
algorithms and computational hardness results for broad sets of
concept classes and distributions.  In contrast, little is known in
terms of efficient algorithms for classification in the context of
distribution shift or domain adaptation.  The most common approach is
to prove a generalization bound in terms of some notion of distance
between $\D$ and $\D'$ \cite{ben2006analysis, ben2010theory,mansour2009domadapt}.  These distances,
however, involve an enumeration of all functions in the underlying
concept class and seem difficult to compute.  Other recent work
requires oracles for empirical risk minimization \cite{goldwasser2020beyond,kalai2021efficient} or the existence of distribution-free reliable 
learners, which are believed to require superpolynomial time for even simple concept classes (e.g., reliably learning conjunctions is known to be harder than PAC learning DNF formulas) \cite[Section 4.2]{kalai2021efficient}.

In this work we define a new model called {\em testable learning with
distribution shift} (TDS learning) and show that this model does admit
efficient algorithms for several well-studied concept classes and
distributions.  Inspired by recent work in testable learning \cite{rubinfeld2022testing,gollakota2022moment,gollakota2023efficient, gollakota2023tester}, we allow a learner to reject unless ${\cal
D}$ and ${\cal D'}$ pass an efficiently computable test.  Whenever the test accepts, the learner outputs a
classifier that is assured to have low error with respect to ${\cal
D'}$.  Further, we require that the test accept with high probability whenever the marginal of ${\cal D}$
equals the marginal of ${\cal D'}$.   This approach allows us to take no assumptions on ${\cal D'}$
whatsoever and still provide meaningful guarantees.

It is easy to see that TDS learning generalizes the traditional PAC
model of learning, and, moreover, TDS learning seems considerably more
challenging.  For example, even an algorithm to amplify
the success probability of a TDS learner is nontrivial, since we do not get to see labeled examples
from ${\cal D'}$ (we show how to do this in \Cref{appendix:success-probability-boosting}).  It is also tempting to apply property testing algorithms in this setting to ``detect" when ${\cal D}$ is ``close" to ${\cal D}'$, but even for simple cases, distribution testing requires an exponential (in the dimension) number of samples (see e.g. \cite{canonne2022topics}).  While testable learning and TDS learning both encounter similar issues, they are fundamentally distinct models. Specifically, the realizable setting, where there exists a classifier with zero train and test loss, is a trivial case in testable learning. We further discuss separations among these models in \Cref{sec: related work}.

\subsection{Our Results}
\label{sec: setup and results}
Here we formally define TDS learning and summarize our main results.  For readability, we have placed some notation and basic definitions in \Cref{appendix:notation}. 
\vspace{-10pt}
%\paragraph{Agnostic Setting.} 
\paragraph{Learning Setup.}
Let $\C$ be a function class over $\R^d$ and $\Dgeneric$ be a distribution over $\R^d$. Suppose $\mathcal{A}$ is given as input a set $\Strain$ consisting of i.i.d. examples from $\Dgeneric$ labelled by some $f \in \C$, together with a set of i.i.d. unlabelled examples $\Stest$ from some distribution $\Dtestmarginal$ over $\R^d$. The algorithm $\mathcal{A}$ is  allowed to either output REJECT or (ACCEPT, $\hat{f}$) for some concept $\hat{f}$. The algorithm $\mathcal{A}$ is a \textbf{TDS-learning algorithm} for $\C$ under distribution $\Dgeneric$ if it satisfies the following two properties:
\begin{enumerate}
    \item \textbf{Soundness.} With probability $1-\delta$, if the algorithm $\mathcal{A}$ outputs (ACCEPT, $\hat{f}$), then hypothesis $\hat{f}$ satisfies $\pr_{\x \in \Dtestmarginal}[f(\x) \neq \hat{f}(\x)]\leq \epsilon$.
    \item \textbf{Completeness.} If $\Dtestmarginal=\Dgeneric$, then with probability $1-\delta$, the algorithm $\mathcal{A}$ outputs (ACCEPT, $\hat{f}$).
\end{enumerate}
\paragraph{TDS Learning: the Agnostic Setting.} 
Sometimes the training data or the testing data cannot be captured perfectly by any function in the function class $\C$ and, instead, follow labeled distributions $\Dtrainjoint,\Dtestjoint$, where the marginal of $\Dtrainjoint$ is $\Dtrainmarginal=\Dgeneric$ and $\Dtrainjoint,\Dtestjoint$ are otherwise arbitrary. We extend our setup to apply in this setting as well. To this end, a key quantity is the \textbf{smallest sum} of expected training error and expected test error among all functions in the concept class $\C$, i.e. $\optcommon = \min_{\concept\in\C}\error(\concept;\Dtrainjoint)+\error(\concept;\Dtestjoint)$, where $\error(\concept;\Dgenericjoint) = \pr_{(\x,y)\sim\Dgenericjoint}[y\neq\concept(\x)]$. We denote this quantity as $\optcommon$, and note that it is standard in the domain adaptation literature (see, e.g., \cite{ben2006analysis,blitzer2007learning,ben2010theory,david2010impossibility}). 

With this definition at hand, we modify the soundness condition to require that with probability $1-\delta$, if the algorithm $\A$ outputs (ACCEPT, $\hat{f}$), then hypothesis $\hat{f}$ satisfies $\pr_{(\x,y) \sim \Dtestjoint }[y \neq \hat{f}(\x)]\leq O(\optcommon) + \epsilon$. In \Cref{theorem:error-lower-bound-easy}, we show that a dependence of $\Omega(\optcommon)$ is unavoidable.

\begin{proposition}[Informal]
    No TDS learning algorithm can have an error guarantee better than $\Omega(\optcommon)+\eps$.
\end{proposition}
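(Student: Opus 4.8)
The plan is to exploit the one-sidedness of the model: a TDS learner receives \emph{labeled} examples from the training distribution but only \emph{unlabeled} examples from the test distribution. Hence, if two instances share the training distribution and the test marginal and differ only in the (hidden) conditional label law of the test, the learner's observable input has exactly the same distribution in both, so its randomized output $\hat f$ is the same random variable in both. I will construct such a pair of instances --- already for halfspaces under the standard Gaussian, one of the settings studied here --- for which $\optcommon$ equals a tunable parameter $\beta$ while the two test labelings disagree on a set of mass $2\beta$. Since both test marginals equal the training marginal, \emph{completeness} forces the learner to accept, and \emph{soundness} then forces its output to have small test error against \emph{both} labelings, which is impossible.

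\emph{Construction.} Take $d=1$ (the case $d>1$ reduces to this by looking at the first coordinate) and $\Dgeneric=\Gauss(0,1)$. Fix $\beta\in(0,\tfrac12)$, pick $t>0$ with $\pr_{g\sim\Gauss(0,1)}[0<g<t]=\beta$, and set the halfspaces $f_0=\sign(x)$, $f_1=\sign(x-t)$, $f_2=\sign(x+t)$. Let $\D^\train$ be the law of $(g,f_0(g))$ with $g\sim\Gauss(0,1)$, and for $j\in\{1,2\}$ let $\D^\test_j$ be the law of $(g,f_j(g))$ with $g\sim\Gauss(0,1)$; all three marginals equal $\Gauss(0,1)$. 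By the triangle inequality for symmetric differences (and taking $f_0$ as witness), $\optcommon_j=\min_f\big(\pr_g[f\neq f_0]+\pr_g[f\neq f_j]\big)=\pr_g[f_0\neq f_j]=\beta$, whereas $\pr_g[f_1(g)\neq f_2(g)]=\pr_g[-t<g<t]=2\beta$. Note that each of $\D^\train,\D^\test_1,\D^\test_2$ is individually realizable by a halfspace, so the bound is not an artifact of non-realizability.

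\emph{Finishing.} Suppose toward a contradiction that some TDS learner $\A$ guarantees, upon accepting, test error at most $\phi(\optcommon)+\eps$ with probability $1-\delta$, where $\phi(x)=o(x)$ as $x\to0$. Couple the runs of $\A$ on instances $1$ and $2$ through the same samples and internal coins --- legitimate by the observation above --- and let $\hat f$ be the common random output. On instance $j$ the test marginal equals the training marginal, so by completeness $\pr[\A\text{ accepts}]\ge1-\delta$, and with soundness this gives $\error(\hat f;\D^\test_j)\le\phi(\beta)+\eps$ with probability $\ge1-2\delta$; a union bound gives probability $\ge1-4\delta$ that this holds for $j=1$ and $j=2$ simultaneously. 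But for \emph{any} fixed $\hat f$, since $\hat f(g)$ must disagree with at least one of $f_1(g),f_2(g)$ wherever $f_1(g)\neq f_2(g)$,
\[
  \error(\hat f;\D^\test_1)+\error(\hat f;\D^\test_2)
  =\pr_g[\hat f\neq f_1]+\pr_g[\hat f\neq f_2]
  \ \ge\ \pr_g[f_1\neq f_2]=2\beta.
\]
Taking $\delta$ below an absolute constant, on an event of positive probability we obtain $2\beta\le2\phi(\beta)+2\eps$, i.e.\ $\phi(\beta)\ge\beta-\eps$; running $\A$ with $\eps=\beta/2$ then yields $\phi(\beta)\ge\beta/2$ for all small $\beta$, contradicting $\phi(x)=o(x)$. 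Repeating the computation with $\phi(x)=cx$ shows the sharper statement $c\ge1-\eps/\beta$, so no constant strictly below $1$ is attainable as the coefficient of $\optcommon$. Lastly, the additive $\eps$ cannot be removed either: taking $\D^\test=\D^\train$ makes $\optcommon=0$ and collapses the model to ordinary PAC learning of halfspaces under $\Gauss(0,1)$, for which test error below $\eps$ is impossible using a number of samples bounded independently of $\eps$.

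\emph{Main obstacle.} The one delicate step is the coupling: one must verify that the learner's input law is genuinely identical in the two worlds --- the training labels are literally $f_0(g)$ in both, and the test labels are never observed --- so that a single random $\hat f$ can be used for both instances; after that the argument is only the pointwise bound $\ind[\hat f\neq f_1]+\ind[\hat f\neq f_2]\ge\ind[f_1\neq f_2]$ together with a union bound. A secondary point is matching the construction to whatever fixed marginal the learner targets: for $\Gauss$ one just tunes the single threshold $t$; for $\Unif(\cube d)$ one realizes $f_0,f_1,f_2$ as thresholds of a single fixed weighted sum of coordinates, matching the prescribed disagreement masses up to the granularity of the hypercube and hence perturbing the bound only in lower-order terms.
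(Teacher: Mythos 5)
Your argument is correct, and it takes a genuinely different route from the paper's. The paper fixes a concept $\concept$, sets the training labels to $\concept$, and contrasts a test distribution with labels $\concept$ (so $\optcommon=0$) against one with labels $-\concept$ (so $\optcommon=1$). Because the target error bound is very different in the two cases, the paper cannot derive a contradiction directly from the output $\hat\concept$; instead it constructs an auxiliary procedure $\T$ that estimates the training error of $\hat\concept$ and accepts or rejects accordingly, then observes that $\T$'s input is independent of the hidden test labels yet $\T$'s acceptance probability must differ between the two worlds. You, by contrast, keep $\optcommon$ equal (to $\beta$) in both worlds by choosing test labelings $f_1$ and $f_2$ symmetric about the training halfspace $f_0$, and then obtain a direct impossibility: the pointwise bound $\ind[\hat f\neq f_1]+\ind[\hat f\neq f_2]\ge\ind[f_1\neq f_2]$ means the sum of the two test errors is at least $2\beta=2\optcommon$, so after coupling the inputs (legitimate because the test labels are never observed) a single accepted $\hat f$ cannot satisfy the target error bound against both. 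This avoids the auxiliary distinguisher entirely, and also makes it transparent that each labeled distribution is individually realizable by a halfspace, so the phenomenon is not an artifact of non-realizability. The tradeoff is scope: the paper's argument works for any $\C$ closed under complement, while yours is a concrete instantiation for halfspaces (with a sketch for the hypercube) --- which is of course sufficient for the informal proposition. One small bookkeeping point: under the coupling the accept/reject decision and $\hat f$ are literally the same in both worlds, so you only need three bad events (one completeness failure, two soundness failures), giving probability $\ge 1-3\delta$ rather than $1-4\delta$; this is harmless but worth tightening. Your quantitative conclusion (coefficient of $\optcommon$ arbitrarily close to $1$ as $\eps/\optcommon\to 0$) matches the paper's $1-2\eta$ with $\eps<\eta/2$.
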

%See \Cref{appendix:notation} for a fully formal definition of TDS learning in the agnostic setting.

\paragraph{Results.}
We show that TDS learning can be achieved efficiently for a number of natural high-dimension function classes.
These include 
%the function classes of
halfspaces, decision trees, intersections of halfspaces and low-depth formulas. See \Cref{tbl: comparison of our work and previous work} for the full list.

% \textbf{todo: add blurb before the table.}

% TEXT TEXT TEXT TEXT TEXT  TEXT TEXT TEXT TEXT TEXT TEXT TEXT TEXT TEXT TEXT TEXT TEXT  TEXT TEXT TEXT TEXT TEXT TEXT TEXT TEXT TEXT TEXT TEXT TEXT  TEXT TEXT TEXT TEXT TEXT TEXT TEXT TEXT TEXT TEXT TEXT TEXT  TEXT TEXT TEXT TEXT TEXT TEXT TEXT TEXT TEXT TEXT TEXT TEXT  TEXT TEXT TEXT TEXT TEXT TEXT TEXT TEXT TEXT TEXT TEXT TEXT  TEXT TEXT TEXT TEXT TEXT TEXT TEXT
% TEXT TEXT TEXT TEXT TEXT  TEXT TEXT TEXT TEXT TEXT TEXT TEXT TEXT TEXT TEXT TEXT TEXT  TEXT TEXT TEXT TEXT TEXT TEXT TEXT TEXT TEXT TEXT TEXT TEXT  TEXT TEXT TEXT TEXT TEXT TEXT TEXT
% TEXT TEXT TEXT TEXT TEXT  TEXT TEXT TEXT TEXT TEXT TEXT TEXT TEXT TEXT TEXT TEXT TEXT  TEXT TEXT TEXT TEXT TEXT TEXT TEXT TEXT TEXT TEXT TEXT TEXT  TEXT TEXT TEXT TEXT TEXT TEXT TEXT

\begin{table*}[!htbp]\begin{center}
\begin{tabular}{c p{3.0cm}  c  c  p{2.5cm}} 
 \hline
 & \textbf{Function class} &  \textbf{Training Distribution} & \textbf{TDS Setting} &\textbf{Run-time }      \\ \midrule
% Homogeneous halfspaces  & Standard Gaussian  & Realizable  &$\poly\left(\frac{d}{\epsilon}\right)$ (Proposition \ref{theorem:improved tds for halfspaces}) \\ \midrule
1 & Homogeneous halfspaces  & Isotropic Log-Concave  & Agnostic  &
$\poly\left(d/\epsilon\right)$ (Theorem \ref{theorem:improved tds for halfspaces agnostic}) \\ \midrule
2 & General halfspaces & Standard Gaussian & Realizable  & $d^{O\left(\log 1/\epsilon\right)}\;\;\;$ (Theorem \ref{thm: improved TDS learning for general halfspaces})  \\ \midrule
3 & General halfspaces & \begin{tabular}{c}
      Standard Gaussian \\
    Uniform on $\{\pm 1\}^d$ 
\end{tabular}  & Agnostic  & $d^{\tilde{O}\left(1/\epsilon^2\right)}\;\;\;$ (\Cref{corollary:tds-dts-intersections-of-halfspaces})  \\ \midrule
4 & Intersection of $\nhalfspaces$ halfspaces & 
\begin{tabular}{c}
     Standard Gaussian \\
    Uniform on $\{\pm 1\}^d$ 
\end{tabular}  & Agnostic  & $d^{\widetilde{O}({\nhalfspaces}^6/\eps^2})\;\;\;$ (\Cref{corollary:tds-dts-intersections-of-halfspaces})  \\ \midrule
5 & Decision trees of size $s$ & Uniform on $\{\pm 1\}^d$ & Agnostic  & $d^{O(\log(s/\epsilon))}$ (\Cref{corollary:tds-dtrees})  \\  \midrule
6 & Formulas 
of size $s$, depth $\ell$ 
 & Uniform on $\{\pm 1\}^d$ & Agnostic  & $d^{\sqrt{{\aczsize}} \cdot O\left(\log(\aczsize/\eps)\right)^{\frac{5\aczdepth}{2}}}$ (\Cref{corollary:tds-acz})  \\ 
 \bottomrule
\end{tabular}
\end{center}
\caption{Our TDS learning results for various function classes. Since agnostic TDS learning is more general than realizable TDS learning, algorithms for the agnostic setting also apply to the realizable setting.}
\label{tbl: comparison of our work and previous work}
\end{table*}

Given the abundance of positive results, it is natural to ask whether TDS learning can always be achieved efficiently for any function class $\mathcal{F}$ that can be efficiently PAC-learned under a distribution $\D$. 
%If this were the case, the computational landscape of TDS learning would be the same as the well-understood computational landscape of PAC learning.
We answer this question in the negative by proving separations between TDS learning and PAC learning. Our separations hold for the natural and well-studied function classes of monotone functions over $\{\pm 1\}^d$ and convex sets over $\R^d$ (under uniform distribution on $\{\pm 1\}^d$ and the standard Gaussian distribution respectively). Even though for these function classes there are well-known PAC-learning algorithms \cite{bshouty1996fourier,klivans2008learning} that run in time $2^{\tilde{O}(\sqrt{d} \poly(1/\epsilon))}$, we show that any TDS-learning algorithm for these function classes needs to run in time $2^{\Omega(d)}$. 
%herefore, the study of the computational landscape of TDS-learning algorithms is a research direction in its own right, independent from the computational landscape of PAC learning.
% TEXT TEXT TEXT TEXT TEXT  TEXT TEXT TEXT TEXT TEXT TEXT TEXT TEXT TEXT TEXT TEXT TEXT  TEXT TEXT TEXT TEXT TEXT TEXT TEXT TEXT TEXT TEXT TEXT TEXT  TEXT TEXT TEXT TEXT TEXT TEXT TEXT TEXT TEXT TEXT TEXT TEXT  TEXT TEXT TEXT TEXT TEXT TEXT TEXT TEXT TEXT TEXT TEXT TEXT  TEXT TEXT TEXT TEXT TEXT TEXT TEXT TEXT TEXT TEXT TEXT TEXT  TEXT TEXT TEXT TEXT TEXT TEXT TEXT
% TEXT TEXT TEXT TEXT TEXT  TEXT TEXT TEXT TEXT TEXT TEXT TEXT TEXT TEXT TEXT TEXT TEXT  TEXT TEXT TEXT TEXT TEXT TEXT TEXT TEXT TEXT TEXT TEXT TEXT  TEXT TEXT TEXT TEXT TEXT TEXT TEXT
% TEXT TEXT TEXT TEXT TEXT  TEXT TEXT TEXT TEXT TEXT TEXT TEXT TEXT TEXT TEXT TEXT TEXT  TEXT TEXT TEXT TEXT TEXT TEXT TEXT TEXT TEXT TEXT TEXT TEXT  TEXT TEXT TEXT TEXT TEXT TEXT TEXT

% \begin{itemize}
%     \item Disagreement-Based TDS -- Homogeneous Halfspaces
%     \item Beyond Disagreement -- General Halfspaces
%     \item Moment Matching + L2 sandwiching -- Intersections, Halfspaces, Decision Trees and AC0
%     \item Separation between agnostic learning and realizable TDS (Monotone, Convex) vs. (Parity, General Halfspaces)
%     \item Lower bound for error in agnostic: $\ge \lambda(1-c)+\eps$ for any $c>0$.
%     \item PQ $\Rightarrow$ TDS
%     \item Boosting success prob
%     \item Homogeneous halfspaces agnostic TDS, efficient
% \end{itemize}

\subsection{Techniques}
Here we summarize the technical ideas that we use to develop the TDS learning algorithms in \Cref{tbl: comparison of our work and previous work}.
\paragraph{Moment Matching/Sandwiching Polynomials.} We present a general approach for obtaining TDS learning algorithms for a wide variety of function classes via a \textbf{moment matching} approach. In brief, the algorithm for this approach is as follows:

\begin{itemize}
    \item Estimate all the degree-$k$ moments of $\Dtestmarginal$ up to a high accuracy. REJECT if some of the moments are not close to the corresponding moments of $\D$.
    
    \item Otherwise, fit the best degree-$k$ polynomial $p$ on the training data, and output (ACCEPT, $\sign(p)$).
\end{itemize}

This algorithm above runs in time $d^{O(k)}$, and we show that this algorithm is a valid TDS-learning algoithm for the wide class of functions whose $\L_2$\textbf{-sandwiching degree} is bounded by $k$, which we define as follows: For an approximation parameter $\epsilon$,
the $\L_2$-sandwiching degree of a function $f$ is the smallest degree for a pair of polynomials $\pdown$ and $\pup$ satisfying: i) $\pdown(\x) \leq f(\x) \leq \pup(\x)$ for all $\x$ in the learning domain and ii) ${\E_{\x \sim \Dgeneric}[(\pup(\x)-\pdown(\x))^2]}\leq \epsilon$.%removed the square root, since this is what we use

%The related notion of $\L_1$-sandwiching degree is ubiquitous in pseudorandomness (see e.g. \cite{braverman2008polylogarithmic, diakonikolas2010bounded, gopalan2010fooling}), as it is equivalent to the function $f$ being fooled by $k$-wise independent distributions \cite{bazzi2009polylogarithmic}. 

The related notion of $\L_1$-sandwiching was recently used to obtain several results in testable learning \cite{gollakota2022moment}. These results, however, do not seem to apply to TDS learning. 
%At a high level, we can only ``transfer" a classifier's loss from ${\cal D}$ to ${\cal D'}$ via polynomials, and the absolute loss cannot be computed as a low-degree polynomial. We discuss this more in Section XXX.   %More specifically, a bound on the $\L_1$-sandwiching degree and low-degree moment matching would imply the existence of low-degree polynomials with low test error. However, in testable learning with distribution shift, we do not have access to these polynomials, since test labels are not available. 
Instead, we prove a ``transfer lemma" showing that we can relate the test error under ${\Dtestmarginal}$ of a polynomial to its training error under $\D$ by leveraging the simple fact that the squared loss between two polynomials is itself a polynomial.  As such, low-degree moment matching between the training and test marginals ensures that the squared loss between any pair of low-degree polynomials is approximately preserved (\Cref{lemma:transfer-lemma}). Absolute loss cannot be computed by a low-degree polynomial, ruling out this type of transfer lemma based on $\L_1$-sandwiching. 

%This allows us to argue as follows. If a polynomial $p$ fits the function $f$ well on the training dataset, then it will also fit well in $\L_2$ norm the bounding polynomials $\pup$ and $\pdown$. Moment matching allows us to conclude that $p$ also continues to fit well $\pup$ and $\pdown$ also under the testing distribution $\Dtestmarginal$. Finally, the sandwiching property of $\pup$ and $\pdown$ allows us to conclude that $p$ will fit the function $f$ well under the testing distribution $\Dtestmarginal$. 

%This allows us to show that a polynomial that fits the training data well will 

%And $\L_2$-sandwiching reduces bounding the test error of a polynomial to bounding its squared loss with some other low-degree polynomial.

Even though we need the more stringent property of small $\L_2$-sandwiching degree, we show that constructions from works in the pseudorandomness literature that explicitly construct $\L_1$-sandwiching polynomials  (e.g., \cite{diakonikolas2010bounded} and \cite{gopalan2010fooling}) can be extended to bound the $\L_2$-sandwiching degree. This allows us to obtain efficient TDS learning algorithms for the classes of intersections of halfspaces, decision trees and small-depth formulas (see lines 3-6 in Table \ref{tbl: comparison of our work and previous work}). We also note that this technique yields TDS learning algorithms not only in the realizable setting, but also in the agnostic setting.    

\paragraph{Beyond Moment Matching.}
It is a natural question whether it is possible to beat the moment-matching approach. We answer this question in the affirmative by showing that for the class of halfspaces this is indeed possible. It is a standard fact that one needs polynomials of degree $\tilde{\Omega}(1/\epsilon^2)$ to $\epsilon$-approximate halfspaces up to $\L_1$ error better than $\epsilon$ under the standard Gaussian distribution. Therefore the moment-matching approach requires a run-time of at least $d^{\tilde{\Omega}(1/\epsilon^2)}$ to TDS learn halfspaces under the standard Gaussian. We overcome this obstacle and give a TDS learning algorithm for halfspaces that runs in time $d^{O(\log(\frac{1}{\epsilon}))}$ (Line 2, Table \ref{tbl: comparison of our work and previous work}). 

One ingredient we use to design our algorithm is what we call TDS learning via the \textbf{disagreement region} method. Suppose we are able to recover the parameters of a halfspace $f^*$ up to some accuracy $\beta$. Then, for some points $\x$ in $\R^d$ we will know $f^*(\x)$ with certainty, but for some others we will not. We say that the latter points form the disagreement region, and it gets smaller as $\beta$ decreases. The idea is to (i) use the training data to recover the parameters of halfspace $f^{*}$ up to such high accuracy $\beta$ that the probability that a Gaussian sample falls into the disagreement region is very small (ii) make sure that the recovered halfspace $\hat{f}$ generalizes on the testing dataset by checking that only a small fraction of the testing dataset falls into the uncertainty region. We note that this notion of disagreement region is also widely used in active learning (see discussion in \Cref{section:disagreement-tds}).

Although the disagreement region method gives an efficient algorithm for homogeneous (i.e. origin-centered) halfspaces (\Cref{theorem:improved tds for halfspaces}), it fails for general halfspaces. Indeed, in \Cref{sec: beyond TDS} we show that for general halfspaces under the standard Gaussian distribution the disagreement region method requires $2^{\Omega(d)}$ samples.
We design a $d^{O(\log(1/\epsilon))}$-time TDS learning algorithm for general halfspaces under the Gaussian distribution by \textbf{combining} the moment matching approach with the disagreement region approach: %Specifically, 
%\vspace{-8pt}
\begin{itemize}
    \item Suppose the halfspace $f^{*}$ is not too biased, i.e. among $d^{O(\log(1/\epsilon))}$ training samples we see labels with values of both $+1$ and $-1$. We show that the parameters of such a halfspace can be recovered up to a very high accuracy using only $d^{O(\log(1/\epsilon))}$ additional training samples. This allows us to leverage the disagreement region method to achieve TDS learning.
   % \vspace{-6pt}
    \item Otherwise, the halfspace $f^*$ is highly biased and it almost always takes the same label $L$ on a Gaussian input. For such halfspaces there is no hope to recover their parameters with $d^{O(\log(1/\epsilon)}$ samples. Yet,
    we show that using the moment-matching approach with degree parameter $k$ of only $O(\log(1/\epsilon))$ allows us to certify that even under the test distribution $\Dtestmarginal$ the halfspace $f^*$ will be biased and very likely to take the label $L$. Therefore, a predictor $\hat{f}$ that assigns the label $L$ to all points in $\R^d$ will generalize.
\end{itemize}
%\vspace{-10pt}
\paragraph{Techniques from Testable Learning.} Additionally, in the setting of \textbf{agnostic} TDS learning we give an algorithm for the class of homogeneous (i.e. origin-centered) halfspaces under any isotropic log-concave distribution (see line 1 in \Cref{tbl: comparison of our work and previous work}). We achieve this using techniques from testable learning \cite{gollakota2022moment,gollakota2023efficient}. 
The first phase of our TDS learning algorithm uses an approximate agnostic learning algorithm for halfspaces \cite{awasthi2017power, diakonikolas2020non} in order to obtain a vector $\hat{\vv}$, such that the homogeneous halfspace defined by $\hat{\vv}$ has error $O(\lambda)+\epsilon$ in the training dataset. Since the training distribution $\D$ is isotropic and log-concave, this means that the angle between $\hat{\vv}$ and the vector $\vv$, defining the halfspace with optimal combined error on the training and testing datasets, is also at most $O(\lambda)+\epsilon$. Finally, we apply one of the core procedures from \cite{gollakota2022moment,gollakota2023efficient} in order to ensure that every halfspace defined by a vector $\vv'$ that forms an angle of at most $O(\lambda)+\epsilon$ with $\hat{\vv}$ agrees on at least $1-O(\lambda)-\epsilon$ fraction of the testing dataset with the halfspace defined by the vector $\hat{\vv}$. This allows us to certify that the halfspace defined by the vector $\hat{\vv}$ will indeed generalize to the testing distribution. 
Note that we can use tools from testable learning to remove the assumption on the training marginal; the algorithm would instead run a test that accepts when both $\Dtrainmarginal$ and $\Dtestmarginal$ equal the target $\Dgeneric$ without any assumptions on $\Dtrainjoint$ and $\Dtestjoint$ (see also \Cref{remark:tds-and-testable}). For clarity of exposition, we postpone formal statements composing the two models to future work.
%\vspace{-8pt}

\subsection{Related Work}
\label{sec: related work}

\paragraph{Domain Adaptation.} The field of {\em domain adaptation} has received significant attention over the past two decades (see \cite{ben2006analysis,blitzer2007learning,mansour2009domadapt,ben2010theory,david2010impossibility,redko2020survey} and references therein). Similar to our learning setting, domain adaptation considers scenarios where the learner has access to labeled training and unlabeled test examples and is asked to output a hypothesis with low test error without, however, being allowed to reject. \cite{ben2006analysis,blitzer2007learning,mansour2009domadapt} bound the test error of an empirical risk minimizer of training data by a sum of the parameter $\optcommon$ and some notion of distance between the training and test marginals (discrepancy or $d_A$ distances) which is statistically efficient to compute using unlabeled test and training examples. This implies a statistically efficient TDS learning algorithm with error $2\optcommon+\eps$ (\Cref{appendix:sc}). All known algorithms for computing discrepancy distance or $d_A$ distance, however,  require exponential time even for basic classes such as halfspaces and decision trees.
 By allowing the learning algorithm to reject, we design computationally efficient TDS learning algorithms with error $O(\optcommon)+\eps$ without explicitly computing the discrepancy distance.

\paragraph{PQ Learning.} Among the learning models that capture settings with distribution shift, PQ learning (see \cite{goldwasser2020beyond} and \cite{kalai2021efficient}) is most relevant to TDS learning. In PQ learning, the learner has access to labeled training data and unlabeled test data and must output a classifier $h$ and a set $\Xregion$.  The classifier needs to minimize the following two criteria simultaneously: (1) the test error of the hypothesis $h$ on test data points that fall into the region $\Xregion$ (in other words, $\Xregion$ is the region where one is confident in the predictions of the hypothesis $h$ for test data) and (2) the probability that a training example falls into $\Xregion$. \cite{goldwasser2020beyond} show that any concept class that can be agnostically learned in the distribution-free setting can be PQ learned. \cite{kalai2021efficient} improve this reduction by showing that PQ learning is equivalent to distribution-free reliable agnostic learning (see \cite{kalai2012reliable}).  The complexity of reliable learning is known to be ``in between" agnostic learning and PAC learning.  In particular, reliably learning conjunctions implies PAC learning DNF formulas.  In \Cref{sec: PQ implies TDS}, we show that PQ learning actually implies TDS learning.
%and is therefore a more difficult model to obtain efficient algorithms.

%that TDS learning can be reduced to PQ learning and, hence, inherit all of the positive results on PQ learning.

%3. Agnostic Learning (Lower Bounds ...)

\paragraph{Testable Learning.}
Although conceptually our definition of TDS learning is inspired by the recent line of work in testable learning \cite{rubinfeld2022testing,gollakota2022moment,gollakota2023efficient, gollakota2023tester}, the two frameworks address very different issues. Testable learning does not address distribution shift, as it assumes that the training and testing distributions are the same distribution $\Dtrainjoint$. What the framework of testable learning does (indirectly) test is whether $\Dtrainmarginal$ satisfies a certain assumption (e.g. Gaussianity) in order to make sure the learning algorithm gives a hypothesis $\hat{f}$ that satisfies the agnostic learning guarantee.

As noted in \cite{rubinfeld2022testing}, in the realizable setting one can trivially satisfy  the definition of testable learning by drawing a fresh set of samples and using them to validate the hypothesis  $\hat{f}$. Due to this, existing work on testable learning \cite{rubinfeld2022testing,gollakota2022moment,gollakota2023efficient, gollakota2023tester} focuses on the agnostic setting, where such validation procedure cannot be applied (see \cite{rubinfeld2022testing} for further detail). In contrast to this, even in the realizable setting, no such validation procedure exists for TDS learning, as indicated by our separations between PAC learning and TDS learning for monotone functions and convex sets (see \Cref{sec: setup and results}). In fact, for monotone functions and convex sets, realizable TDS learning is harder than agnostic learning as well.

Furthermore, there are cases where realizable TDS learning is easier than agnostic learning (and, therefore, easier than testable agnostic learning). Here are two examples:

%Furthermore, the computational landscape of realizable TDS learning differs strongly from that of testable learning. Here are two examples:
\begin{enumerate}
    \item Due to statistical query lower bounds and cryptographic hardness results \cite{goel2020statistical,diakonikolas2020near,diakonikolas2021optimality,diakonikolas2023near}, the run-time required to agnostically learn a halfspace under the standard Gaussian distribution is believed to be $d^{\Omega(1/\epsilon^2)}$. In contrast to this, in this work we show that realizable TDS learning of halfspaces with respect to the Gaussian distribution can be achieved using only $d^{O(\log 1/ \epsilon))}$ run-time.
    \item The agnostic learning of parity functions, even under the uniform distribution on $\{\pm 1\}^d$, is believed to require $2^{\Omega (\frac{d}{\poly\log d})}$ time.
    In strong contrast with this, the class of parity functions can be TDS-learned in the realizable setting using only $\poly(d/\epsilon)$ time under any distribution over $\{\pm 1\}^d$. This follows from the PQ-learning algorithm of \cite{kalai2021efficient}, together with the connection between PQ learning and TDS learning (\Cref{sec: PQ implies TDS}). 
\end{enumerate}
%Overall, in these natural settings TDS learning can be achieved much faster than agnostic learning (and therefore also faster than testable learning).
Overall, we conclude that realizable TDS learning is incomparable to regular agnostic learning. In particular, there are examples where realizable TDS learning is easier than testable agnostic learning. Moreover, realizable TDS learning is harder than PAC learning, where distributional assumptions can be verified through validation.% which is equivalent to realizable testable learning.%, but it is not clear whether there are cases where testable agnostic learning is 

\section*{Acknowledgements.}
We thank Aravind Gollakota for insightful discussions during early stages of this project and for feedback on a draft of this manuscript. We also thank Varun Kanade for helpful discussions regarding PQ-learning.

\section{Technical Overview}
\subsection{TDS Learning of Homogeneous Halfspaces}

We provide an efficient TDS learner for the class of homogeneous halfspaces over $\R^d$ with respect to any given isotropic log-concave distribution that achieves error $O(\optcommon)+\eps$, by applying results from prior work in the literature of testable learning (see \cite{gollakota2023efficient,gollakota2023tester}) and agnostic learning (see \cite{daniely2015ptas,awasthi2017power,diakonikolas2020non}). We provide the following theorem and a proof sketch. The full proof can be found in \Cref{appendix:tds-homogeneous-agnostic}.

\begin{theorem}[Agnostic TDS learning of Halfspaces]\label{theorem:improved tds for halfspaces agnostic}
    Let $\C$ be the class of origin-centered halfspaces over $\R^d$ and $C>0$ a sufficiently large universal constant. Let $\A,\T$ be as defined in \Cref{proposition:agnostic-learning-halfspaces,proposition:testably-bounding-halfspace-disagreement}. Let $m_\A$ be the sample complexity of $\A(\eps/C,\delta/4)$ and $m_\T = \frac{Cd^4}{\eps^2\delta}$. Then, there is an algorithm (\Cref{algorithm:improved for halfspaces agnostic}) that, given inputs $\Strain,\Stest$ of sizes $|\Strain| \ge m_\A$ and $|\Stest| \ge m_\T$ is a TDS learning algorithm for $\C$ w.r.t. any isotropic log-concave distribution $\Dgeneric$ with error $O(\optcommon)+\eps$ and run-time $\poly(d,\frac{1}{\eps})\log(\frac{1}{\delta})$, where is the accuracy parameter and $\delta$ is the failure probability.
\end{theorem}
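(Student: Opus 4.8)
The plan is to instantiate the three-stage recipe from the technical overview: (i) extract a competitive direction $\hat{\vv}$ from the labeled training data via the approximate agnostic halfspace learner $\A$; (ii) use isotropic log-concavity of $\Dgeneric$ to show that $\hat{\vv}$ is angularly close to the direction $\vv$ of the halfspace minimizing the combined train-plus-test error; and (iii) run the testable disagreement-bounding procedure $\T$ on the unlabeled test sample to certify that \emph{every} halfspace lying in a small angular ball around $\hat{\vv}$ — in particular the one defined by $\vv$ — disagrees with $\sign(\hat{\vv}\cdot\x)$ on only an $O(\optcommon)+\eps$ fraction of $\Dtestmarginal$. The algorithm outputs $(\text{ACCEPT},\sign(\hat{\vv}\cdot\x))$ iff $\T$ accepts, and REJECT otherwise.

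For stage (i), I would run $\A$ of \Cref{proposition:agnostic-learning-halfspaces} with parameters $(\eps/C,\delta/4)$ on $\Strain$ (of size $\ge m_\A$). With probability $\ge 1-\delta/4$ it returns a unit vector $\hat{\vv}$ with $\error(\sign(\hat{\vv}\cdot\x);\Dtrainjoint)\le O(\opttrain)+\eps$, where $\opttrain:=\min_h \error(h;\Dtrainjoint)$ over origin-centered halfspaces $h$. Since the combined-objective minimizer $\coptcommon=\sign(\vv\cdot\x)$ satisfies $\error(\coptcommon;\Dtrainjoint)\le\optcommon$, we get $\opttrain\le\optcommon$, hence $\error(\sign(\hat{\vv}\cdot\x);\Dtrainjoint)\le O(\optcommon)+\eps$. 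For stage (ii), since both $\sign(\hat{\vv}\cdot\x)$ and $\coptcommon$ have $\Dtrainjoint$-error $O(\optcommon)+\eps$, the triangle inequality gives $\pr_{\x\sim\Dgeneric}[\sign(\hat{\vv}\cdot\x)\ne\sign(\vv\cdot\x)]\le O(\optcommon)+\eps$; and since under an isotropic log-concave distribution the disagreement probability of two origin-centered halfspaces is $\Theta$ of the angle between their defining vectors, this yields $\angle(\hat{\vv},\vv)\le O(\optcommon)+\eps$ — a bound derived from the training data alone, hence valid for arbitrary $\Dtestmarginal$. The algorithm sets the angular radius $\rho:=C'(\hat\eta+\eps)$, where $\hat\eta$ is the empirical training error of $\sign(\hat{\vv}\cdot\x)$ on $\Strain$; by uniform convergence over halfspaces (VC dimension $d$, with $|\Strain|$ polynomial in $d,1/\eps,\log(1/\delta)$, folded into $m_\A$), for a suitable constant $C'$ one has simultaneously $\angle(\hat{\vv},\vv)\le\rho$ and $\rho=O(\optcommon)+\eps$.

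For stage (iii), I would run $\T$ of \Cref{proposition:testably-bounding-halfspace-disagreement} on $\Stest$ (of size $\ge m_\T=\tfrac{Cd^4}{\eps^2\delta}$) with inputs $\hat{\vv}$ and $\rho$. The guarantee of $\T$ is twofold: (soundness) whenever $\T$ accepts, it certifies $\pr_{\x\sim\Dtestmarginal}[\sign(\vv'\cdot\x)\ne\sign(\hat{\vv}\cdot\x)]\le O(\rho)+\eps$ for every unit $\vv'$ with $\angle(\vv',\hat{\vv})\le\rho$; (completeness) if $\Dtestmarginal=\Dgeneric$, then $\T$ accepts with probability $\ge 1-\delta/4$, because under the target log-concave distribution nearby halfspaces genuinely disagree on an $O(\rho)$ fraction and $\T$ verifies this using only low-order statistics of $\Stest$. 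Overall completeness then follows by a union bound over the $\delta/4$ failure events of stages (i) and (iii) and the uniform-convergence event of stage (ii). For soundness, condition on $\T$ accepting and apply its certificate to $\vv'=\vv$: this gives $\pr_{\x\sim\Dtestmarginal}[\coptcommon(\x)\ne\sign(\hat{\vv}\cdot\x)]\le O(\rho)+\eps=O(\optcommon)+\eps$; combining with $\error(\coptcommon;\Dtestjoint)\le\optcommon$ via one more triangle inequality yields $\pr_{(\x,y)\sim\Dtestjoint}[y\ne\sign(\hat{\vv}\cdot\x)]\le O(\optcommon)+\eps$, which is exactly the agnostic TDS soundness guarantee after rescaling $\eps$. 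The running time is dominated by $\A$ and $\T$, both $\poly(d,1/\eps)\log(1/\delta)$.

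\textbf{Main obstacle.} The hard part will be the bookkeeping at the seam between $\A$ and the angle argument: $\A$ is only competitive against the best \emph{training} halfspace, whereas the halfspace we must be angularly close to, $\coptcommon$, is only guaranteed good for the \emph{combined} error; the chain $\opttrain\le\error(\coptcommon;\Dtrainjoint)\le\optcommon$ reconciles these, and every subsequent triangle inequality and every $O(\cdot)$ must be tracked so that the final error stays $O(\optcommon)+\eps$ rather than blowing up. A secondary subtlety is that the radius $\rho$ fed to $\T$ must be an \emph{observable} overestimate of $\angle(\hat{\vv},\vv)$ — hence defined through the empirical quantity $\hat\eta$ rather than the unknown $\optcommon$ — and one must verify that the slack introduced this way, together with the $O(\rho)\to O(\optcommon)$ blow-up inside $\T$, only affects universal constants.
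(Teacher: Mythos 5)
Your proposal follows the paper's proof essentially verbatim: run the agnostic learner of \Cref{proposition:agnostic-learning-halfspaces} on the training set, convert the resulting training-error bound into an angular bound via the fact that for isotropic log-concave marginals the disagreement between two homogeneous halfspaces is $\Theta$ of the angle between their normals, invoke the tester of \Cref{proposition:testably-bounding-halfspace-disagreement} with an empirically estimated angle parameter, and conclude with the same chain of triangle inequalities relating $\hat\vv$, the best training halfspace, and $\coptcommon$. The only cosmetic differences are that you bound $\measuredangle(\hat\vv,\vopt)$ directly rather than routing through the optimal training halfspace as an intermediate, and you feed $C'(\hat\eta+\eps)$ rather than the raw empirical error $\hat\eps$ to $\T$; neither changes the argument.
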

%We now provide a sketch of the proof of \Cref{theorem:improved tds for halfspaces agnostic}.
\paragraph{Leveraging training data.} We first use an efficient agnostic learner on training data to recover a halfspace $\hat\concept:\x\to\sign(\hat\vv\cdot \x)$ with low training error. For example, we may use a (polynomial time) algorithm by \cite{diakonikolas2020non} (\Cref{proposition:agnostic-learning-halfspaces}) that outputs $\hat\concept$ with $\error(\hat\concept;\Dtrainjoint)\le O(\eta)+\eps$ whenever the training marginal is isotropic log-concave ($\eta$ is the optimal training error). There are other similar results in the literature of agnostic learning (e.g., see \cite{awasthi2017power}), but we use \cite{diakonikolas2020non} as it is more convenient for our setting. 

\paragraph{Approximate parameter recovery.} Let $\vopt$ be the parameter vector corresponding to the halfspace $\coptcommon$ that minimizes the common train and test error, i.e., $\error(\coptcommon;\Dtrainjoint)+\error(\coptcommon;\Dtestjoint) = \optcommon$. Then, we have $\pr_{\Dtrainmarginal}[\sign(\hat\vv\cdot \x)\neq\sign(\vopt\cdot \x)] \le\error(\hat\concept;\Dtrainjoint)+\error(\coptcommon;\Dtrainjoint) \le O(\eta)+\eps+\optcommon = O(\optcommon)+\eps$. Since $\Dtrainmarginal=\Dgeneric$ is isotropic log-concave, it is known that the disagreement over $\Dtrainmarginal$ between two halfspaces is proportional to the angular distance between their parameters, i.e., $\measuredangle(\hat\vv,\vopt) = O(\pr_{\Dtrainmarginal}[\sign(\hat\vv\cdot \x)\neq\sign(\vopt\cdot \x)])$, which we have bounded by $O(\optcommon+\eps)$.

\paragraph{Testing phase.} We have shown that $\hat{\vv}$ is geometrically close to $\vopt$, which achieves test error at most $\optcommon$, by definition. It remains to certify that the test marginal behaves like an isotropic log-concave distribution with respect to $\hat{\vv}$, i.e., for a large enough set of i.i.d. examples $\Stest$ from $\Dtestmarginal$ and for any $\vv'\in\Sphere^{d-1}$ we have that $\frac{1}{|\Stest|}\sum_{\x\in\Stest}\ind\{\sign(\hat\vv\cdot \x)\neq \sign(\vv'\cdot \x)\} := \pr_{\Stest}[\sign(\hat\vv\cdot \x)\neq \sign(\vv'\cdot \x)] = O(\measuredangle(\hat\vv,\vv'))$, because then we will be able to bound the empirical test error of $\hat\concept$ by $\optcommon+O(\measuredangle(\hat\vv,\vopt))$, which is $O(\optcommon+\eps)$. The result then would follow by standard VC dimension arguments. 

It turns out that recent work by \cite{gollakota2023tester} on testable learning has provided an efficient tester that achieves exactly what we need. Note that the proof of the following proposition (Lemma 3.1 in \cite{gollakota2023tester}) is nontrivial, requiring estimation of low-order moments and careful conditioning. We can apply this to our setting, because it only requires access to the marginal distribution.

\begin{proposition}[Testably Bounding Halfspace Disagreement, Lemma 3.1 in \cite{gollakota2023tester}]\label{proposition:testably-bounding-halfspace-disagreement}
    Let $\Dgeneric$ be a distribution over $\R^d$, $\vv_1\in\Sphere^{d-1}$, $\theta\in(0,\pi/4]$, $\delta\in(0,1)$ and $C>0$ a sufficiently large universal constant. Then, there is an algorithm $\T(\theta,\delta)$ that, upon drawing at least $\frac{Cd^4}{\theta^2\delta}$ examples $\Sunlabelled$ from $\Dgeneric$ and in time $\poly(d,\frac{1}{\theta},\frac{1}{\delta})$ either accepts or rejects and satisfies the following.
    \begin{enumerate}[label=\textnormal{(}\alph*\textnormal{)}]
        \item If $\T$ accepts, then for any $\vv_2\in\R^d$ with $\measuredangle(\vv_1,\vv_2)\le \theta$, it holds 
        \[  
            \pr_{\x\sim \Sunlabelled}[\sign(\vv_1\cdot \x)\neq \sign(\vv_2\cdot \x)] \le C\measuredangle(\vv_1,\vv_2)
        \]
        \item\label{condition:proposition-testably-bounding-halfspace-disagreement} If $\Dgeneric$ is isotropic log-concave, then $\T$ accepts with probability at least $1-\delta$.
    \end{enumerate}
\end{proposition}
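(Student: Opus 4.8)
The plan is to have $\T$ certify, from the unlabeled sample $\Sunlabelled$ alone, that the test marginal ``behaves like an isotropic log-concave distribution near $\vv_1$'': once that is certified, the soundness bound (a) becomes a purely empirical consequence, mirroring the way the corresponding population statement (``$\pr_{\Dgeneric}[\sign(\vv_1\cdot\x)\neq\sign(\vv_2\cdot\x)]\le O(\measuredangle(\vv_1,\vv_2))$'') is a consequence of isotropic log-concavity. The first step is a geometric reduction: writing $\vv_2=\cos\alpha\,\vv_1+\sin\alpha\,\u$ with $\alpha=\measuredangle(\vv_1,\vv_2)\le\theta$ and $\u\perp\vv_1$ a unit vector, a one-line computation gives that $\sign(\vv_1\cdot\x)\neq\sign(\vv_2\cdot\x)$ implies $|\vv_1\cdot\x|\le\tan\alpha\,|\u\cdot\x|$, while $\alpha\le\tan\alpha\le\tfrac{4}{\pi}\alpha$ on $(0,\pi/4]$. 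Hence it suffices that $\T$ certifies
\[
 \pr_{\x\sim\Sunlabelled}\bigl[\,|\vv_1\cdot\x|\le t\,|\u\cdot\x|\,\bigr]\ \le\ C' t \qquad\text{for every unit } \u\perp\vv_1 \text{ and every } t\in(0,\tan\theta].
\]
On the population side this bound comes from conditioning on $\u\cdot\x$: if the conditional density of $\vv_1\cdot\x$ near $0$ is bounded by an absolute constant $c_0$ and $\E[(\u\cdot\x)^2]\le 1$, the probability is at most $2c_0 t\,\E|\u\cdot\x|\le 2c_0 t$. The tester must reproduce this empirically with no assumption on $\Dgeneric$.

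Concretely, $\T(\theta,\delta)$ would: (i) form the empirical covariance $\widehat\Sigma=\tfrac1{|\Sunlabelled|}\sum_{\x\in\Sunlabelled}\x\x^\top$ together with a constant number of higher-order empirical moments (in particular quantities of the form $\tfrac1{|\Sunlabelled|}\sum_{\x}(\vv_1\cdot\x)^{2a}(\u\cdot\x)^{2b}$ with $a+b=O(1)$, controlled uniformly in $\u$ through the corresponding moment tensors), and REJECT unless $\tfrac12 I\preceq\widehat\Sigma\preceq 2I$ and all these moments are within an absolute constant of the values an isotropic log-concave law assigns them; this rules out the sample concentrating near a hyperplane and uniformly controls $\tfrac1{|\Sunlabelled|}\sum_{\x}(\u\cdot\x)^2$ and its interaction with $|\vv_1\cdot\x|$; (ii) estimate the one-dimensional anti-concentration profile $\widehat p(s):=\pr_{\Sunlabelled}[|\vv_1\cdot\x|\le s]$ on a geometric grid of scales from $\Theta(\tan\theta)$ down to a minimal scale, REJECT unless $\widehat p(s_j)\le c_0 s_j$ at every grid point (which by monotonicity certifies $\widehat p(s)\le 2c_0 s$ for all $s$ above the minimal scale), and REJECT if some sample point lies within the minimal angular distance of the hyperplane $\vv_1^\perp$ (so that for $t$ below the minimal scale the set $\{|\vv_1\cdot\x|\le t|\u\cdot\x|\}$ contains no sample point). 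All of this is moment estimation plus a polynomial-size grid of comparisons, so the run-time is $\poly(d,1/\theta,1/\delta)$.

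For soundness, suppose $\T$ accepts and fix $\u,t$. Partition $\Sunlabelled$ into the dyadic bands of $|\vv_1\cdot\x|$; the number of points in band $j$ that also satisfy $|\vv_1\cdot\x|\le t|\u\cdot\x|$ is bounded both by the band's size (via the certified $\widehat p$) and by a Markov estimate using the certified second- and higher-order moments of $\u\cdot\x$ restricted to that band. Summing the minimum of these two bounds over the bands is the empirical analogue of the conditioning argument above; with the appropriate moments certified, this sum is $O(t)$, which gives the displayed inequality and hence $\pr_{\Sunlabelled}[\sign(\vv_1\cdot\x)\neq\sign(\vv_2\cdot\x)]\le C\,\measuredangle(\vv_1,\vv_2)$ uniformly over $\vv_2$. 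For completeness, if $\Dgeneric$ is isotropic log-concave then its covariance is $I$ and its low-order moments take the required values up to absolute constants, the marginal $\vv_1\cdot\x$ has density bounded by an absolute constant (so $\pr_{\Dgeneric}[|\vv_1\cdot\x|\le s]\le\tfrac{c_0}{2}s$), and with high probability no point of a polynomial-size sample lands within the minimal angular distance of $\vv_1^\perp$. Each checked empirical quantity concentrates on its expectation; estimating $\widehat\Sigma$ and the moment contractions via second-moment (Chebyshev) inequalities rather than Chernoff bounds is precisely what yields the stated $\tfrac{Cd^4}{\theta^2\delta}$ sample size with a $1/\delta$ (not $\log(1/\delta)$) dependence, and a union bound over the $O(\log(d/(\theta\delta)))$ grid scales and rejection events gives acceptance probability $\ge 1-\delta$.

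The step I expect to be the main obstacle is obtaining the clean \emph{linear} dependence $C\,\measuredangle(\vv_1,\vv_2)$ — rather than the weaker $\measuredangle^{c}$ with $c<1$ that a crude second-moment argument produces — while ensuring every quantity the tester checks is an empirically estimable low-degree moment, keeping the guarantee uniform over the continuum of $\vv_2$, and covering arbitrarily small angles. This is exactly the ``careful conditioning'' point: one has to select the right moment contractions and dyadic decomposition so that the per-band Markov and anti-concentration estimates telescope to something linear in $t$ instead of sublinear, and to handle the smallest angular scales by the explicit ``too close to $\vv_1^\perp$'' rejection rather than by estimation. Checking that these conditions are simultaneously (a) satisfied by isotropic log-concave distributions with room to spare and (b) strong enough to force the linear bound is the technical heart of the proof.
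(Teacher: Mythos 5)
The paper does not actually prove this proposition: it is imported as a black box from \cite{gollakota2023tester} (Lemma 3.1 there), with only the remark that the external proof requires ``estimation of low-order moments and careful conditioning.'' Your sketch is a faithful reconstruction of that external argument's structure: reduce sign disagreement to the event $|\vv_1\cdot\x|\le\tan\alpha\,|\mathbf{u}\cdot\x|$, localize into dyadic bands of $|\vv_1\cdot\x|$, certify empirical anti-concentration at each scale together with the operator norm of the band-restricted empirical covariance on $\vv_1^\perp$ (which is exactly what makes the guarantee uniform over $\mathbf{u}$ and hence over $\vv_2$), apply Markov within each band, and handle sub-grid angles by an explicit rejection of points too angularly close to $\vv_1^\perp$. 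On the point you flag as the main obstacle: band-restricted \emph{second} moments already suffice for the linear bound, since the band-$j$ contribution is $\min\bigl(O(s_j),\,O(t^2/s_j)\bigr)$ and summing this over dyadic scales gives $O(t)$ from both sides of the crossover at $s_j\approx t$; no higher-order moments are needed for soundness (only for concentration in the completeness analysis, where Chebyshev-type bounds account for the $1/\delta$ rather than $\log(1/\delta)$ sample dependence, as you note).
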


\begin{remark}\label{remark:universal-tds}
    We note that, in fact, the original version of \Cref{proposition:testably-bounding-halfspace-disagreement} in \cite{gollakota2023tester} does not require the target marginal to be known, but works universally for any isotropic log-concave distribution (as well as distributions with heavier tails). This implies that the completeness criterion that \Cref{algorithm:improved for halfspaces agnostic} satisfies is actually much stronger: for an appropriate choice of the absolute constant $C$, \Cref{algorithm:improved for halfspaces agnostic} can be made to accept whenever $\Dtestmarginal$ is isotropic log-concave (and not necessarily equal to the training marginal).
\end{remark}

\begin{remark}\label{remark:tds-and-testable}
    Moreover, we point out that we can apply results from \cite{gollakota2023tester} and substitute algorithm $\A$ with a universal tester-learner for halfspaces.  This enables us to remove the assumption that $\Dtrainmarginal$ is some fixed isotropic log-concave distribution, and the final algorithm would accept with high probability whenever $\Dtrainmarginal$ is isotropic strongly log-concave {\em and} $\Dtestmarginal$ is isotropic log-concave. In that sense, TDS learning composes well with (universally) testable learning. For sake of presentation, however, we leave formal compositional arguments to future work.
\end{remark}

\subsection{TDS Learners for General Halfspaces}

\subsubsection{Warm-Up: Disagreement-Based TDS Learning}\label{section:disagreement-tds}

We provide a general TDS learner for the realizable setting, based on the notion of disagreement regions from active learning.  Not only is this approach interesting in and of itself, but it will also be useful in \Cref{sec: beyond TDS} where we present our main result for TDS learning of general halfspaces in the realizable setting. The main idea is to testably bound the probability that a test example falls in some region $\disagreementregion$, whose mass with respect to the target distribution becomes smaller as the number of training samples increases and, also, the output of the training algorithm achieves low error on any distribution that assigns small mass to $\disagreementregion$. It turns out that the quantity $\pr_{\x\sim\Dgeneric}[\x\in \disagreementregion]$, where $\Dgeneric$ is some given distribution over a space $\X\subseteq\R^d$, is a well-studied notion in the literature of active learning (see \cite{cohn1994improving,hanneke2009theoretical,balcan2006agnostic,hanneke2011rates,hanneke2014theory,balcan2010true,hanneke2007bound} and references therein). We now provide a formal definition for the disagreement region.

\begin{definition}[Disagreement Region]\label{definition:disagreement-region}
    Let $\X\subseteq\R^d$, $\Dgeneric$ a distribution over $\X$ and $\C$ a concept class of functions that map $\X$ to $\cube{}$. For $\eps>0$ and $\concept\in\C$, we define the $\eps$-disagreement region of $\concept$ under $\Dgeneric$, $\disagreementregion_\eps(\concept;\Dgeneric)$ as the subset of $\X$ such that if $\x\in \disagreementregion_\eps(\concept;\Dgeneric)$, then there are $\concept_1,\concept_2\in \C$ with $\error(\concept_1,\concept;\Dgeneric) \le \eps$, and $ \error(\concept_2,\concept;\Dgeneric) \le \eps$ and $\concept_1(\x)\neq\concept_2(\x)$. %We also define the $\eps$-disagreement region of $\C$ over $\Dgeneric$ as $\disagreementregion_\eps(\C;\Dgeneric) = \sup_{\concept\in\C}\disagreementregion_\eps(\concept;\Dgeneric)$.
\end{definition}

In the literature of active learning, the quantity of interest is called the disagreement coefficient and is defined for a concept class $\C$ and a distribution $\Dgeneric$ as follows (see, e.g., \cite{hanneke2014theory}).
\begin{equation}
    \disagreementcoef(\eps) = \sup_{\concept\in\C}\sup_{\eps'>\eps}\frac{\pr_{\x\sim\Dgeneric}[\x\in\disagreementregion_{\eps'}(\concept;\Dgeneric)]}{\eps'}\label{equation:disagreement-coefficient}
\end{equation}
In particular, for active learning, is is crucial that $\disagreementcoef(\eps)$ is asymptotically bounded by a slowly increasing function of $1/\eps$ (e.g., $O(\log(1/\eps))$), since bounds on the disagreement coefficient directly provide rates on the label complexity of disagreement-based active learning, up to logarithmic factors \cite{hanneke2011rates}. In our setting, meaningful results are obtained even when $\disagreementcoef(\eps) = O(1/\eps^{1-c})$ for any constant $c\in(0,1)$. Moreover, we also focus on the dependence of the disagreement coefficient on other relevant parameters, like the dimension $d$. To emphasize this, in what follows, we will use the notation $\disagreementcoef(\eps,d)$ to refer to the disagreement coefficient. We obtain the following result, which implies, for example, a polynomial improvement in the sample complexity bound of realizable TDS learning of homogeneous halfspaces w.r.t. the Gaussian compared to the TDS learner we proposed in \Cref{theorem:improved tds for halfspaces agnostic} for the agnostic setting (see also \Cref{appendix:disagreement-tds}).

\begin{theorem}[Disagreement-Based TDS learning]\label{theorem:disagreement-tds}
    Let $\C$ be the class of concepts that map $\X\subseteq\R^d$ to $\cube{}$ with VC dimension $\vc(\C)$, let $\Dgeneric$ a distribution over $\X$ and $C>0$ a sufficiently large universal constant. Suppose that we have access to an ERM oracle for PAC learning $\C$ under $\Dgeneric$ and membership access to $\disagreementregion_{\eps'}(\concept;\Dgeneric)$ for any given $\concept\in\C$ and ${\eps'}>0$. Then, there is an algorithm (\Cref{algorithm:disagreement-tds}) that given inputs of sizes $|\Strain| \ge C\frac{\vc(\C)}{\eps'}\log(\frac{1}{\eps'\delta})$ and $|\Stest| \ge C(\frac{\vc(\C)}{\eps}+\frac{1}{\eps^2})\log(\frac{1}{\eps\delta})$ is a TDS learning algorithm for $\C$ w.r.t. $\Dgeneric$ that calls the $\eps'$-ERM oracle once and the $\eps'$-membership oracle $|\Strain|$ times, where $\eps$ is the accuracy parameter, $\delta$ is the failure probability and $\eps'$ such that $\eps'\cdot \disagreementcoef(\eps',d) \le \eps/2$.
\end{theorem}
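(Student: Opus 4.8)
The plan is to analyze \Cref{algorithm:disagreement-tds} by tracking three events: a generalization event for the ERM hypothesis on the training side, a uniform-convergence event for disagreement-region mass on the test side, and the soundness/completeness bookkeeping that ties them together. First I would run the $\eps'$-ERM oracle on $\Strain$ to obtain $\hat\concept\in\C$ with zero empirical training error (realizable setting). By the standard VC/Occam bound, with $|\Strain| \ge C\frac{\vc(\C)}{\eps'}\log\frac1{\eps'\delta}$ samples we get $\error(\hat\concept;\Dgeneric) = \pr_{\x\sim\Dgeneric}[\hat\concept(\x)\neq f(\x)] \le \eps'$ with probability $1-\delta/4$; here $f$ is the target labeling concept. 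Since also $f\in\C$ and $\error(f;\Dgeneric)=0\le\eps'$, the definition of the $\eps'$-disagreement region gives the key containment: for every $\x\notin \disagreementregion_{\eps'}(\hat\concept;\Dgeneric)$ we must have $\hat\concept(\x)=f(\x)$ (otherwise $\hat\concept$ and $f$ themselves witness membership of $\x$ in the region). Equivalently, $\{\x : \hat\concept(\x)\neq f(\x)\} \subseteq \disagreementregion_{\eps'}(\hat\concept;\Dgeneric)$, so the test error of $\hat\concept$ is bounded by $\pr_{\x\sim\Dtestmarginal}[\x\in \disagreementregion_{\eps'}(\hat\concept;\Dgeneric)]$.

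Next, the algorithm uses membership access to $\disagreementregion_{\eps'}(\hat\concept;\Dgeneric)$ to compute the empirical test mass $\widehat{q} := \pr_{\x\sim\Stest}[\x\in \disagreementregion_{\eps'}(\hat\concept;\Dgeneric)]$ and REJECTs unless $\widehat q \le \eps/2$ (say). For \textbf{soundness}: the set $\disagreementregion_{\eps'}(\hat\concept;\Dgeneric)$ is a fixed measurable set once $\hat\concept$ is fixed, so a Chernoff/Hoeffding bound with $|\Stest|\ge C\frac1{\eps^2}\log\frac1{\eps\delta}$ shows $\pr_{\x\sim\Dtestmarginal}[\x\in\disagreementregion_{\eps'}(\hat\concept;\Dgeneric)] \le \widehat q + \eps/2$ with probability $1-\delta/4$; combined with the previous paragraph, accepting implies $\error(\hat\concept;\Dtestjoint\text{-marginal}) \le \eps$, which is exactly the soundness guarantee. (To output an $\eps$-accurate $\hat f=\hat\concept$ rather than merely certify, we just return $\hat\concept$.) The extra $\frac{\vc(\C)}{\eps}\log\frac1{\eps\delta}$ term in $|\Stest|$ is there to additionally guarantee, via a VC uniform-convergence bound over $\C$, that empirical test error estimates track true test error — useful if one wants to phrase the output as the empirical-test-error minimizer, but for the bare disagreement-region argument the $1/\eps^2$ term suffices.

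For \textbf{completeness}, suppose $\Dtestmarginal=\Dgeneric$. Then $\E[\widehat q] = \pr_{\x\sim\Dgeneric}[\x\in\disagreementregion_{\eps'}(\hat\concept;\Dgeneric)] \le \disagreementcoef(\eps',d)\cdot\eps' \le \eps/2$ by the definition \eqref{equation:disagreement-coefficient} of the disagreement coefficient and the hypothesis $\eps'\cdot\disagreementcoef(\eps',d)\le\eps/2$. A one-sided concentration bound (again using $|\Stest|\ge C\frac1{\eps^2}\log\frac1{\eps\delta}$, possibly with a slightly smaller constant in the rejection threshold, e.g. REJECT iff $\widehat q > \eps$ and noting $\eps/2 + \eps/2 = \eps$) then shows $\widehat q$ does not exceed the threshold except with probability $\delta/4$, so the algorithm ACCEPTs with probability $1-\delta$. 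A union bound over the (at most four) failure events gives overall failure probability $\delta$.

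The main obstacle — really the only subtle point — is making sure the disagreement-region set whose empirical mass we estimate is \emph{fixed before} the test sample is drawn, so that the Hoeffding bound is legitimate: it depends on $\hat\concept$, which depends only on $\Strain$, so conditioning on $\Strain$ first resolves this cleanly. A secondary point to get right is the exact constants in the acceptance threshold so that the slack in the soundness direction ($\widehat q \to \pr_{\Dtestmarginal}[\cdot]$) and the completeness direction ($\pr_{\Dgeneric}[\cdot]\le\eps/2 \to \widehat q$) both fit inside the target accuracy $\eps$; this is routine and handled by choosing $C$ large. Finally, one should double-check that membership queries to $\disagreementregion_{\eps'}(\hat\concept;\Dgeneric)$ are invoked only $|\Strain|$ times as claimed — this matches the algorithm evaluating region-membership of... wait, more naturally it is $|\Stest|$ queries; I would reconcile this against the precise statement of \Cref{algorithm:disagreement-tds}, but in any case the count is polynomial in the stated sample sizes and does not affect the asymptotic guarantees.
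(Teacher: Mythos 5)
Your proposal is correct and follows essentially the same approach as the paper's proof of Theorem~\ref{theorem:disagreement-tds}: run ERM on $\Strain$ so that $\hat\concept$ and the target $f$ are both within $\eps'$ disagreement of $\hat\concept$ over $\Dgeneric$, conclude via the disagreement-region definition that $\{\x:\hat\concept(\x)\neq f(\x)\}\subseteq\disagreementregion_{\eps'}(\hat\concept;\Dgeneric)$, estimate the test mass of this fixed region by Hoeffding (soundness), and bound the expected mass under $\Dgeneric$ by $\eps'\disagreementcoef(\eps',d)\le\eps/2$ (completeness). Your side remark at the end is on point: the membership oracle is invoked once per test point, i.e. $|\Stest|$ (not $|\Strain|$) times, which is a typo in the theorem statement.
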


\subsubsection{Beyond Disagreement: TDS Learners for General Halfspaces}
\label{sec: beyond TDS}

We give a TDS-learning algorithm for the class of halfspaces under the standard Gaussian distribution. The algorithm runs in quasi-polynomial time in all relevant parameters and, contrary to the case of homogeneous halfspaces, works in a setting where efficient parameter recovery is not possible. This happens because when a general halfspace has arbitrarily large bias, it is possible, for example, that all of the training examples have the same label. 

In particular, applying a pure disagreement-based TDS learning framework (\Cref{theorem:disagreement-tds}) in the case of {\em general halfspaces} can only give exponential-time algorithms for this problem.  To illustrate this, imagine that the ground truth is a general halfspace with bias $\tau = \sqrt{d}$ but unknown direction $\vv\in\Sphere^{d-1}$. Then, any general halfspace $\x\mapsto\sign(\vv'\cdot \x-\tau)$ with the same bias is $\exp(-\Omega(d))$-close to the ground truth with respect to the Gaussian distribution, due to standard Gaussian concentration, i.e., $\pr_{\x\sim\Gauss(0,I_d)}[\sign(\vv\cdot \x -\tau)\neq\sign(\vv'\cdot \x - \tau)] \le \pr_{\x\sim\Gauss(0,I_d)}[\sign(\vv\cdot \x -\tau)\neq\sign(-\vv\cdot \x - \tau)]$, which is upper bounded by $\pr_{\x\sim\Gauss(0,I_d)}[|\vv\cdot \x|>\sqrt{d}] \le 2\exp(-d/2)$. Let $\eps'=2\exp(-d/2)$. Suppose that ERM returns a halfspace $\hat\concept$ that is $\eps'$-close to the ground truth but has bias $\tau$. Any $\x\in\R^d$ with $\|\x\|_2\ge \sqrt{d}$, falls within the disagreement region $\disagreementregion_{\eps'}(\hat\concept;\Gauss(0,I_d))$ and therefore $\pr_{\x\sim\Gauss(0,I_d)}[\x\in \disagreementregion_{\eps'}(\hat\concept;\Gauss(0,I_d))]$ is constant. This implies that running the ERM oracle on training data even up to exponentially small accuracy $\eps'=\exp(-\Omega(d))$ does not meet the requirement of \Cref{theorem:disagreement-tds} (see also \cite{el2012active}) that the disagreement coefficient is bounded as $\eps'\cdot \disagreementcoef(\eps',d)\le \eps/2$.

In order to overcome this obstacle, we perform a case analysis that depends on the bias of the unknown halfspace. If the bias is bounded, then we may use a disagreement-based approach, since we can approximately recover the true parameters of the unknown halfspace using training data and it suffices to verify that the test distribution does not amplify the error between any pair of halfspaces close to the obtained approximations of the true parameters.  Now, consider the case when the bias is large. We may assume without loss of generality the constant hypothesis $+1$ has low training error (since the ground truth has large bias and the marginal is Gaussian). If we can certify that the test marginal is sufficiently concentrated in every direction, then this hypothesis must also have small test error. To certify concentration for the test distribution's marginals, we use a moment-matching approach. Checking the moment matching condition only up to degree $O(\log(\epsilon))$ turns out to be sufficient to certify the type of concentration we need. We thus obtain a quasi-polynomial TDS learning algorithm for general halfspaces with respect to the Gaussian distribution. Since the probability of success can be amplified through repetition (see \Cref{proposition:boosting-success-probability}), we provide a result with constant failure probability. For the full proof, see \Cref{appendix:tds-general-halfspaces}.

\begin{theorem}[TDS learning of General Halfspaces]
	\label{thm: improved TDS learning for general halfspaces}
    Let $\C$ be the class of general halfspaces over $\R^d$ and $C>0$ a sufficiently large universal constant. Then, there is an algorithm (\Cref{algorithm:improved for general halfspaces}) that, given inputs of size $|\Strain|=|\Stest|=C d^{C\log {1}/{\epsilon}}$
	is a TDS learning algorithm for $\C$ w.r.t. $\Gauss(0,I_d)$ with run-time $d^{O(\log {1}/{\epsilon})}$, where $\epsilon$ is the accuracy parameter, and the failure probability $\delta$ is at most $0.01$. 
\end{theorem}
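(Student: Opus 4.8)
The plan is a case analysis on the \emph{bias} of the ground-truth halfspace $f^*=\sign(\vv^*\cdot\x-\tau^*)$, where the case is chosen according to something we can observe: whether both labels appear among the training examples. First I would set aside a small batch of $d^{O(\log(1/\epsilon))}$ training points; if it contains both $+1$ and $-1$ then, with high probability, $\min\{\Phi(\tau^*),\Phi(-\tau^*)\}\gtrsim 1/(\text{batch size})$, hence $|\tau^*|\le B:=O(\sqrt{\log d\cdot\log(1/\epsilon)})$ — the ``bounded-bias'' case. If only one label $L$ appears, then with high probability $\pr_{\x\sim\Gauss}[f^*(\x)=-L]\le\widetilde O(1/(\text{batch size}))$ — the ``large-bias'' case. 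The two cases are handled by two unrelated mechanisms: parameter recovery plus a disagreement-region test for the former, and low-degree moment matching for the latter. I would use disjoint parts of the data for deciding the case, recovering parameters, and estimating empirical quantities, and keep $\Stest$ independent of the training data; since the theorem only demands failure probability $0.01$, a union bound over $O(1)$ constant-probability events suffices (and can be amplified by \Cref{proposition:boosting-success-probability}).

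\emph{Bounded-bias case.} Here I would recover the parameters of $f^*$. The degree-one (Chow) statistic satisfies $\E_{\x\sim\Gauss}[\x\,f^*(\x)]=2\varphi(\tau^*)\vv^*$ with $2\varphi(\tau^*)=\Theta(e^{-B^2/2})=d^{-O(\log(1/\epsilon))}$, and the empirical label frequency pins down $\tau^*$; so $d^{O(\log(1/\epsilon))}$ samples yield $(\hat\vv,\hat\tau)$ with $\measuredangle(\hat\vv,\vv^*)\le\beta$ and $|\hat\tau-\tau^*|\le\beta$ for $\beta:=\Theta(\epsilon/(1+B))$. (This is the only place a \emph{quasi}-polynomial sample size is forced: the Chow signal of a barely-biased halfspace is exponentially small, so relative accuracy $\beta$ costs $\sim d/(\beta\cdot 2\varphi(\tau^*))^2=d^{O(\log(1/\epsilon))}$ samples.) Let $\hat f=\sign(\hat\vv\cdot\x-\hat\tau)$ and let $\disagreementregion$ be the union, over all pairs of halfspaces whose direction is within angle $\beta$ and whose bias is within $\beta$ of $(\hat\vv,\hat\tau)$, of their disagreement sets. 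Both $f^*$ and $\hat f$ have parameters in this ball, so $\hat f$ agrees with $f^*$ outside $\disagreementregion$, and membership in $\disagreementregion$ is decidable in $\poly(d)$ time by a pair of convex feasibility checks. The algorithm rejects if more than an $\epsilon/5$ fraction of $\Stest$ falls in $\disagreementregion$ and otherwise outputs $\hat f$. The key geometric fact is $\pr_{\x\sim\Gauss}[\x\in\disagreementregion]=O(\beta(1+B))\le\epsilon/10$, which I would prove by projecting onto the two-dimensional span of the two directions and integrating the Gaussian density over the resulting thin ``wedge'' at distance $\le B+\beta$ from the origin. Then on ACCEPT, $\pr_{\x\sim\Dtestmarginal}[f^*(\x)\neq\hat f(\x)]\le\pr_{\Dtestmarginal}[\disagreementregion]\le\epsilon/5+o(\epsilon)\le\epsilon$ by Chernoff (as $\disagreementregion$ is fixed relative to $\Stest$); and when $\Dtestmarginal=\Gauss$ the empirical mass concentrates near $\epsilon/10$, so the algorithm accepts, with $\hat f$ also having small empirical training error.

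\emph{Large-bias case.} Write the minority region $\{f^*=-L\}$ as $\{\w\cdot\x\ge\tau\}$ for a unit vector $\w$ and $\tau>0$; the certified bound gives $\Phi(-\tau)=\pr_\Gauss[f^*=-L]\le d^{-\Omega(\log(1/\epsilon))}$, which forces $\tau^2\ge c_0 C\log(1/\epsilon)$ for an absolute $c_0>0$, so with $k:=\lceil\log(1/\epsilon)\rceil$ and $C$ large enough we get $\tau^2\ge 3ek$. The crux is the pointwise inequality $\ind\{\w\cdot\x\ge\tau\}\le(\w\cdot\x/\tau)^{2k}$ for all $\x\in\R^d$ (the left side is nonzero only when the right side is at least $1$), whence $\pr_{\Dtestmarginal}[f^*=-L]\le\tau^{-2k}\,\E_{\Dtestmarginal}[(\w\cdot\x)^{2k}]$. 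Since $(\w\cdot\x)^{2k}$ is a degree-$2k$ polynomial whose monomial coefficients have $\ell_1$-norm $\|\w\|_1^{2k}\le d^{k}$, matching every degree-$\le 2k$ moment of $\Dtestmarginal$ to that of $\Gauss$ within $\gamma:=\epsilon/(2d^k)=d^{-O(\log(1/\epsilon))}$ gives $\E_{\Dtestmarginal}[(\w\cdot\x)^{2k}]\le(2k-1)!!+\epsilon/2$, hence $\pr_{\Dtestmarginal}[f^*\neq L]\le(2k/\tau^2)^k+\epsilon/2\le(2/(3e))^k+\epsilon/2\le\epsilon$. So the algorithm estimates all degree-$\le 2k$ moments from $\Stest$, rejects if any deviates from the Gaussian value by more than about $\gamma$ (and also rejects if any empirical moment is abnormally large — as in our general moment-matching analysis, this is what makes the certified bound valid for an \emph{arbitrary} $\Dtestmarginal$), and otherwise outputs the constant hypothesis $L$; when $\Dtestmarginal=\Gauss$ the moments match and concentrate, so it accepts. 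Everything here runs in time and samples $d^{O(k)}=d^{O(\log(1/\epsilon))}$.

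The hard part — and the reason this beats moment matching alone — is the large-bias analysis: certifying that an unknown, arbitrarily biased halfspace remains biased under a \emph{completely arbitrary} test marginal using only degree-$O(\log(1/\epsilon))$ moment information. This works precisely because failing to see the minority label already forces $\tau\gtrsim\sqrt{\log(1/\epsilon)}$, and at that scale the crude bound $\ind\{\w\cdot\x\ge\tau\}\le(\w\cdot\x/\tau)^{2k}$ with $k=\Theta(\log(1/\epsilon))$ has Gaussian expectation below $\epsilon$ — far below the $\widetilde\Omega(1/\epsilon^2)$ degree that an $\L_1$- (let alone $\L_2$-) sandwiching approximation of a general halfspace would need. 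The remaining care is the two-dimensional geometric estimate for $\pr_\Gauss[\disagreementregion]$, and the bookkeeping that ``both labels observed in the initial batch'' is exactly the event under which $|\tau^*|\le B$ holds with high probability while its complement forces $\pr_\Gauss[f^*=-L]$ to be tiny.
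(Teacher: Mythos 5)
Your proposal follows essentially the same strategy as the paper's proof: split on whether both labels are observed in the training sample, use Chow-statistic parameter recovery plus a disagreement-region test in the bounded-bias case, and use low-degree moment matching with a Chebyshev/Markov-style tail bound in the large-bias case. The large-bias analysis in particular matches the paper's \Cref{prop: if keep seeing the same label moment matching guarantees generalization} almost exactly (the pointwise bound $\ind\{\w\cdot\x\ge\tau\}\le(\w\cdot\x/\tau)^{2k}$ and the $\ell_1$-coefficient bound on $(\w\cdot\x)^{2k}$ are the paper's degree-$k$ Chebyshev argument). The offset recovery via ``empirical label frequency'' is a harmless variant of the paper's ERM-over-translates; both work.

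One quantitative slip: you claim $\pr_{\x\sim\Gauss}[\x\in\disagreementregion]=O(\beta(1+B))$, obtained by a 2D projection. This misses the dimension factor. The disagreement region is $\{\x:|\hat\vv\cdot\x-\hat\tau|\le\beta\|\x\|_2+\beta\}$, and since $\|\x\|_2\approx\sqrt{d}$ for a Gaussian, the slab has width on the order of $\beta\sqrt{d}$, giving mass $O(\beta\sqrt{d})$ rather than $O(\beta(1+B))$ (the paper's \Cref{prop: gaussian is unlikely to fall into disagreement region} gets the coarser $K_1 d^{K_1}\sqrt{\beta}$ and compensates with $\beta=\epsilon^2/(C_3d^{C_3})$). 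Your choice $\beta=\Theta(\epsilon/(1+B))$ is therefore too large. The fix is trivial---take $\beta=\Theta(\epsilon/\poly(d))$ instead---and it does not change your overall $d^{O(\log(1/\epsilon))}$ sample complexity, because the Chow-statistic recovery already requires $\sim d/(\beta\cdot\varphi(\tau^*))^2$ samples and $\varphi(\tau^*)$ can be as small as $d^{-\Theta(\log(1/\epsilon))}$. So this is a fixable slip in the constants, not a gap in the argument.
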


Compared to \Cref{theorem:disagreement-tds}, our approach here incurs an increase in the amount of test samples required (from $\poly(d,1/\eps) $ to $d^{O(\log(1/\eps))}$, used for moment matching) but significantly decreases the amount of training samples required (from $\exp(\Omega(d))$ to $d^{O(\log(1/\eps))}$).  %Since the probability of success can be boosted through repetition (see \Cref{proposition:boosting-success-probability}), we provide a result with constant failure probability.

\subsection{TDS Learning through Moment Matching}

In the previous section, we provided a TDS learner for general halfspaces in the realizable setting that requires ideas beyond parameter recovery and testably bounding the probability of falling in the disagreement region. Crucially, \Cref{thm: improved TDS learning for general halfspaces} uses a moment-matching approach in the case when the bias of the unknown halfspaces is large. As is explained in this section, we show that the moment-matching approach can actually provide a generic result which demonstrates that $\L_2$-sandwiching (see \Cref{definition:l2-sandwiching}) implies TDS learning, even in the non-realizable setting. We also instantiate our framework to several important concept classes (halfspace intersections, decision trees and Boolean formulas) with respect to the Gaussian and uniform distributions, by applying constructions from pseudorandomness literature to bound the $\L_2$-sandwiching degree of each of these classes and acquire entries 3-6 in \Cref{tbl: comparison of our work and previous work}.

%\subsection{TDS Learning and $\L_2$ Sandwiching}

We provide a general theorem, which demonstrates that $\L_2$-sandwiching implies TDS learning under some additional natural assumptions about the target marginal distribution, which are satisfied by the standard Gaussian distribution over $\R^d$ and the uniform distribution on $\{\pm 1\}^d$. While it is known that $\L_1$-sandwiching implies testable learning (see \cite{gollakota2022moment}), we require the stronger notion of $\L_2$-sandwiching. In particular, while $\L_1$-sandwiching would (testably) imply the existence of low degree polynomials with low test error, we do not get to see labeled examples from $\Dtestjoint$. Moreover, we cannot a priori assume that the output of the training algorithm is a sandwiching polynomial, even if we know one exists.   
%while assuming that the output of the training algorithm is a sandwiching polynomial (i.e., either pointwise above or pointwise below) of some low error concept might seem a promising approach to leverage $\L_1$-sandwiching, this is too strong a requirement for the training algorithm. 

In our analysis, we crucially use the fact that the square of the difference between two polynomials is itself a polynomial whose coefficients and degree are bounded by the degree and coefficient bounds of the original polynomials. Crucially, this enables us to use the following transfer lemma which relates the squared distance between polynomials under the test distribution to their squared distance under the training distribution. In what follows, we use the notation $\x^\mindex = \prod_{i\in[d]}\x_i^{\mindex_i}$, where $\mindex\in\N^d$.

\begin{lemma}[Informal, Transfer Lemma for Square Loss, see \Cref{lemma:transfer-lemma-formal}]\label{lemma:transfer-lemma}
    Let $\Dgeneric$ be a distribution over $\X\subseteq\R^d$ and $\Stest$ a (multi)set of points in $\R^d$. If $\E_{\x\sim\Stest}[\x^\mindex]\approx \E_{\x\sim\Dgeneric}[\x^\mindex]$ for all $\mindex\in\N^d$ with $\|\mindex\|_1\le 2\degbound$, then for any degree $\degbound$ polynomials $p_1,p_2$ with bounded coefficients, it holds
    \[
        \frac{1}{|\Stest|}\sum_{\x\in \Stest}(p_1(\x)-p_2(\x))^2 \approx \E_{\x\sim\Dgeneric}[(p_1(\x)-p_2(\x))^2]
    \]
\end{lemma}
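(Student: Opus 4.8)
The plan is to expand the square $(p_1(\x)-p_2(\x))^2$ into its monomial representation and then compare the empirical and true expectations monomial-by-monomial. Concretely, write $q = p_1 - p_2$; this is a polynomial of degree at most $\degbound$ whose coefficients are bounded (by the triangle inequality, each coefficient of $q$ is at most the sum of the corresponding coefficient bounds of $p_1$ and $p_2$). Then $q^2$ is a polynomial of degree at most $2\degbound$, and I can write $q^2(\x) = \sum_{\|\mindex\|_1\le 2\degbound} c_\mindex \x^\mindex$ where each $c_\mindex$ is a sum of products of pairs of coefficients of $q$; hence $\sum_\mindex |c_\mindex|$ is controlled by (the square of) the coefficient bound of $q$ times the number of monomials of $q$, which is $d^{O(\degbound)}$. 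The key point is simply that squaring a bounded-degree, bounded-coefficient polynomial yields a bounded-degree polynomial whose coefficients are still bounded (by a larger but still controlled quantity).

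The main step is then the chain of (in)equalities
\[
  \left| \frac{1}{|\Stest|}\sum_{\x\in\Stest} q^2(\x) - \E_{\x\sim\Dgeneric}[q^2(\x)] \right|
  = \left| \sum_{\|\mindex\|_1\le 2\degbound} c_\mindex \Bigl( \E_{\x\sim\Stest}[\x^\mindex] - \E_{\x\sim\Dgeneric}[\x^\mindex] \Bigr) \right|
  \le \sum_{\|\mindex\|_1\le 2\degbound} |c_\mindex| \cdot \Bigl| \E_{\x\sim\Stest}[\x^\mindex] - \E_{\x\sim\Dgeneric}[\x^\mindex] \Bigr|.
\]
By the moment-closeness hypothesis, each factor $|\E_{\x\sim\Stest}[\x^\mindex] - \E_{\x\sim\Dgeneric}[\x^\mindex]|$ is small (say at most some parameter $\Delta$ in the formal statement), so the whole sum is at most $\Delta \cdot \sum_\mindex |c_\mindex|$, which is at most $\Delta$ times a quantity of the form $(\text{coefficient bound})^2 \cdot d^{O(\degbound)}$. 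In the formal version one chooses $\Delta$ small enough (inverse in these quantities) so that the right-hand side is at most the target accuracy $\eps$; this is exactly how the moment-estimation accuracy in the moment-matching algorithm is set.

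I would carry this out in the order: (i) define $q=p_1-p_2$ and record its degree and coefficient bounds; (ii) expand $q^2$ into monomials and bound $\sum_\mindex |c_\mindex|$ in terms of the degree, dimension, and coefficient bound; (iii) apply linearity of expectation to both $\frac{1}{|\Stest|}\sum_{\x\in\Stest}q^2(\x)$ and $\E_{\x\sim\Dgeneric}[q^2(\x)]$ to express their difference as a $c_\mindex$-weighted combination of moment discrepancies; (iv) apply the triangle inequality and the moment-matching hypothesis to conclude. None of these steps is deep — the ``hard part'' is purely bookkeeping: being careful that the coefficient bound after squaring is stated correctly (it blows up by roughly the number of monomials, i.e. a $d^{O(\degbound)}$ factor, which must be absorbed into the required moment-estimation accuracy) and that the number of monomials $\x^\mindex$ with $\|\mindex\|_1\le 2\degbound$ is counted as $\binom{d+2\degbound}{2\degbound}=d^{O(\degbound)}$ so that the final $d^{O(\degbound)}$ run-time and sample bounds are consistent. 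Everything else follows from linearity of expectation and the triangle inequality, which is why I expect the formal statement to simply track these constants explicitly rather than introduce any new idea.
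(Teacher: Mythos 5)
Your proposal is correct and follows essentially the same route as the paper: define $q = p_1 - p_2$, expand $q^2$ into monomials of degree at most $2k$ with coefficients bounded by $\pbound^2 d^{O(\degbound)}$, and use linearity of expectation plus the triangle inequality together with the per-monomial moment-matching hypothesis to bound the difference. The paper's formal proof (\Cref{lemma:transfer-lemma-formal}) does exactly this bookkeeping, bounding each coefficient of $(p_1-p_2)^2$ by $\pbound^2 d^{2\degbound}$ and multiplying by the $d^{2\degbound}$ monomial count to obtain the final $\pbound^2 d^{4\degbound}\cdot\mslack$ bound.
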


Moreover, we use the fact that, due to the $\L_2$-sandwiching assumption, we can bound quantities of the form $\E[(p(\x)-\concept(\x))^2]$ for $\concept\in\C$ from above by $O(\E[(p(\x)-\pdown(\x))^2] + \E[(\pdown(\x)-\pup(\x))^2])$, irrespective of the distribution that the expectations are taken over. Over the training distribution, the quantity $\E_{\Dgeneric}[(\pdown(\x)-\pup(\x))^2]$ is small via the definition of $\L_2$-sandwiching degree, and the quantity $\E_{\Dgeneric}[(p(\x)-\concept(\x))^2]$ because $p$ is obtained from $\L_2$ polynomial regression. If $p,\pdown,\pup$ are all low degree and the %arbitrary distribution 
dataset $\Stest$
matches low-degree moments with $\Dgeneric$, then we may apply \Cref{lemma:transfer-lemma} to bound $ \frac{1}{|\Stest|} \sum_{\x \in \Stest}[(p(\x)-\concept(\x))^2]$. Once it is shown that $p$ fits $f$ well on the testing dataset $\Stest$, standard generalization bounds allows us to conclude that it will also predict $f$ well on the testing distribution.
%using $\ex_{\Dgeneric}[(p(\x)-f(\x))^2]$.
%a corresponding quantity over the training distribution. 
Therefore, by running polynomial regression on training data to obtain $p$ and testing whether the empirical test moments match the moments of the training distribution, we acquire the following result, whose proof can be found in \Cref{appendix:tds-moment-matching}.

\begin{theorem}[$\L_2$-sandwiching implies TDS Learning]\label{theorem:l2-sandwiching-implies-tds}
    %Let $\Dgeneric$ be either the uniform distribution over the hypercube $\X={\cube{}}^d$ or the standard Gaussian distribution $\Gauss(0,I_d)$ over $\X=\R^d$ and let $\C\subseteq\{\X\to \cube{}\}$ be a concept class. 
    Let $\Dgeneric$ be a distribution over a set $\X\subseteq\R^d$ and let $\C\subseteq\{\X\to \cube{}\}$ be a concept class. Let $\eps,\delta\in(0,1)$, $\eps'=\eps/100$ $\delta'=\delta/2$ and assume that the following are true.
    %for any $\eps>0$ and $\delta\in(0,1)$, 
    \begin{enumerate}[label=\textnormal{(}\roman*\textnormal{)}]
        \item%\label{item:sandwiching} 
        ($\L_2$-Sandwiching) The $\eps'$-approximate $\L_2$ sandwiching degree of $\C$ under $\Dgeneric$ is at most $\degbound$ with coefficient bound $\pbound$. 
        \item%\label{item:concentration} 
        (Moment Concentration) If $\Sunlabelled\sim\Dgeneric^{\otimes m}$ and $m\ge \mconc$ then, with probability at least $1-\delta'$, we have that for any $\mindex\in \N^d$ with $\|\mindex\|_1\le k$ it holds $|\E_\Dgeneric[\x^\mindex]-\frac{1}{|\Sunlabelled|}\sum_{\x\in\Sunlabelled}\x^\mindex | \le \frac{\eps'}{\pbound^2d^{4\degbound}}$.
        \item%\label{item:generalization} 
        (Generalization) 
        If $\Slabelled\sim\Dgenericjoint^{\otimes m}$ where $\Dgenericjoint$ is any distribution over $\X\times\cube{}$ such that $\Dgenericmarginal = \Dgeneric$ and $m\ge \mgen$ then, with probability at least $1-\delta'$ we have that for any degree-$k$ polynomial $p$ with coefficient bound $\pbound$ it holds $|\E_{\Dgenericjoint}[(y-p(\x))^2]- \frac{1}{|\Slabelled|}\sum_{(\x,y)\in\Slabelled}[(y-p(\x))^2]|\le {\eps'}$.
    \end{enumerate}

    Then, there is an algorithm (\Cref{algorithm:l2-sandwiching}) that, upon receiving $\mtrain \ge  \mgen$ labelled samples $\Strain$ from the training distribution and $\mtest\ge C\cdot \frac{d^\degbound+\log(1/\delta)}{\eps^2} + \mconc$ unlabelled samples $\Stest$ from the test distribution (where $C>0$ is a sufficiently large universal constant), runs in time $\poly(|\Strain|,|\Stest|, d^{\degbound})$ and TDS learns $\C$ with respect to $\Dgeneric$ up to error $32\optcommon+\eps$ and with failure probability $\delta$.
\end{theorem}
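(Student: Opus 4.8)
The plan is to analyze \Cref{algorithm:l2-sandwiching}, whose two phases are: (a) estimate every moment $\frac{1}{|\Stest|}\sum_{\x\in\Stest}\x^\mindex$ of degree $\|\mindex\|_1\le 2\degbound$ and \textbf{reject} if any differs from $\E_{\Dgeneric}[\x^\mindex]$ by more than the tolerance $\frac{\eps'}{\pbound^2 d^{4\degbound}}$; (b) otherwise run $\L_2$ polynomial regression over degree-$\degbound$ polynomials with coefficient bound $\pbound$ on the labelled sample $\Strain$ to obtain $p$, and output $(\text{ACCEPT},\sign(p))$. Completeness is immediate: when $\Dtestmarginal=\Dgeneric$ the set $\Stest$ is i.i.d.\ from $\Dgeneric$, so by the Moment Concentration hypothesis, with probability $1-\delta'$ all the relevant empirical moments fall within the tolerance, the moment test accepts, and since phase (b) never rejects the algorithm accepts. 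The run-time is clear since all computations involve polynomials of degree $O(\degbound)$ over $d$ variables.

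For soundness, fix $\coptcommon\in\C$ with $\error(\coptcommon;\Dtrainjoint)+\error(\coptcommon;\Dtestjoint)=\optcommon$ (so each term is at most $\optcommon$), and let $\pdown\le\coptcommon\le\pup$ be an $\L_2$-sandwiching pair for $\coptcommon$ of degree $\le\degbound$ and coefficient bound $\pbound$ with $\E_{\Dgeneric}[(\pup(\x)-\pdown(\x))^2]\le\eps'$, guaranteed by the $\L_2$-Sandwiching hypothesis. By the triangle inequality for the $0/1$ loss,
\[
  \error(\sign(p);\Dtestjoint)\;\le\;\error(\coptcommon;\Dtestjoint)+\pr_{\x\sim\Dtestmarginal}\bigl[\coptcommon(\x)\neq\sign(p(\x))\bigr]\;\le\;\optcommon+\pr_{\x\sim\Dtestmarginal}\bigl[\coptcommon(\x)\neq\sign(p(\x))\bigr].
\]
The crucial move is a \emph{pointwise} inequality replacing the $0/1$ disagreement by a genuine low-degree polynomial: setting $Q(\x):=(p(\x)-\pup(\x))^2+(p(\x)-\pdown(\x))^2$, a degree-$\le 2\degbound$ polynomial with coefficients of magnitude $\poly(\pbound,d^{\degbound})$, we have $\ind\{\coptcommon(\x)\neq\sign(p(\x))\}\le Q(\x)$ for all $\x$: if $\coptcommon(\x)=+1$ and $\sign(p(\x))=-1$ then $\pup(\x)\ge 1$ while $p(\x)\le 0$, so $(p(\x)-\pup(\x))^2\ge 1$; the case $\coptcommon(\x)=-1$ is symmetric via $\pdown(\x)\le -1$.

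The rest is a three-step chain. (1) Since $\x\mapsto\ind\{\coptcommon(\x)\neq\sign(q(\x))\}$ ranges over a class of VC dimension $d^{O(\degbound)}$ as $q$ ranges over degree-$\degbound$ polynomials (the symmetric difference of a fixed set with a degree-$\degbound$ polynomial threshold function), standard uniform convergence from $|\Stest|\ge C(d^{\degbound}+\log(1/\delta))/\eps^2$ samples gives $\pr_{\x\sim\Dtestmarginal}[\coptcommon(\x)\neq\sign(p(\x))]\le\frac{1}{|\Stest|}\sum_{\x\in\Stest}\ind\{\coptcommon(\x)\neq\sign(p(\x))\}+\eps/4\le\frac{1}{|\Stest|}\sum_{\x\in\Stest}Q(\x)+\eps/4$. (2) Because the moment test accepted, \Cref{lemma:transfer-lemma} applied to the bounded-coefficient degree-$2\degbound$ polynomial $Q$ (the tolerance in the Moment Concentration hypothesis being exactly calibrated so the accumulated transfer error is $\le\eps'$) gives $\frac{1}{|\Stest|}\sum_{\x\in\Stest}Q(\x)\le\E_{\Dgeneric}[Q(\x)]+\eps'$. (3) We bound $\E_{\Dgeneric}[Q(\x)]=\E_{(\x,y)\sim\Dtrainjoint}[Q(\x)]$ using only the training distribution: repeatedly using $(a-b)^2\le 2(a-c)^2+2(c-b)^2$ (or, more tightly, the $L_2(\Dtrainjoint)$ triangle inequality), the bound $\pdown\le\coptcommon\le\pup$, the identity $(y-\coptcommon(\x))^2=4\ind\{y\neq\coptcommon(\x)\}$, and $\E_{\Dgeneric}[(\pup-\pdown)^2]\le\eps'$ reduces $\E_{\Dgeneric}[Q]$ to $O(\E_{\Dtrainjoint}[(y-p(\x))^2])+O(\error(\coptcommon;\Dtrainjoint))+O(\eps')$; and since $p$ minimizes the empirical square loss over a family containing $\pdown$ (and $\pup$), the Generalization hypothesis (applicable since $|\Strain|\ge\mgen$, and applied to $p$ and the sandwiching polynomials, losing $O(\eps')$) gives $\E_{\Dtrainjoint}[(y-p(\x))^2]\le\min\{\E_{\Dtrainjoint}[(y-\pdown(\x))^2],\E_{\Dtrainjoint}[(y-\pup(\x))^2]\}+O(\eps')\le 4\,\error(\coptcommon;\Dtrainjoint)+O(\eps')$, again by the same elementary inequalities. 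Chaining (1)--(3), using $\error(\coptcommon;\Dtrainjoint),\error(\coptcommon;\Dtestjoint)\le\optcommon$ and $\eps'=\eps/100$, and carefully tracking constants, yields $\error(\sign(p);\Dtestjoint)\le 32\optcommon+\eps$; a union bound over the (at most two) failure events of the Generalization and uniform-convergence steps bounds the total failure probability by $\delta$.

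I expect the main obstacle to be exactly the mismatch that forces the stronger $\L_2$-sandwiching hypothesis: moment matching can only certify that $\Stest$-empirical averages of low-degree polynomials are close to their expectations under the \emph{training} marginal $\Dgeneric$, never under the unknown test marginal $\Dtestmarginal$, and since we see no test labels we cannot even identify $\coptcommon$ or evaluate the test error of $\sign(p)$ directly. The resolution is the two-move combination above: first pass from the quantity we cannot estimate to the $0/1$ disagreement $\pr_{\Dtestmarginal}[\coptcommon\neq\sign(p)]$, which is a bounded function and therefore \emph{does} transfer from $\Stest$ to $\Dtestmarginal$ via a distribution-free VC bound; then sandwich that indicator pointwise by the low-degree polynomial $Q$ built from $\coptcommon$'s sandwiching pair, whose $\Stest$-average transfers back to $\Dgeneric$, where the polynomial-regression guarantee finally takes over. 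A secondary subtlety, handled above, is that we may not assume the regression output $p$ is itself a sandwiching polynomial --- we only know one exists for each concept --- so $\pdown,\pup$ enter purely as competitors inside the empirical risk minimization and inside the pointwise inequality $\ind\{\coptcommon\neq\sign(p)\}\le Q$. This is also exactly why $\L_1$-sandwiching is insufficient: the absolute loss is not a low-degree polynomial, so no analogue of \Cref{lemma:transfer-lemma} is available for it.
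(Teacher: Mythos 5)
Your proof follows essentially the same strategy as the paper's: bound the test error by $\error(\coptcommon;\Dtestjoint)$ plus the disagreement $\pr_{\Dtestmarginal}[\coptcommon\neq\sign(p)]$, pass from $\Dtestmarginal$ to the empirical sample $\Stest$ via a VC/uniform-convergence bound over the class $\{\sign(q):\deg q\le\degbound\}$, pointwise-bound the disagreement indicator by a low-degree polynomial, transfer the $\Stest$-average of that polynomial back to $\E_{\Dgeneric}$ via moment matching (\Cref{lemma:transfer-lemma-formal}), and finally control the $\Dgeneric$-expectation using $\L_2$ regression plus the Generalization hypothesis. Your completeness argument and your diagnosis of why $\L_1$-sandwiching is insufficient are both correct and match the paper's.

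The one place you diverge is the specific pointwise polynomial bound on the indicator. You use $Q=(p-\pup)^2+(p-\pdown)^2$, whereas the paper uses the tighter $\ind\{\coptcommon\neq\sign(p)\}\le(\coptcommon-p)^2$ and then decomposes $\|\coptcommon-p\|_{\L_2(\Stest)}\le\|\pup-\pdown\|_{\L_2(\Stest)}+\|\pdown-p\|_{\L_2(\Stest)}$ via the $\L_2(\Stest)$-triangle inequality before invoking the transfer lemma. Your $Q$ is a legitimate polynomial upper bound and avoids the extra triangle-inequality step, but it is looser: in the regime where $\sign(p)$ disagrees with $\coptcommon$, only one of the two squares in $Q$ is guaranteed to be $\ge 1$, yet you pay for both in expectation. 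Chasing the constants through the paper's training-side estimates, one gets $\|\coptcommon-p\|_{\L_2(\Dgeneric)}\le 4\sqrt{\optcommontrain}+O(\sqrt{\eps'})$, and then $\E_{\Dgeneric}[Q]=\|p-\pup\|^2+\|p-\pdown\|^2\le 2\bigl(\|\coptcommon-p\|_{\L_2(\Dgeneric)}+\sqrt{\eps'}\bigr)^2\le 64\optcommontrain+O(\eps')$, roughly doubling the paper's $32\optcommontrain$. So the argument as written establishes an error guarantee of roughly $64\optcommon+\eps$, not $32\optcommon+\eps$ as you claim and as the theorem states. (A secondary, smaller point: the transfer lemma applied to the two squared differences in $Q$ yields an additive $2\eps'$, not $\eps'$; and in your first displayed inequality you replace $\error(\coptcommon;\Dtestjoint)$ by $\optcommon$, which gives away the fact that it is $\optcommontest$ specifically and that $\optcommontest+32\optcommontrain\le 32\optcommon$.) None of this is a conceptual gap --- the result with a larger universal constant in front of $\optcommon$ still follows --- but to recover the constant $32$ you should adopt the paper's $(\coptcommon-p)^2$-plus-$\L_2(\Stest)$-triangle-inequality route rather than the sum $Q$.
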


\subsection{Lower Bounds}

We provide three lower bounds for TDS learning. The first one shows that TDS learning the class of monotone functions over $\cube{d}$ with respect to the uniform distribution requires an exponential number of examples from either the training or the test distribution, which implies a separation with regular agnostic learning. The second lower bound shows that TDS learning the class of indicators of convex sets also requires an exponential in the dimension number of samples. The third lower bound demonstrates that a linear dependence on the error term $\optcommon$ (as defined in \Cref{equation:definition:optcommon}) is necessary for TDS learning in the non-realizable setting.

\subsubsection{Lower Bound for Monotone Functions and Convex Sets in Realizable Setting}

Recent work on testable learning (which is a generalization of the classical agnostic learning framework, see \cite{rubinfeld2022testing,gollakota2022moment}) has demonstrated that the class of monotone functions over $\cube{d}$ cannot be testably learned with respect to the uniform distribution unless the learner draws at least $2^{\Omega(d)}$ training samples. Since the class of monotone functions can be agnostically learned in time $2^{\tilde{O}(\sqrt{d})}$ with respect to the uniform distribution over the hypercube $\cube{d}$, this implies that testable (agnostic) learning is strictly harder than regular agnostic learning. We show that the lower bound of $2^{\Omega(d)}$ extends to the problem of TDS learning monotone functions even in the realizable setting. Recall that we have shown that we can TDS learn halfspaces with respect to the standard Gaussian distribution in the realizable setting in time $d^{O(\log(1/\eps))}$ (\Cref{thm: improved TDS learning for general halfspaces}) but it is known that, for agnostic learning, any SQ algrorithm for the problem requires time $d^{\Omega(1/\eps^2)}$ (see \cite{goel2020statistical,diakonikolas2020near,diakonikolas2021optimality}). Therefore, we conclude that realizable TDS learning and agnostic learning are incomparable. We now provide our lower bound. For the proof, see \Cref{appendix:lower-bounds}.

%Lower Bound even in the Realizable Setting: Monotone Functions ...

\begin{theorem}[Hardness of TDS Learning Monotone Functions]
\label{thm: tds learning of monotone is hard}
Let the accuracy parameter $\epsilon$ be at most $0.1$ and the success probability parameter $\delta$ also be at most $0.1$. Then,
    in the realizable setting, any TDS learning algorithm for the class of monotone functions over $\{\pm 1\}^d$ with accuracy parameter requires either $2^{0.04 d}$ training samples or $2^{0.04 d}$ testing samples for all sufficiently large values of $d$.
\end{theorem}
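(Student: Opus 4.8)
The plan is to reduce the problem to a known lower bound for testable learning of monotone functions, but since that lower bound is about the \emph{training} distribution failing a Gaussianity-style assumption (here, being uniform on $\cube{d}$), we need to recast it in the TDS language where the obstruction instead lives on the test marginal. The starting point is a hard pair of distributions: let $\Dtestmarginal$ be a distribution supported on a carefully chosen subset $A \subseteq \cube{d}$ of exponentially small uniform mass (say $|A| = 2^{\Omega(d)}$ but $|A|/2^d = 2^{-\Omega(d)}$), chosen so that on $A$ there exist two monotone functions $g_0, g_1$ that disagree on a constant fraction of $\Dtestmarginal$, yet both $g_0$ and $g_1$ agree with a \emph{common} monotone function $f$ on all of $\cube{d}\setminus A$. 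Concretely, one wants $A$ to be an antichain (or a union of slices near the middle layer), so that one can flip the values of a monotone function freely on $A$ without violating monotonicity; the middle-layer slices of the hypercube have size $2^{\Omega(d)}$ but one can still arrange the global mass delicately. The key point is that in the realizable TDS model the learner only sees \emph{labeled} samples from the training distribution (uniform on $\cube{d}$) and \emph{unlabeled} samples from $\Dtestmarginal$.

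The core steps are as follows. First I would set up the indistinguishability argument: consider the two ``worlds'' where the ground truth monotone function is $g_0$ versus $g_1$. In both worlds the training distribution is uniform on $\cube{d}$ labeled by the true function; since $g_0$ and $g_1$ differ only on the exponentially-rare set $A$, with fewer than $2^{0.04d}$ training samples the learner sees no point of $A$ with high probability, so the labeled training samples are identically distributed in the two worlds. Second, the unlabeled test samples are drawn from $\Dtestmarginal$, which is the \emph{same} distribution in both worlds (only the labeling function differs), so the test samples are also identically distributed. Hence with $< 2^{0.04d}$ samples of each type the learner's output $\hat f$ is (statistically) independent of which world it is in. Third, I invoke soundness: in world $b$, if the learner accepts it must output $\hat f$ with $\pr_{\x\sim\Dtestmarginal}[\hat f(\x)\neq g_b(\x)]\le \eps$; but since $g_0, g_1$ disagree on a constant fraction (say $\ge 0.3$) under $\Dtestmarginal$ and $\eps \le 0.1$, no single $\hat f$ can be $\eps$-close to both, so the learner must reject in at least one world with probability $\ge 1-\delta \ge 0.9$. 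Fourth, I invoke completeness: I must also exhibit a scenario where $\Dtestmarginal = \Dgeneric$ (uniform on $\cube{d}$) and the learner is forced to accept — this is where I have to be careful, because completeness is only required when the test marginal equals the training marginal. The resolution is the standard one: run the completeness requirement on the uniform-vs-uniform instance, conclude the learner accepts there with probability $\ge 0.9$, and then argue that the learner cannot distinguish the uniform test instance from the hard test instance $\Dtestmarginal$ either — but this needs $\Dtestmarginal$ and $\Unif(\cube{d})$ to be statistically close in the sample regime, which forces $\Dtestmarginal$ to put almost all its mass on a uniform-looking set while still reserving a constant-mass ``trap'' — a tension that has to be managed by making $\Dtestmarginal$ a mixture $(1-\gamma)\Unif + \gamma(\text{something on }A)$ and tracking the parameters.

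The main obstacle I anticipate is precisely this last tension: reconciling (a) $\Dtestmarginal$ being close enough to uniform that the learner, which accepts on uniform by completeness, is also fooled into accepting on $\Dtestmarginal$, with (b) $g_0$ and $g_1$ disagreeing on a \emph{non-negligible} fraction of $\Dtestmarginal$ so that soundness is violated. These pull in opposite directions, and the trick is that the disagreement set $A$ need only have constant mass under $\Dtestmarginal$ \emph{but exponentially small mass under uniform}; the learner's test samples from $\Dtestmarginal$ \emph{will} land in $A$ a constant fraction of the time, so $\Dtestmarginal$ is \emph{not} close to uniform in total variation — which seems to break the completeness-transfer step. I expect the actual argument sidesteps this by not transferring completeness across test distributions at all, but instead using completeness \emph{only} to pin down behavior on the uniform instance and then using a more refined coupling: the learner's \emph{training} view is identical across the uniform instance and the hard instances (all train on uniform $\cube{d}$ labeled by a monotone function agreeing off $A$), so if it rejects with good probability on a hard instance it would also have to reject on the uniform instance with comparable probability unless the test samples carry the distinguishing information — and the test samples from $A$ are unlabeled, hence cannot by themselves reveal whether the truth is $g_0$ or $g_1$, only that we are in a ``hard'' regime. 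Carefully formalizing what the unlabeled test points from $A$ can and cannot certify, and choosing $A$ and the mass parameter $\gamma$ so that a $2^{0.04d}$-sample learner is genuinely stuck, is the crux; the rest (antichain constructions, Chernoff bounds on sample coverage, VC-style generalization) is routine.
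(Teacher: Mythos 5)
Your high-level plan — an indistinguishability argument pitting soundness against completeness, with training samples unlikely to reveal the critical region — is the right skeleton, and you correctly zero in on the real obstacle (step four): completeness only forces acceptance when $\Dtestmarginal = \Unif(\cube{d})$, so you cannot directly invoke it on your hard test distribution, and you cannot make your hard test distribution TV-close to uniform while also giving a ``trap'' set $A$ constant test mass but exponentially small uniform mass. Your candidate fix (a mixture $(1-\gamma)\Unif + \gamma\,\Unif(A)$) does not escape this, as you yourself observe; worse, taking $A$ to be a middle-layer slice gives the test samples a detectable Hamming-weight signature that a learner can simply test for and reject, so nothing is gained.

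The missing ingredient is the following birthday-paradox-style fact, which the paper isolates as \Cref{fact: substituting uniform with subsample does likely indistinguishable}: if $S$ is a \emph{random} multiset of $M$ i.i.d.\ uniform points of $\cube{d}$, then the distribution of $N$ i.i.d.\ draws from $\Unif(S)$ is within TV distance $N^2/M$ of $N$ fresh i.i.d.\ uniform draws. This is true even though $\Unif(S)$ itself is exponentially far from $\Unif(\cube{d})$ in TV — the point is that a \emph{sample-bounded} observer never sees the difference, because for $N^2\ll M$ the re-sample from $S$ rarely collides and so looks fresh. That fact is exactly the ``statistically close in the sample regime'' step you were looking for, and it dictates the construction: the hard test distribution should be $\Unif(S)$ for random $S$, not a mixture anchored to uniform and not an antichain (a random multiset has no structure for the learner to flag).

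With that in hand, the proof proceeds with a single hard function rather than a symmetric pair $(g_0,g_1)$: let $f_S$ be the indicator of the upward closure of $S$. This is monotone, equals $+1$ everywhere on $S$, and — because two independent uniform points of $\cube{d}$ are comparable only with probability $(3/4)^d$ — a training sample of size $N$ from $\Unif(\cube{d})$ lands in the upward closure of $S$ with probability at most $(3/4)^d M N$, so the labeled training set is essentially indistinguishable from one generated by the constant function $-1$. A hybrid argument through the intermediate $(\text{truth}=-1,\ \text{test}=\Unif(S))$ then chains the two closeness facts together: the learner's behavior on $(\text{truth}=-1,\ \text{test}=\Unif(\cube{d}))$ (where completeness forces acceptance and an output that is mostly $-1$ on the test sample) and on $(\text{truth}=f_S,\ \text{test}=\Unif(S))$ (where soundness upon acceptance forces an output that is mostly $+1$ on the test sample, since $f_S\equiv +1$ on $S$) must agree up to $N^2/M + O((3/4)^d M N)$, which is $2^{-\Omega(d)}$ for $N=2^{0.04d}$, $M=2^{0.1d}$ — the contradiction. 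Your two-worlds argument only yields that the learner fails (rejects) in at least one world with a hard test distribution, which is vacuous because rejection there is allowed; you need the hybrid chain back to the uniform-test instance to invoke completeness, and that chain is what the random-$S$ subsampling fact provides.
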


We now provide a lower bound for convex sets (see also \Cref{appendix:lower-bounds}). Since the class of indicators of convex sets can be agnostically learned in time $2^{\tilde{O}(\sqrt{d})}$ with respect to the Standard Gaussian on $\R^d$, the following theorem implies yet another separation between agnostic learning and realizable TDS learning in the distribution specific setting under the Gaussian distribution for a well-studied concept class.

%Lower Bound even in the Realizable Setting: Monotone Functions ...

\begin{theorem}[Hardness of TDS Learning Convex Sets]
	\label{thm: tds learning of convex is hard}
	Let the accuracy parameter $\epsilon$ be at most $0.1$ and the success probability parameter $\delta$ also be at most $0.1$. Then,
	in the realizable setting, any TDS learning algorithm for the class of indicators of convex sets under the standard Gaussian distribution on $\R^d$ requires either $2^{0.04 d}$ training samples or $2^{0.04 d}$ testing samples for all sufficiently large values of $d$.
\end{theorem}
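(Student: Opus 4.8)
The plan is to adapt the $2^{\Omega(d)}$ lower bound strategy for monotone functions (Theorem~\ref{thm: tds learning of monotone is hard}) to the geometric setting, exploiting the fact that indicators of convex sets are an extremely rich class under the Gaussian measure. The high-level idea is a standard ``two hard instances'' argument tailored to the TDS model: I would construct a training distribution $\Dtrainmarginal = \Gauss(0,I_d)$ together with a large family of candidate test marginals $\{\Dtestmarginal^{(v)}\}$, each supported (essentially) on a tiny region that the training sample almost surely never sees, and such that a TDS learner, having accepted, must commit to a hypothesis that is simultaneously correct against all of them --- which is impossible. Concretely, fix a large $\eps$-separated set $V$ of points on a sphere of radius $R = \Theta(\sqrt{d})$, of size $|V| = 2^{\Omega(d)}$; the Gaussian shell concentrates at radius $\sqrt{d}$, so a spherical cap of angular radius $\Theta(1)$ around any fixed direction at that radius has Gaussian mass $2^{-\Omega(d)}$, and caps around an $\eps$-separated set are pairwise disjoint. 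For each $v\in V$ let $B_v$ be a small ball (or halfspace-cap) around the point $Rv$; each $B_v$ and its complement are convex. In the completeness case $\Dtestmarginal = \Gauss(0,I_d)$ and the realizing concept can be taken to be all of $\R^d$ (the constant $+1$ function, which is the indicator of the convex set $\R^d$), so the learner must accept with probability $\ge 1-\delta$.

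The core of the argument is the soundness violation. I would show that if the learner draws fewer than $2^{0.04d}$ training and $2^{0.04 d}$ test samples, then it cannot distinguish the ``honest'' scenario (test $=$ train, label function $\equiv +1$) from the following adversarial scenario: pick $v\in V$ uniformly at random, let the test marginal put, say, half its mass on $\Gauss(0,I_d)$ restricted away from $B_v$ and half its mass uniformly on $B_v$, and let the true concept be the indicator of the convex set $\R^d \setminus B_v$ (i.e.\ $-1$ on $B_v$, $+1$ elsewhere) --- this is realizable because on the training distribution $B_v$ has mass $2^{-\Omega(d)}$, so with $\le 2^{0.04d}$ training points we see no point of $B_v$ and the labeled training sample is, with probability $1-2^{-\Omega(d)}$, identical in distribution to the honest one (all labels $+1$, points $\sim\Gauss$). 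Since the learner's accept/reject decision and output hypothesis $\hat f$ are (up to total variation $2^{-\Omega(d)}$) independent of the choice of $v$, and since the $B_v$ are disjoint with the random $B_v$ being ``unhit'' by the test sample with high probability (each $B_v$ has test-mass $\Theta(1)$ but the learner's commitment to $\hat f$ on $B_v$ was fixed before $v$ was revealed and must be wrong on $B_v$ for all but a $o(1)$ fraction of $v\in V$ by a counting/averaging argument over the $2^{\Omega(d)}$ disjoint regions), $\hat f$ incurs test error $\ge \Omega(1) > \eps$ on a random such instance. Hence on that instance the learner either rejects (contradicting that its behavior matches the honest case where it accepts w.p.\ $1-\delta$) or accepts with a bad hypothesis (violating soundness): either way soundness or completeness fails for $\eps,\delta \le 0.1$.

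The main obstacle, and the step requiring the most care, is making the indistinguishability quantitative and simultaneously handling the \emph{test} sample: unlike the training sample (which genuinely never sees $B_v$), the test sample \emph{does} land in $B_v$ with probability $\Theta(1)$ per point, so I cannot argue the test transcript is $v$-independent. Instead the argument must be that the learner's \emph{output} $\hat f$, being computed from a sample of size $2^{0.04d} \ll 2^{\Omega(d)} = |V|$, can only ``defend'' a $2^{0.04d}$-fraction of the disjoint regions $\{B_v\}_{v\in V}$ --- formally, for any fixed $\hat f$ there are at most $2^{0.04d}/\Omega(1)$ values of $v$ for which $\Pr_{\Dtestmarginal^{(v)}}[\hat f(\x)\neq \ind_{\R^d\setminus B_v}(\x)] < \eps$, because each such $v$ forces $\hat f \equiv -1$ on a region of test-mass $\Omega(1)$ and these regions are disjoint, so $\hat f$ can satisfy at most $O(1/\eps)$ of them; combined with a union bound over the ($\le 2^{0.04d}$) transcripts that occur with non-negligible probability, averaging over $v \in V$ gives failure probability $\ge 1 - \delta$ on a uniformly random instance, hence a fixed bad instance exists. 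I would also need the routine Gaussian-geometry estimates (mass of a cap at radius $\Theta(\sqrt d)$ is $2^{-\Theta(d)}$; existence of a $2^{\Omega(d)}$-size near-orthogonal/separated set on the sphere via a volume/packing bound) and a check that $\ind_{\R^d\setminus B_v}$ is a legitimate member of the class of indicators of convex sets --- which it is, since the complement of a ball is \emph{not} convex, so I should instead take $B_v$ to be the cap cut off by a halfspace, i.e.\ let the concept be the indicator of a halfspace $\{\x : \x\cdot v \le R - O(1)\}$ (convex), with the adversarial test distribution placing $\Omega(1)$ mass on the cap $\{\x\cdot v > R\}$; this preserves all mass estimates and keeps the concept convex, and the same disjointness-over-$V$ counting goes through. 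The numerology ($0.04d$, $\eps,\delta\le 0.1$) then falls out of making the cap angle a small enough constant that $|V|\ge 2^{0.05d}$ and the per-cap Gaussian mass is $\le 2^{-0.05 d}$.
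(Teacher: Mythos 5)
Your plan has a genuine gap, and it is centered on the completeness/rejection side of the argument rather than on the geometry.

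In your adversarial scenario the test marginal puts constant (say $1/2$) mass on a cap $B_v$ whose Gaussian mass is $2^{-\Omega(d)}$. That test marginal is statistically \emph{very} far from $\Gauss(0,I_d)$ and is detectable from a handful of unlabeled test samples: under the honest marginal you expect to see essentially zero of your $N\le 2^{0.04d}$ test points in $B_v$, while under the adversarial marginal you see about $N/2$ of them clustered near $Rv$. A sound learner can therefore simply reject whenever the test sample exhibits this anomaly, and rejecting here neither violates soundness (nothing is output) nor completeness (completeness only obliges acceptance when $\Dtestmarginal=\Dtrainmarginal$, which does not hold in your adversarial case). Your intended dichotomy ``either reject, contradicting the honest-case acceptance, or accept with a bad hypothesis'' therefore breaks at the first horn: rejection is perfectly consistent with the TDS definition because the two test transcripts are \emph{not} indistinguishable, and no union bound over transcripts or packing/counting over $v\in V$ can repair this, since the issue is not that $\hat f$ defends too few caps but that the learner is never forced to commit to any $\hat f$ at all. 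Your later counting observation (that a fixed $\hat f$ can be correct on only $O(1/\eps)$ of $2^{\Omega(d)}$ disjoint caps) is fine but irrelevant until acceptance is forced.

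The paper's proof closes exactly this hole with a different choice of adversarial test marginal: take $S$ to be $M=2^{0.1d}$ fresh i.i.d.\ draws from $\Gauss(0,I_d)$ and let the test marginal be the uniform distribution over the multiset $S$. By the collision bound of \Cref{fact: substituting uniform with subsample does likely indistinguishable}, $N$ draws from $\mathbb{U}_S$ are $N^2/M=2^{-\Omega(d)}$-close in total variation to $N$ draws from $\Gauss(0,I_d)$, so the learner genuinely cannot tell the two test transcripts apart and is therefore forced (by completeness plus indistinguishability) to accept and commit to a hypothesis. The convexity then enters in choosing the realizing concept: $g_S$ is the indicator of $\conv\bigl(S\cap(\B_b\setminus\B_a),\ \B_a\bigr)$, which labels essentially all of $S$ as $+1$, while on the training sample $g_S$ and the near-constant $g_\emptyset$ agree with high probability because i.i.d.\ Gaussian points in the shell $\B_b\setminus\B_a$ are pairwise far (\Cref{fact: pairs of Gaussians far}) and so the convex hull does not accidentally capture any training point (\Cref{fact: far away points have well behaved convex hull}). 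The learner is thus trapped: accepting it must output something mostly $+1$ on $S$ (soundness against $(g_S,\mathbb{U}_S)$), yet by indistinguishability from the honest run it must output something mostly $-1$ under $\Gauss$. Your approach never constructs a test marginal with this kind of sample-level indistinguishability, and without it the lower bound does not go through.
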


\begin{remark}
    In \Cref{theorem:tds-via-pq} of the Appendix, we show that TDS learning is not harder than PQ learning (which is a related learning primitive, see \cite{goldwasser2020beyond,kalai2021efficient}). \cite{kalai2021efficient} show that the class of parities over $\cube{d}$ can be efficiently PQ learned, which provides another example where TDS learning is easier than agnostic learning.
\end{remark}

\subsubsection{Lower Bound for the Error Guarantee in the Agnostic Setting}

We now focus on the agnostic setting and provide an information theoretic lower bound on the error upon acceptance. Our lower bound is simple and demonstrates that a linear dependence on the error factor $\optcommon$ (see \Cref{equation:definition:optcommon}) is unavoidable for TDS learning.

\begin{theorem}[Lower Bound for the Error in the Agnostic Setting]\label{theorem:error-lower-bound-easy}
    Let $\X$ be any domain, 
    %let $\gamma\in[0,1/3)$, 
    $\Dgeneric$ a distribution over $\X$ 
    %such that for any $\z\in\X$ we have $\pr_{\x\sim\Dgeneric}[\x=\z]\le \gamma$ 
    and $\C$ a class of concepts that map $\X$ to $\cube{}$ that is closed under complement, i.e., if $\concept\in \C$ then $-\concept\in\C$. Then, for any $\eta\in(0,1/2)$, any $\eps\in(0,\eta/2)$ and $\delta\in(0,1/3)$, no TDS learning algorithm for $\C$ w.r.t. $\Dgeneric$ with finite sample complexity and failure probability $\delta$, can have an error guarantee better than $\optcommon(1-2\eta)+\eps = \Omega(\optcommon)+\eps$.
\end{theorem}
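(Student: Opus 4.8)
The plan is to construct, for any domain $\X$, any $\Dgeneric$, and any complement-closed $\C$, a pair of agnostic TDS instances that are \emph{statistically indistinguishable} to every learner but that force large test error on at least one of them. Fix an arbitrary $\concept\in\C$, so that $-\concept\in\C$ as well. Take the training distribution $\Dtrainjoint$ to draw $\x\sim\Dgeneric$ and an \emph{independent uniformly random} label $y\in\cube{}$; this makes $\error(g;\Dtrainjoint)=\tfrac12$ for every concept $g$, and --- crucially --- the training distribution will be the \emph{same} in both instances. For the test data, let the first test distribution draw $\x\sim\Dgeneric$ and set $y=\concept(\x)$, and let the second draw $\x\sim\Dgeneric$ and set $y=-\concept(\x)$. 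In both instances $\Dtestmarginal=\Dgeneric=\Dtrainmarginal$, so the completeness requirement applies; and since $\error(g;\Dtrainjoint)=\tfrac12$ for all $g$ while $\concept$ (resp.\ $-\concept$) exactly realizes the test labels of the first (resp.\ second) instance, a one-line computation gives $\optcommon=\tfrac12$ in both instances.

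I would then isolate the two facts the proof uses. First, \textbf{indistinguishability}: the learner's entire input consists of i.i.d.\ labeled draws from $\Dtrainjoint$ together with i.i.d.\ \emph{unlabeled} draws from $\Dgeneric$, and this joint law is literally the same in the two instances; hence the distribution of the learner's transcript (either \textsf{REJECT}, or $(\textsf{ACCEPT},\hat\concept)$) is the same in both. Second, the \textbf{complement identity}: for each $\x$, the value $\hat\concept(\x)\in\cube{}$ disagrees with exactly one of $\concept(\x),-\concept(\x)$, so $\error(\hat\concept,\concept;\Dgeneric)+\error(\hat\concept,-\concept;\Dgeneric)=1$, and in particular the larger of the two is at least $\tfrac12$. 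Observe also that the test error of a hypothesis $\hat\concept$ is exactly $\error(\hat\concept,\concept;\Dgeneric)$ under the first test distribution and exactly $\error(\hat\concept,-\concept;\Dgeneric)$ under the second.

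Now suppose, for contradiction, that some TDS learner with failure probability $\delta\in(0,1/3)$ satisfies soundness with an error bound $\gamma<\optcommon(1-2\eta)+\eps$ on these instances; since $\optcommon=\tfrac12$ and $\eps<\eta/2<\eta$, this gives $\gamma<\tfrac12-\eta+\eps<\tfrac12$. Let $A$ be the acceptance event; by completeness, $\pr[A]\ge 1-\delta$ (in either instance, hence as a statement about the common transcript law). Set $p_1=\pr[A\text{ and }\error(\hat\concept,\concept;\Dgeneric)>\gamma]$ and $p_2=\pr[A\text{ and }\error(\hat\concept,-\concept;\Dgeneric)>\gamma]$; by indistinguishability these numbers do not depend on the instance. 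By the complement identity, whenever the learner accepts at least one of the two errors exceeds $\gamma$ (as $\gamma<\tfrac12$), so $p_1+p_2\ge \pr[A]\ge 1-\delta$, whence $\max\{p_1,p_2\}\ge(1-\delta)/2>\delta$, the last step being exactly $\delta<1/3$. But soundness on the first instance says that with probability $\ge 1-\delta$ acceptance implies first-instance test error $\le\gamma$, i.e.\ $p_1\le\delta$; and soundness on the second instance gives $p_2\le\delta$. Since $p_1,p_2$ are instance-independent, both bounds hold simultaneously, contradicting $\max\{p_1,p_2\}>\delta$. Therefore every TDS learner with $\delta<1/3$ must have an error guarantee of at least $\tfrac12=\optcommon$ on this family, which is $>\optcommon(1-2\eta)+\eps=\Omega(\optcommon)+\eps$.

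The argument has no genuinely difficult step; the two points that need care are (i) making the training distribution \emph{label-uninformative} (uniform random labels) so that the two instances are truly indistinguishable to the learner while $\optcommon$ stays pinned at exactly $\tfrac12$, and (ii) the bookkeeping that plays completeness (a lower bound on $\pr[A]$) against soundness (upper bounds on $p_1$ and $p_2$) through the threshold $\delta<1/3$. The hypotheses $\eta\in(0,1/2)$ and $\eps<\eta/2$ enter only arithmetically: they guarantee $\optcommon(1-2\eta)+\eps<\optcommon$, so that ``beating'' the claimed bound would in particular beat $\optcommon$ itself --- which is exactly what the indistinguishability argument refutes.
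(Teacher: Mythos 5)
Your proof is correct, and it takes a genuinely different route from the paper's. The paper fixes a realizable training distribution ($y=\concept(\x)$) and constructs a wrapper distinguisher $\T$ that runs the TDS learner and then empirically re-tests the accepted hypothesis on fresh \emph{training} data; it plays off two cases, $\Dtestjoint$ realized by $\concept$ (so $\optcommon=0$, $\T$ accepts) versus realized by $-\concept$ (so $\optcommon=1$, $\T$ rejects), and derives a contradiction from the fact that $\T$'s input distribution is identical in both. You instead randomize the training labels, which collapses $\optcommon$ to exactly $\tfrac12$ in both test instances and makes the instances \emph{statistically identical} to the learner, so you can argue directly on the transcript law: the complement identity $\error(\hat\concept,\concept;\Dgeneric)+\error(\hat\concept,-\concept;\Dgeneric)=1$ forces $p_1+p_2\ge\pr[A]\ge 1-\delta$ while soundness caps each $p_i\le\delta$, contradicting $\delta<1/3$. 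Your approach buys a cleaner symmetric construction, avoids the extra Hoeffding/validation machinery entirely, and --- notably --- actually covers the full range $\delta\in(0,1/3)$ stated in the theorem, whereas the paper's ``$1-3\delta>1/2$'' step as written only holds for $\delta<1/6$ (a constant the paper would need to tighten). The paper's approach, in exchange, exhibits the obstruction even when the training distribution is noiselessly realizable and exercises both extreme values $\optcommon\in\{0,1\}$, which may be of independent interest.
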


\begin{proof}
    Let $\Dtrainjoint$ denote the training distribution and $\Dtestjoint$ the test distribution, which are both over $\X\times\cube{}$.
    Suppose that for $\eta\in(0,1/2)$ and $\eps\in(0,\eta/2)$ there exists an algorithm $\A$, that, upon acceptance and with probability at least $1-\delta$, outputs $\hat \concept\in\C$ with $\error(\hat\concept;\Dtestjoint) \le \optcommon(1-2\eta)+\eps$ ($\optcommon =\optcommon(\C;\Dtrainjoint,\Dtestjoint)$, see \Cref{equation:definition:optcommon}). Let $C>0$ be a sufficiently large universal constant.

    We consider the following algorithm $\T$. Algorithm $\T$ uses an oracle to $\A$ and accepts or rejects according to the following criteria.
    \begin{itemize}
        \item If $\A$ rejects, then $\T$ rejects.
        \item If $\A$ accepts and outputs $\hat \concept\in \C$, then $\T$ draws $\frac{C}{\eta^2}\log(1/\delta)$ examples $\Slabelled_{\T}$ from $\Dtrainjoint$ and rejects if $\pr_{(\x,y)\in\Slabelled_{\T}}[\hat\concept(\x)\neq y] > 3\eta/4$. Otherwise, $\T$ accepts.
    \end{itemize}

    Fix some $\concept\in \C$ and let $\Dtrainjoint$ be the distribution over $\X\times\cube{}$ whose marginal on $\X$ is $\Dgeneric$ and the labels are generated as $y(\x) = \concept(\x)$. Consider the following two cases about $\Dtestjoint$. 
    \paragraph{Case 1.} First, suppose that $\Dtestjoint$ has $\Dgeneric$ as marginal on $\X$ and $y(\x) = \concept(\x)$. Then, $\A$ accepts with probability at least $1-\delta$, due to completeness. We have $\optcommon = 0$ (attained by $\concept$) and, hence, upon acceptance, $\pr_{(\x,y)\sim\Dtrainjoint}[\hat\concept(\x)\neq y] = \pr_{(\x,y)\sim\Dtestjoint}[\hat\concept(\x)\neq y] \le \eps \le \eta/2$ with probability at least $1-\delta$. By a Hoeffding bound, we then have that $\T$ must accept with probability at least $1-\delta$.
    Overall, $\T$ accepts with probability at least $1-3\delta > 1/2$.
    
    \paragraph{Case 2.} Second, suppose that $\Dtestjoint$ has $\Dgeneric$ as marginal on $\X$ and $y(\x) = -\concept(\x)$. Then, we have that $\optcommon = 1$ (because for any point $\x\in\X$, any classifier will either classify $\x$ incorrectly under $\Dtrainjoint$ or under $\Dtestjoint$). By assumption, we have $\pr_{(\x,y)\sim\Dtestjoint}[\hat\concept(\x)\neq y] \le {\optcommon}{(1-2\eta)}+\eps \le{1-2\eta}+\eps$ with probability at least $1-2\delta$ (by completeness and soundness). Since the test labels are the negation of the train labels, we have $\pr_{(\x,y)\sim\Dtestjoint}[\hat\concept(\x)\neq y] = 1 - \pr_{(\x,y)\sim\Dtrainjoint}[\hat\concept(\x)\neq y]$, and $\pr_{(\x,y)\sim\Dtrainjoint}[\hat\concept(\x)\neq y] \ge 2\eta -\eps \ge \eta$ (since $\eps\le \eta/2$). By a Hoeffding bound, $\T$ will reject with probability at least $1-3\delta>1/2$.

    We have reached a contradiction, because in both cases, the input of $\T$ does not depend on the test labels, and everything else remains the same in both cases. Therefore, $\T$ should have the same behavior in both cases and we conclude that the algorithm $\A$ cannot exist as defined.
\end{proof}

\begin{remark}
    While the above lower bound demonstrates that the error of a TDS learning algorithm can be necessarily high in certain settings, we emphasize that the construction corresponds to a contrived case where the training distribution does not provide enough information about the test distribution and, therefore, any meaningful notion of learning should be hopeless (see also \cite{BenDavid2012OnTH}).% (see also \cite{goodman1983fact}).
\end{remark} %s6_additional_properties}

%\newpage
%\appendixpage
\section{Notation and Basic Definitions}\label{appendix:notation}

We let $\X\subseteq \R^d$ and, in particular, $\X$ will either be the $d$-dimensional hypercube $\cube{d}$ or the $d$-dimensional Euclidean space $\R^d$. For a distribution $\Dgeneric$ over $\X$, we use $\E_\Dgeneric$ (or $\E_{\x\sim \Dgeneric}$) to refer to the expectation over distribution $\Dgeneric$ and for a given (multi)set $\Sunlabelled$, we use $\E_\Sunlabelled$ (or $\E_{\x\sim \Sunlabelled}$) to refer to the expectation over the uniform distribution on $\Sunlabelled$ (i.e., $\E_{\x\sim \Sunlabelled}[g(\x)] = \frac{1}{|\Sunlabelled|}\sum_{\x\in \Sunlabelled}g(\x)$, counting possible duplicates separately). We let $\R_+ = (0,\infty)$. 

For a function $p:\X\to \R$ and $r\in\N$, we define the $\L_r$ norm of $p$ under $\Dgeneric$ as $\|p\|_{\L_r(\Dgeneric)} = {\ex_{\x\sim \Dgeneric}[p(\x)^r]}^{\frac{1}{r}}$. %Similarly, we define the $L_1$ norm of $p$ under $\Dgeneric$ as $\|p\|_{L_1(\Dgeneric)} = {\ex_{\x\sim \Dgeneric}[|p(\x)|]}$. %When the distribution $\Dgeneric$ is clear by the context, we may instead use the notation $\|p\|_2$ and $\|p\|_1$ in place of $\|p\|_{\L_2(\Dgeneric)}$ and $\|p\|_{L_1(\Dgeneric)}$ correspondingly. 
For $\x\in \X$ where $\x = (\x_1,\x_2,\dots,\x_d)$ and for $\mindex \in \N^d$, we denote with $\x^\mindex$ the product $\prod_{i\in[d]}\x_i^{\mindex_i}$, $\moment_\mindex = \E[\x^\mindex]$ and $\|\mindex\|_1 = \sum_{i\in[d]}\mindex_i$. For a polynomial $p$ over $\R^d$ and $\mindex\in \N^d$, we denote with $p_\mindex$ the coefficient of $p$ corresponding to $\x^\mindex$, i.e., we have $p(\x) = \sum_{\mindex\in\N^d}p_\mindex \x^\mindex$. If $p$ is a polynomial over $\cube{d}$, then we can always express it in a unique multilinear form, so we will only use coefficients $p_\mindex$ with $\mindex \in \{0,1\}^d$, i.e., $p(\x) = \sum_{\mindex\in \{0,1\}^d} p_\mindex \x^\mindex$. We define the degree of $p$ and denote $\deg(p)$ the maximum degree of a monomial whose coefficient in $p$ is non-zero, i.e., $\deg(p) = \max\{\|\mindex\|_1: p_\mindex \neq 0\}$. 

We denote with $\S^{d-1}$ the $d-1$ dimensional sphere on $\R^d$. For any $\vv_1,\vv_2\in\R^d$, we denote with $\vv_1\cdot \vv_2 $ the inner product between $\vv_1$ and $\vv_2$ and we let $\measuredangle (\vv_1,\vv_2)$ be the angle between the two vectors, i.e., the quantity $\theta\in [0,\pi]$ such that $\|\vv_1\|_2\|\vv_2\|_2\cos(\theta) = \vv_1\cdot \vv_2$. For $\vv\in\R^d,\tau\in \R$, we call a function of the form $\x\mapsto \sign(\vv\cdot \x)$ an origin-centered (or homogeneous) halfspace and a function of the form $\x\mapsto \sign(\vv\cdot \x - \tau)$ a general halfspace over $\R^d$. 

\paragraph{$\L_2$-sandwiching degree.} We now define the notion of $\L_2$-sandwiching polynomials for a function $\concept$ with respect to a distribution $\Dgeneric$, i.e., a pair of polynomials such that one of them is pointwise above $\concept$, the other one is pointwise below $\concept$ and the $\L_2$ distance between the two polynomials with respect to $\Dgeneric$ is small. While the notion of $L_1$ sandwiching polynomials is standard in the literature of pseudorandomness (see, e.g., \cite{bazzi2009polylogarithmic}) and has applications to testable learning (see \cite{gollakota2023efficient}), in order to obtain our main results, we make use of the stronger notion of $\L_2$-sandwiching polynomials which we define below.

\begin{definition}[$\L_2$-sandwiching polynomials]\label{definition:l2-sandwiching}
    Consider a product set $\X$ and a distribution $\Dgeneric$ over $\X$. For $\eps>0$ and $\concept:\X\to \cube{}$, we say that the polynomials $\pup, \pdown:\X \to \R$ are $\eps$-approximate $\L_2$-sandwiching polynomials for $\concept$ under $\Dgeneric$ if the following are true.
    \begin{enumerate}
        \item $\pdown(\x) \le \concept(\x) \le \pup(\x)$, for all $\x \in \X$.
        \item $\|\pup-\pdown\|_{\L_2(\Dgeneric)}^2 \le \eps$
    \end{enumerate}
    Moreover, for $\eps>0$, a concept class $\C\subseteq\{\X\to \cube{}\}$ and $k,\pbound>0$, we say that the $\eps$-approximate $\L_2$-sandwiching degree of $\C$ under $\Dgeneric$ is at most $k$ and with coefficient bound $\pbound$ if for any $\concept\in\C$ there are $\eps$-approximate $\L_2$-sandwiching polynomials $\pup, \pdown$ for $\concept$ such that $\deg(\pup),\deg(\pdown) \le k$ and each of the coefficients of $\pup,\pdown$ are absolutely bounded by $\pbound$.
\end{definition}

\paragraph{Learning Setup.} 
Consider $\Dtrainjoint, \Dtestjoint$ to be distributions over $\X\times \cube{}$ and let $\Dtrainmarginal, \Dtestmarginal$ be the corresponding marginal distributions on $\X\subseteq\R^d$. Our tester-learners receive labelled examples from $\Dtrainjoint$ and unlabelled examples from $\Dtestmarginal$ and their goal is to produce a hypothesis with low error on $\Dtestjoint$ or potentially reject but only if distribution shift is detected. Given a hypothesis class $\C\subseteq\{\X\to \cube{}\}$, $h_1, h_2:\X\to\cube{}$ and distributions $\Dgenericjoint, \Dtrainjoint, \Dtestjoint$ over $\X\times\cube{}$, we define $\error(h_1; \Dgenericjoint) = \pr_{(\x,y)\sim \Dgenericjoint}[y \neq h_1(\x)]$ and $\error(h_1, h_2; \Dgenericmarginal) = \pr_{\x\sim \Dgenericmarginal}[h_1(\x) \neq h_2(\x)]$ as well as the following quantity, which is standard in the domain adaptation literature (see, e.g., \cite{ben2006analysis,blitzer2007learning,ben2010theory,david2010impossibility}).
\begin{equation}
    \optcommon(\C;\Dtrainjoint,\Dtestjoint) := \min_{\concept\in \C}\{ \error(\concept; \Dtrainjoint) + \error(\concept; \Dtestjoint)\}, \text{ attained by }\coptcommon\in \C \label{equation:definition:optcommon}
\end{equation}
Observe that parameter $\optcommon$ becomes small whenever the training and test errors can be simultaneously minimized by a common classifier in $\C$. Clearly, if there is no relationship between the training and test distributions, then using data from the training distribution does not reveal any information about the test distribution and, therefore, learning is hopeless (see also \Cref{theorem:error-lower-bound-easy}). We will assume (as is common in the domain adaptation literature) that the parameter $\optcommon$ is a valid choice for quantifying the relationship between the training and test distributions, in the sense that considering $\optcommon$ to be small is not unrealistic. In particular, we will partly focus on the following setting where $\optcommon$ is zero. To distinguish between the two settings, we say that we are in the \textbf{agnostic setting} when $\optcommon\ge 0$ (arbitrary) and in the \textbf{realizable setting} when $\optcommon=0$. When $\optcommon=0$, there exists a classifier in $\C$ that achieves both zero training loss and test loss and we therefore refer to this setting as realizable. Another (slightly more specific) way to view the realizable setting is by considering the labelled distribution $\Dtrainjoint$ (resp. $\Dtestjoint$) formed as follows: for some $\coptcommon\in \C$, draw an example $\x$ from $\Dtrainmarginal$ (resp. $\Dtestmarginal$) and form the pair $(\x,y) \sim \Dtrainjoint$ (resp. $(\x,y)\sim \Dtestjoint$) by setting $y = \coptcommon(\x)$.
We now provide a formal definition of our learning model.

\begin{definition}[Testable Learning with Distribution Shift (TDS Learning)]
\label{definition:tds-learning}
    Let $\X\subseteq \R^d$ and consider a distribution $\Dgeneric$ over $\X$ and a concept class $\C\subseteq\{\X\to \cube{}\}$. For some $\psi:[0,1]\to [0,1]$ and $\eps,\delta\in(0,1)$, we say that an algorithm $\A$ testably learns $\C$ with distribution shift w.r.t. $\Dgeneric$ up to error $\psi(\optcommon)+\eps$ and probability of failure $\delta$ if the following is true. For any distributions $\Dtrainjoint, \Dtestjoint$ over $\X\times\cube{}$ such that $\Dtrainmarginal = \Dgeneric$, algorithm $\A$, upon receiving a large enough set of labelled samples $\Strain$ from the training distribution $\Dtrainjoint$ and a large enough set of unlabelled samples $\Stest$ from the test distribution $\Dtestmarginal$, either rejects $(\Strain,\Stest)$ or accepts and outputs a hypothesis $h:\X\to \cube{}$ with the following guarantees.
    \begin{enumerate}[label=\textnormal{(}\alph*\textnormal{)}]
    \item (Soundness.) With probability at least $1-\delta$ over the samples $\Strain,\Stest$ we have: 
    
    If $A$ accepts, then the output $h$ satisfies $\error(h;\Dtestjoint) \leq \psi(\optcommon) + \eps$.
    \item (Completeness.) Whenever $\Dtestmarginal = \Dtrainmarginal$, $A$ accepts with probability at least $1-\delta$ over the samples $\Strain,\Stest$.
    \end{enumerate}
    In particular, we say that $\A$ testably learns $\C$ with distribution shift w.r.t. $\Dgeneric$ in the realizable setting, if $\A$ is required to satisfy the above guarantees only when $\Dtrainjoint,\Dtestjoint$ and $\C$ are realizable (where $\optcommon = 0 = \psi(\optcommon)$).
    \end{definition}

\section{TDS Learning of Homogeneous Halfspaces}\label{appendix:tds-homogeneous-agnostic}

We now provide a proof of \Cref{theorem:improved tds for halfspaces agnostic}, which we restate here for convenience.

\begin{theorem}[Agnostic TDS learning of Halfspaces]%\label{theorem:improved tds for halfspaces agnostic}
    Let $\C$ be the class of origin-centered halfspaces over $\R^d$ and $C>0$ a sufficiently large universal constant. Let $\A,\T$ be as defined in \Cref{proposition:agnostic-learning-halfspaces,proposition:testably-bounding-halfspace-disagreement}. Let $m_\A$ be the sample complexity of $\A(\eps/C,\delta/4)$ and $m_\T = \frac{Cd^4}{\eps^2\delta}$. Then, \Cref{algorithm:improved for halfspaces agnostic}, given inputs $\Strain,\Stest$ of sizes $|\Strain| \ge m_\A$ and $|\Stest| \ge m_\T$ is a TDS learning algorithm for $\C$ w.r.t. any isotropic log-concave distribution $\Dgeneric$ with error $O(\optcommon)+\eps$ and run-time $\poly(d,\frac{1}{\eps})\log(\frac{1}{\delta})$, where $\eps$ is the accuracy parameter and $\delta$ is the failure probability.
\end{theorem}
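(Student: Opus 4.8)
The plan is to realize the three-phase strategy sketched in the technical overview. \Cref{algorithm:improved for halfspaces agnostic} first runs the polynomial-time agnostic halfspace learner $\A$ of \cite{diakonikolas2020non} (\Cref{proposition:agnostic-learning-halfspaces}) on $\Strain$ with parameters $(\eps/C,\delta/4)$, obtaining a homogeneous halfspace $\hat\concept:\x\mapsto\sign(\hat\vv\cdot\x)$; it then runs the tester $\T$ of \Cref{proposition:testably-bounding-halfspace-disagreement} centred at $\hat\vv/\|\hat\vv\|_2$ on the unlabelled test set $\Stest$ with parameters $(\pi/4,\delta/4)$, outputting REJECT if $\T$ rejects and $(\text{ACCEPT},\hat\concept)$ otherwise. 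The three phases of the analysis are: (i) \emph{training} — $\error(\hat\concept;\Dtrainjoint)\le O(\eta)+\eps$ where $\eta=\min_{\concept\in\C}\error(\concept;\Dtrainjoint)$; (ii) \emph{parameter recovery} — $\hat\vv$ is angularly close to the parameter $\vopt$ of the common minimizer $\coptcommon$; (iii) \emph{testing} — $\T$ transfers that closeness into a bound on the test-disagreement between $\hat\concept$ and $\coptcommon$.

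Completeness is immediate: the algorithm rejects only if $\T$ rejects, and when $\Dtestmarginal=\Dtrainmarginal=\Dgeneric$ the set $\Stest$ consists of i.i.d.\ samples from an isotropic log-concave distribution, so by the second part of \Cref{proposition:testably-bounding-halfspace-disagreement} (using $|\Stest|\ge m_\T$) the tester accepts with probability at least $1-\delta/4$. For soundness, condition on two events, each of probability at least $1-\delta/4$: that $\A$ succeeds (using $|\Strain|\ge m_\A$), and that $|\Stest|\ge m_\T$ is large enough that $|\pr_{\Dtestmarginal}[h_1\neq h_2]-\pr_{\x\sim\Stest}[h_1\neq h_2]|\le\eps$ simultaneously for all halfspaces $h_1,h_2$ (a VC bound, since symmetric differences of halfspaces have VC dimension $O(d)$). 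Assume $\T$ accepts. By the triangle inequality for $0/1$ disagreement, together with $\eta\le\optcommon$ and $\error(\coptcommon;\Dtrainjoint)\le\optcommon$,
\[
    \pr_{\Dtrainmarginal}[\sign(\hat\vv\cdot\x)\neq\sign(\vopt\cdot\x)]\le\error(\hat\concept;\Dtrainjoint)+\error(\coptcommon;\Dtrainjoint)\le O(\optcommon)+\eps,
\]
and since $\Dtrainmarginal=\Dgeneric$ is isotropic log-concave this disagreement probability is within constant factors of $\measuredangle(\hat\vv,\vopt)$, so $\measuredangle(\hat\vv,\vopt)\le O(\optcommon)+\eps$. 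If $\measuredangle(\hat\vv,\vopt)\le\pi/4$, the first part of \Cref{proposition:testably-bounding-halfspace-disagreement} (applied with $\vv_2=\vopt$) combined with the VC bound yields $\pr_{\Dtestmarginal}[\hat\concept\neq\coptcommon]\le C\measuredangle(\hat\vv,\vopt)+\eps=O(\optcommon)+\eps$, hence $\error(\hat\concept;\Dtestjoint)\le\error(\coptcommon;\Dtestjoint)+\pr_{\Dtestmarginal}[\hat\concept\neq\coptcommon]\le\optcommon+O(\optcommon)+\eps=O(\optcommon)+\eps$. If instead $\measuredangle(\hat\vv,\vopt)>\pi/4$, then the lower-bound half of the angle–disagreement equivalence together with the display above forces $\optcommon=\Omega(1)$, so choosing the universal constant in the claimed error bound large enough makes $O(\optcommon)+\eps\ge 1\ge\error(\hat\concept;\Dtestjoint)$ vacuously; either way soundness holds, and a union bound over the two conditioning events gives the $1-\delta$ guarantee. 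The run-time is dominated by the calls to $\A$ and $\T$, both polynomial in $d$ and $1/\eps$, with the $\log(1/\delta)$ dependence obtained by the standard boosting of \Cref{proposition:boosting-success-probability}.

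The main subtlety is reconciling what $\T$ certifies with what we need: $\T$ bounds only the \emph{empirical} test-disagreement between $\hat\concept$ and halfspaces within angle $\pi/4$ of $\hat\vv$, so one must pass to population disagreement via the VC bound and, separately, handle the regime $\measuredangle(\hat\vv,\vopt)>\pi/4$ in which $\T$ says nothing about $\vopt$ — but that is precisely the regime where $\optcommon$ is bounded below by an absolute constant and the $O(\optcommon)+\eps$ target is vacuous, so no real work is needed there. Everything else is imported: the agnostic learner of \cite{diakonikolas2020non}, the fact that disagreement under an isotropic log-concave measure is within constant factors of the angle, and the (nontrivial) tester of \cite{gollakota2023tester} restated as \Cref{proposition:testably-bounding-halfspace-disagreement}.
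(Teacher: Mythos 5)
Your proof follows the same three-phase structure as the paper's (run the agnostic learner of \Cref{proposition:agnostic-learning-halfspaces} on $\Strain$, relate disagreement to angles via \Cref{fact:log-concave-disagreement}, certify the test marginal with the tester of \Cref{proposition:testably-bounding-halfspace-disagreement}), so the overall approach is the intended one. The substantive point of divergence is the angle parameter passed to $\T$: the paper's \Cref{algorithm:improved for halfspaces agnostic} first estimates $\hat\eps = \pr_{(\x,y)\in\Slabelled_2}[\sign(\hat\vv\cdot\x)\neq y]$ on a fresh holdout split and then runs $\T(\hat\eps,\delta/2)$, whereas you run $\T(\pi/4,\delta/4)$ with a fixed worst-case angle. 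Your choice is defensible and arguably cleaner, because $\T$'s soundness guarantee only covers directions $\vv_2$ with $\measuredangle(\vv_1,\vv_2)\le\theta$, and applying it to $\vopt$ therefore requires checking that precondition; with $\theta=\pi/4$ you handle the residual case $\measuredangle(\hat\vv,\vopt)>\pi/4$ explicitly and show it forces $\optcommon=\Omega(1)$ so the $O(\optcommon)+\eps$ target is vacuous. The paper instead applies the proposition to $\vopt$ directly with $\theta=\hat\eps$ and does not visibly verify $\measuredangle(\hat\vv,\vopt)\le\hat\eps$, which is not automatic since both sides are $O(\optcommon)+O(\eps')$ but with different hidden constants; your version sidesteps this. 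Two things worth tightening: first, after ``disagreement is within constant factors of the angle'' you write $\measuredangle(\hat\vv,\vopt)\le O(\optcommon)+\eps$ where the additive term should really carry the constant from \Cref{fact:log-concave-disagreement} and be absorbed by invoking $\A(\eps/C,\cdot)$ with $C$ large enough; second, for the large-angle branch to conclude $\optcommon=\Omega(1)$ you implicitly need the additive $O(\eps/C)$ contribution to the angle bound to be at most, say, $\pi/8$, which again is where the ``sufficiently large universal constant $C$'' earns its keep --- it is worth saying this explicitly rather than only invoking $C$ when inflating the final $O(\optcommon)$ term. With those constant bookkeeping steps made explicit, the argument is complete.
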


\begin{algorithm}
	\caption{Agnostic TDS Learning of Halfspaces}\label{algorithm:improved for halfspaces agnostic}
	\KwIn{Sets $\Strain$ from $\Dtrainjoint$, $\Stest$ from $\Dtestmarginal$, parameters $\eps>0$, $\delta\in(0,1)$}
		%\STATE Set $\eps' = \eps/100$, $\delta' = \delta/2$, $\degbound = k(\eps')$, $\pbound = \pbound(\eps')$ (see cond. \ref{item:sandwiching}) and $\mslack=\frac{\eps'}{\pbound^2d^{4\degbound}}$ (see cond. \ref{item:concentration}) 
		Set $\eps' = \eps/C$ where $C$ is some sufficiently large universal constant.\\
        Let $m_\A$ be the sample complexity of $\A(\eps',\delta/4)$.\\
        Split $\Strain$ to $\Slabelled_1,\Slabelled_2$ with sizes $m_\A,\frac{C}{\eps^2}\log(1/\delta)$ \\
        Run $\A(\eps',\delta/4)$ on $\Slabelled_1$ and obtain $\hat\vv\in\Sphere^{d-1}$ \\
        Let $\hat{\eps} = \pr_{(\x,y)\sim\Slabelled_2}[\sign(\hat\vv\cdot\x)\neq y]$. \\
        Run $\T(\hat\eps,\delta/2)$ on $\Stest$. \\
        \textbf{Reject} and terminate if $\T$ rejects. \\
        \textbf{Otherwise,} output $\hat\concept:\R^d\to\cube{}$ with $\hat\concept:\x\to \sign(\hat\vv\cdot \x)$.
\end{algorithm}

In order to prove the above theorem, we make use of the following result from \cite{diakonikolas2020non}.

\begin{proposition}[Agnostic Learning of Homogeneous Halfspaces, Theorem 3.1 in \cite{diakonikolas2020non}]\label{proposition:agnostic-learning-halfspaces}
    Let $\Dgenericjoint$ be a distribution over $\R^d\times\cube{}$ such that its marginal on $\R^d$ is isotropic log-concave. Then there is an algorithm $\A$ such that for any $\eps>0$ and $\delta\in(0,1)$, $\A(\eps,\delta)$, upon drawing $m = \tilde{O}(\frac{d}{\eps^4}\log(1/\delta))$ independent examples from $\Dgenericjoint$ and in time $\poly(d,1/\eps)\cdot \log(1/\delta)$, outputs $\hat\vv\in\Sphere^{d-1}$ such that, with probability at least $1-\delta$, the corresponding halfspace has error at most $O(\eta)+\eps$, where $\eta$ is the error of the optimal halfspace on $\Dgenericjoint$.
\end{proposition}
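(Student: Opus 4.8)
The result is Theorem~3.1 of \cite{diakonikolas2020non}; here is the plan I would follow to prove it. The first step is a reduction from the agnostic $0/1$-error guarantee to \emph{approximate parameter recovery}: it suffices to output $\hat\vv\in\Sphere^{d-1}$ with $\measuredangle(\hat\vv,\vv^*)\le O(\eta)+\eps$, where $\vv^*$ is the unit normal of a halfspace achieving error $\eta$ on $\Dgenericjoint$. Indeed, since the marginal $\Dgeneric$ is isotropic log-concave, the disagreement probability between two homogeneous halfspaces is $\Theta$ of the angle between their normals, so by the triangle inequality for the $0/1$ loss $\pr_{(\x,y)\sim\Dgenericjoint}[y\neq\sign(\hat\vv\cdot\x)]\le \eta + O(\measuredangle(\hat\vv,\vv^*)) = O(\eta)+\eps$.

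To recover $\vv^*$, I would run projected stochastic gradient descent on $\Sphere^{d-1}$ against a carefully designed \emph{non-convex surrogate loss}. Fix a bounded, smooth, increasing ``soft sign'' function $\sigma$ whose width is tuned to $\eps$ (so that $\sigma$ is essentially flat outside an $O(1)$-width band, which implicitly localizes the loss to examples near the current decision boundary), and consider the population objective $\mathcal{L}(\vv)=\E_{(\x,y)\sim\Dgenericjoint}[\sigma(-y\,(\vv\cdot\x))]$. The heart of the argument is a \textbf{structural lemma}: there is a threshold angle $\alpha=\Theta(\eta+\eps)$ such that whenever $\measuredangle(\vv,\vv^*)>\alpha$, the population negative gradient has a non-trivial component toward $\vv^*$, i.e.\ $\langle -\nabla\mathcal{L}(\vv),\,\vv^*-\langle\vv^*,\vv\rangle\vv\rangle$ is bounded below by an explicit positive function of $\measuredangle(\vv,\vv^*)$. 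I would prove this by (i) analyzing the \emph{clean} surrogate with labels $\sign(\vv^*\cdot\x)$, restricting $\Dgeneric$ to the plane $\spn(\vv,\vv^*)$ and invoking the standard density lower/upper bounds together with one-dimensional anti-concentration and concentration estimates for isotropic log-concave measures, to show the clean gradient points toward $\vv^*$ with magnitude polynomially related to the angle; and (ii) bounding the perturbation caused by the at most $\eta$-mass of adversarially corrupted labels by $O(\eta)$ in the relevant norm --- this $O(\eta)$ slack is exactly why the recoverable angle is $\Theta(\eta+\eps)$ and no smaller.

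Given the structural lemma, the rest is routine. The SGD analysis uses the squared distance to $\vv^*$ on the sphere as a potential: as long as the angle exceeds $\alpha$, each step decreases it in expectation, so after $\poly(1/\eps)$ iterations some iterate --- identified by a final validation pass on a fresh labeled sample --- has angle $O(\eta)+\eps$ to $\vv^*$. Each step estimates $\nabla\mathcal{L}$ from a fresh minibatch; since isotropic log-concave distributions have sub-exponential tails, a truncation-plus-Bernstein argument shows that $\tilde O(d)$ samples per step suffice to control the gradient error with constant probability, and repetition or a median trick boosts the confidence, yielding the stated $m=\tilde O(d/\eps^4\cdot\log(1/\delta))$ sample complexity and $\poly(d,1/\eps)\log(1/\delta)$ running time. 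I expect the main obstacle to be the structural lemma: selecting the surrogate $\sigma$ and its width parameter as functions of $\eps$ so that the population gradient provably points toward $\vv^*$, and carrying out the log-concavity calculations that quantify by how much, while absorbing the adversarial-noise contribution into the $O(\eta)$ budget. Everything downstream --- the sphere-SGD potential argument and the gradient-concentration bounds --- is standard.
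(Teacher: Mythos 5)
The paper does not prove this proposition at all: it is imported verbatim as Theorem~3.1 of \cite{diakonikolas2020non} and used as a black box, so there is no internal proof to compare against. Your plan is a faithful reconstruction of how that cited work actually establishes the result --- reduce the agnostic $0/1$ guarantee to angle recovery via the log-concave disagreement-vs-angle equivalence, prove a structural lemma about the population gradient of a sigmoidal surrogate, and close with projected SGD on $\Sphere^{d-1}$ plus a validation pass --- with the only cosmetic difference being that \cite{diakonikolas2020non} phrases the structural lemma as ``approximate stationary points have small angle to $\vv^*$'' and invokes standard non-convex SGD stationarity guarantees, whereas you use the equivalent ``gradient points toward $\vv^*$'' form together with a distance-to-$\vv^*$ potential; given the structural lemma these two routes are interchangeable and both yield the stated $\tilde O(d/\eps^4)$ sample bound.
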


We also use the following fact about isotropic log-concave distributions.

\begin{fact}\label{fact:log-concave-disagreement}
    $\pr_{\x\sim \Dgeneric}[\sign(\hat\vv\cdot \x)\neq \sign(\vv^*\cdot \x)] = \Theta(\measuredangle (\hat\vv,\vv^*))$, when $\Dgeneric$ is isotropic log-concave.
\end{fact}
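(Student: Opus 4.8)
We may assume $\hat\vv,\vv^*\in\Sphere^{d-1}$ and write $\theta=\measuredangle(\hat\vv,\vv^*)\in[0,\pi]$; the claim is that $\pr_{\x\sim\Dgeneric}[\sign(\hat\vv\cdot\x)\neq\sign(\vv^*\cdot\x)]=\Theta(\theta)$. The plan is to first reduce to two dimensions. Let $V=\spn(\hat\vv,\vv^*)$. Since $\hat\vv,\vv^*\in V$, the disagreement event depends on $\x$ only through its orthogonal projection $\x_V$ onto $V$, and the law of $\x_V$ under $\Dgeneric$ is again isotropic and log-concave (marginalization preserves log-concavity by Prékopa, and preserves isotropy since the covariance of the projection is the identity on $V$). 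Identifying $V$ with $\R^2$ and using polar coordinates $(r,\phi)$, the disagreement region is the symmetric difference of the two halfplanes through the origin with inner normals $\hat\vv,\vv^*$, which is a union of two antipodal cones, each of opening angle exactly $\theta$, hence of total angular extent $2\theta$. So it suffices to show that any isotropic log-concave density $p$ on $\R^2$ assigns mass $\Theta(\theta)$ to such a double cone. For $\theta\in[\pi/2,\pi]$ this is immediate (the mass is at most $1$, and at least the mass of a single cone of opening $\pi/2$, which the argument below shows is $\Omega(1)$), so from now on assume $\theta\le\pi/2$.

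For the lower bound I would use the standard fact that an isotropic log-concave density on $\R^2$ is bounded below by a universal constant $c_1>0$ on some centered ball $B(0,\rho)$ of universal radius $\rho>0$. The double cone of opening $\theta$ intersected with $B(0,\rho)$ is a pair of circular sectors of total area $\rho^2\theta$, so its $p$-mass is at least $c_1\rho^2\theta=\Omega(\theta)$.

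The matching upper bound, showing the mass is $O(\theta)$, is where the real work lies. Here I would pass to the angular marginal $q(\phi)=\int_0^\infty p(r\cos\phi,r\sin\phi)\,r\,dr$, so that the double-cone mass equals the integral of $q$ over an angular set of length $2\theta$; it then suffices to show that $q$ is bounded above by a universal constant. This follows from two standard properties of isotropic log-concave densities on $\R^2$: (a) $\|p\|_\infty\le C_0$, and (b) the pointwise decay $p(x)\le C_0\,e^{-c_0\|x\|}$, for universal constants $C_0,c_0>0$. Granting these, for every $\phi$ we get $q(\phi)\le\int_0^1 C_0\,r\,dr+\int_1^\infty C_0\,e^{-c_0 r}\,r\,dr=O(1)$, hence the double-cone mass is at most $C\theta$. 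Combining the two directions gives $\pr_{\x\sim\Dgeneric}[\sign(\hat\vv\cdot\x)\neq\sign(\vv^*\cdot\x)]=\Theta(\theta)=\Theta(\measuredangle(\hat\vv,\vv^*))$. The step I expect to be the main obstacle is establishing (b): one uses log-concavity of $p$ along rays through the origin, which gives $p(\lambda y)\le p(0)\,(p(y)/p(0))^{\lambda}$ for all $\lambda\ge1$ and unit $y$, together with (a) and the tail bound $\pr_{\x\sim\Dgeneric}[\|\x\|\ge t]\le C_0\,e^{-c_0 t}$, which force $p(y)\le p(0)/2$ once $\|y\|$ exceeds a universal constant, and then iterating the ray inequality yields the claimed exponential pointwise decay.
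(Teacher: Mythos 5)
The paper does not prove this statement at all: it is recorded as a ``Fact'' and used as a known result from the log-concave literature (it appears, e.g., in the active/agnostic halfspace-learning line of work, with the standard proof going through Lov\'asz--Vempala-type bounds on isotropic log-concave densities). So there is no in-paper argument to compare against; your proposal supplies the standard proof. Its structure is correct: projecting onto $\spn(\hat\vv,\vv^*)$ preserves isotropy and log-concavity, the disagreement region becomes a double cone of angular measure $2\theta$, the lower bound follows from the universal pointwise lower bound $p\ge c_1$ on a centered ball of universal radius, and the upper bound follows from uniform boundedness of the angular marginal $q(\phi)=\int_0^\infty p(r\cos\phi,r\sin\phi)\,r\,dr$, which in turn follows from $\|p\|_\infty=O(1)$ and pointwise exponential decay. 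All the ingredients you invoke are genuine properties of isotropic log-concave densities in fixed dimension.

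The one step you should tighten is your derivation of the intermediate claim that $p(y)\le p(0)/2$ once $\|y\|$ exceeds a universal constant. The ray inequality $p(\lambda y)\le p(0)(p(y)/p(0))^\lambda$ only propagates largeness of $p$ along a one-dimensional segment, which has Lebesgue measure zero in $\R^2$, so neither the sup bound (a) nor the tail bound on $\pr[\|\x\|\ge t]$ directly yields a contradiction from $p(y)>p(0)/2$ at a far point $y$. The fix is to thicken the segment: the superlevel set $\{p\ge \min(c_1,p(0)/2)\}$ is convex and contains the ball $B(0,\rho)$ from your lower-bound step, so if it also contained a point $y$ with $\|y\|=R$ it would contain $\conv(B(0,\rho)\cup\{y\})$, a region of area $\Omega(\rho R)$ on which $p$ is bounded below by a universal constant; since $\int p=1$ this forces $R=O(1)$. (Alternatively, just cite the standard pointwise upper bound $p(x)\le C_0e^{-c_0\|x\|}$ for isotropic log-concave densities in fixed dimension directly.) With that repair, or with the citation, the argument is complete.
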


\begin{proof}
    Suppose that $\Strain$ is a set of $\mtrain$ independent samples from $\Dtrainjoint$, where the marginal of $\Dtrainjoint$ on $\R^d$ is the standard Gaussian distribution. Let also $\Stest$ be a set of $\mtest$ independent unlabelled samples from $\Dtestmarginal$. In what follows, let $\eps'=\eps/C$ and let $C>0$ be a sufficiently large universal constant. Let also $m_\A$ be the sample complexity of $\A(\eps',\delta/4)$ and $m_\T = \frac{Cd^4}{\eps^2 \delta}$.

    \paragraph{Soundness.} Suppose that the algorithm accepts. Let $\vopt\in\S^{d-1}$ define the halfspace $\coptcommon$ that achieves $\error(\coptcommon;\Dtestjoint)+\error(\coptcommon;\Dtrainjoint) = \optcommon$. Note that since $|\Slabelled_2|\ge \frac{C}{\eps^2}\log(1/\delta)$, we have that $\hat\eps \le \error(\hat\concept;\Dtrainjoint)+\eps'$. By \Cref{proposition:agnostic-learning-halfspaces}, since $|\Slabelled_1|\ge m_\A$ we have $\error(\hat\concept;\Dtrainjoint)\le \eta+\eps'$, where $\eta\in(0,1)$ is the error of the optimum halfspace, say $\concept:\x\mapsto\sign(\vv\cdot \x)$ on $\Dtrainjoint$. Note that $\eta\le \optcommon$. We have that $\error(\hat\concept,\concept;\Dtrainmarginal) \le \error(\hat\concept;\Dtrainjoint) + \error(\concept;\Dtrainjoint) \le 2\eta+\eps'$. Therefore, due to \Cref{fact:log-concave-disagreement}, and since $\Dtrainmarginal=\Dgeneric$, we obtain $\measuredangle(\hat\vv,\vv) \le 2C'\eta + C'\eps'$ for some sufficiently large $C'>0$ (with $C\gg C'$). 

    Moreover, we have that $\error(\coptcommon;\Dtrainjoint) \le \optcommon$ and, hence $\error(\coptcommon,\concept;\Dtrainmarginal) \le \optcommon+\eta$. %By \Cref{fact:gaussian-disagreement}, we have $\measuredangle(\vopt,\vv) \le \pi\optcommon+\pi\eta$.
    We now apply \Cref{proposition:testably-bounding-halfspace-disagreement}, to obtain $\error(\hat\concept,\coptcommon;\Stest) \le \sqrt{C}\measuredangle(\hat\vv,\vopt)$. Since $|\Stest|\ge \frac{\sqrt{C}}{\eps^2}\log(1/\delta)$, due to standard VC dimension arguments, we have $\error(\hat\concept,\coptcommon;\Dtestmarginal) \le \sqrt{C}\measuredangle(\hat\vv,\vopt) + \eps'$. By \Cref{fact:log-concave-disagreement}, $\measuredangle(\hat\vv,\vopt) \le C'\error(\hat\concept,\coptcommon;\Dtrainmarginal)$. Therefore, with probability at least $1-\delta$, we have
    \begin{align*}
        \error(\hat\concept;\Dtestjoint) &\le \error(\hat\concept,\coptcommon;\Dtestmarginal)+\error(\coptcommon;\Dtestjoint) \le \sqrt{C}\error(\hat\concept,\coptcommon;\Dtrainmarginal) + \eps' + \optcommon \\
        &\le \sqrt{C}\error(\hat\concept,\concept;\Dtrainmarginal) + \sqrt{C}\error(\concept,\coptcommon;\Dtrainmarginal) + \eps' + \optcommon\\
        &\le C\optcommon + C\eps' \le \eps
    \end{align*}
    
    \paragraph{Completeness.} Readily follows from \Cref{proposition:testably-bounding-halfspace-disagreement} and $|\Stest|\ge m_\T$.
\end{proof}

\section{Realizable TDS Learning}\label{appendix:realizable-tds}

\subsection{Disagreement-Based TDS Learners}\label{appendix:disagreement-tds}

In this section, we prove \Cref{theorem:disagreement-tds}. First, we prove the following a special version regarding realizable TDS learning of homogeneous halfspaces with respect to the Gaussian distribution.

\begin{proposition}[TDS learning of Homogeneous Halfspaces]\label{theorem:improved tds for halfspaces}
    Let $\C$ be the class of origin-centered halfspaces over $\R^d$ and $C>0$ a sufficiently large universal constant. Then, \Cref{algorithm:improved for halfspaces}, given inputs $\Strain,\Stest$ of sizes $|\Strain| \ge C(\frac{d}{\eps})^{3/2}\log(\frac{1}{\eps\delta})$ and $|\Stest| \ge C(\frac{d}{\eps}+\frac{1}{\eps^2})\log(\frac{1}{\eps\delta})$ is a TDS learning algorithm for $\C$ w.r.t. the standard Gaussian distribution $\Gauss(0,I_d)$ with run-time $\poly(d,1/\eps)\log(\frac{1}{\delta})$, where $\eps$ is the accuracy parameter and $\delta$ is the failure probability.
\end{proposition}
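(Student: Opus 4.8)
The plan is to instantiate the disagreement-region strategy (the special case of \Cref{theorem:disagreement-tds} being warmed up here) concretely for homogeneous halfspaces under the Gaussian, where the ERM oracle, the membership oracle for the disagreement region, and the disagreement coefficient can all be made explicit and efficient. Concretely, \Cref{algorithm:improved for halfspaces} should: (i) run a proper realizable learner on $\Strain$ --- just solve the linear feasibility program --- to obtain $\hat\vv\in\Sphere^{d-1}$ whose halfspace $\hat\concept:\x\mapsto\sign(\hat\vv\cdot\x)$ is consistent with all training labels; (ii) fix a recovery accuracy $\beta = c\eps/\sqrt{d}$ for a small absolute constant $c$, so that the associated disagreement region has the explicit ``double-cone'' form
\[
    \disagreementregion \;:=\; \disagreementregion_\beta(\hat\concept;\Gauss(0,I_d)) \;=\; \{\x\in\R^d:\ |\hat\vv\cdot\x| \le \sin(\pi\beta)\,\|\x\|_2\}
\]
(up to a measure-zero boundary), since two homogeneous halfspaces are $\beta$-close under the Gaussian iff their normals subtend angle $\le\pi\beta$, and $\x$ lies in the $\beta$-disagreement region of $\hat\concept$ exactly when the angular ball of radius $\pi\beta$ around $\hat\vv$ meets both sides of the hyperplane $\{\vv:\vv\cdot\x=0\}$; (iii) estimate $\Dtestmarginal(\disagreementregion)$ from $\Stest$ (membership is a single inequality check) and \textbf{reject} unless at most a $\tfrac34\eps$ fraction of $\Stest$ lies in $\disagreementregion$, otherwise \textbf{accept} and output $\hat\concept$. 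The running time is that of one LP plus $|\Stest|$ membership tests, i.e. $\poly(d,1/\eps)\log(1/\delta)$.

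For \emph{soundness}, let $\coptcommon:\x\mapsto\sign(\vopt\cdot\x)$ be the target halfspace; in the realizable setting it has zero train and test error. By the standard realizable uniform-convergence bound for the VC class of homogeneous halfspaces ($\vc=d$), a training sample of size $|\Strain|\ge C(d/\eps)^{3/2}\log(1/(\eps\delta))$ ensures, with probability $\ge 1-\delta/3$, that \emph{every} training-consistent halfspace --- in particular $\hat\concept$ --- has $\error(\hat\concept,\coptcommon;\Gauss)\le\beta$. Hence $\coptcommon$ (and trivially $\hat\concept$ itself) is within Gaussian-distance $\beta$ of $\hat\concept$, so by \Cref{definition:disagreement-region} every $\x\notin\disagreementregion$ satisfies $\coptcommon(\x)=\hat\concept(\x)$; therefore $\error(\hat\concept;\Dtestjoint)=\error(\hat\concept,\coptcommon;\Dtestmarginal)\le\Dtestmarginal(\disagreementregion)$. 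It then remains to argue that acceptance certifies $\Dtestmarginal(\disagreementregion)\le\eps$: since the family $\{\disagreementregion_\beta(\sign(\vv\cdot\x);\Gauss(0,I_d)):\vv\in\Sphere^{d-1}\}$ has VC dimension $O(d)$, a relative VC uniform-convergence bound together with a Chernoff bound --- calibrated exactly by the $\tfrac{d}{\eps}+\tfrac1{\eps^2}$ terms in the test sample size --- gives, with probability $\ge 1-\delta/3$, that an empirical mass $\le\tfrac34\eps$ forces $\Dtestmarginal(\disagreementregion)\le\eps$. Combining yields $\error(\hat\concept;\Dtestjoint)\le\eps$.

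For \emph{completeness}, suppose $\Dtestmarginal=\Gauss(0,I_d)$. Using $\hat\vv\cdot\x\sim\Gauss(0,1)$ and $\|\x\|_2\le 2\sqrt d$ except with probability $e^{-\Omega(d)}$, Gaussian anticoncentration gives $\Gauss(\disagreementregion)=\pr_{\x\sim\Gauss}[|\hat\vv\cdot\x|\le\sin(\pi\beta)\|\x\|_2]=O(\beta\sqrt d)+e^{-\Omega(d)}\le\tfrac\eps2$ for an appropriate constant $c$ in $\beta=c\eps/\sqrt d$ --- this is precisely the statement that the disagreement coefficient of homogeneous halfspaces under the Gaussian satisfies $\disagreementcoef(\eps,d)=O(\sqrt d)$ (cf.\ \Cref{equation:disagreement-coefficient}). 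Applying the same concentration bound to the empirical fraction on $\Stest$ shows it exceeds $\tfrac34\eps$ with probability at most $\delta/3$, so the algorithm accepts with probability $\ge 1-\delta$.

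The main obstacle is the explicit analysis of the disagreement region itself: (a) deriving the double-cone description and, via Gaussian anticoncentration, the $O(\sqrt d)$ bound on $\disagreementcoef(\eps,d)$, which is what dictates the choice $\beta=\Theta(\eps/\sqrt d)$ and hence the $(d/\eps)^{3/2}$ training sample complexity; and (b) the relative VC bound certifying that the empirical check on $\Stest$ faithfully estimates $\Dtestmarginal(\disagreementregion)$ uniformly over $\hat\vv$ (yielding the $d/\eps$ term, with $1/\eps^2$ for the additive-$\eps$ Chernoff estimate). Everything else is bookkeeping --- these two pieces are exactly what gets abstracted into the ERM oracle, the membership oracle, and the disagreement-coefficient hypothesis of \Cref{theorem:disagreement-tds}, and relative to \Cref{theorem:improved tds for halfspaces agnostic} this realizable argument replaces the agnostic learner and the log-concave disagreement tester by an LP and the explicit slab test.
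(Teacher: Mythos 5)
Your approach is essentially the same as the paper's: both instantiate the disagreement-region method for homogeneous halfspaces under the Gaussian by running ERM/LP on training data, forming a slab of uncertainty around the estimated separating hyperplane, and rejecting if too much test mass falls in the slab. The paper parameterizes the slab via the Euclidean ball $\paramneighborhood=\{\vv'\in\Sphere^{d-1}:\|\vv'-\hat\vv\|_2\le\eps'\}$ and the extremal vectors $\vv_\x^\pm$, while you describe it as the angular double-cone $\{\x:|\hat\vv\cdot\x|\le\sin(\pi\beta)\|\x\|_2\}$; these are the same test up to constants, and the soundness arguments match.

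The one step at which you diverge is the completeness tail bound, and there your argument has a gap. You truncate $\|\x\|_2$ at $2\sqrt d$ by Gaussian concentration, which leaves an additive $e^{-\Omega(d)}$ term, and then assert $O(\beta\sqrt d)+e^{-\Omega(d)}\le\eps/2$ with $\beta=c\eps/\sqrt d$. This inequality is false once $\eps$ is itself $e^{-\Omega(d)}$, a regime the theorem statement does not exclude. The paper sidesteps this by applying Markov's inequality at the larger cutoff $\sqrt{4d/\eps}$, which gives a tail $\le\eps/4$ for \emph{every} $\eps$, and compensating with a narrower slab width $\eps'=\eps^{3/2}/(10\sqrt d)$ --- which is exactly what produces the $(d/\eps)^{3/2}$ training sample complexity in the statement, as opposed to the $d^{3/2}/\eps$ your choice of $\beta$ would require. (In fact a direct computation --- conditioning on the $d-1$ directions of $\x$ orthogonal to $\hat\vv$ and integrating the remaining one-dimensional Gaussian --- shows $\Gauss(\disagreementregion)=O(\beta\sqrt d)$ with no exponential remainder, so $\beta=c\eps/\sqrt d$ does suffice unconditionally; but neither your write-up nor the paper takes that route.) The repair is easy, but as written your completeness argument silently assumes $\eps$ is not exponentially small in $d$.
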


\begin{algorithm}
	\caption{TDS Learning of Homogeneous Halfspaces}\label{algorithm:improved for halfspaces}
	\KwIn{Sets $\Strain$ from $\Dtrainjoint$, $\Stest$ from $\Dtestmarginal$, parameter $\eps>0$}
		%\STATE Set $\eps' = \eps/100$, $\delta' = \delta/2$, $\degbound = k(\eps')$, $\pbound = \pbound(\eps')$ (see cond. \ref{item:sandwiching}) and $\mslack=\frac{\eps'}{\pbound^2d^{4\degbound}}$ (see cond. \ref{item:concentration}) 
		Set $\eps' = \eps^{3/2}/(10d^{1/2})$.\\
        Run the Empirical Risk Minimization algorithm on $\Strain$ up to error $\eps'$, i.e., compute a vector $\hat\vv\in\Sphere^{d-1}$ with
        $
            \hat\vv = \argmin_{\vv'\in\Sphere^{d-1}} \pr_{(\x,y)\in\Strain}[y\neq \sign(\vv'\cdot \x)]
        $
        \\
        Let $\paramneighborhood = \{\vv'\in\Sphere^{d-1}: \|\vv'-\hat\vv\|_2 \le \eps'\}$.\\
        For each $\x\in\Stest$, compute the following quantities.
        \[\vv^+_\x = \argmax_{\vv'\in \paramneighborhood}\vv'\cdot \x \text{ and }\vv^-_\x = \argmin_{\vv'\in \paramneighborhood}\vv'\cdot \x\]\\
        \textbf{Reject} and terminate if $\pr_{\x\sim\Stest}[\sign(\vv_\x^+\cdot \x)\neq \sign(\vv_\x^-\cdot \x)] > 3\eps/4$.\\
          %\label{line: used to be 5d}
        \textbf{Otherwise,} output $\hat{f}:\R^d\to\cube{}$ with $\hat f:\x\mapsto \sign(\hat{\vv} \cdot \x )$.
\end{algorithm}

We will use the following fact about the Gaussian distribution.

\begin{fact}\label{fact:gaussian-disagreement}
    For any $\vv_1,\vv_2\in\Sphere^{d-1}$ we have $\pr_{\x\sim \Gauss(0,I_d)}[\sign(\vv_1\cdot \x)\neq \sign(\vv_2\cdot \x)] = \measuredangle (\vv_1,\vv_2) / \pi$.
\end{fact}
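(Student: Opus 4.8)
The plan is to reduce to two dimensions and then carry out a short arc-length computation. First I would observe that the event $\{\sign(\vv_1\cdot\x)\neq\sign(\vv_2\cdot\x)\}$ depends on $\x$ only through its orthogonal projection onto the subspace $V:=\spn\{\vv_1,\vv_2\}$. If $\vv_1=\pm\vv_2$ the identity is immediate (then $\measuredangle(\vv_1,\vv_2)\in\{0,\pi\}$ and the probability is $0$ or $1$, respectively), so assume $\dim V=2$. Since the pushforward of $\Gauss(0,I_d)$ under orthogonal projection onto any $2$-dimensional subspace is exactly the standard $2$-dimensional Gaussian, it suffices to prove $\pr_{\x\sim\Gauss(0,I_2)}[\sign(\vv_1\cdot\x)\neq\sign(\vv_2\cdot\x)]=\measuredangle(\vv_1,\vv_2)/\pi$ for $\vv_1,\vv_2\in\Sphere^1$.

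\noindent\textbf{The two-dimensional computation.}
In $\R^2$, the direction $\x/\|\x\|_2$ is uniform on the unit circle $\Sphere^1$ and independent of $\|\x\|_2$, and the event depends only on this direction, so the probability in question equals the normalized arc-length of $\{\mathbf{u}\in\Sphere^1:\sign(\vv_1\cdot\mathbf{u})\neq\sign(\vv_2\cdot\mathbf{u})\}$. For a unit vector $\vv_i$, the set $\{\mathbf{u}\in\Sphere^1:\vv_i\cdot\mathbf{u}>0\}$ is the open half-circle of arc-length $\pi$ centered at $\vv_i$, since $\vv_i\cdot\mathbf{u}>0$ exactly when the angle between $\mathbf{u}$ and $\vv_i$ is smaller than $\pi/2$. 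Setting $\beta:=\measuredangle(\vv_1,\vv_2)$, these two half-circles are rotations of one another by angle $\beta$, so their symmetric difference is a union of two arcs of arc-length $\beta$ each, hence has total arc-length $2\beta$ (the boundary set $\vv_i\cdot\mathbf{u}=0$ has measure zero and can be ignored). Dividing by the circumference $2\pi$ gives $\pr[\sign(\vv_1\cdot\x)\neq\sign(\vv_2\cdot\x)]=2\beta/(2\pi)=\beta/\pi=\measuredangle(\vv_1,\vv_2)/\pi$, as required. I do not expect a genuine obstacle here; the only steps requiring a little care are the reduction to $d=2$ (using that orthogonal projections of the isotropic Gaussian are again isotropic Gaussians) and the degenerate case $\vv_1\parallel\vv_2$.
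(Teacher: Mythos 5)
Your proof is correct. The paper states this as a ``Fact'' without proof, treating it as a well-known consequence of the rotational symmetry of the Gaussian, and the argument you supply---project onto $\spn\{\vv_1,\vv_2\}$ (under which the isotropic Gaussian pushes forward to the isotropic $2$-dimensional Gaussian), then use that $\x/\|\x\|_2$ is uniform on $\Sphere^1$ to reduce to the arc-length of the symmetric difference of two half-circles offset by $\beta=\measuredangle(\vv_1,\vv_2)$, which is $2\beta$ out of $2\pi$---is exactly the standard derivation, handled cleanly including the degenerate parallel case.
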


\begin{proof}[Proof of \Cref{theorem:improved tds for halfspaces}]
    Suppose that $\Strain$ is a set of $\mtrain$ independent samples from $\Dtrainjoint$, where the marginal of $\Dtrainjoint$ on $\R^d$ is the standard Gaussian distribution. Let also $\Stest$ be a set of $\mtest$ independent unlabelled samples from $\Dtestmarginal$. In what follows, let $\eps'=\eps^{3/2}/(8d^{1/2})$.

    \paragraph{Soundness.} When the algorithm accepts, we have that $\pr_{\x\sim\Stest}[\sign(\vv_\x^+\cdot \x)\neq \sign(\vv_\x^-\cdot \x)] \le \frac{3\eps}{2}$. By standard VC dimension arguments and \Cref{fact:gaussian-disagreement}, after running the Empirical Risk Minimization algorithm on training data, as long as $\mtrain \ge C\frac{d}{\eps'}\log(1/(\delta\eps'))$, we have $\|\hat \vv- \vv\|_2 \le \eps'$. Therefore, both $\vv$ and $\hat \vv$ are within $\paramneighborhood = \{\vv'\in\Sphere^{d-1}:\|\vv'-\hat\vv\|_2\le \eps'\}$. By the definition of $\vv_\x^+$ and $\vv_\x^-$, we have the following.
    \begin{align}
        \pr_{\x\sim\Stest}[\sign(\hat\vv\cdot \x)\neq \sign(\vv\cdot \x)] \le \pr_{\x\sim\Stest}[\sign(\vv_\x^+\cdot \x)\neq \sign(\vv_\x^-\cdot \x)] \le 3\eps/4 \label{equation:disagreement-bound}
    \end{align}
    Moreover, we have $\error(\hat f;\Dtestjoint) = \E[\pr_{\x\sim\Stest}[\sign(\hat\vv\cdot \x)\neq \sign(\vv\cdot \x)]]$, where the expectation is over $\Stest\sim(\Dtestmarginal)^{\otimes \mtest}$. By standard VC dimension arguments, we have that, with probability at least $1-\delta/2$, $\error(\hat f;\Dtestjoint) = \pr_{\x\sim\Stest}[\sign(\hat\vv\cdot \x)\neq \sign(\vv\cdot \x)] + \eps/4$ whenever $\mtest \ge C\frac{d}{\eps}\log(1/(\delta\eps)) $. Therefore, with probability at least $1-\delta$ (union bound over two bad events), upon acceptance, we have $\error(\hat f; \Dtestjoint) \le \eps.$

    \paragraph{Completeness.} For completeness, we assume that $\Stest$ is drawn from $\Gauss(0,I_d)$. Observe that $\paramneighborhood$ does not depend on $\Stest$ (since it is formed only using training data). Therefore, we may apply a standard Hoeffding bound to ensure that with probability at least $1-\delta$, whenever $\mtest \ge C\frac{1}{\eps^2}\log(1/(\delta))$, we have 
    \[ 
        \pr_{\x\sim\Stest}[\sign(\vv_\x^+\cdot \x)\neq \sign(\vv_\x^-\cdot \x)] \le \pr_{\x\sim\Gauss(0,I_d)}[\sign(\vv_\x^+\cdot \x)\neq \sign(\vv_\x^-\cdot \x)]+\eps/4
    \]
    It remains to bound $\pr_{\x\sim\Gauss(0,I_d)}[\sign(\vv_\x^+\cdot \x)\neq \sign(\vv_\x^-\cdot \x)]$ by $\eps/2$. We observe that, since $\vv^+,\vv^-\in\paramneighborhood$, we have $\vv_\x^- \cdot \x \ge \vv_\x^+\cdot \x - \|\vv_\x^+-\vv_\x^-\|_2\|\x\|_2 \ge \vv_\x^+\cdot \x - \eps'\|\x\|_2 \ge \hat\vv\cdot \x - \eps'\|\x\|_2$ by the definition of $\vv^+_\x$ and $\vv^-_\x$. We similarly have $\vv_\x^+ \cdot \x \le \hat\vv\cdot \x+\eps'\|\x\|_2$. 
    
    Therefore the probability that $\sign(\vv_\x^+\cdot \x)\neq \sign(\vv_\x^-\cdot \x)$ is upper bounded by the probability that $|\hat\vv\cdot \x| \le \eps'\|\x\|_2$ (since, otherwise, both $\vv_\x^+\cdot \x$ and $\vv_\x^-\cdot \x$ have the same sign). In particular
    \begin{align*}
        \pr_{\x\sim\Gauss(0,I_d)}[\sign(\vv_\x^+\cdot \x)\neq \sign(\vv_\x^-\cdot \x)] &\le \pr_{\x\sim\Gauss(0,I_d)}[|\hat\vv\cdot \x| \le \eps'\|\x\|_2] \\
        &\le \pr_{\x\sim\Gauss(0,I_d)}[\|\x\|_2>\sqrt{4d/\eps}]+  \pr_{\x\sim\Gauss(0,I_d)}[|\hat\vv\cdot \x|\le \eps'\sqrt{4d/\eps}] \\
        &\le \frac{\E_{\x\sim\Gauss(0,I_d)}[\|\x\|_2^2]\eps}{4d} + \pr_{\x\sim\Gauss(0,I_d)}[|\hat\vv\cdot \x|\le \eps'\sqrt{4d/\eps}]
    \end{align*}
    We obtain the final inequality by applying Markov's inequality. Since $\E_{\x\sim\Gauss(0,I_d)}[\|\x\|_2^2] = d$ and the one-dimensional Gaussian density is upper bounded by $(2\pi)^{-1}$, we have the following bound.
    \[
        \pr_{\x\sim\Gauss(0,I_d)}[\sign(\vv_\x^+\cdot \x)\neq \sign(\vv_\x^-\cdot \x)] \le \frac{\eps}{4}+\frac{2}{\sqrt{2\pi}} \eps'\sqrt{4d/\eps} \le \eps/2\,,
    \]
    since $\eps' \le \eps^{3/2}/(8d^{1/2})$. This completes the proof.
\end{proof}

We now prove \Cref{theorem:disagreement-tds}, which we restate here for convenience.

\begin{theorem}[Disagreement-Based TDS learning]%\label{theorem:disagreement-tds}
    Let $\C$ be the class of concepts that map $\X\subseteq\R^d$ to $\cube{}$ with VC dimension $\vc(\C)$, let $\Dgeneric$ a distribution over $\X$ and $C>0$ a sufficiently large universal constant. Suppose that we have access to an ERM oracle for PAC learning $\C$ under $\Dgeneric$ and membership access to $\disagreementregion_{\eps'}(\concept;\Dgeneric)$ for any given $\concept\in\C$ and ${\eps'}>0$. Then, \Cref{algorithm:disagreement-tds}, given inputs of sizes $|\Strain| \ge C\frac{\vc(\C)}{\eps'}\log(\frac{1}{\eps'\delta})$ and $|\Stest| \ge C(\frac{\vc(\C)}{\eps}+\frac{1}{\eps^2})\log(\frac{1}{\eps\delta})$ is a TDS learning algorithm for $\C$ w.r.t. $\Dgeneric$ that calls the $\eps'$-ERM oracle once and the $\eps'$-membership oracle $|\Strain|$ times, where $\eps$ is the accuracy parameter, $\delta$ is the failure probability and $\eps'$ such that $\eps'\cdot \disagreementcoef(\eps',d) \le \eps/2$.
\end{theorem}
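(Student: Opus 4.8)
The plan is to follow the blueprint already used in the homogeneous-halfspace warm-up (\Cref{theorem:improved tds for halfspaces}) but replace the explicit parameter-neighborhood/Gaussian-anticoncentration argument with the abstract disagreement-region machinery. First I would describe \Cref{algorithm:disagreement-tds}: run the $\eps'$-ERM oracle on $\Strain$ to obtain a hypothesis $\hat\concept \in \C$; then for each test point $\x \in \Stest$ query the $\eps'$-membership oracle to decide whether $\x \in \disagreementregion_{\eps'}(\hat\concept;\Dgeneric)$; reject if the empirical test mass of this region exceeds some threshold (roughly $3\eps/4$), and otherwise output $\hat\concept$. The two things to establish are soundness and completeness.

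For \textbf{soundness}, suppose the algorithm accepts. Since $|\Strain| \ge C\frac{\vc(\C)}{\eps'}\log(\frac{1}{\eps'\delta})$, standard VC/uniform-convergence bounds (in the realizable case, the $\eps'$-net bound for the region $\{\x : \hat\concept(\x)\ne \copt(\x)\}$) give $\error(\hat\concept,\copt;\Dgeneric) \le \eps'$ with probability $1-\delta/3$, where $\copt$ is the true labeling function. Hence both $\hat\concept$ and $\copt$ lie in the ``$\eps'$-ball'' around $\hat\concept$ used in \Cref{definition:disagreement-region}, so for every $\x \notin \disagreementregion_{\eps'}(\hat\concept;\Dgeneric)$ we have $\hat\concept(\x) = \copt(\x)$. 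Therefore $\pr_{\x\sim\Dtestmarginal}[\hat\concept(\x)\ne \copt(\x)] \le \pr_{\x\sim\Dtestmarginal}[\x \in \disagreementregion_{\eps'}(\hat\concept;\Dgeneric)]$. The acceptance condition controls the \emph{empirical} test mass of the disagreement region by $3\eps/4$; since $|\Stest| \ge C\frac{\vc(\C)}{\eps}\log(\frac{1}{\eps\delta})$ and the disagreement region of a single fixed hypothesis is a single event (or, to be safe, ranges over a VC class), a Hoeffding/VC bound upgrades this to a bound of $\eps$ on the true test mass with probability $1-\delta/3$. Combining, $\error(\hat\concept;\Dtestjoint) = \pr_{\Dtestmarginal}[\hat\concept \ne \copt] \le \eps$, which is the realizable soundness guarantee.

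For \textbf{completeness}, assume $\Dtestmarginal = \Dgeneric$. Here the key point is that $\disagreementregion_{\eps'}(\hat\concept;\Dgeneric)$ is determined entirely by the training sample (through $\hat\concept$) and is independent of the fresh test sample $\Stest$, so we may condition on $\hat\concept$ and apply a plain Hoeffding bound: with $|\Stest| \ge C\frac{1}{\eps^2}\log(\frac{1}{\delta})$ the empirical mass is within $\eps/4$ of the true mass $\pr_{\x\sim\Dgeneric}[\x\in\disagreementregion_{\eps'}(\hat\concept;\Dgeneric)]$. By the definition of the disagreement coefficient in \eqref{equation:disagreement-coefficient}, this true mass is at most $\eps' \cdot \disagreementcoef(\eps',d) \le \eps/2$ by the hypothesis on $\eps'$. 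Hence the empirical mass is at most $\eps/2 + \eps/4 = 3\eps/4$, so the algorithm accepts with probability $1-\delta$ (absorbing the $\delta/3$ failure probabilities from the union bound into $\delta$). The oracle-call accounting — one $\eps'$-ERM call and $|\Strain|$ membership calls — is immediate from the algorithm description, but I should double-check that I only invoke the membership oracle on test points (so it is $|\Stest|$, not $|\Strain|$, calls, or reconcile with the stated bound).

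The \textbf{main obstacle} I anticipate is the uniform-convergence step for the disagreement region in the soundness argument: whether one can treat $\disagreementregion_{\eps'}(\hat\concept;\Dgeneric)$ as a single fixed event (since $\hat\concept$ is fixed before drawing $\Stest$, this seems fine with a Hoeffding bound) or whether one must take a VC union bound over the family $\{\disagreementregion_{\eps'}(\concept;\Dgeneric) : \concept \in \C\}$ — which is why the sample complexity carries a $\vc(\C)$ factor rather than just $1/\eps^2$. Getting the quantifier order right (ERM is run first, region depends only on $\hat\concept$, test sample is fresh) is what makes the completeness Hoeffding bound legitimate, and is the subtle point that distinguishes this from naively trying to test closeness of $\Dtestmarginal$ to $\Dgeneric$ directly.
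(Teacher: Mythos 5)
Your proof follows essentially the same route as the paper's: run ERM on $\Strain$ to get $\hat\concept$ with $\error(\hat\concept,\copt;\Dgeneric)\le\eps'$, note that $\hat\concept$ and $\copt$ must then agree outside $\disagreementregion_{\eps'}(\hat\concept;\Dgeneric)$, upgrade the empirical test mass of that region to the population test mass via a VC/Hoeffding bound for soundness, and for completeness use that $\disagreementregion_{\eps'}(\hat\concept;\Dgeneric)$ is fixed once training is done together with $\eps'\cdot\disagreementcoef(\eps',d)\le\eps/2$. Your observation that the membership oracle is queried once per test point (so the count should be $|\Stest|$, not $|\Strain|$) is correct and reflects a typo in the theorem statement; your $3\eps/4$ rejection threshold is also more internally consistent with the constants in the completeness bound than the $\eps/2$ stated in the paper's algorithm.
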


\begin{algorithm}
	\caption{Disagreement-Based TDS Learning}\label{algorithm:disagreement-tds}
	\KwIn{Sets $\Strain$ from $\Dtrainjoint$, $\Stest$ from $\Dtestmarginal$, parameter $\eps>0$}
		%\STATE Set $\eps' = \eps/100$, $\delta' = \delta/2$, $\degbound = k(\eps')$, $\pbound = \pbound(\eps')$ (see cond. \ref{item:sandwiching}) and $\mslack=\frac{\eps'}{\pbound^2d^{4\degbound}}$ (see cond. \ref{item:concentration}) 
		Set $\eps' >0$ such that $\eps'\cdot \disagreementcoef(\eps',d) \le \eps/2$.\\
        Run the Empirical Risk Minimization algorithm on $\Strain$ up to error $\eps'$, i.e., compute $\hat\concept \in \C$ with
        $
            \hat\concept = \argmin_{\concept'\in\C} \pr_{(\x,y)\in\Strain}[y\neq \concept'(\x)]
        $
        \\
        Let $\disagreementregion_{\eps'}(\hat\concept;\Dgeneric)$ be as in \Cref{definition:disagreement-region}.\\
        \textbf{Reject} and terminate if $\pr_{\x\sim\Stest}[\x\in \disagreementregion_{\eps'}(\hat\concept;\Dgeneric)] > \eps/2$.\\
          %\label{line: used to be 5d}
        \textbf{Otherwise,} output $\hat{f}$.
\end{algorithm}

\begin{proof}[Proof of \Cref{theorem:disagreement-tds}]
    Suppose that $\Strain$ is a set of $\mtrain$ independent samples from $\Dtrainjoint$, where the marginal of $\Dtrainjoint$ on $\X$ is the distribution $\Dgeneric$. Let also $\Stest$ be a set of $\mtest$ independent unlabelled samples from $\Dtestmarginal$. In what follows, let $\eps'>0$ such that $\eps'\disagreementcoef(\eps',d)\le \eps/2$. The proof follows a similar recipe as the one of \Cref{theorem:improved tds for halfspaces}. For the following, let $\coptcommon\in\C$ be the label generating function.

    \paragraph{Soundness.} Suppose that the algorithm accepts. Then, $\pr_{\x\sim\Stest}[\x\in \disagreementregion_{\eps'}(\hat\concept;\Dgeneric)] \le \eps/2$. Since $\hat \concept$ is an minimizes the empirical error on training data, by standard VC arguments, we have that $\error(\hat\concept,\coptcommon;\Dgeneric) \le \eps/2$, whenever $\mtrain\ge C\frac{\vc(\C)}{\eps'}\log(\frac{1}{\eps'\delta})$, since $\Dtrainmarginal = \Dgeneric$ by assumption. Therefore, by the definition of $\disagreementregion_{\eps'}(\hat\concept;\Dgeneric)$, for any $\x \not\in \disagreementregion_{\eps'}(\hat\concept;\Dgeneric)$, we have $\hat\concept(\x) = \coptcommon(\x)$. Therefore, we have
    \[
        \pr_{\x\sim\Stest}[\hat\concept(\x) \neq \coptcommon(\x)] \le \pr_{\x\sim\Stest}[\x\in\disagreementregion_{\eps'}(\hat\concept;\Dgeneric)] \le \eps/2
    \]
    Whenever $\mtest\ge C\frac{\vc(\C)}{\eps}\log(\frac{1}{\eps\delta})$, we have $\pr_{\x\sim\Dtestjoint}[y \neq \coptcommon(\x)]\le \pr_{\x\sim\Stest}[\hat\concept(\x) \neq \coptcommon(\x)]+\eps/2 \le \eps$.

    \paragraph{Completeness.} Suppose that $\Dtestmarginal = \Dgeneric$. Then, by a standard Hoeffding bound, we have that whenever $\mtest \ge C\frac{1}{\eps}\log(1/\delta)$, we have $\pr_{\x\sim\Stest}[\x\in \disagreementregion_{\eps'}(\hat\concept;\Dgeneric)] \le  \pr_{\x\sim\Dgeneric}[\disagreementregion_{\eps'}(\hat\concept;\Dgeneric)] + \eps/2$ with probability at least $1-\delta$ and $\pr_{\x\sim\Dgeneric}[\disagreementregion_{\eps'}(\hat\concept;\Dgeneric)] \le \eps'\disagreementcoef(\eps',d) \le \eps/2$, by the choice of $\eps'$.
\end{proof}

\subsection{TDS Learner for General Halfspaces}\label{appendix:tds-general-halfspaces}

We now prove \Cref{thm: improved TDS learning for general halfspaces} which we restate here for convenience.

\begin{theorem}[TDS learning of General Halfspaces]
    Let $\C$ be the class of general halfspaces over $\R^d$ and $C>0$ a sufficiently large universal constant. Then, \Cref{algorithm:improved for general halfspaces}, given inputs of size $|\Strain|=|\Stest|=C d^{C\log {1}/{\epsilon}}$
	is a TDS learning algorithm for $\C$ w.r.t. the standard Gaussian distribution $\Gauss(0,I_d)$ with run-time $d^{O(\log {1}/{\epsilon})}$, where $\epsilon$ is the accuracy parameter, and the failure probability $\delta$ is at most $0.01$. 
\end{theorem}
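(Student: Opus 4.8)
The plan is to branch on the training labels. Draw $m = Cd^{C\log(1/\eps)}$ labelled points from the training distribution (whose marginal is $\Gauss(0,I_d)$ and whose labels come from $f^*:\x\mapsto\sign(\vopt\cdot\x-\tau)$). If both labels $+1$ and $-1$ occur among them, enter a \emph{bounded-bias} branch; otherwise, if all $m$ labels equal some $L\in\cube{}$, enter a \emph{large-bias} branch. A Chernoff and union bound argument shows that, outside a failure event of small probability, the bounded-bias branch is entered only when $|\tau|\le B:=O(\sqrt{\log m})=O(\sqrt{\log d\cdot\log(1/\eps)})$ (a more biased halfspace would yield a single label on $m$ Gaussian points with high probability), while the large-bias branch is entered only when $\pr_{\x\sim\Gauss(0,I_d)}[f^*(\x)\neq L]\le O(\log(1/\delta))/m\le\eps/100$ and, moreover, $|\tau|^2=\Omega(\log d\cdot\log(1/\eps))$ (an all-$L$ sample of size $m$ forces the minority-label mass below $O(\log(1/\delta)/m)$).

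\textbf{Bounded-bias branch.} Here I would invoke \Cref{theorem:disagreement-tds} with $\C$ the class of halfspaces of bias at most $B$ (VC dimension still $d+1$), concretely realized as in the warm-up algorithm (\Cref{algorithm:improved for halfspaces}). Run ERM over this restricted class on the training set up to error $\eps''$ to obtain $\hat\concept:\x\mapsto\sign(\hat\vv\cdot\x-\hat\tau)$ with $\error(\hat\concept,f^*;\Gauss(0,I_d))\le\eps''$ (standard VC bounds, using $\Dtrainmarginal=\Gauss(0,I_d)$ and $f^*\in\C$). The geometric fact needed is that for two halfspaces of bias at most $B$, the Gaussian disagreement is at least $\Omega(e^{-B^2/2})\cdot\max\{\|\vv_1-\vv_2\|_2,|\tau_1-\tau_2|\}$ (Gaussian density is $\gtrsim e^{-B^2/2}$ on all of $[-B,B]$ along a line, and one argues in the $2$-dimensional span for the directional part). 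Consequently $\disagreementregion_{\eps''}(\hat\concept;\Gauss(0,I_d))$ lies inside the slab $\{\x:|\hat\vv\cdot\x-\hat\tau|\le\beta(\|\x\|_2+1)\}$ with $\beta=O(\eps''e^{B^2/2})$, which, splitting on whether $\|\x\|_2\le\sqrt{2d}$, carries Gaussian mass $e^{-\Omega(d)}+O(\beta\sqrt d)$. Taking $\eps''=\Theta\big(\eps/(\sqrt d\,e^{B^2/2})\big)=d^{-\Theta(\log(1/\eps))}$ makes this $\le\eps/2$, i.e.\ it verifies the hypothesis $\eps''\cdot\disagreementcoef(\eps'',d)\le\eps/2$ of \Cref{theorem:disagreement-tds}, and the training sample size becomes $\poly(d,1/\eps'')=d^{O(\log(1/\eps))}$. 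The algorithm accepts iff $\pr_{\x\sim\Stest}[|\hat\vv\cdot\x-\hat\tau|\le\beta(\|\x\|_2+1)]\le 3\eps/4$ and outputs $\hat\concept$: soundness holds because $\hat\concept=f^*$ off the slab, and completeness follows from a Hoeffding bound since the slab does not depend on $\Stest$.

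\textbf{Large-bias branch.} Here I output the constant hypothesis $\hat\concept\equiv L$, so it suffices to certify $\pr_{\x\sim\Dtestmarginal}[f^*(\x)\neq L]\le\pr_{\x\sim\Dtestmarginal}[|\vopt\cdot\x|\ge|\tau|]\le\eps$ without knowing $\vopt$. The algorithm estimates every degree-$\le k$ moment of $\Dtestmarginal$, for $k=O(\log(1/\eps))$, from $d^{O(\log(1/\eps))}$ unlabelled test samples and rejects unless each lies within $d^{-O(\log(1/\eps))}$ of the corresponding moment of $\Gauss(0,I_d)$. If the test passes, then for any unit $\vv$, Markov's inequality applied to the degree-$k$ polynomial $(\vv\cdot\x)^{2j}$ with $j=\lfloor k/2\rfloor$ gives $\pr_{\x\sim\Stest}[|\vv\cdot\x|\ge|\tau|]\le\E_{\x\sim\Stest}[(\vv\cdot\x)^{2j}]/|\tau|^{2j}$; expanding the polynomial into its $d^{O(k)}$ monomials, whose coefficients sum to $(\sum_i|\vv_i|)^{2j}\le d^{j}$, and using the moment closeness shows $\E_{\x\sim\Stest}[(\vv\cdot\x)^{2j}]\le\E_{\x\sim\Gauss(0,I_d)}[(\vv\cdot\x)^{2j}]+1=(2j-1)!!+1\le 2(2j)^{j}$. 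Since $|\tau|^2=\Omega(\log d\cdot\log(1/\eps))$ while $2j\le k=O(\log(1/\eps))$, the ratio $2(2j/|\tau|^2)^j$ is at most $(O(1/\log d))^{\Omega(\log(1/\eps))}\le\eps$ for suitable choices of the constants in $k$ and $m$ and for $d$ larger than an absolute constant; applying this with $\vv=\vopt$ and then passing from $\Stest$ to $\Dtestmarginal$ by a VC bound (one halfspace has VC dimension $O(d)$) certifies the claim. Completeness holds because the empirical degree-$\le k$ moments of $\Gauss(0,I_d)$ concentrate to the required accuracy with $d^{O(\log(1/\eps))}$ samples.

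\textbf{Main obstacle.} The delicate part is not any single estimate but making the parameter regimes interlock so that all sample sizes stay $d^{O(\log(1/\eps))}$: the bias cutoff $B=O(\sqrt{\log d\cdot\log(1/\eps)})$ produced by the label test must be small enough that the $e^{B^2/2}=d^{O(\log(1/\eps))}$ blow-up of parameter recovery in the bounded-bias branch is absorbed, while simultaneously the \emph{same} $B^2$ (now a lower bound on $|\tau|^2$) in the large-bias branch must be large enough, relative to the degree $k=O(\log(1/\eps))$, that the even-moment Markov bound beats $\eps$ even after the moment estimates survive the $d^{\Theta(k)}$-term expansion of $(\vv\cdot\x)^{2j}$. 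It is precisely this two-sided trade-off that forces a quasi-polynomial rather than polynomial running time, and one chooses the constant $C$ in $m=Cd^{C\log(1/\eps)}$ large enough to satisfy both sides at once.
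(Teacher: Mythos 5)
Your proposal is essentially correct and follows the same architecture as the paper's Algorithm 5: case-split on bias observed in training labels, moment matching plus a moment-based tail bound in the large-bias case, and parameter recovery plus a slab-containment test in the bounded-bias case. The large-bias branch in particular is a near-verbatim match to the paper's \Cref{prop: if keep seeing the same label moment matching guarantees generalization} and its proof. Where you diverge is in how the bounded-bias branch recovers the halfspace: the paper recovers the direction $\hat\vv$ directly from the Chow parameter $\E[y\x]$ (\Cref{prop: recover normal vector}) and then the offset $\hat\tau$ via a discretized empirical minimization (\Cref{prop: offset recovery}), whereas you run ERM over the sub-class of bias-at-most-$B$ halfspaces and plug into the disagreement-TDS framework (\Cref{theorem:disagreement-tds}). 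Your route is more modular, leaning on the generic theorem, at the cost of needing $d^{O(\log 1/\eps)}$ \emph{training} samples even in the bounded-bias branch (the paper's two-step recovery uses only $\poly(d,1/\eps)$ there); both stay within the stated budget. One substantive inaccuracy to flag: the geometric lower bound you state, disagreement $\gtrsim e^{-B^2/2}\max\{\|\vv_1-\vv_2\|_2,|\tau_1-\tau_2|\}$, is false in general — taking $\tau_2=\tau_1\cos\gamma$ with $\tau_1 \approx B$ and $\gamma>2/B$ gives $|\tau_1-\tau_2|\approx B\gamma^2/2>\gamma$ but the disagreement is only $\Theta(\gamma e^{-B^2/2})$. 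The correct statement, as in the paper's \Cref{prop: error in offset yields error in halfspaces}, lower-bounds by $|\tau_1-\tau_2/\cos\gamma|$ rather than $|\tau_1-\tau_2|$; combined with the (true) angular lower bound $\gtrsim\gamma e^{-B^2/2}$, one still recovers $|\tau_1-\tau_2|\le|\tau_1-\tau_2/\cos\gamma|+|\tau_2|(1/\cos\gamma-1)=O(\eps''e^{B^2/2})$ in the regime you need, so your downstream parameter bound $\beta=O(\eps''e^{B^2/2})$ and the resulting slab containment survive. You should also note that the paper's analogous $B$ scales as $O(\sqrt{\log(1/\eps)})$ (because it uses the threshold $1/T=\eps^{\Theta(1)}$ rather than your ``all labels agree'' threshold), which is what keeps its parameter-recovery sample complexity polynomial; your choice inflates $B$ to $O(\sqrt{\log d\cdot\log(1/\eps)})$, which is harmless here but a less favorable trade-off.
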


\begin{algorithm}
	\caption{TDS Learning of General Halfspaces}\label{algorithm:improved for general halfspaces}
	\KwIn{Sets $\Strain$ from $\Dtrainjoint$, $\Stest$ from $\Dtestmarginal$, parameter $\eps>0$}
		%\STATE Set $\eps' = \eps/100$, $\delta' = \delta/2$, $\degbound = k(\eps')$, $\pbound = \pbound(\eps')$ (see cond. \ref{item:sandwiching}) and $\mslack=\frac{\eps'}{\pbound^2d^{4\degbound}}$ (see cond. \ref{item:concentration}) 
	\begin{algorithmic}[1]
\STATE		Set $T =2^{C_1^2 \log \frac{1}{\epsilon}+1}$, $\degbound = C_1 \log \frac{1}{\epsilon}$, $\Delta = \frac{\epsilon}{d^{C_2 \degbound}}$ and $\beta=\frac{\epsilon^2}{C_3 d^{C_3}}$.
	\IF{ $\pr_{(\x,y) \sim \Strain}[y\neq b] \leq \frac{1}{T}$ for some $b\in\cube{}$ (large bias case)}
		\STATE For each $\mindex\in \N^d$ with $\|\mindex\|_1 \le \degbound$, compute the quantity 
		$\momentempirical_\mindex = \E_{\x\sim\Stest} [\x^\mindex]$. \\
  %\label{line: used to be 4b}
		\STATE 
   \label{line: used to be 4b}
   \textbf{Reject} and terminate if $|\momentempirical_\mindex-\E_{\x\sim \Gauss(0,I_d)}[\x^\mindex]|>\Delta$ for some $\mindex$ with $\|\mindex\|_1 \le \degbound$. 
   \STATE
  \label{line: used to be 4c}
  \textbf{Otherwise,} output 
  $\hat{f}:\R^d\to\cube{}$ and terminate, where $\hat{f}:\x\mapsto b$ ($\hat f$ constant).
    \ELSE
		%\label{line: estimating normal vector}
\STATE       Set $\hat{\vv}  =\frac{\E_{(\x,y)\sim \Strain}[y\x]}{ \|\E_{(\x,y)\sim \Strain}[y\x]\|_2}$.
\STATE      Let $\T = \{\hat{\vv}\cdot \x: (\x,y)\in \Strain\}$.
\STATE        Set
		$
		\hat{\tau}
		=
		\argmin_{\tau \in \T}
		\pr_{(\x,y) \in \Strain}[\coptcommon(\x)\neq \sign(\hat{\vv} \cdot \x - \tau')]
		$, 
         \STATE
        Let $\paramneighborhood = \{(\vv',\tau'): \|\vv'-\hat\vv\|_2 \le \beta, |\tau'-\hat\tau|\le \beta\}$.
        \STATE
        For each $\x\in\Stest$, compute the following quantities.
        \[(\vv^+_\x,\tau^+_\x) = \argmax_{(\vv',\tau')\in \paramneighborhood}\vv'\cdot \x-\tau' \text{ and }(\vv^-_\x,\tau^-_\x) = \argmin_{(\vv',\tau')\in \paramneighborhood}\vv'\cdot \x-\tau'\]
        \STATE
        \label{line: used to be 5c}
        \textbf{Reject} and terminate if $\pr_{\x\sim\Stest}[\sign(\vv_\x^+\cdot \x-\tau^+_\x)\neq \sign(\vv_\x^-\cdot \x-\tau^-_\x)] > 10\eps$.
        \STATE
          \label{line: used to be 5d}
        \textbf{Otherwise,} output $\hat{f}:\R^d\to\cube{}$ with $\hat f:\x\mapsto \sign(\hat{\vv} \cdot \x -\hat{\tau})$.
        \ENDIF
        \end{algorithmic}
\end{algorithm}

Suppose the ground-truth halfspace $\coptcommon(\x)=\sign(\x \cdot \vv -\tau)$ is determined by a unit vector $\vv \in \R^d$ and a value $\tau \in \R$. We will need the following showing that if a halfspace not too biased under the standard Gaussian distribution, then it is possible to recover the parameters of the halfspace up to a very high accuracy. See Subsection \ref{sec: parameter recovery} for the proof.

\begin{proposition}
	[Parameter recovery for halfspaces]
	\label{prop: parameter recovery} For a sufficiently large absolute constant $C>0$, the following is true.
	For every $\beta,\gamma \in (0,1)$ and integer $d$, let $\Strain$ be a set of $C(\frac{d}{ \beta \gamma })^C$ i.i.d samples from a distribution $\Dtrainjoint$ such that $\Dtrainmarginal = \Gauss(0,I_d)$ and the labels are given by an unknown halfspace $f:\x\mapsto\sign(\vv \cdot \x - \tau)$. Additionally, assume that the halfspace $f$ satisfies $\pr_{\x \in \Gauss(0,I_d)}[\coptcommon(\x)=-1]\ge \gamma$ and $ \pr_{\x \in \Gauss(0,I_d)}[\coptcommon(\x)=1]\geq \gamma$. Let $\T = \{\hat{\vv}\cdot \x: (\x,y)\in \Strain\}$ and set \[\hat{\vv}  = \frac{\sum_{(\x,y) \in \Strain} \x y}{\norm{\sum_{(\x,y) \in \Strain} \x y}_2} \text{ and }~
	\hat{\tau}
	=
	\argmin_{\tau' \in \T}
	\pr_{(\x,y) \in \Strain}[\coptcommon(\x)\neq \sign(\hat{\vv} \cdot \x - \tau')].
	\]
	Then, with probability at least $1-1/1000$ we have 
	$
	\norm{\vv-\hat{\vv}}_2
	\leq \beta \text{ and }
	|\tau-\hat{\tau}|
	\leq \beta.
	$
\end{proposition}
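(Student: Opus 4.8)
The plan is to first recover the direction $\vv$ from the empirical degree-one Chow parameter of the labelled sample, and then recover the threshold $\tau$ by empirical risk minimization over the one-dimensional thresholds $\T=\{\hat\vv\cdot\x:(\x,y)\in\Strain\}$. The two halves are coupled through one quantitative fact: the non-degeneracy hypotheses $\pr_\x[\coptcommon(\x)=\pm1]\ge\gamma$ say exactly that $\gamma\le\Phi(\tau)\le 1-\gamma$ (where $\Phi$ is the standard Gaussian c.d.f.), so the Gaussian tail bound gives $|\tau|\le\tau_{\max}:=\sqrt{2\ln(1/\gamma)}$ and in particular $\tau_{\max}\ge 1$ since $\gamma\le\tfrac12$; every error bound below is then of the form $\gamma^{O(1)}\cdot(\text{geometric accuracy})$, so a sample of size $\poly(d/(\beta\gamma))$ drives all of them below the needed thresholds. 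For the direction step, by rotational symmetry of $\Gauss(0,I_d)$ one computes $\E_{\x\sim\Gauss(0,I_d)}[\x\,\coptcommon(\x)]=c\,\vv$ with $c=\E_{g\sim\Gauss(0,1)}[g\,\sign(g-\tau)]=\sqrt{2/\pi}\,e^{-\tau^2/2}\in[\sqrt{2/\pi}\,\gamma,\sqrt{2/\pi}]$, using $|\tau|\le\tau_{\max}$ for the lower bound. The vector $\w:=\frac{1}{|\Strain|}\sum_{(\x,y)\in\Strain}\x y$ has mean $c\vv$ and $\E\|\w-c\vv\|_2^2=\frac{1}{|\Strain|}(d-c^2)\le d/|\Strain|$, so by Markov's inequality $\|\w-c\vv\|_2\le c\beta'/2$ with probability $1-\tfrac{1}{3000}$, where I set the \emph{finer} target $\beta':=c_0\gamma^2\beta/\sqrt{\ln(1/(\gamma\beta))}\le\beta$ for a small absolute constant $c_0$; this requires only $|\Strain|\gtrsim d\,\ln(1/(\gamma\beta))/(\gamma^3\beta)^2$, well within $C(d/(\beta\gamma))^C$. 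Normalizing and using $\|\tfrac{a}{\|a\|}-\tfrac{b}{\|b\|}\|\le\tfrac{2\|a-b\|}{\max(\|a\|,\|b\|)}$ gives $\|\hat\vv-\vv\|_2\le 2\|\w-c\vv\|_2/c\le\beta'\le\beta$, which already proves the first conclusion; the slack $\beta'\ll\beta$ is exactly what the threshold step will consume.

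For the threshold, write $\error(\w',\tau')=\pr_{\x\sim\Gauss(0,I_d)}[\coptcommon(\x)\neq\sign(\w'\cdot\x-\tau')]$ and let $\widehat\error$ denote its empirical version on $\Strain$. I would establish two facts. \emph{(i) The true threshold is good for the recovered direction.} If $\coptcommon(\x)\neq\sign(\hat\vv\cdot\x-\tau)$ then $\vv\cdot\x-\tau$ and $\hat\vv\cdot\x-\tau$ have opposite signs, which forces $|\vv\cdot\x-\tau|\le|(\vv-\hat\vv)\cdot\x|$; since $(\vv-\hat\vv)\cdot\x\sim\Gauss(0,\|\vv-\hat\vv\|_2^2)$ and $\vv\cdot\x\sim\Gauss(0,1)$ has density at most $1/\sqrt{2\pi}$, a tail bound at radius $R=\Theta(\sqrt{\ln(1/(\gamma\beta))})$ gives $\error(\hat\vv,\tau)\le 2e^{-R^2/2}+\tfrac{2R\|\vv-\hat\vv\|_2}{\sqrt{2\pi}}\le\tfrac14 c_2\gamma^2\beta$ by the choice of $c_0$, where $c_2$ is the absolute constant from (ii). \emph{(ii) Only thresholds within $\beta$ of $\tau$ are good.} If $|\tau'-\tau|\ge\beta$ then $\error(\hat\vv,\tau')\ge c_2\gamma^2\beta$. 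When $|\tau'|\le\tau_{\max}+1$ this follows from $\error(\hat\vv,\tau')\ge\pr_\x[\sign(\hat\vv\cdot\x-\tau)\neq\sign(\hat\vv\cdot\x-\tau')]-\error(\hat\vv,\tau)=|\Phi(\tau')-\Phi(\tau)|-\error(\hat\vv,\tau)$ together with $|\Phi(\tau')-\Phi(\tau)|\ge\beta\cdot\min_{[-\tau_{\max}-1,\tau_{\max}+1]}\Phi'=\Omega(\gamma^2\beta)$. When $|\tau'|>\tau_{\max}+1$, the halfspace $\sign(\hat\vv\cdot\x-\tau')$ is biased toward one label with probability at least $1-e^{-(\tau_{\max}+1)^2/2}\ge 1-\gamma/4$, while $\coptcommon$ takes the opposite label with probability at least $\gamma$, so $\error(\hat\vv,\tau')\ge\gamma-\gamma/4>c_2\gamma^2\beta$.

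To combine these, apply uniform convergence over the class of all halfspaces $\{\x\mapsto\sign(\w'\cdot\x-\tau'):\w'\in\Sphere^{d-1},\tau'\in\R\}$, which has VC dimension $d+1$ — this is needed because both $\hat\vv$ and the candidate set $\T$ are data-dependent. With probability $1-\tfrac{1}{3000}$ we get $\sup_{\w',\tau'}|\widehat\error(\w',\tau')-\error(\w',\tau')|\le\epsilon_0=O\!\big(\sqrt{(d+\log(1/\delta))/|\Strain|}\big)\le\tfrac14 c_2\gamma^2\beta$, in particular for the realized pair $(\hat\vv,\hat\tau)$. Since $\widehat\error(\hat\vv,\cdot)$ is piecewise constant with breakpoints only in $\T$, minimizing it over $\T$ — which is what the algorithm outputs — returns a value within $1/|\Strain|$ of $\widehat\error(\hat\vv,\tau)$. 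Chaining fact (i), the two uniform-convergence inequalities, and this last bound yields $\error(\hat\vv,\hat\tau)\le\error(\hat\vv,\tau)+2\epsilon_0+1/|\Strain|<c_2\gamma^2\beta$, so by the contrapositive of fact (ii), $|\hat\tau-\tau|<\beta$. A union bound over the three failure events (direction concentration, uniform convergence, and the event inside \Cref{prop: parameter recovery}) has total probability at most $\tfrac{1}{1000}$.

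\textbf{Main obstacle.} The delicate point is the coupling between the two accuracy targets in the first two steps: the threshold step can only separate candidates at error-resolution $\Theta(\gamma^{O(1)}\beta)$, while replacing $\vv$ by $\hat\vv$ already injects error on the order of $\|\hat\vv-\vv\|_2\cdot\sqrt{\ln(1/(\gamma\beta))}$, so the direction must be pinned down to accuracy $\widetilde O(\gamma^{O(1)}\beta)$ rather than merely $\beta$; checking that $C(d/(\beta\gamma))^C$ samples really deliver this finer accuracy (and the $d$-dimensional VC bound for the data-dependent pair $(\hat\vv,\hat\tau)$) is where the bookkeeping concentrates. The only genuinely separate argument is the ``very biased candidate threshold'' branch of fact (ii), where $\Phi(\tau')-\Phi(\tau)$ ceases to be a usable lower bound and one must instead invoke the non-degeneracy of $\coptcommon$ directly.
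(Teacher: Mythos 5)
Your proof is correct and follows the paper's overall architecture (degree-one Chow parameter to recover the direction, then one-dimensional ERM over $\T$ to recover the threshold, glued together by uniform convergence over halfspaces), but the threshold-recovery step uses a genuinely different and quantitatively tighter argument than the paper's. The paper (Propositions~\ref{prop: offset recovery} and~\ref{prop: error in offset yields error in halfspaces}) conditions on $\vv\cdot\x = z$ to observe that $\hat\vv\cdot\x \mid (\vv\cdot\x=z)\sim\Gauss(z\cos\gamma,\sin\gamma)$, from which it derives an $O(\sqrt{\gamma})$ upper bound for the ``right'' candidate offset and an $\Omega\bigl(e^{-\tau^2/2}\min(\cdot,\cdot)\bigr)$ lower bound for ``wrong'' ones; combined with the crude estimate $|\tau|\le 10\sqrt{\ln(1/\eta)}$ this yields a $\poly(1/\eta^{50})$ loss and a $\sqrt{\gamma}$ angle dependence, forcing the direction to be recovered to accuracy roughly $\beta^2\eta^{100}$. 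You instead (a) bound the disagreement of $\sign(\vv\cdot\x-\tau)$ and $\sign(\hat\vv\cdot\x-\tau)$ by a truncation at radius $R$ plus a Gaussian anti-concentration estimate, obtaining a near-linear-in-$\|\hat\vv-\vv\|$ bound; and (b) separate wrong offsets by the one-dimensional comparison $|\Phi(\tau')-\Phi(\tau)|\ge\beta\cdot\min\phi=\Omega(\gamma^2\beta)$ for moderate $|\tau'|$, with a separate elementary argument invoking the non-degeneracy hypothesis when $|\tau'|$ is very large. This gives $\gamma^2$ instead of $\eta^{50}$ and near-linear instead of square-root dependence on the direction error, so the direction only needs accuracy $\widetilde O(\gamma^2\beta)$; both routes fit within $\poly(d/(\beta\gamma))$ samples, but yours is cleaner. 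One cosmetic note: the paper's large-$|\tau'|$ case is handled implicitly by the $\min\bigl(\left\lvert\tau_1-\tfrac{\tau_2}{\cos\gamma}\right\rvert,\tfrac{1}{|\tau_1|+1}\bigr)$ term in Equation~\eqref{eq: wrong offset bad}, whereas you make it an explicit branch --- the ``main obstacle'' you flag is real and both proofs pay for it by recovering $\hat\vv$ to sub-$\beta$ accuracy.
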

We also highlight two technical lemmas that we use for the analysis of Algorithm \ref{algorithm:improved for general halfspaces}. 
Our first technical lemma insures that if $f$ is a halfspace that very likely assigns the same label to samples from the Gaussian distribution, then $f$ also very likely assigns the same label to samples form a distribution whose low-degree moments match those of a Gaussian. This lemma will be useful for proving the soundness of Algorithm \ref{algorithm:improved for general halfspaces}, and is proven in Section \ref{sec: proof of Completeness}. (Recall that for $\x\in \R^d$ we denote $\prod_{i=1}^{n} x_i^{\mindex_i}$ as $\x^{\mindex}$.) 
\begin{lemma}
	\label{prop: if keep seeing the same label moment matching guarantees generalization}
	When $C_1$ and $C_2$ both exceed some specific absolute constant, the following holds.
	%, the following holds for any positive integer $d$ and $\epsilon\in (0,1)$. 
	Let $k$ and $T$ be defined as in Algorithm \ref{algorithm:improved for general halfspaces}.
	Suppose, the set $\Stest$ is such that for every collection of non-negative integers $(\mindex_1, \cdots, \mindex_d)$ satisfying $\sum_i \mindex_i \leq k$ we have
	\begin{equation}
		\label{eq: moment matching}
		\left\lvert
		\ex_{\x \sim \Stest} 
		\left[\x^\alpha 
		\right]
		-
		\ex_{\x \sim \Gauss(0,I_d)} 
		\left[\x^\alpha 
		\right]
		\right\rvert
		\leq\frac{\epsilon}{d^{C_2 k}}.
	\end{equation}
	Also, suppose the function $\coptcommon(\x)=\sign(\x \cdot \vv -\tau)$ and the value $L\in \{\pm 1\}$ are such that 
	\begin{equation}
		\label{eq: halfspace is small}
		\pr_{\x \sim \mathcal{N}(0,1)}[\coptcommon(\x)\neq L] \leq \frac{2}{T}.
	\end{equation}
	Then, it is the case that 
	\begin{equation}
		\pr_{\x \sim \Stest}[\coptcommon(\x)\neq L]\leq O(\epsilon).
	\end{equation}
\end{lemma}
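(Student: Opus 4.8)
The plan is to sandwich the indicator of the ``bad set'' $E := \{\x \in \R^d : \coptcommon(\x) \neq L\}$ from above by a single low-degree polynomial $q$, chosen so that (a) $\E_{\x\sim\Gauss(0,I_d)}[q(\x)]$ is already $O(\eps)$, and (b) the coefficients of $q$ are small enough that the moment-matching hypothesis \eqref{eq: moment matching} transfers this expectation from the Gaussian to $\Stest$. Since $\coptcommon(\x)=\sign(\vv\cdot\x-\tau)$, we may assume $\vv$ is a unit vector (dividing $\vv,\tau$ by $\|\vv\|_2$ does not change $\coptcommon$; if $\vv=0$ the claim is immediate or contradicts \eqref{eq: halfspace is small}). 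Then under $\Gauss(0,I_d)$ the projection $\vv\cdot\x$ is a standard normal, and a short case check on the sign of $L$ shows that $E\subseteq\{\x:|\vv\cdot\x|\ge|\tau|\}$ and $\pr_{\x\sim\Gauss(0,I_d)}[E]=\Phi(-|\tau|)$, where $\Phi$ is the standard normal CDF; in particular $\Phi(-|\tau|)\le 2/T$.

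The first real step is to turn this bias bound into a lower bound on $\tau^2$. Recalling from \Cref{algorithm:improved for general halfspaces} that $T=2^{C_1^2\log(1/\eps)+1}$ and $k=C_1\log(1/\eps)$, so that $2/T=\eps^{C_1^2}$ is tiny, a standard Gaussian anti-concentration bound applied to $\Phi(-|\tau|)\le 2/T$ gives $\tau^2=\Omega(\log T)=\Omega(C_1^2\log(1/\eps))$ once $T$ exceeds an absolute constant. Setting $m:=\lfloor k/2\rfloor$, this yields the crucial inequality $2m/\tau^2\le k/\tau^2 = O(1/C_1)$, which is at most $1/2$ once $C_1$ is a large enough absolute constant.

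Now take $q(\x):=(\vv\cdot\x/\tau)^{2m}$. Then: (i) $\deg(q)=2m\le k$; (ii) $q\ge\ind_E$ pointwise, since $q\ge 0$ always and $q(\x)\ge 1$ whenever $|\vv\cdot\x|\ge|\tau|$, in particular on $E$; (iii) $\E_{\x\sim\Gauss(0,I_d)}[q(\x)]=(2m-1)!!/\tau^{2m}\le(2m/\tau^2)^m\le 2^{-m}$, which is $\le\eps$ once $C_1$ exceeds an absolute constant; and (iv) expanding $q(\x)=\tau^{-2m}\sum_{\|\alpha\|_1=2m}\binom{2m}{\alpha}\vv^\alpha\x^\alpha$ and using $\|\vv\|_1\le\sqrt d$ together with $|\tau|\ge 1$, the sum of the absolute values of the coefficients $q_\alpha$ of $q$ is at most $|\tau|^{-2m}\big(\sum_i|\vv_i|\big)^{2m}\le d^{m}\le d^{k/2}$.

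Finally I would combine these via the transfer step. Using (ii) and splitting off the Gaussian expectation,
\[
\pr_{\x\sim\Stest}[\coptcommon(\x)\neq L]\;\le\;\E_{\x\sim\Stest}[q(\x)]\;=\;\E_{\x\sim\Gauss(0,I_d)}[q(\x)]\;+\;\sum_{\|\alpha\|_1=2m}q_\alpha\big(\E_{\x\sim\Stest}[\x^\alpha]-\E_{\x\sim\Gauss(0,I_d)}[\x^\alpha]\big).
\]
Since $2m\le k$, \eqref{eq: moment matching} bounds each $\big|\E_{\x\sim\Stest}[\x^\alpha]-\E_{\x\sim\Gauss(0,I_d)}[\x^\alpha]\big|$ by $\eps/d^{C_2 k}$, so by (iv) the error term is at most $d^{k/2}\cdot\eps/d^{C_2 k}\le\eps$ provided $C_2\ge 1$; together with (iii) this gives $\pr_{\x\sim\Stest}[\coptcommon(\x)\neq L]\le 2\eps=O(\eps)$. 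The main obstacle is step (iii): it is exactly here that the scaling $T=2^{\Theta(kC_1)}$ and $k=\Theta(C_1\log(1/\eps))$ are used, since we need the bias $|\tau|$ forced by $\pr_{\Gauss(0,I_d)}[E]\le 2/T$ to dominate $\sqrt m$ strongly enough that the rescaled $2m$-th Gaussian moment $(2m-1)!!/\tau^{2m}$ is genuinely small rather than merely bounded; steps (i), (ii), (iv) and the final moment transfer are routine bookkeeping on degrees and coefficient magnitudes.
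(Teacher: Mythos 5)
Your proposal is correct and takes essentially the same route as the paper: both derive $|\tau|^2=\Omega(\log T)$ from the bias bound, upper-bound $\pr_{\Stest}[\coptcommon\neq L]$ by a Markov/Chebyshev-type inequality with the $k$-th (respectively $2m$-th) power of $\vv\cdot\x$, transfer that expectation from the Gaussian to $\Stest$ via \eqref{eq: moment matching} and a coefficient-size bound, and check that the Gaussian moment term and the transfer error are each $O(\eps)$ for large enough $C_1,C_2$. Your version is slightly cleaner in explicitly choosing an even degree $2m\le k$ and bounding the coefficient $\ell_1$-norm by $(\|\vv\|_1/|\tau|)^{2m}\le d^{k/2}$, where the paper informally argues "$d^k$ terms each of coefficient at most $1$."
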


Our second technical lemma bounds, for $\x$ chosen from the standard Gaussian, the probability that one is unsure about $\coptcommon(\x)=\sign(\vv\cdot \x -\tau)$ when one only has approximate estimates for $\hat{\vv}$ and $\hat{\tau}$ for $\vv$ and $\tau$ respectively. This lemma will be useful for proving the completeness of Algorithm \ref{algorithm:improved for general halfspaces}, and is proven in Section \ref{sec: proof of soundness}.
\begin{lemma}
	\label{prop: gaussian is unlikely to fall into disagreement region}
	There is some absolute constant $K_1$, such that for every positive integer $d$ and $\beta\in (0,1)$, the following holds.    Let $\hat{\vv}$ be any unit vector in $\R^d$ and $\hat{\tau}$ be in $\R$. Then, we have for $\paramneighborhood = \{(\vv',\tau'): \|\vv'-\hat\vv\|_2 \le \beta, |\tau'-\hat\tau|\le \beta\}$
	\begin{equation}
		\label{eq: size of disagreement region small}
		\pr_{\x \sim \Gauss(0,I_d)}
		\left[
		\sign\left(
		\max_{
			(\vv',\tau')\in \paramneighborhood}
		\vv'\cdot \x -\tau'
		\right)
		\neq \sign\left(
		\min_{
			(\vv',\tau')\in \paramneighborhood}
		\vv'\cdot \x -\tau'
		\right)
		\right]
		\leq
		K_1  d^{K_1} \sqrt{\beta}
	\end{equation}
\end{lemma}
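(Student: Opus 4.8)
The plan is to reduce this to a one-dimensional Gaussian anti-concentration estimate. First I would evaluate the inner maximum and minimum in closed form: since $\vv'$ and $\tau'$ range independently over $\{\vv':\|\vv'-\hat\vv\|_2\le\beta\}$ and $\{\tau':|\tau'-\hat\tau|\le\beta\}$, we have
\[
    \max_{(\vv',\tau')\in\paramneighborhood}(\vv'\cdot\x-\tau') = \hat\vv\cdot\x-\hat\tau+\beta\|\x\|_2+\beta
    \quad\text{and}\quad
    \min_{(\vv',\tau')\in\paramneighborhood}(\vv'\cdot\x-\tau') = \hat\vv\cdot\x-\hat\tau-\beta\|\x\|_2-\beta .
\]
Both of these numbers lie in the interval of radius $\beta(\|\x\|_2+1)$ centered at $\hat\vv\cdot\x-\hat\tau$, so they can have opposite signs only if that interval contains $0$. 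Hence the event in \eqref{eq: size of disagreement region small} is contained in $\{\,|\hat\vv\cdot\x-\hat\tau|\le\beta(\|\x\|_2+1)\,\}$, and it suffices to upper bound the probability of this last event.

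Next I would split on the norm of $\x$ using the cutoff $R=\sqrt{d}\,\beta^{-1/4}$. By Markov's inequality applied to $\|\x\|_2^2$ (which has mean $d$), $\pr_{\x\sim\Gauss(0,I_d)}[\|\x\|_2\ge R]\le d/R^2=\sqrt{\beta}$. On the complementary event $\|\x\|_2<R$ we have $\beta(\|\x\|_2+1)<\beta(R+1)\le(\sqrt d+1)\sqrt\beta$, where I bound $\beta R=\sqrt d\,\beta^{3/4}\le\sqrt d\,\sqrt\beta$ and $\beta\le\sqrt\beta$ using $\beta\le 1$. Since $\hat\vv$ is a unit vector, $\hat\vv\cdot\x$ is a standard one-dimensional Gaussian, whose density is at most $(2\pi)^{-1/2}$, so $\pr_{\x}[\,|\hat\vv\cdot\x-\hat\tau|\le(\sqrt d+1)\sqrt\beta\,]\le \tfrac{2}{\sqrt{2\pi}}(\sqrt d+1)\sqrt\beta$. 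A union bound over the two cases yields a bound of at most $(\sqrt d+2)\sqrt\beta$, which is at most $K_1 d^{K_1}\sqrt\beta$ for a suitable absolute constant $K_1$ (e.g.\ $K_1=3$, since $d\ge1$).

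The argument has no serious obstacle; the only delicate point is the choice of the cutoff $R$. Taking $R$ of order $\sqrt d$ would leave $\pr[\|\x\|_2\ge R]$ merely bounded by a constant, so $R$ must grow as $\beta$ shrinks; the choice $R=\Theta(\sqrt d\,\beta^{-1/4})$ is what simultaneously makes the tail term $\Theta(\sqrt\beta)$ and keeps $\beta R=\Theta(\sqrt d\,\beta^{3/4})\le\Theta(\sqrt d\,\sqrt\beta)$, producing exactly the $\sqrt\beta$ scaling in the statement. (One could alternatively use the sub-Gaussian tail $\pr[\|\x\|_2\ge\sqrt d+t]\le e^{-t^2/2}$ with $t=\Theta(\sqrt{\log(1/\beta)})$, but plain Markov already suffices.) The measure-zero boundary case in which the maximum or minimum equals exactly $0$ can be ignored.
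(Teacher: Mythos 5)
Your proof is correct and follows essentially the same route as the paper's: reduce the disagreement event to $\{|\hat\vv\cdot\x-\hat\tau|\le\beta(\|\x\|_2+1)\}$, apply Markov to $\|\x\|_2^2$ to control the tail, and use the $(2\pi)^{-1/2}$ density bound on the one-dimensional Gaussian $\hat\vv\cdot\x$ on the complementary event. The only cosmetic difference is the cutoff: the paper uses $\sqrt{d/\beta}$ (tail probability $\le\beta$), while you use $\sqrt d\,\beta^{-1/4}$ (tail probability $\le\sqrt\beta$); both give the advertised $K_1 d^{K_1}\sqrt\beta$, and your choice produces the cleaner arithmetic $(\sqrt d+2)\sqrt\beta$.
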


\subsubsection{Proof of Soundness.}
\label{sec: proof of soundness}
In this subsection we show that if Algorithm \ref{algorithm:improved for general halfspaces} accepts then the output $\hat{f}$ of our algorithm will generalize on the distribution $\Dtestmarginal$. 
\begin{proposition}[Soundness]
	\label{prop: if accepts then generalizes}
	For any sufficiently large absolute constant $C$, the following is true. For any distribution $\Dtestmarginal$ and any halfspace $f=\sign(\hat{\vv} \cdot \x -\hat{\tau})$, the following is true.
	It can happen with probability only at most $\frac{1}{100}$ that \Cref{algorithm:improved for general halfspaces} gives an output (ACCEPT, $\hat{f}$) for some predictor $\hat{f}$, but it is not the case that
	\[
	\pr_{\x \sim \Dtestmarginal}
	[\coptcommon(\x)\neq \hat{f}(\x)]
	\leq O(\epsilon).
	\]
\end{proposition}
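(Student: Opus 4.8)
The plan is to prove soundness branch by branch, matching the two cases of \Cref{algorithm:improved for general halfspaces}. In each branch I would show that (a) the acceptance condition forces the (fixed, unknown) ground truth $\coptcommon$ and the output $\hat f$ to agree on a $1-O(\eps)$ fraction of the \emph{test sample} $\Stest$, and then (b) transfer this to $\pr_{\x\sim\Dtestmarginal}[\coptcommon(\x)\neq\hat f(\x)]\le O(\eps)$ by a single Hoeffding bound: since $\hat f$ is a deterministic function of $\Strain$ and $\coptcommon$ is fixed, once we condition on $\Strain$ the set $\{\x:\coptcommon(\x)\neq\hat f(\x)\}$ is a fixed event, so its empirical frequency over $\Stest$ concentrates around its $\Dtestmarginal$-probability. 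All the failure events involved — two $\Strain$-measurable Chernoff estimates, the parameter-recovery failure, and the $\Stest$-Hoeffding bound — have probability at most $1/1000$ each, so a union bound across both branches (the algorithm lies in exactly one) gives the claimed $1/100$. Throughout I use $\eps<1$, $T=\poly(1/\eps)$, $\degbound=O(\log 1/\eps)$, $\beta=\eps^2/(C_3 d^{C_3})$, and that the absolute constant $C$ in $|\Strain|=|\Stest|=Cd^{C\log 1/\eps}$ is chosen large relative to $C_1,C_2,C_3$ and the constant of \Cref{prop: parameter recovery}.

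\emph{Large-bias branch.} Suppose $\pr_{(\x,y)\sim\Strain}[y\neq b]\le 1/T$. Since $\Dtrainmarginal=\Gauss(0,I_d)$ and labels are realized by the halfspace $\coptcommon$, a multiplicative Chernoff bound together with $|\Strain|\gg T^2$ gives $\pr_{\x\sim\Gauss(0,I_d)}[\coptcommon(\x)\neq b]\le 2/T$ with probability at least $1-1/1000$. If the algorithm accepts at line~\ref{line: used to be 4c}, then the moment check of line~\ref{line: used to be 4b} passed, i.e. $\Stest$ satisfies the hypothesis~\eqref{eq: moment matching} with $k=\degbound$; applying \Cref{prop: if keep seeing the same label moment matching guarantees generalization} with $L=b$ then yields $\pr_{\x\sim\Stest}[\coptcommon(\x)\neq b]\le O(\eps)$. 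A Hoeffding bound with $|\Stest|\gtrsim\eps^{-2}\log(1000)$ transfers this to $\pr_{\x\sim\Dtestmarginal}[\coptcommon(\x)\neq b]\le O(\eps)$, and since $\hat f\equiv b$ here we are done in this branch.

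\emph{Small-bias branch.} Here both labels appear in $\Strain$ with empirical frequency exceeding $1/T$, so the same Chernoff estimate gives $\pr_{\x\sim\Gauss(0,I_d)}[\coptcommon(\x)=1]\ge\gamma$ and $\pr_{\x\sim\Gauss(0,I_d)}[\coptcommon(\x)=-1]\ge\gamma$ with probability at least $1-1/1000$, where $\gamma:=1/(2T)$. Then \Cref{prop: parameter recovery} (whose requirement $|\Strain|\ge C(d/(\beta\gamma))^C$ is met by the choice of $C$) gives $\|\vv-\hat\vv\|_2\le\beta$ and $|\tau-\hat\tau|\le\beta$ with probability at least $1-1/1000$; hence both $(\vv,\tau)$ and $(\hat\vv,\hat\tau)$ lie in $\paramneighborhood=\{(\vv',\tau'):\|\vv'-\hat\vv\|_2\le\beta,\ |\tau'-\hat\tau|\le\beta\}$. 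The key monotonicity observation is that for every $\x$ and every $(\vv',\tau')\in\paramneighborhood$ we have $\vv^-_\x\cdot\x-\tau^-_\x\le\vv'\cdot\x-\tau'\le\vv^+_\x\cdot\x-\tau^+_\x$, so whenever the two extremes have the same sign at $\x$ every linear form indexed by $\paramneighborhood$ shares that sign at $\x$; in particular $\coptcommon(\x)=\hat f(\x)$. Contrapositively, $\ind\{\coptcommon(\x)\neq\hat f(\x)\}\le\ind\{\sign(\vv^+_\x\cdot\x-\tau^+_\x)\neq\sign(\vv^-_\x\cdot\x-\tau^-_\x)\}$, so if the algorithm accepts at line~\ref{line: used to be 5d}, the check of line~\ref{line: used to be 5c} gives $\pr_{\x\sim\Stest}[\coptcommon(\x)\neq\hat f(\x)]\le 10\eps$; a Hoeffding bound transfers this to $\pr_{\x\sim\Dtestmarginal}[\coptcommon(\x)\neq\hat f(\x)]\le 11\eps=O(\eps)$. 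Summing failure probabilities over both branches gives total failure probability at most $5/1000<1/100$.

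\emph{Main obstacle.} With \Cref{prop: parameter recovery,prop: if keep seeing the same label moment matching guarantees generalization} taken as given, the conceptual heart is the small-bias branch: the monotonicity observation is precisely what allows the ``width check'' over the parameter neighborhood $\paramneighborhood$ to certify, using only unlabeled test data, that the fixed unknown ground truth $\coptcommon$ agrees with $\hat f$ on almost all of $\Stest$ — this is the step that replaces the (unavailable) test labels. The main bookkeeping hurdle is verifying that the single size choice $|\Strain|=|\Stest|=Cd^{C\log 1/\eps}$ simultaneously satisfies $|\Strain|\gg T^2$ (both Chernoff transfers), $|\Strain|\ge C(d/(\beta\gamma))^C$ (\Cref{prop: parameter recovery}), and $|\Stest|\gtrsim\eps^{-2}$ (the Hoeffding transfers), which is exactly where one uses the freedom to take $C$ large relative to $C_1,C_2,C_3$.
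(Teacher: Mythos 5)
Your proposal follows essentially the same two-branch structure as the paper's proof, invokes the same two lemmas (\Cref{prop: if keep seeing the same label moment matching guarantees generalization} for the large-bias branch and \Cref{prop: parameter recovery} plus the neighborhood monotonicity observation for the small-bias branch), and uses the same Chernoff/Hoeffding transfers between empirical and population quantities; the only cosmetic difference is that in the small-bias branch you apply the pointwise domination $\ind\{\coptcommon(\x)\neq\hat f(\x)\}\le\ind\{\sign(\vv^+_\x\cdot\x-\tau^+_\x)\neq\sign(\vv^-_\x\cdot\x-\tau^-_\x)\}$ before the Hoeffding transfer to $\Dtestmarginal$, whereas the paper transfers the disagreement-region probability first and invokes the domination afterward, yielding the identical $11\eps$ bound. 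This is the same argument.
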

To prove this proposition, we first need to prove \Cref{prop: if keep seeing the same label moment matching guarantees generalization}.
\begin{proof}[Proof of \Cref{prop: if keep seeing the same label moment matching guarantees generalization}]
	First of all, we claim that Equation \ref{eq: halfspace is small} implies that
	\begin{equation}
		\label{eq: tau is large}
		|\tau|
		\geq
		\sqrt{\frac{1}{2} \log \frac{T}{2}} 
	\end{equation}
	Indeed, we have
	\[
	\frac{2}{T}
	\geq
	\frac{1}{\sqrt{2\pi}}
	\int_{|\tau|}^\infty
	e^{-z^2/2} d z
	\geq
	|\tau|
	e^{-2|\tau|^2}
	\geq
	e^{-2|\tau|^2}, 
	\]
	where the last inequality holds because for sufficiently large $C_1$ the value of $T$ and therefore $|\tau|$ is sufficiently large and exceeds $1$. 
	
	Recall that $\vv$ is assumed to be a unit vector in $\R^d$. Assume, without loss of generality, that $L=-1$, and therefore $\tau>0$. We have  
	\begin{equation}
		\label{eq: degree-k Chabychev}
		\pr_{\x \sim \Stest}[\sign(\x \cdot \vv -\tau)\neq -1] =
		\pr_{\x \sim \Stest}[\x \cdot \vv \geq \tau]
		\leq
		\frac{\ex_{\x \sim \Stest}[(\x \cdot \vv)^k]}{\tau^k}.
	\end{equation}
	To use this inequality, we need to upper-bound $\ex_{\x \sim \Stest}[(\x \cdot \vv)^k]$. Since $\vv$ is a unit vector, every (of at most $d^k$) terms of the polynomial mapping $\x \in \R^d$ to $(\x \cdot \vv)^k$ has coefficient at most $1$. This, together with Equation \ref{eq: moment matching} and the triangle inequality, allows us to conclude that
	\[
	\left\lvert
	\ex_{\x \sim \Stest}[(\x \cdot \vv)^k]
	-
	\ex_{\x \sim \Gauss(0,I_d)}[(\x \cdot \vv)^k]
	\right \rvert
	\leq
	d^k
	\frac{\epsilon}{d^{C_2 k}}.
	\]
	Now, since $\vv$ is a unit vector, we have $\ex_{\x \sim \Gauss(0,I_d)}[(\x \cdot \vv)^k] = k!!\leq k^{k}$. Combining this with the equation above, and Equation \ref{eq: degree-k Chabychev} and then substituting Equation \ref{eq: tau is large} and the values of $k$ and $T$ we get:
	\begin{multline*}
		\pr_{\x \sim \Stest}[\sign(\x \cdot \vv -\tau)\neq -1]
		\leq
		\frac{1}{|\tau|^k} \left( 
		k^{k/2}
		+
		d^k
		\frac{\epsilon}{d^{C_2 k}}
		\right)
		\leq\\
		\frac{1}{\left(\frac{1}{2} C_1^2 \log \frac{1}{\epsilon}\right)^{C_1 \log \frac{1}{\epsilon}}} \left( 
		\left( C_1 \log \frac{1}{\epsilon}
		\right)^{C_1\log \frac{1}{\epsilon}}
		+
		d^k
		\frac{\epsilon}{d^{C_2 k}}
		\right)
	\end{multline*}
	We see that when $C_1$ and $C_2$ both exceed some absolute constant, the above expression is at most $\epsilon$, which completes the proof.
\end{proof}

Having proven \Cref{prop: if keep seeing the same label moment matching guarantees generalization}, we are now ready to prove \Cref{prop: if accepts then generalizes}.
\begin{proof}[Proof of \Cref{prop: if accepts then generalizes}.]
	%There are two ways for the algorithm to output (ACCEPT, $\hat{f}$): through Step \ref{line: used to be 4c} and through Step \ref{line: used to be 5d}.
	
	First, suppose the algorithm outputs (ACCEPT, $L$) for some $L\in \{\pm 1\}$ via Step \ref{line: used to be 4c}. For the algorithm to reach this step, it has to be that 
	\[
	\pr_{\x \in S}[\coptcommon(\x) \neq L] \leq \frac{1}{T},
	\]
	Via Hoeffding's inequality, if $C$ is sufficiently large then with probability at least $1-\frac{1}{1000}$ it holds that 
 \begin{equation}
 \label{eq: label estimates correct}
 |\pr_{\x \in S}[\coptcommon(\x) \neq L]-\pr_{\x \in S}[\coptcommon(\x) \neq L]|\leq \frac{1}{2T},
 \end{equation} and combining the two equations above
	\[
	\pr_{\x \in \Gauss(0,I_d)}[\coptcommon(\x)\neq L] \leq \frac{2}{T}.
	\]
	Furthermore, for the algorithm not output REJECT in Step \ref{line: used to be 4b}, it has to be the case that for every collection of non-negative integers $(\mindex_1, \cdots, \mindex_d)$ satisfying $\sum_i \mindex_i \leq k$ we have 
	\[
	\left\lvert
	\ex_{\x \sim \Stest} 
	\left[\x^\alpha 
	\right]
	-
	\ex_{\x \sim \Gauss(0,I_d)} 
	\left[\x^\alpha 
	\right]
	\right\rvert
	>\frac{\epsilon}{d^{C_2 k}}.
	\]
	Overall, this allows us to apply \Cref{prop: if keep seeing the same label moment matching guarantees generalization} to conclude that
	\[
	\pr_{\x \sim \Stest}[\coptcommon(\x)\neq L]\leq O(\epsilon),
	\]
	and, for a sufficiently large absolute constant $C$, with probability at least $1-\frac{1}{1000}$, this is only possible if
	\[
	\pr_{\x \sim \Dtestmarginal}[\coptcommon(\x)\neq L]\leq O(\epsilon),
	\]
	which finishes the proof for the case when the algorithm accepts in Step \ref{line: used to be 4c}.
	
	Now, suppose the algorithm accepts in Step \ref{line: used to be 5d}. For the algorithm to reach this step, it has to be that 
	\[
	\pr_{\x \in S}[\coptcommon(\x) \neq L] > \frac{1}{T},
	\]
	And together with Equation \ref{eq: label estimates correct}, this implies that
	\[
	\pr_{\x \in \Gauss(0,I_d)}[\coptcommon(\x)\neq L] > \frac{1}{2T}.
	\]
	For such $f^*$ we can apply \Cref{prop: parameter recovery} and conclude that with probability at least $1-1/1000$ the values of $\hat{\vv}$ and $\tau$ obtained in \Cref{algorithm:improved for general halfspaces} satisfy  
	\begin{equation}
		\label{eq: v hat is good}
		\norm{\vv-\hat{\vv}}_2
		\leq 
		\left(\frac{\epsilon}{C_3 d^{C_3}}
		\right)^2
		=\beta,
	\end{equation}
	\begin{equation}
		\label{eq: tau hat is good}
		|\tau-\hat{\tau}|
		\leq \left(\frac{\epsilon}{C_3 d^{C_3}}\right)^2 = \beta
		,
	\end{equation}
	where the last equality is by the definition of $\beta$. Now, since the algorithm did not reject in Step \ref{line: used to be 5c}, it must be the case that the fraction of elements in $\Stest$ that satisfy $\sign(\vv_\x^+\cdot \x-\tau^+_\x)\neq \sign(\vv_\x^-\cdot \x-\tau^-_\x)$ is at most $10\epsilon$. If $C$ is a sufficiently large absolute constant, the standard Hoeffding inequality tells us that for this to happen with probability larger than $1/1000$ it has to be the case that
	\[
	\pr_{\x \sim \Dtestmarginal} \left[ \sign\left(
	\max_{
		(\vv',\tau')\in \paramneighborhood}
	\vv'\cdot \x -\tau'
	\right)
	\neq \sign\left(
	\min_{
		(\vv',\tau')\in \paramneighborhood}
	\vv'\cdot \x -\tau'
	\right) \right]\leq 11 \epsilon.
	\]
	Whenever the event above occurs, since $\paramneighborhood = \{(\vv',\tau'): \|\vv'-\hat\vv\|_2 \le \beta, |\tau'-\hat\tau|\le \beta\}$ we can use Equations \ref{eq: v hat is good} and \ref{eq: tau hat is good} to conclude $\sign(\vv \cdot \x-\tau)=\sign(\hat{\vv} \cdot \x-\hat{\tau})$. Therefore,
	\[
	\pr_{\x \sim \Dtestmarginal} \left[ 
	\sign(\vv \cdot \x-\tau)\neq \sign(\hat{\vv} \cdot \x-\hat{\tau})
	\right] \leq 11\epsilon
	\]
 This completes the proof of soundness of \Cref{algorithm:improved for general halfspaces}.
\end{proof}

\subsubsection{Proof of Completeness.}
\label{sec: proof of Completeness}
The second proposition shows that if the testing distribution is the standard Gaussian, then the algorithm will likely accept. Together, propositions \ref{prop: if accepts then generalizes} and \ref{prop: if Gaussian then accept} yield Theorem \ref{thm: improved TDS learning for general halfspaces}.
\begin{proposition}[Completeness]
	\label{prop: if Gaussian then accept}
	%For any pair of absolute constants $C_1$ and $C_2$, for a sufficiently large absolute constant $C$ the following holds for any positive integer $d$ and any $\epsilon\in (0,1)$.
	For sufficiently large value of the absolute constants $C$ and $C_3$ and for any halfspace $f=\sign(\hat{\vv} \cdot \x -\hat{\tau})$, suppose the testing distribution $\Dtestmarginal$ is the standard Gaussian distribution. Then, with probability at least $1-\frac{1}{100}$ \Cref{algorithm:improved for general halfspaces} will accept, i.e. output (ACCEPT, $\hat{f}$) for some $\hat{f}$.
\end{proposition}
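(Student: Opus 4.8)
The completeness direction is the easy one: unlike soundness, it only requires showing that \Cref{algorithm:improved for general halfspaces} does \emph{not} reject (the output hypothesis, whatever it is, is irrelevant here). The plan is to condition on the training sample $\Strain$ and to bound, uniformly over $\Strain$, the probability of rejection when $\Stest$ consists of i.i.d.\ draws from $\Gauss(0,I_d)$. This suffices because $\Strain$ alone decides which of the two branches is executed and fixes the quantities $\hat\vv$, $\hat\tau$, and $\paramneighborhood$; conditioned on $\Strain$, whether the algorithm rejects depends only on the fresh randomness of $\Stest$. I would then handle the two branches separately, and in each branch there is a single rejection site to control.

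In the large-bias branch the only rejection site is Step~\ref{line: used to be 4b}, which fires when some empirical moment $\momentempirical_\mindex=\E_{\x\sim\Stest}[\x^\mindex]$ with $\|\mindex\|_1\le\degbound$ deviates from the exact Gaussian moment $\E_{\x\sim\Gauss(0,I_d)}[\x^\mindex]$ by more than $\Delta=\epsilon/d^{C_2\degbound}$. Since $\Stest$ is a set of $|\Stest|$ i.i.d.\ Gaussian points and the degree-$2\degbound$ moments of $\Gauss(0,I_d)$ are bounded (so each monomial $\x^\mindex$ with $\|\mindex\|_1\le\degbound$ has bounded variance), a Chebyshev bound for each $\mindex$ together with a union bound over the $d^{O(\degbound)}$ multi-indices of $\L_1$-norm at most $\degbound$ bounds the total rejection probability; with $\degbound=C_1\log(1/\epsilon)$ and $|\Stest|=Cd^{C\log(1/\epsilon)}$, taking $C$ large enough relative to $C_1,C_2$ makes this at most $1/100$. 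Hence the algorithm reaches Step~\ref{line: used to be 4c} and accepts.

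In the other branch, $\Stest$ enters only through the test at Step~\ref{line: used to be 5c}, where we reject if the empirical fraction of $\x\in\Stest$ with $\sign(\vv^+_\x\cdot\x-\tau^+_\x)\neq\sign(\vv^-_\x\cdot\x-\tau^-_\x)$ exceeds $10\epsilon$. Here I would fix $\hat\vv,\hat\tau$ (hence $\paramneighborhood$), observe that $\vv^+_\x\cdot\x-\tau^+_\x=\max_{(\vv',\tau')\in\paramneighborhood}(\vv'\cdot\x-\tau')$ and symmetrically for the minimum, and invoke \Cref{prop: gaussian is unlikely to fall into disagreement region} to get $\pr_{\x\sim\Gauss(0,I_d)}[\sign(\vv^+_\x\cdot\x-\tau^+_\x)\neq\sign(\vv^-_\x\cdot\x-\tau^-_\x)]\le K_1 d^{K_1}\sqrt{\beta}$. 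With $\beta=\epsilon^2/(C_3 d^{C_3})$ this population quantity is at most $\epsilon$ provided $C_3$ is a large enough absolute constant (e.g.\ $C_3\ge 2K_1+2$). Since the indicator in question is a fixed $\{0,1\}$-valued function of $\x$ and $|\Stest|=Cd^{C\log(1/\epsilon)}\gg\epsilon^{-2}$, a Hoeffding bound shows its empirical average over $\Stest$ exceeds $10\epsilon$ with probability at most $\exp(-\Omega(|\Stest|\epsilon^2))\le 1/100$, so the algorithm does not reject and outputs a hypothesis at Step~\ref{line: used to be 5d}.

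Combining the two cases, for every $\Strain$ the conditional probability of rejection over $\Stest\sim\Gauss(0,I_d)^{\otimes|\Stest|}$ is at most $1/100$, which gives the proposition. The only substantive input is \Cref{prop: gaussian is unlikely to fall into disagreement region}, which is established separately; within the present argument there is no real obstacle, and if anything requires care it is pinning down the absolute constants $C$, $C_2$, $C_3$ so that the moment-concentration union bound (for the large-bias branch) and the estimate $K_1 d^{K_1}\sqrt{\beta}\le\epsilon$ (for the other branch) both go through simultaneously.
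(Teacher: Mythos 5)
Your proof is correct, and in the second branch it is actually a cleaner route than the one the paper takes. The paper's own argument for that branch first uses a Hoeffding bound on the training sample to deduce that the target halfspace is not too biased, then invokes \Cref{prop: parameter recovery} to conclude $\|\vv-\hat\vv\|_2\le\beta$ and $|\tau-\hat\tau|\le\beta$, and only then applies \Cref{prop: gaussian is unlikely to fall into disagreement region}. You observe, correctly, that \Cref{prop: gaussian is unlikely to fall into disagreement region} holds for \emph{any} unit vector $\hat\vv$ and \emph{any} $\hat\tau$ --- the bound $K_1 d^{K_1}\sqrt\beta$ depends only on the radius $\beta$ of $\paramneighborhood$, not on whether the recovered parameters are close to the truth --- so parameter recovery plays no role in completeness. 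Your conditioning on $\Strain$ makes this especially transparent: once $\Strain$ is fixed, the branch, $\hat\vv$, $\hat\tau$, and $\paramneighborhood$ are all determined, and the rejection event is a function of $\Stest$ alone, so you only need the Gaussian tail bound of \Cref{prop: gaussian is unlikely to fall into disagreement region} plus Hoeffding, uniformly over $\Strain$. This saves two of the four error events that the paper accumulates. The first branch (moment concentration via Chebyshev and a union bound over the $d^{O(\degbound)}$ multi-indices) is essentially the same calculation the paper gestures at as ``standard Gaussian concentration.'' One small quibble: your specific suggestion $C_3\ge 2K_1+2$ takes care of the $d^{K_1-C_3/2}$ factor but you also need $K_1/\sqrt{C_3}\le 1$; saying $C_3$ is a sufficiently large absolute constant depending on $K_1$ (as you do) is the right level of precision.
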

To prove this proposition, we first need to prove \Cref{prop: gaussian is unlikely to fall into disagreement region}.
\begin{proof}[Proof of \Cref{prop: gaussian is unlikely to fall into disagreement region}]
	We have $\ex_{\x \sim \Gauss(0,I_d)}[\norm{\x}_2^2]=d$. Therefore, by Markov's inequality, we have
	\begin{equation}
		\label{eq: norm of x is small}
		\pr_{\x \sim \Gauss(0,I_d)}
		\left[\norm{\x}_2
		>
		\frac{\sqrt{d}}{\sqrt{\beta}}
		\right]
		=
		\pr_{\x \sim \Gauss(0,I_d)}
		\left[\norm{\x}_2^2
		>
		\frac{d}{\beta}
		\right]
		\leq
		\beta
	\end{equation}
	Additionally, from the bound of $\frac{1}{\sqrt{2\pi}}$ on the density of standard Gaussian in one dimension, we get:
	\begin{equation}
		\label{eq: halfspace is satisfied with a margin}
		\pr_{\x \sim \Gauss(0,I_d)}
		\left[
		|\hat{\vv} \cdot \x - \hat{\tau}|
		\le
		100\sqrt{\beta d} +\beta
		\right]
		\le
		\frac{200\sqrt{\beta d} +2 \beta}{\sqrt{2 \pi}}
	\end{equation}
	If it holds that $\norm{\x}_2
	\leq
	\frac{\sqrt{d}}{\sqrt{\beta}}$, we have
	for every $\vv'$ satisfying $\norm{\vv'-\hat{\vv}}_2\leq \beta$ and any
	$\tau'$ satisfying $|\tau'-\hat{\tau}|\leq \beta$ that
	\[
	|
	\vv'\cdot \x -\tau' -(\hat{\vv} \cdot \x - \hat{\tau})
	|
	\leq
	\sqrt{d \beta}
	+
	\beta
	\]
	Therefore, if it is also the case that $|\hat{\vv} \cdot \x - \hat{\tau}|
	>
	100\sqrt{\beta d} +\beta$, then we have 
	\[
	\sign\left(
	\vv'\cdot \x -\tau'
	\right)
	=
	\sign\left(
	\hat{\vv}\cdot \x -\hat{\tau}
	\right)
	\]
	This allows us to conclude that 
	\begin{multline*}
		\pr_{\x \sim \Gauss(0,I_d)}
		\left[
		\sign\left(
		\max_{
			(\vv',\tau')\in \paramneighborhood}
		\vv'\cdot \x -\tau'
		\right)
		\neq \sign\left(
		\min_{
			(\vv',\tau')\in \paramneighborhood}
		\vv'\cdot \x -\tau'
		\right)
		\right]
		\leq\\
		\pr_{\x \sim \Gauss(0,I_d)}
		\left[\norm{\x}_2
		>
		\frac{\sqrt{d}}{\sqrt{\beta}}
		\right]
		+
		\pr_{\x \sim \Gauss(0,I_d)}
		\left[
		|\hat{\vv} \cdot \x - \hat{\tau}|
		\le
		100\sqrt{\beta d} +\beta
		\right]
		\leq
		\beta
		+\frac{200\sqrt{\beta d} +2 \beta}{\sqrt{2 \pi}},
	\end{multline*}
	where in the end we substituted Equation \ref{eq: norm of x is small} and Equation \ref{eq: halfspace is satisfied with a margin}. Recalling that for $\beta \in (0,1)$ we have $\beta<\sqrt{\beta}$ and picking $K_1$ to be a sufficiently large absolute constant, our proposition follows from the inequality above.
\end{proof}

Having proven \Cref{prop: gaussian is unlikely to fall into disagreement region}, we are now ready to prove \Cref{prop: if Gaussian then accept}.

\begin{proof}[Proof of \Cref{prop: if Gaussian then accept}.]
	There are two ways for the algorithm to output REJECT: through Step \ref{line: used to be 4b} and through Step \ref{line: used to be 5c}. We will argue neither takes place. From standard Gaussian concentration, if $C$ is a sufficiently large absolute const ant, with probability at least $1-\frac{1}{1000}$ the algorithm will not output REJECT in Step \ref{line: used to be 4b}.
	
	We now proceed to ruling out the possibility that the algorithm outputs REJECT in Step \ref{line: used to be 5c}. For the algorithm to reach step Step \ref{line: used to be 5c}, it is necessary that     
	\[
	\pr_{\x \in S}[\coptcommon(\x) \neq L] > \frac{1}{T},
	\]
 Via Hoeffding's inequality, if $C$ is sufficiently large then with probability at least $1-\frac{1}{1000}$ it holds that 
 $
 |\pr_{\x \in S}[\coptcommon(\x) \neq L]-\pr_{\x \in S}[\coptcommon(\x) \neq L]|\leq \frac{1}{2T}$, which together with the equation above implies that
	\[
	\pr_{\x \in \Gauss(0,I_d)}[\coptcommon(\x)\neq L] > \frac{1}{2T}.
	\]
	
	For such $f^*$ we can apply \Cref{prop: parameter recovery} and conclude that with probability at least $1-1/1000$ the values of $\hat{\vv}$ and $\tau$ obtained in \Cref{algorithm:improved for general halfspaces} satisfy  
	\begin{equation}
		\label{eq: v hat is good new}
		\norm{\vv-\hat{\vv}}_2
		\leq \left(\frac{\epsilon}{C_3 d^{C_3}}\right)^2
		=\beta
		,
	\end{equation}
	\begin{equation}
		\label{eq: tau hat is good new}
		|\tau-\hat{\tau}|
		\leq \left(\frac{\epsilon}{C_3 d^{C_3}} \right)^2=\beta
		,
	\end{equation}
 Recall that $\paramneighborhood = \{(\vv',\tau'): \|\vv'-\hat\vv\|_2 \le \beta, |\tau'-\hat\tau|\le \beta\}$.
	The equation above together with \Cref{prop: gaussian is unlikely to fall into disagreement region} implies that
	\[
	\pr_{\x \sim \Gauss(0,I_d)}
	\left[
	\sign\left(
	\max_{
		(\vv',\tau')\in \paramneighborhood}
	\vv'\cdot \x -\tau'
	\right)
	\neq \sign\left(
	\min_{
		(\vv',\tau')\in \paramneighborhood}
	\vv'\cdot \x -\tau'
	\right)
	\right]
	\leq
	K_1  d^{K_1} \frac{\epsilon}{C_3 d^{C_3}} \leq \epsilon,
	\]
	where the last inequality holds for sufficiently large value of $C_3$. Combining the inequality above with the standard Hoeffding bound and recalling that $\Dtestmarginal=\Gauss(0,I_d)$, we see that with probability at least $1-\frac{1}{1000}$,
	\[
	\pr_{\x \sim \Stest}
	\left[
	\sign\left(
	\max_{
		(\vv',\tau')\in \paramneighborhood}
	\vv'\cdot \x -\tau'
	\right)
	\neq \sign\left(
	\min_{
		(\vv',\tau')\in \paramneighborhood}
	\vv'\cdot \x -\tau'
	\right)
	\right]
	\leq 2\epsilon,
	\]
	In conclusion, we see that the inequality above implies that the algorithm does not output REJECT in Step \ref{line: used to be 5c}. This completes our proof.
\end{proof}
\subsubsection{Parameter recovery.}
	\label{sec: parameter recovery}
Here we prove \Cref{prop: parameter recovery}, which was used in the proofs of \Cref{prop: if accepts then generalizes} and \Cref{prop: if Gaussian then accept}, thereby finishing the proof of \Cref{thm: improved TDS learning for general halfspaces}. Let us first recall the setting of \Cref{prop: parameter recovery}. For a unit vector $\vv$ in $\R^d$ and $\tau \in \R$ satisfying
\[
\min\left(\pr_{x \in \Gauss(0,I_d)}[\vv \cdot \x -\tau > 0], \pr_{x \in \Gauss(0,I_d)}[\vv \cdot \x -\tau < 0]\right)\geq \eta,
\]
$\Strain$ is a set of $C\left(\frac{d}{\eta \beta }\right)^C$ i.i.d samples from a distribution $\Dtrainjoint$ with $\X$-marginal distributed as standard Gaussian and $\Y$-marginal given by the halfspace $f=\sign(\vv \cdot \x - \tau)$. The absolute constant $C$ is assumed to be sufficiently large. We let $\T = \{\hat{\vv}\cdot \x: (\x,y)\in \Strain\}$ and set \[\hat{\vv}  = \frac{\sum_{(\x,y) \in \Strain} \x y}{\norm{\sum_{(\x,y) \in \Strain} \x y}_2}\]
	\[
	\hat{\tau}
	=
	\argmin_{\tau' \in \T}
	\pr_{(\x,y) \in \Strain}[\coptcommon(\x)\neq \sign(\hat{\vv} \cdot \x - \tau')].
	\]
	We would like to prove that with probability at least $29/30$ we have 
	\[
	\norm{\vv-\hat{\vv}}_2
	\leq \beta,
	\]
	\[
	|\tau-\hat{\tau}|
	\leq \beta.
	\]

The following proposition tells us that the first inequality above is likely to hold:
\begin{proposition}
	[Recovery of normal vector for halfspaces]
	\label{prop: recover normal vector}
	For a sufficiently large absolute constant $C$, and every $\eta, \beta \in (0,1)$ and integer $d$, the following holds. Let
	$\Strain$ is a set of at least $C\left(\frac{d}{\eta \beta }\right)^C$ i.i.d samples from a distribution $\Dtrainjoint$ with $\X$-marginal distributed as standard Gaussian and $\Y$-marginal given by the halfspace $f=\sign(\vv \cdot \x - \tau)$.
 For every unit vector $\vv$ in $\R^d$ and $\tau \in \R$ satisfying
	\[
	\min\left(\pr_{\x \in \Gauss(0,I_d)}[\vv \cdot \x -\tau > 0], \pr_{\x \in \Gauss(0,I_d)}[\vv \cdot \x -\tau < 0]\right)\geq \eta,
	\]
	The vector $\hat{\vv}  = \frac{\sum_{(\x,y) \in \Strain} \x y}{\norm{\sum_{(\x,y) \in \Strain} \x y}_2}$ with probability at least $1-\frac{1}{2000}$ satisfies: 
	\[
	\norm{\vv-\hat{\vv}}_2
	\leq \beta,
	\]
\end{proposition}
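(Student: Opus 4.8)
The plan is to track the empirical \emph{Chow parameters} $\hat{\mathbf u} := \frac{1}{|\Strain|}\sum_{(\x,y)\in\Strain}\x\,y$ and argue that, after normalization, $\hat{\mathbf u}$ recovers the direction $\vv$. The proof splits into three steps: computing the population value of $\hat{\mathbf u}$ and lower-bounding its length, showing concentration of $\hat{\mathbf u}$ around that value, and translating vector-closeness into closeness of the normalized vectors.

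\textbf{Step 1 (population Chow vector).} First I would show that for $\x\sim\Gauss(0,I_d)$ and $f(\x)=\sign(\vv\cdot\x-\tau)$ with $\norm{\vv}_2=1$,
\[
  \ex_{\x\sim\Gauss(0,I_d)}[\x\,f(\x)] \;=\; c(\tau)\,\vv,\qquad c(\tau)\;=\;\tfrac{2}{\sqrt{2\pi}}\,e^{-\tau^2/2}\;>\;0 .
\]
Decompose $\x$ into its component along $\vv$ and its component $\x^{\perp}$ in the orthogonal complement of $\vv$; since $f(\x)$ depends only on $\vv\cdot\x$, which is independent of $\x^{\perp}$ under the standard Gaussian, $\ex[\x^{\perp}f(\x)]=\ex[\x^{\perp}]\,\ex[f(\x)]=0$, while $\ex[(\vv\cdot\x)\,f(\x)]=\ex_{g\sim\Gauss(0,1)}[g\,\sign(g-\tau)]=2\int_{\tau}^{\infty} g\,\tfrac{1}{\sqrt{2\pi}}e^{-g^2/2}\,dg=\tfrac{2}{\sqrt{2\pi}}e^{-\tau^2/2}$. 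The non-bias hypothesis then lower-bounds the scalar $c(\tau)$: one of $\pr_{g}[g>\tau],\pr_{g}[g<\tau]$ equals $\pr_{g\sim\Gauss(0,1)}[g>|\tau|]$, which by the standard Gaussian tail bound is at most $\tfrac12 e^{-\tau^2/2}$, so $\eta\le\tfrac12 e^{-\tau^2/2}$ and hence $c(\tau)\ge\tfrac{4\eta}{\sqrt{2\pi}}\ge\eta$.

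\textbf{Step 2 (concentration).} The $|\Strain|$ summands $\x\,y$ are i.i.d., each with mean $c(\tau)\vv$ and coordinatewise variance $\var(\x_i y)=\ex[\x_i^2]-c(\tau)^2\vv_i^2\le1$ (using $y^2=1$ and $\x_i\sim\Gauss(0,1)$). Hence $\ex\big[\norm{\hat{\mathbf u}-c(\tau)\vv}_2^2\big]=\tfrac{1}{|\Strain|}\sum_{i\in[d]}\var(\x_i y)\le d/|\Strain|$, so Markov's inequality gives $\norm{\hat{\mathbf u}-c(\tau)\vv}_2\le\sqrt{2000\,d/|\Strain|}$ with probability at least $1-\tfrac{1}{2000}$. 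Since $|\Strain|\ge C(d/(\eta\beta))^C$ and $C$ is large, the right-hand side is at most $\tfrac{\eta\beta}{4}\le\tfrac{c(\tau)\beta}{4}$. (A Bernstein or sub-Gaussian tail bound would yield the same conclusion from far fewer samples, but the crude second-moment estimate already suffices for the stated polynomial sample size.)

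\textbf{Step 3 (normalization), and the main obstacle.} Finally I would invoke the elementary fact that if $\mathbf a$ is a vector with $\norm{\mathbf a-c\vv}_2\le\gamma c$ for a unit vector $\vv$, a scalar $c>0$, and $\gamma\in(0,1)$, then $\big\|\tfrac{\mathbf a}{\norm{\mathbf a}_2}-\vv\big\|_2\le2\gamma$: passing through $\mathbf a/c$, $\big\|\tfrac{\mathbf a}{\norm{\mathbf a}_2}-\vv\big\|_2\le\big\|\tfrac{\mathbf a}{\norm{\mathbf a}_2}-\tfrac{\mathbf a}{c}\big\|_2+\tfrac{1}{c}\norm{\mathbf a-c\vv}_2=\tfrac{|c-\norm{\mathbf a}_2|}{c}+\tfrac{\norm{\mathbf a-c\vv}_2}{c}\le\tfrac{2}{c}\norm{\mathbf a-c\vv}_2\le2\gamma$, where we used $|c-\norm{\mathbf a}_2|=\big|\,\norm{c\vv}_2-\norm{\mathbf a}_2\,\big|\le\norm{\mathbf a-c\vv}_2$. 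Applying this with $\mathbf a=\hat{\mathbf u}$, $c=c(\tau)$, and $\gamma=\beta/4$ yields $\norm{\hat{\vv}-\vv}_2\le\beta/2\le\beta$ on the probability-$(1-\tfrac{1}{2000})$ event of Step 2, which is exactly the claim. The only genuinely delicate point is the interplay in Step 1 between identifying the direction of the Chow vector and lower-bounding its norm: normalization amplifies the estimation error of Step 2 by a factor $\Theta(1/c(\tau))$, so the argument would break for a badly biased halfspace — it is precisely the $\eta$-non-bias assumption that keeps $c(\tau)=\Theta(\eta)$ bounded away from $0$, making polynomially many samples enough.
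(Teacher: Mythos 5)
Your proof is correct and follows the same strategy as the paper: track the empirical Chow vector $\hat{\mathbf u}=\E_{\Strain}[\x y]$, show its population mean is a positive multiple $c(\tau)\vv$ of the true direction with $c(\tau)=\Omega(\eta)$ thanks to the non-bias hypothesis, apply a second-moment concentration bound, and normalize. Your execution is slightly cleaner in two places — a single Markov bound on $\norm{\hat{\mathbf u}-c(\tau)\vv}_2^2$ instead of the paper's coordinatewise Chebyshev plus union bound over the rotated basis, and a tidy normalization lemma in place of the paper's direct manipulation of $\hat\vv\cdot\e_i$ and $\norm{\hat\vv}_2=1$ — but these are differences of bookkeeping, not of approach.
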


Once this stage is accomplished, the next proposition tells us that we can recover the offset $\tau$. 
\begin{proposition}
	[Offset recovery for halfspaces]
	\label{prop: offset recovery}
	For a sufficiently large absolute constant $C$, and every $\eta, \gamma \in (0,1)$ and integer $d$, the following holds. Let
	$\Strain$ is a set of at least $C\left(\frac{d}{\eta \gamma }\right)^C$ i.i.d samples from a distribution $\Dtrainjoint$ with $\X$-marginal distributed as standard Gaussian and $\Y$-marginal given by the halfspace $f=\sign(v \cdot \x - \tau)$.
 For every unit vector $\vv$ in $\R^d$ and $\tau \in \R$ satisfying
	\[
	\min\left(\pr_{\x \in \Gauss(0,I_d)}[\vv \cdot \x -\tau > 0], \pr_{\x \in \Gauss(0,I_d)}[\vv \cdot \x -\tau < 0]\right)\geq \eta,
	\]
	Then,  with probability at least $1-\frac{1}{2000}$, for every unit vector $\hat{\vv}$ that forms an angle of at most $\gamma$ with $\vv$  the value \[
	\hat{\tau}
	=
	\argmin_{\tau' \in \R}
	\pr_{(\x,y) \in \Strain}[\coptcommon(\x)\neq \sign(\hat{\vv} \cdot \x - \tau')].
	\] satisfies 
	\[
	|\tau-\hat{\tau}|
	\leq O\left(\frac{1}{\eta^{50}} \sqrt{\gamma}\right).
	\]
\end{proposition}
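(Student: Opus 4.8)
The plan is to analyze the \emph{population} objective $G_{\hat\vv}(\tau') := \pr_{\x\sim\Gauss(0,I_d)}[\coptcommon(\x)\neq\sign(\hat\vv\cdot\x-\tau')]$, establish that it is minimized near $\tau'=\tau$ together with a quantitative ``sharpness'' lower bound, and then transfer this to the empirical minimizer $\hat\tau$ via uniform convergence over the class of halfspaces. Throughout, write $\theta=\measuredangle(\vv,\hat\vv)\le\gamma$ and reduce to two dimensions by decomposing each $\x\sim\Gauss(0,I_d)$ along $\vv$ and along the component $\mathbf{u}$ of $\hat\vv$ orthogonal to $\vv$: with $a=\vv\cdot\x$ and $b=\mathbf{u}\cdot\x$ (independent standard Gaussians), we have $\coptcommon(\x)=\sign(a-\tau)$ and $\hat\vv\cdot\x=\cos\theta\,a+\sin\theta\,b$, so everything is governed by the joint law of $(a,b)$.

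First, the bias hypothesis $\min(\pr[a>\tau],\pr[a<\tau])\ge\eta$ forces $|\tau|\le\sqrt{2\ln(1/\eta)}$ by the Gaussian tail bound, and hence the density $\phi$ is at least $\mathrm{poly}(\eta)$ (say $\ge\eta^{2}$) on $[-|\tau|-1,\,|\tau|+1]$. This yields two facts about $G_{\hat\vv}$. (i) \emph{Upper bound at the truth:} $G_{\hat\vv}(\tau)\le\pr[|a-\tau|\le(1-\cos\theta)|a|+\sin\theta|b|]$, and splitting on whether $|b|>\gamma^{-1/2}$ (a subgaussian tail, negligible) or $|b|\le\gamma^{-1/2}$ (so the error band has width $O(\sqrt{\gamma})$, and $\phi\le 1/\sqrt{2\pi}$) gives $G_{\hat\vv}(\tau)\le O(\sqrt{\gamma})$. (ii) \emph{Sharpness:} the same rotation estimate applied to $\sign(a-\tau')$ versus $\sign(\cos\theta\,a+\sin\theta\,b-\tau')$ gives $\big|G_{\hat\vv}(\tau')-|\Phi(\tau')-\Phi(\tau)|\big|\le O(\sqrt{\gamma})$ uniformly in $\tau'$, while the density lower bound (for $|\tau'-\tau|\le 1$, via the mean value theorem) and the bias hypothesis (for $|\tau'-\tau|>1$) give $|\Phi(\tau')-\Phi(\tau)|\ge\eta^{2}\min(|\tau'-\tau|,1)$. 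Combining, any $\tau'$ with $G_{\hat\vv}(\tau')\le O(\sqrt{\gamma})$ must satisfy $|\tau'-\tau|\le O(\sqrt{\gamma}/\eta^{2})$.

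Next I pass from population to sample. The class $\{\x\mapsto\ind[\sign(\vv\cdot\x-\tau)\neq\sign(\vv'\cdot\x-\tau')]:\vv'\in\R^d,\,\tau'\in\R\}$ consists of indicators of symmetric differences of two halfspaces and so has $\vc$ dimension $O(d)$; hence with $|\Strain|=C(d/(\eta\gamma))^C$ samples, the standard VC uniform-convergence bound gives, with probability at least $1-1/2000$, $\sup_{\vv',\tau'}\big|\widehat G_{\vv'}(\tau')-G_{\vv'}(\tau')\big|\le \gamma\eta^{2}/C'$ for a large constant $C'$, where $\widehat G_{\vv'}(\tau')=\pr_{(\x,y)\in\Strain}[y\neq\sign(\vv'\cdot\x-\tau')]$ and $y=\coptcommon(\x)$. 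On that event, for \emph{any} fixed $\hat\vv$ with $\measuredangle(\vv,\hat\vv)\le\gamma$ and its empirical offset minimizer $\hat\tau$, we get $\widehat G_{\hat\vv}(\hat\tau)\le\widehat G_{\hat\vv}(\tau)\le G_{\hat\vv}(\tau)+\gamma\eta^2/C'\le O(\sqrt{\gamma})$, so $G_{\hat\vv}(\hat\tau)\le O(\sqrt{\gamma})$, and the sharpness bound gives $|\hat\tau-\tau|\le O(\sqrt{\gamma}/\eta^{2})\le O(\sqrt{\gamma}/\eta^{50})$.

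I expect the main obstacle to be the sharpness/curvature step (ii): one must convert the bias lower bound $\eta$ into a lower bound $\mathrm{poly}(\eta)$ on the Gaussian density at the true threshold, so that displacing the threshold by $t$ provably costs $\gtrsim\eta^{O(1)}t$ of extra disagreement, \emph{and} control the additive perturbation coming from the rotation of $\hat\vv$ away from $\vv$ uniformly over all $\tau'\in\R$ (including far-away $\tau'$, which must be ruled out because the objective there is $\ge\mathrm{poly}(\eta)\gg\sqrt{\gamma}$). The remaining ingredients — the $|\tau|\le\sqrt{2\ln(1/\eta)}$ bound, the anti-concentration estimates, and the VC uniform convergence over halfspace symmetric differences — are routine.
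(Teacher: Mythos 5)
Your proof is correct and follows essentially the same route as the paper: both reduce to the two-dimensional frame $(a,b)=(\vv\cdot\x,\ \mathbf{u}\cdot\x)$, bound the population disagreement near $\tau'=\tau$ by $O(\sqrt\gamma)$, establish a $\poly(\eta)$-scaled sharpness lower bound on the population objective (this is exactly the content of \Cref{prop: error in offset yields error in halfspaces}), and transfer to the empirical minimizer via VC uniform convergence. The only organizational difference is that you factor the population disagreement as the one-dimensional quantity $|\Phi(\tau')-\Phi(\tau)|$ plus a perturbation that is $O(\sqrt\gamma)$ uniformly over $\tau'$, while the paper bounds the two-dimensional disagreement directly and uses $\tau\cos\gamma$ (rather than $\tau$) as the comparison offset; the uniformity over far-away $\tau'$ that you flag as the potential obstacle in fact goes through exactly as you indicate, since the disagreement band width $(1-\cos\theta)|a|+\sin\theta|b|$ depends on $|a|,|b|$ (truncated at $\gamma^{-1/2}$) rather than on $|\tau'|$.
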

Formally, \Cref{prop: parameter recovery} follows from the two propositions above as follows. One first uses \Cref{prop: recover normal vector} to conclude that, for any absolute constant $C_5$, there is a value of the absolute constant $C$ for which with probability $1-\frac{1}{2000}$ a vector $\hat{\vv}$ that satisfies $\norm{\vv - \hat{\vv}}\leq \frac{1}{C_5} \beta^2 \eta^{100}$. This implies that the angle between $\vv$ and $\hat{\vv}$ is upper-bounded by $\frac{10}{C_5} \beta^2 \eta^{100}$. Then, if the absolute constant $C_5$ is large enough, if we use \Cref{prop: offset recovery}, then  with probability $1-\frac{1}{2000}$ the value $\hat{\tau}$ satisfies $|\tau-\hat{\tau}|\leq \beta$, finishing the proof of \Cref{prop: parameter recovery}.

Now, proceed to prove the two propositions above. We start with  \Cref{prop: recover normal vector}.
\begin{proof}[Proof of \Cref{prop: recover normal vector}]
	
	Let $\{\e_1, \cdots \e_{d-1}\}$ form an orthonormal basis for the subspace orthogonal to $\vv$. Since all the projections $\{\vv\cdot \x, \e_1 \cdot \x, \cdots, \e_{d-1} \cdot \x\}$ are independent standard Gaussians and $\coptcommon(\x)=\sign(\vv \cdot \x -\tau)$ we have for all $i$ 
	\[
	\ex_{\x \in \Gauss(0,I_d)} [\e_i \cdot \x \coptcommon(\x)] =0. 
	\]
	At the same time
	\begin{multline*}
		\ex_{\x \in \Gauss(0,I_d)} [\vv \cdot \x f(\x)] = 
		\int_{t=-\infty}^{+\infty}
		t \sign(t-\tau) \frac{1}{\sqrt{2 \pi}} ~dt
		=\\\int_{t\in[-|\tau|,|\tau|]}
		t \sign(t-\tau) \frac{1}{\sqrt{2 \pi}} ~dt
		+\int_{t\in[-\infty,-|\tau|]\cup [|\tau|, +\infty]}
		t \sign(t-\tau) \frac{1}{\sqrt{2 \pi}} ~dt
		=\frac{2}{\sqrt{2 \pi}}\int_{t=|\tau|}^{\infty}
		t  ~dt
	\end{multline*}
	For some positive absolute constant $K_2$,
	the final expression above is at least $K_2 \pr_{t \sim N(0,1)}[t>\tau]$, because if $|\tau|>1$, then one can lower-bound the expression above by $\frac{2}{\sqrt{2 \pi}}\int_{t=|\tau|}^{\infty} ~dt$. On the other hand, if $|\tau|\in [0,1]$, then the expression on the right side is at least $\frac{2}{\sqrt{2 \pi}}\int_{t=1}^{\infty} ~dt$ which is a positive absolute constant, while $\pr_{t \sim N(0,1)}[t>\tau]$ is always upper-bounded by $1$. Overall, we have 
	\begin{align*}
	\ex_{\x \in \Gauss(0,I_d)} [\vv \cdot \x \coptcommon(\x)] &\geq
	K_2  \pr_{t \sim N(0,1)}[t>\tau] \\
	&=
	K_2
	\min\left(\pr_{\x \in \Gauss(0,I_d)}[\vv \cdot \x -\tau > 0], \pr_{\x \in \Gauss(0,I_d)}[\vv \cdot \x -\tau < 0]\right)\\
	&\geq
	K_2 \eta.
	\end{align*}
	Now, we bound the variance of $\x \coptcommon(\x)$. Since $\coptcommon(\x)\in \{\pm 1\}$, we have 
	\[
	\ex_{\x\in \Gauss(0,I_d)}
	\left[( \e_i\cdot \x\coptcommon(\x) )^2\right]
	=
	\ex_{\x\in \Gauss(0,I_d)}
	\left[(\e_i\cdot \x )^2\right]
	=1,
	\]
	\[
	\ex_{\x\in \Gauss(0,I_d)}
	\left[(  \vv\cdot \x \coptcommon(\x))^2\right]
	=\ex_{\x\in \Gauss(0,I_d)}
	\left[(  \vv\cdot \x)^2\right]
	=1.
	\]
	This allows us to use the Chebychev's inequality together with the union bound to conclude that with probability at least $1-\frac{1}{2000}$ we have for all $i$
	\[
	\left\lvert E_{\x \in S}
	[\e_i \cdot \x \coptcommon(\x) ]
	\right \rvert
	\leq \sqrt{\frac{60d}{N}},
	\]
	and also 
	\[
	E_{\x \in S}
	[\vv \cdot \x \coptcommon(\x) ]
	\geq K_2 \eta-\sqrt{\frac{60d}{N}},
	\]
Recalling that $\hat{\vv}=\frac{\sum_{\x \in S_1} \x \coptcommon(\x)}{\norm{\sum_{\x \in S_1} \x \coptcommon(\x)}_2}=\frac{\ex_{\x \in S_1} \x \coptcommon(\x)}{\norm{\ex_{\x \in S_1} \x \coptcommon(\x)}_2}$, we see that
\[
|\hat{\vv} \cdot \e_i|
\leq \frac{\sqrt{\frac{60d}{N}}}{K_2\eta-\sqrt{\frac{60d}{N}}}
\]
Substituting $N=C (\frac{d}{\eta \beta})^{C} $, and letting $C$ be a sufficiently large absolute constant, we obtain from above implies that $|\hat{\vv} \cdot \e_i|\leq \frac{\beta}{10\sqrt{d}}$. Since $\norm{\hat{\vv}}=1$ we have 
\[
1\geq|\hat{\vv} \cdot \vv|\geq
\sqrt{1-\frac{\beta}{10}}
\geq 1-\frac{\beta}{10},
\]
we also see that taking  $C$ to be a sufficiently large absolute constant also ensures that $\hat{\vv} \cdot \vv>0$, so overall we get
\[
\norm{\hat{\vv}-\vv}
\leq \beta,
\]
which finishes the proof.
\end{proof}

%\begin{proposition}
%    Suppose that the angle between a pair of unit vectors $v_1$ and $v_2$ in $\R^d$ is $\gamma$ and let $x$ be distributed as a standard Gaussian in $\R^d$. Then, for any $z \in \R$, conditioned on $v_1 \cdot x=z$ the distribution of $v_2 \cdot x$ is $\mathcal{N}(z \cos \gamma, \sin \gamma)$. 

%    Furthermore, for any $\tau \in \R$ we have 
%    \[
%    \pr_{x \in \Gauss(0,I_d)}[\sign(v_2 \cdot x-\tau) \neq \sign(v_1 \cdot x - \tau/\cos \gamma)] \leq 10\gamma
%    \]
%    and for any $\tau' \in \R$ satisfying $|\tau - \tau'| > 100 \gamma$ we have
%    \[
%    \pr_{x \in \Gauss(0,I_d)}[\sign(v_2 \cdot x-\tau) \neq \sign(v_1 \cdot x - \tau')] 
%    \geq \min(100 \gamma, \frac{1}{\sqrt{2\pi}}e^{-\tau^2/10})
%    \]
%\end{proposition}
%\begin{proof}
%    Let $v_3$ be the unit vector that one obtains by first projecting $v_2$ into the subspace perpendicular to $v_1$, and then normalizing the resulting vector to have unit norm. This means $v_3$ is orthogonal to $v_1$ and we have 
%    \[
%    v_2=v_1 \cos \gamma + v_3 \sin \gamma.
%    \]
%    Therefore
%    \[
%    x \cdot v_2 = x \cdot v_1 \cos \gamma + x \cdot v_3 \sin \gamma
%    \]
%    Now, since $x \cdot v_1$ and  $x \cdot v_3$ are distributed as i.i.d. one-dimensional standard Gaussians. Thus, conditioning on $x \cdot v_1=z$ we get that $ x \cdot v_2 $ is distributed as $\mathcal{N}(z \cos \gamma, \sin \gamma)$. 

%    \textbf{TODO: finish}
%\end{proof}
In order to prove \Cref{prop: offset recovery}, we will need  a proposition that relates the following two quantities: (1) the difference in offsets $\tau_1$ and $\tau_2$ of two halfspaces (2) The probability that these two hafspaces disagree on a point drawn from the standard Gaussian. 
\begin{proposition}
	\label{prop: error in offset yields error in halfspaces}
	There is some absolute constant $K_1$ such that for any pair of unit vectors $\vv_1, \vv_2 \in \R^d$ and a pair of real numbers $\tau_1, \tau_2$, letting $\gamma$ denote the angle between $\vv_1$ and $\vv_2$, the following holds. Suppose $\gamma< \pi/4$, then
	\begin{equation}\label{eq: wrong offset bad}
		\pr_{\x \in \Gauss(0,I_d)}\left[\sign\left(\vv_1 \cdot \x - \tau_1\right)
		\neq \sign\left(
		\vv_2 \cdot \x - \tau_2
		\right)
		\right]
		\geq 
		%K_1
		%e^{- \tau_2^2/2}\min\left(\gamma, \frac{1}{|\tau_2|+1}\right)
		%+
		\frac{1}{K_1}
		e^{-
			\tau_1^2/2}
		\min\left(
		\left\lvert
		\tau_1 - \frac{\tau_2}{\cos \gamma}
		\right\rvert
		, \frac{1}{|\tau_1|+1}\right)
	\end{equation}
	It is also the case that
	\begin{equation}
		\label{eq: right offset good}
		\pr_{\x \in \Gauss(0,I_d)}\left[\sign\left(\vv_1 \cdot \x - \tau_1\right)
		\neq \sign\left(
		\vv_2 \cdot \x - \tau_1 \cos \gamma
		\right)
		\right]
		\leq
		K_1 \sqrt{\gamma}
	\end{equation}
\end{proposition}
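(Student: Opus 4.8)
The plan is to reduce both inequalities to a two-dimensional Gaussian computation, using rotational invariance of $\Gauss(0,I_d)$. Let $\e$ be a unit vector in $\spn(\vv_1,\vv_2)$ with $\vv_1\cdot\e=0$, so that $\vv_2=\cos\gamma\,\vv_1+\sin\gamma\,\e$, and set $u=\vv_1\cdot\x$ and $w=\e\cdot\x$. For $\x\sim\Gauss(0,I_d)$ the pair $(u,w)$ is distributed as $\Gauss(0,I_2)$ with independent coordinates, and $\vv_2\cdot\x-\tau_2=\cos\gamma\,u+\sin\gamma\,w-\tau_2$; hence the disagreement event is exactly $\{(u-\tau_1)(\cos\gamma\,u+\sin\gamma\,w-\tau_2)<0\}$, and it remains to bound its $\Gauss(0,I_2)$-probability from both sides. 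Below, $\phi(t)=\tfrac{1}{\sqrt{2\pi}}e^{-t^2/2}$ denotes the standard normal density. We may assume $\gamma\in(0,\pi/4)$; for $\gamma=0$ both inequalities reduce to the one-dimensional special case of the argument below (with \eqref{eq: right offset good} trivial since then the two halfspaces coincide).

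For the lower bound \eqref{eq: wrong offset bad}, first note that replacing $(\vv_1,\vv_2,\tau_1,\tau_2)$ by $(-\vv_1,-\vv_2,-\tau_1,-\tau_2)$ leaves $\gamma$, the disagreement event, and the right-hand side of \eqref{eq: wrong offset bad} all unchanged, so we may assume $\tau_1\le\tau_2/\cos\gamma$. Put $m=\min\{|\tau_1-\tau_2/\cos\gamma|,\ 1/(|\tau_1|+1)\}$ and consider the rectangle $R=\{u\in(\tau_1,\tau_1+m)\}\cap\{w<(\tau_2-\cos\gamma(\tau_1+m))/\sin\gamma\}$. On $R$ we have $u-\tau_1>0$ and $\cos\gamma\,u+\sin\gamma\,w<\cos\gamma(\tau_1+m)+(\tau_2-\cos\gamma(\tau_1+m))=\tau_2$, so $R$ is contained in the disagreement region. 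The threshold $(\tau_2-\cos\gamma(\tau_1+m))/\sin\gamma$ is nonnegative because $\tau_2/\cos\gamma-\tau_1\ge0$ and $m\le\tau_2/\cos\gamma-\tau_1$, so the $w$-coordinate meets its constraint with probability at least $1/2$; and $\int_{\tau_1}^{\tau_1+m}\phi(t)\,dt\ge m\,\phi(|\tau_1|+m)=\Omega(m\,e^{-\tau_1^2/2})$, where we use $(|\tau_1|+m)^2\le\tau_1^2+3$, valid since $m\le 1/(|\tau_1|+1)$. Multiplying these two bounds by independence of $u$ and $w$ yields \eqref{eq: wrong offset bad}.

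For the upper bound \eqref{eq: right offset good}, now $\tau_2=\tau_1\cos\gamma$, so $\cos\gamma\,u+\sin\gamma\,w-\tau_2=\cos\gamma\,(u-\tau_1)+\sin\gamma\,w$, and a sign disagreement between $u-\tau_1$ and this quantity forces $\cos\gamma\,|u-\tau_1|<\sin\gamma\,|w|$, i.e. $|u-\tau_1|<\tan\gamma\,|w|$. Conditioning on $w$ and bounding the density of $u$ by $1/\sqrt{2\pi}$, the disagreement probability is at most $\tfrac{2}{\sqrt{2\pi}}\,\tan\gamma\cdot\E_{w\sim\Gauss(0,1)}[|w|]=\tfrac{2\tan\gamma}{\pi}$, which is $O(\gamma)\le O(\sqrt\gamma)$ for $\gamma\le\pi/4$. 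Finally, take $K_1$ large enough to absorb the constants in both parts.

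The whole argument is elementary once the two-dimensional reduction is in place; the only delicate point is the choice of the window width $m$ in the lower bound. It must be capped at $1/(|\tau_1|+1)$ so that $\phi$ varies by at most a constant factor over $(\tau_1,\tau_1+m)$ — this is exactly what produces the clean factor $e^{-\tau_1^2/2}$ rather than $e^{-(\tau_1+m)^2/2}$ — and capped at $|\tau_1-\tau_2/\cos\gamma|$ so that the whole window stays inside the disagreement region. I anticipate no genuine obstacle beyond tracking constants and handling the degenerate $\gamma=0$ case.
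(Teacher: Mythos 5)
Your proof is correct. The core move is identical to the paper's: reduce to a two-dimensional Gaussian by picking an orthonormal pair spanning $\spn(\vv_1,\vv_2)$ and writing $\vv_2 \cdot \x$ as a linear combination of two independent standard normals $u=\vv_1\cdot\x$ and $w$. Your lower-bound argument — choosing a rectangle of $u$-width $m=\min\bigl(|\tau_1-\tau_2/\cos\gamma|,\;1/(|\tau_1|+1)\bigr)$ adjacent to $\tau_1$ and a half-plane in $w$ — is a mild repackaging of the paper's argument, which conditions on $z_1$ and uses the median symmetry of $z_2$ to extract the same factor of $1/2$; both cap the window at $1/(|\tau_1|+1)$ for exactly the reason you state. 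Your upper-bound argument, however, is both cleaner and strictly sharper: by conditioning on $w$ and bounding the density of $u$ by $1/\sqrt{2\pi}$ you get disagreement probability at most $\frac{2}{\pi}\tan\gamma = O(\gamma)$, whereas the paper splits on the event $\{|z_1-\tau_1|>\sqrt{\tan\gamma}\}\cap\{|z_2|<1/\sqrt{\tan\gamma}\}$ and only obtains $O(\sqrt{\gamma})$. Since the statement only asks for $O(\sqrt{\gamma})$ both suffice, but yours costs less and eliminates a union bound. All the delicate points you flag — the cap on $m$ keeping $(|\tau_1|+m)^2\le\tau_1^2+3$, the nonnegativity of the $w$-threshold, and the degenerate $\gamma=0$ case — are handled correctly.
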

\begin{proof}
	To prove this, we first show that for any $z \in \R$, conditioned on $\vv_1 \cdot \x=z_1$ the distribution of $\vv_2 \cdot \x$ is $\mathcal{N}(z_1 \cos \gamma, \sin \gamma)$. Indeed, let $\vv_3$ be the unit vector that one obtains by first projecting $\vv_2$ into the subspace perpendicular to $\vv_1$, and then normalizing the resulting vector to have unit norm. This means $\vv_3$ is orthogonal to $\vv_1$ and we have 
	\[
	\vv_2=\vv_1 \cos \gamma + \vv_3 \sin \gamma.
	\]
	Therefore
	\[
	\x \cdot \vv_2 = \x \cdot \vv_1 \cos \gamma + \x \cdot \vv_3 \sin \gamma
	\]
	Now, since $\x \cdot \vv_1$ and  $\x \cdot \vv_3$ are distributed as i.i.d. one-dimensional standard Gaussians. Thus, conditioning on $\x \cdot \vv_1=z_1$ we get that $ \x \cdot \vv_2 $ is distributed as $\mathcal{N}(z \cos \gamma, \sin \gamma)$. 
	
	Our observation allows us to write:
	\begin{multline}\label{eq: first step to bound disagreement for halfspaces with differing thresholds}
		\pr_{\x \in \Gauss(0,I_d)}\left[\sign\left(\vv_1 \cdot \x - \tau_1\right)
		\neq \sign\left(
		\vv_2 \cdot \x - \tau_2
		\right)
		\right]
		=\\
		\pr_{z_1, z_2 \in \mathcal{N}(0, 1)}\left[\sign\left(z_1 - \tau_1\right)
		\neq \sign\left(
		z_1 \cos \gamma + z_2 \sin \gamma - \tau_2
		\right)
		\right]
		=\\
		\pr_{z_1, z_2 \in \mathcal{N}(0, 1)}\left[\sign\left(z_1 - \tau_1\right)
		\neq \sign\left(
		z_1   + z_2 \tan \gamma - \tau_2/ \cos \gamma
		\right)
		\right]
		%\\
		%\begin{cases}
		%  \pr_{z_1, z_2 \in \mathcal{N}(0, 1)}\left[\sign\left(z_1 - \tau_1\right)
		%\neq \sign\left(
		%z_1   + z_2 %\tan \gamma - \tau_2/ \cos \gamma
		%\right)
		%\right]
		%& \text{ if }\gamma\in [0, \pi/2)
		%\\
		%\pr_{z_1, z_2 \in \mathcal{N}(0, 1)}\left[\sign\left(z_1 - \tau_1\right)
		%\neq \sign\left(
		%  z_2  - %\tau_2
		%\right)
		%\right]
		%&\text{ if }\gamma=\pi/2
		%\\
		% \pr_{z_1, z_2 \in \mathcal{N}(0, 1)}\left[\sign\left(z_1 - \tau_1\right)
		%= \sign\left( z_1   + z_2 \tan \gamma - \tau_2/ \cos \gamma
		%\right)%\right]
		%& \text{ if %}\gamma\in %(\pi/2, \pi)
		%    \end{cases}
\end{multline}

Let us first focus on the case when $\gamma\in [0, \pi/2)$. We see that
\begin{multline}
	\label{eq: second step to bound disagreement for halfspaces with differing thresholds}
	\pr_{z_1, z_2 \in \mathcal{N}(0, 1)}\left[\sign\left(z_1 - \tau_1\right)
	\neq \sign\left(
	z_1   + z_2 \tan \gamma - \tau_2/ \cos \gamma
	\right)
	\right]
	\geq\\
	%\pr_{z_1, z_2 \in \mathcal{N}(0, 1)}\left[\sign\left(z_1 - \tau_2/ \cos \gamma\right)
	%\neq \sign\left(
	%z_1   + z_2 %\tan \gamma - \tau_2/ \cos \gamma 
	%\right)
	%\right].
	\frac{1}{2}
	\pr_{z_1\in \mathcal{N}(0, 1)}\left[\sign\left(z_1 - \tau_1\right)
	\neq \sign\left(
	z_1    -\tau_2/ \cos \gamma
	\right)
	\right]
\end{multline}
The reason that inequality above is true is that, conditioned on a specific value of $z_1$, if $z_1> \tau_2/ \cos \gamma $, then $z_1   + z_2 \tan \gamma - \tau_2$ is more likely to be positive than negative. At the same time, if  $z_1< \tau_2/ \cos \gamma $, then $z_1   + z_2 \tan \gamma - \tau_2$ is more likely to be negative than positive. 

We lower-bound the probability above as follows. Let $A$ be the interval of $\R$ defined as follows:
\[
A:=
\left\{
z\in \R: ~~
\sign(z-\tau_1) \neq \sign(z-\tau_2/\cos \gamma)
\And
|z-\tau_1| \leq \frac{1}{|\tau_1|+1}  
\right\}
\]
We have
\begin{align}
	\pr_{z_1\in \mathcal{N}(0, 1)}&\left[\sign\left(z_1 - \tau_1\right)
	\neq \sign\left(
	z_1    -\tau_2/ \cos \gamma
	\right)
	\right]
	\geq
	\pr_{z_1\in \mathcal{N}(0, 1)}\left[
	z_1 \in A
	\right]
	\geq\nonumber
 \\
	&\ge \min\left(
	\left\lvert
	\tau_1 - \frac{\tau_2}{\cos \gamma}
	\right\rvert
	, \frac{1}{|\tau_1|+1}\right)
	\frac{1}{\sqrt{2 \pi}}
	e^{-\frac{1}{2}
		\left(
		|\tau_1|
		-\frac{1}{|\tau_1|+1}
		\right)^2
	}\nonumber
 \\
	&\geq
	\Omega(1) \cdot \min\left(
	\left\lvert
	\tau_1 - \frac{\tau_2}{\cos \gamma}
	\right\rvert
	, \frac{1}{|\tau_1|+1}\right)
	e^{-\tau_1^2/2},
\end{align}
which, combined with Equations \ref{eq: first step to bound disagreement for halfspaces with differing thresholds} and \ref{eq: second step to bound disagreement for halfspaces with differing thresholds}, finishes the proof of Equation \ref{eq: wrong offset bad}.

Now, we proceed to proving Equation \ref{eq: right offset good}.
We proceed as follows:
\begin{align*}
	\pr_{z_1, z_2 \in \mathcal{N}(0, 1)}&\left[\sign\left(z_1 - \tau_1
	\right)
	= \sign\left(
	z_1   + z_2 \tan \gamma - \tau_1
	\right)
	\right] \\
    &\geq
	\pr_{z_1, z_2 \in \mathcal{N}(0, 1)}\left[
	|z_1-\tau_1| > \sqrt{\tan \gamma} 
	\And
	|z_2|<\frac{1}{\sqrt{\tan \gamma}}
	\right]\\
	&\geq
	1-O(1) \cdot \sqrt{\tan \gamma}-
	O(1)
	\int_{\frac{1}{\sqrt{\tan \gamma}}}^{\infty}
	e^{-z^2/2} ~ dz
	\\
    &=
	1-O(\sqrt{\tan \gamma})=1-O(\sqrt{ \gamma}),
\end{align*}
which, when combining with with Equation \ref{eq: first step to bound disagreement for halfspaces with differing thresholds} and substituting $\tau_2=\tau_1 \cos \gamma$, proves Equation \ref{eq: right offset good}.
\end{proof}
Having proven \Cref{prop: error in offset yields error in halfspaces}, we are now ready to prove \Cref{prop: offset recovery}.
\begin{proof}[Proof of \Cref{prop: offset recovery}.]
Recall that $\T = \{\hat{\vv}\cdot \x: (\x,y)\in \Strain\}$. We see for $\tau'$ between two neighboring elements of $\T$ the value of $\pr_{\x \in \N(0,I)}[\coptcommon(\x)\neq \sign(\hat{\vv} \cdot \x - \tau')]$ stays the same. Therefore
\begin{equation}
\label{eq: discretization is ok}
\pr_{\x \in \T}[\coptcommon(\x)\neq \sign(\hat{\vv} \cdot \x - \hat{\tau})]=\min_{\tau' \in \T}
	\pr_{\x \in \T}[\coptcommon(\x)\neq \sign(\hat{\vv} \cdot \x - \tau')]
 =
 \min_{\tau' \in \R}
	\pr_{\x \in \T}[\coptcommon(\x)\neq \sign(\hat{\vv} \cdot \x - \tau')].
\end{equation}
Since the function class $\{\sign(\vv' \cdot \x-\tau':~\vv'\in \R^d,~\tau'\in \R \}$ has a VC dimension of $d+1$, the standard VC bound tells us that for sufficiently large absolute constant $C$ with probability at least $1-\frac{1}{2000}$  we have for every $\tau'\in \R$ and unit vector $\hat{\vv}$ that\begin{equation}
	\label{eq: empirical error is close to pop error for halfspaces different offsets}
	\left\lvert
	\pr_{\x \in \N(0,I)}[\coptcommon(\x)\neq \sign(\hat{\vv} \cdot \x - \tau']
	-
	\pr_{\x \in \T}[\coptcommon(\x)\neq \sign(\hat{\vv} \cdot \x - \tau')]
	\right \rvert
	\leq\sqrt{\gamma}
\end{equation}

From Equation \ref{eq: right offset good} in \Cref{prop: error in offset yields error in halfspaces} we have that
\begin{equation}
	\label{eq: there is some offset that does really good}
	\min_{\tau' \in \R}
	\pr_{\x \in \N(0,I)}[\coptcommon(\x)\neq \sign(\hat{\vv} \cdot \x - \tau')]
	\leq
	K_1 \sqrt{\gamma}
	\leq O( \sqrt{\gamma})
\end{equation}
We now upper-bound $|\tau|$	 in terms $\eta$ as follows:
\begin{equation}
	\label{eq: tau is not too big}
	|\tau|
	\leq 10 \sqrt{\log \frac{1}{\eta}},
\end{equation}
For $|\tau|<1$, this is immediate, because the probability that the Gaussian exceeds one standard deviation in a given direction is at least $1/10$. For $|\tau|\geq 1$, we write 
\[
\eta
\geq
\int_{|\tau|}^{\infty}
e^{-t^2/2}~dt
\geq
\frac{1}{|\tau|}
e^{-(|\tau|+1/|\tau|)^2/2}
\geq\frac{1}{e^2} \cdot \frac{1}{|\tau|} e^{-|\tau|^2/2},
\]
which proves Equation \ref{eq: tau is not too big}. 

Taking Equation \ref{eq: wrong offset bad} in \Cref{prop: error in offset yields error in halfspaces} and substituting Equation \ref{eq: tau is not too big} we get 
\begin{multline*}
	\pr_{x \in \Gauss(0,I_d)}\left[f(x) 
	\neq \sign\left(
	\hat{\vv} \cdot x - \hat{\tau}
	\right)
	\right]
	\geq 
	%K_1
	%e^{- \tau_2^2/2}\min\left(\gamma, \frac{1}{|\tau_2|+1}\right)
	%+
	\frac{1}{K_1}
	e^{-
		\tau_1^2/2}
	\min\left(
	\left\lvert
	\tau - \frac{\hat{\tau}}{\cos \gamma}
	\right\rvert
	, \frac{1}{|\tau|+1}\right)\geq\\
	\frac{\eta^{50}}{K_1 }
	\min\left(\left\lvert
	\tau - \frac{\hat{\tau}}{\cos \gamma}
	\right\rvert, 1\right)
\end{multline*}
Combining the above with Equation \ref{eq: discretization is ok}, Equation \ref{eq: empirical error is close to pop error for halfspaces different offsets} and Equation \ref{eq: there is some offset that does really good} we get 
\[
\left\lvert
\tau - \frac{\hat{\tau}}{\cos \gamma}
\right\rvert
\leq
\frac{K_1}{\eta^{50}}
(O(\sqrt{\gamma})
+
\sqrt{\gamma}
)
\leq 
O(\sqrt{\gamma} /\eta^{50}).
\]
Finally, we see that
\[
|\tau - \hat{\tau}|
\leq
\left\lvert
\tau - \frac{\hat{\tau}}{\cos \gamma}
\right\rvert+
\left\lvert
\hat{\tau} - \frac{\hat{\tau}}{\cos \gamma}
\right\rvert
\leq
O(\sqrt{\gamma} /\eta^{50})
+O(\sqrt{\log ({1}/{\eta})} \gamma^2)
= O(\sqrt{\gamma} /\eta^{50}).
\]

This completes the proof of \Cref{prop: offset recovery}.
\end{proof}

\section{TDS Learning Through Moment Matching}\label{appendix:tds-moment-matching}

\subsection{$\L_2$-Sandwiching Implies TDS Learning}

We now prove \Cref{theorem:l2-sandwiching-implies-tds} which we restate here for convenience.

\begin{theorem}[$\L_2$-sandwiching implies TDS Learning]%\label{theorem:l2-sandwiching-implies-tds}
    %Let $\Dgeneric$ be either the uniform distribution over the hypercube $\X={\cube{}}^d$ or the standard Gaussian distribution $\Gauss(0,I_d)$ over $\X=\R^d$ and let $\C\subseteq\{\X\to \cube{}\}$ be a concept class. 
    Let $\Dgeneric$ be a distribution over a set $\X\subseteq\R^d$ and let $\C\subseteq\{\X\to \cube{}\}$ be a concept class. Let $\eps,\delta\in(0,1)$, $\eps'=\eps/100$ $\delta'=\delta/2$ and assume that the following are true.
    %for any $\eps>0$ and $\delta\in(0,1)$, 
    \begin{enumerate}[label=\textnormal{(}\roman*\textnormal{)}]
        \item\label{item:sandwiching} ($\L_2$-Sandwiching) The $\eps'$-approximate $\L_2$-sandwiching degree of $\C$ under $\Dgeneric$ is at most $\degbound$ with coefficient bound $\pbound$. 
        \item\label{item:concentration} (Moment Concentration) If $\Sunlabelled\sim\Dgeneric^{\otimes m}$ and $m\ge \mconc$ then, with probability at least $1-\delta'$, we have that for any $\mindex\in \N^d$ with $\|\mindex\|_1\le k$ it holds $|\E_\Sunlabelled[\x^\mindex] - \E_\Dgeneric[\x^\mindex]| \le \frac{\eps'}{\pbound^2d^{4\degbound}}$.
        \item\label{item:generalization} (Generalization) 
        If $\Slabelled\sim\Dgenericjoint^{\otimes m}$ where $\Dgenericjoint$ is any distribution over $\X\times\cube{}$ such that $\Dgenericmarginal = \Dgeneric$ and $m\ge \mgen$ then, with probability at least $1-\delta'$ we have that for any degree-$k$ polynomial $p$ with coefficient bound $\pbound$ it holds $|\E_{\Dgenericjoint}[(y-p(\x))^2]- \E_\Slabelled[(y-p(\x))^2]|\le {\eps'}$.
    \end{enumerate}

    Then, \Cref{algorithm:l2-sandwiching}, upon receiving $\mtrain \ge  \mgen$ labelled samples $\Strain$ from the training distribution and $\mtest\ge C\cdot \frac{d^\degbound+\log(1/\delta)}{\eps^2} + \mconc$ unlabelled samples $\Stest$ from the test distribution (where $C>0$ is a sufficiently large universal constant), runs in time $\poly(|\Strain|,|\Stest|, d^{\degbound})$ and TDS learns $\C$ with respect to $\Dgeneric$ up to error $32\optcommon+\eps$ and with failure probability $\delta$.
\end{theorem}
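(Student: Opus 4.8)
The plan is to analyze \Cref{algorithm:l2-sandwiching}, which (as sketched in the technical overview) does the following: compute every empirical moment $\E_\Stest[\x^\mindex]$ with $\|\mindex\|_1\le\degbound$ and \textbf{reject} if some of them differs from $\E_\Dgeneric[\x^\mindex]$ by more than $\frac{\eps'}{\pbound^2 d^{4\degbound}}$; otherwise run \emph{constrained} $\L_2$ polynomial regression on $\Strain$ — compute the degree-$\degbound$ polynomial $p$ with all coefficients bounded by $\pbound$ that minimizes $\E_\Strain[(y-p(\x))^2]$ — and output $\hat\concept=\sign(p)$. Completeness is the easy direction: if $\Dtestmarginal=\Dgeneric$, then the moment-concentration hypothesis~\ref{item:concentration} together with $\mtest\ge\mconc$ guarantees that, except with probability $\delta'$, every empirical moment of $\Stest$ is within $\frac{\eps'}{\pbound^2 d^{4\degbound}}$ of the corresponding moment of $\Dgeneric$, so the test passes and the algorithm accepts.

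For soundness, I would fix $\coptcommon\in\C$ attaining $\optcommon=\error(\coptcommon;\Dtrainjoint)+\error(\coptcommon;\Dtestjoint)$, set $\eta_\train=\error(\coptcommon;\Dtrainjoint)$, and take $\eps'$-approximate $\L_2$-sandwiching polynomials $\pup,\pdown$ for $\coptcommon$ under $\Dgeneric$ of degree $\le\degbound$ with coefficient bound $\pbound$ (hypothesis~\ref{item:sandwiching}). The first step is a ``training-side'' estimate of $p$: since $\pdown$ is feasible for the regression, $\E_\Strain[(y-p(\x))^2]\le\E_\Strain[(y-\pdown(\x))^2]$; applying the generalization hypothesis~\ref{item:generalization} to both $p$ and $\pdown$ (using $\mtrain\ge\mgen$) converts empirical to population squared loss up to $\eps'$, and then the crude bounds $(y-\pdown(\x))^2\le 2(y-\coptcommon(\x))^2+2(\coptcommon(\x)-\pdown(\x))^2$, $(y-\coptcommon(\x))^2=4\,\Ind\{y\neq\coptcommon(\x)\}$, and $0\le\coptcommon(\x)-\pdown(\x)\le\pup(\x)-\pdown(\x)$ (so $\E_\Dgeneric[(\coptcommon-\pdown)^2]\le\eps'$) give $\E_{\Dtrainjoint}[(y-p(\x))^2]\le 8\eta_\train+O(\eps')$. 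A further $\L_2$ triangle inequality, using $\Dtrainmarginal=\Dgeneric$, yields $\E_\Dgeneric[(p(\x)-\pdown(\x))^2]=O(\eta_\train)+O(\eps')$, i.e. the (arbitrary) regression output $p$ is $\L_2$-close to the \emph{known} polynomial $\pdown$ under the training marginal — crucially we never claim $p$ is itself sandwiching.

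The second step transfers this to the test marginal. Acceptance guarantees the empirical moments of $\Stest$ match those of $\Dgeneric$ within $\frac{\eps'}{\pbound^2 d^{4\degbound}}$; since $p-\pdown$ and $\pup-\pdown$ have degree $\le\degbound$ and coefficients bounded by $2\pbound$, their squares have degree $\le 2\degbound$ and coefficients bounded by $O(\pbound^2 d^{2\degbound})$, and the tolerance above is chosen exactly so that \Cref{lemma:transfer-lemma} introduces error only $O(\eps')$; thus $\E_\Stest[(p-\pdown)^2]=O(\eta_\train)+O(\eps')$ and $\E_\Stest[(\pup-\pdown)^2]=O(\eps')$. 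Combining with $(p-\coptcommon)^2\le 2(p-\pdown)^2+2(\pup-\pdown)^2$ pointwise gives $\E_\Stest[(p(\x)-\coptcommon(\x))^2]=O(\eta_\train)+O(\eps')$; and since $(p(\x)-\coptcommon(\x))^2\ge 1$ whenever $\sign(p(\x))\neq\coptcommon(\x)$, this bounds $\pr_{\x\sim\Stest}[\hat\concept(\x)\neq\coptcommon(\x)]$. Because the disagreement sets $\{\x:\sign(q(\x))\neq\coptcommon(\x)\}$ over degree-$\degbound$ polynomials $q$ form a class of VC dimension $d^{O(\degbound)}$, the choice $\mtest\ge C(d^\degbound+\log(1/\delta))/\eps^2$ lets me lift this to $\error(\hat\concept,\coptcommon;\Dtestmarginal)=O(\eta_\train)+O(\eps')$ by uniform convergence, except with probability $\delta'$. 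Finally $\error(\hat\concept;\Dtestjoint)\le\error(\coptcommon;\Dtestjoint)+\error(\hat\concept,\coptcommon;\Dtestmarginal)=O(\optcommon)+\eps$; carefully tracking the absolute constants yields the stated $32\optcommon+\eps$, and a union bound over the (at most three) bad events — moment concentration on $\Stest$, generalization on $\Strain$, uniform convergence on $\Stest$ — keeps the total failure probability at $\delta$.

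The main obstacle is the second step: the transfer lemma only preserves expectations of \emph{polynomials}, and although $(p-\pdown)^2$ is a polynomial, squaring doubles the degree and blows up coefficients to $\mathrm{poly}(d^\degbound)\cdot\pbound^2$, which is precisely why the moment-matching tolerance must be taken quadratically small in $\pbound$ and inverse-polynomially small in $d^\degbound$ — and why $\L_1$-sandwiching is insufficient here (the loss $|p-\pdown|$ is not a polynomial, so no transfer lemma exists for it). A secondary subtlety is that the regression output $p$ cannot be assumed to be sandwiching even though sandwiching polynomials exist; the argument sidesteps this by only using $\L_2$-closeness of $p$ to $\pdown$ on the training marginal and pushing that closeness through the transfer lemma, never routing through the non-polynomial $\coptcommon$ directly.
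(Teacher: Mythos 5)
Your proposal is correct and follows essentially the same route as the paper's own proof: anchor the regression output $\hat p$ to the lower sandwiching polynomial $\pdown$ on the training marginal, push the $\L_2$ distance through the moment-matching test via the transfer lemma (\Cref{lemma:transfer-lemma-formal}), then lower-bound the empirical squared loss on $\Stest$ by the disagreement indicator and finish with a uniform-convergence step for the class of degree-$\degbound$ polynomial threshold functions. The only substantive difference is bookkeeping: you repeatedly invoke the pointwise inequality $(a+b)^2 \le 2a^2+2b^2$, which if chained literally gives a bound closer to $64\optcommon + \eps$, whereas the paper works with $\L_2$-norm triangle inequalities ($\|a+b\|_{\L_2}\le\|a\|_{\L_2}+\|b\|_{\L_2}$) and squares only once at the end, which is what yields the stated constant of $32$; so the sentence ``carefully tracking the absolute constants yields the stated $32\optcommon+\eps$'' needs to be replaced by the $\L_2$-norm chaining to actually hit $32$. (One small slip in your algorithm sketch: the moment test must check all $\|\mindex\|_1\le 2\degbound$, not $\le\degbound$, as you yourself rely on in the transfer step when squaring $p-\pdown$; the paper's \Cref{algorithm:l2-sandwiching} indeed checks up to $2\degbound$.)
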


\begin{algorithm}
\caption{TDS Learning through Moment Matching}\label{algorithm:l2-sandwiching}
\KwIn{Sets $\Strain$ from $\Dtrainjoint$, $\Stest$ from $\Dtestmarginal$, parameters $\eps>0, \delta\in(0,1)$, $\degbound\in\N, \pbound > 0$}
Set $\eps' = \eps/100$, $\delta' = \delta/2$ and $\mslack=\frac{\eps'}{\pbound^2d^{4\degbound}}$ \\
For each $\mindex\in \N^d$ with $\|\mindex\|_1 \le 2\degbound$, compute the quantity 
$\momentempirical_\mindex = \E_{\x\sim\Stest} [\x^\mindex] = \E_{\x\sim\Stest} \Bigr[\prod_{i\in[d]}\x_i^{\mindex_i}\Bigr]$ \\
\textbf{Reject} and terminate if $|\momentempirical_\mindex-\E_{\x\sim \Dgeneric}[\x^\mindex]|>\mslack$ for some $\mindex$ with $\|\mindex\|_1 \le 2\degbound$. \\
\textbf{Otherwise}, solve the following least squares problem on $\Strain$ up to error $\eps'$
\begin{align*}
    \min_p &\E_{(\x,y)\sim\Strain}\left[(y-p(\x))^2\right] \\
    \text{s.t. } &p \text{ is a polynomial with degree at most }\degbound \\
    &\text{each coefficient of }p \text{ is absolutely bounded by }\pbound
\end{align*}
\\
Let $\hat{p}$ be an $\eps'$-approximate solution to the above optimization problem. \\
\textbf{Accept} and output $h:\X\to\cube{}$ where $h: \x \mapsto \sign(\hat{p}(\x))$.
\end{algorithm}

One key ingredient of the proof of \Cref{theorem:l2-sandwiching-implies-tds} is the following transfer lemma which states that moment matching implies that the empirical squared loss between two polynomials on the test distribution is close to their expected squared loss under the target distribution.

\begin{lemma}[Transfer Lemma for Square Loss]\label{lemma:transfer-lemma-formal}
    Let $\Dgeneric$ be a distribution over $\X\subseteq\R^d$ and $\Stest$ a (multi)set of points in $\R^d$. If $|\E_{\x\sim\Stest}[\x^\mindex]- \E_{\x\sim\Dgeneric}[\x^\mindex]|\le \mslack$ for all $\mindex\in\N^d$ with $\|\mindex\|_1\le 2\degbound$, then for any degree $\degbound$ polynomials $p_1,p_2$ with coefficients that are absolutely bounded by $\pbound$, it holds
    \[
        \Bigr|\E_{\x\sim \Stest}[(p_1(\x)-p_2(\x))^2] - \E_{\x\sim\Dgeneric}[(p_1(\x)-p_2(\x))^2]\Bigr| \le B^2\cdot d^{4\degbound}\cdot \mslack
    \]
    \end{lemma}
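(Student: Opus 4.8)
The plan is to reduce everything to one elementary fact: the $\ell_1$-norm of the coefficient vector of a polynomial is submultiplicative under products. Write $q := p_1 - p_2$, so that $(p_1(\x)-p_2(\x))^2 = q(\x)^2$. Since $p_1,p_2$ have degree at most $\degbound$ and coefficients bounded in magnitude by $\pbound$, the polynomial $q$ has degree at most $\degbound$, each of its coefficients is bounded by $2\pbound$, and it is supported on at most $M$ monomials, where $M$ is the number of monomials of degree at most $\degbound$ in $d$ variables. A crude count gives $M \le (d+1)^{\degbound}$ (for each of $\degbound$ ``slots'' pick one of the $d$ variables or nothing). Writing $q(\x) = \sum_{\beta} q_\beta \x^\beta$, this yields $\sum_{\beta}\lvert q_\beta\rvert \le 2\pbound M \le 2\pbound (d+1)^{\degbound}$.

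Next I would expand $q^2$. Over $\R^d$ we have $q(\x)^2 = \sum_{\mindex} r_\mindex \x^{\mindex}$ with $r_\mindex = \sum_{\beta+\gamma=\mindex} q_\beta q_\gamma$ and $\|\mindex\|_1 \le 2\degbound$ throughout; over $\cube{d}$ one first reduces using $\x_i^2 = 1$, which replaces each $\x^\beta \x^\gamma$ by $\x^{\beta\oplus\gamma}$ and only merges monomials, so the resulting multilinear form still has degree at most $2\degbound$ and, by the triangle inequality applied to the merged coefficients, coefficient $\ell_1$-norm no larger than before reduction. In either case the triangle inequality gives the submultiplicativity bound
\[
\sum_{\mindex} \lvert r_\mindex\rvert \;\le\; \sum_{\mindex}\sum_{\beta+\gamma=\mindex}\lvert q_\beta\rvert\,\lvert q_\gamma\rvert \;=\; \Bigl(\sum_\beta \lvert q_\beta\rvert\Bigr)^{2} \;\le\; 4\pbound^{2}(d+1)^{2\degbound}.
\]

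Finally I would apply linearity of expectation together with the moment-matching hypothesis: every $\mindex$ occurring in $q^2$ satisfies $\|\mindex\|_1 \le 2\degbound$, so
\[
\Bigl\lvert \E_{\x\sim\Stest}[q(\x)^2] - \E_{\x\sim\Dgeneric}[q(\x)^2]\Bigr\rvert \;=\; \Bigl\lvert \sum_{\mindex} r_\mindex\bigl(\E_{\x\sim\Stest}[\x^\mindex] - \E_{\x\sim\Dgeneric}[\x^\mindex]\bigr)\Bigr\rvert \;\le\; \mslack\sum_{\mindex}\lvert r_\mindex\rvert \;\le\; 4\pbound^{2}(d+1)^{2\degbound}\,\mslack,
\]
and since $4(d+1)^{2\degbound} \le d^{4\degbound}$ for all $d \ge 3$ (the only regime of interest), this is at most $\pbound^{2}d^{4\degbound}\mslack$; recalling $q = p_1-p_2$ finishes the proof.

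I do not expect a genuine obstacle: the argument is pure bookkeeping. The only points needing a little care are (i) the submultiplicativity of the coefficient $\ell_1$-norm, which is immediate once the product is written out, and (ii) the fact that squaring a multilinear polynomial over $\cube{d}$ leaves the multilinear regime, so one must reduce modulo $\x_i^2 = 1$ and note that this increases neither the degree (beyond $2\degbound$) nor the coefficient $\ell_1$-norm. Absorbing the factor $4(d+1)^{2\degbound}$ into the generous bound $d^{4\degbound}$ is the last routine step.
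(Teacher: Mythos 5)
Your proof is correct and follows essentially the same approach as the paper: expand $(p_1-p_2)^2$ in the monomial basis, bound the total $\ell_1$-mass of its coefficient vector, and apply the moment-matching hypothesis termwise via the triangle inequality. Your bookkeeping via submultiplicativity of the coefficient $\ell_1$-norm in fact gives the tighter intermediate bound $4\pbound^2(d+1)^{2\degbound}\mslack = O(\pbound^2 d^{2\degbound}\mslack)$, so the paper's $d^{4\degbound}$ is loose by roughly a factor $d^{2\degbound}$; absorbing the slack into the stated bound (valid for $d\ge 3$, $\degbound\ge 1$, the regime of interest) is a fine way to close.
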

    \begin{proof}
        The polynomials $p_1,p_2$ all have degree at most $\degbound$ and coefficients that are absolutely bounded by $\pbound$. Therefore, the polynomial $(p_1-p_2)^2$ has degree at most $2\degbound$ and coefficients that are absolutely bounded by $\pbound^2d^{2\degbound}$. Let $p' = (p_1-p_2)^2 = \sum_{\mindex:\|\mindex\|_1\le 2\degbound} p'_\mindex \x^\mindex$ (with $|p'_\mindex| \le \pbound^2d^{2\degbound}$ as argued above) which gives the following.
    \begin{align*}
        \|p_1-p_2\|_{\L_2(\Stest)}^2 &= \E_{\x\sim\Stest}\left[ (p_1(\x)-p_2(\x))^2\right] = \E_{\x\sim\Stest}\left[ p'(\x)\right]
    \end{align*}
    It remains to relate $\E_{\x\sim\Stest}\left[ p'(\x)\right]$ to $\E_{\x\sim\Dgeneric}\left[ p'(\x)\right]$, which follows by the moment-matching assumption.
    \begin{align*}
        \Bigr| \E_{\x\sim\Stest}\left[ p'(\x)\right] - \E_{\x\sim\Dgeneric}\left[ p'(\x)\right] \Bigr| &= \biggr| \sum_{\mindex:\|\mindex\|_1\le 2\degbound}p'_\mindex\left(\E_{\x\sim\Stest}\left[ \x^\mindex\right] - \E_{\x\sim\Dgeneric}\left[ \x^\mindex\right]\right) \biggr| \\
        &\le \sum_{\mindex:\|\mindex\|_1\le 2\degbound}|p'_\mindex|\cdot \left| \E_{\x\sim\Stest}\left[ \x^\mindex\right] - \E_{\x\sim\Dgeneric}\left[ \x^\mindex\right]\right| \\
        &=\sum_{\mindex:\|\mindex\|_1\le 2\degbound}|p'_\mindex|\cdot \left| \momentempirical_\mindex - \moment_\mindex\right| \\
        &\le d^{2\degbound}\cdot \pbound^2\cdot d^{2\degbound}\cdot \mslack\,,
    \end{align*}
    which concludes the proof of the lemma.
    \end{proof}

We are now ready to prove \Cref{theorem:l2-sandwiching-implies-tds}.

\begin{proof}[Proof of \Cref{theorem:l2-sandwiching-implies-tds}]
    For the following, let $\Dtrainjoint$ be the training distribution, $\Dtestjoint$ the test distribution (both over $\X\times \cube{}$) and $\Dtrainmarginal,\Dtestmarginal$ the corresponding marginal distributions over $\X$. We assume that $\Dtrainmarginal = \Dgeneric$. Let $\mtrain = |\Strain|$ and $\mtest = |\Stest|$, $\epsilon' = \epsilon/100$, $\delta'=\delta/2$, $k$, $\pbound$ as defined in condition \ref{item:sandwiching}. We also set $\Delta = \frac{\eps'}{\pbound^2d^{4\degbound}}$ and $\mconc$ as defined in condition \ref{item:concentration}, as well as $\mgen$ as defined in \ref{item:generalization}.

    \paragraph{Soundness.} Suppose that \Cref{algorithm:l2-sandwiching} accepts and outputs $h = \sign(\hat{p})$. For the following, let $\optcommontrain = \error(\coptcommon;\Dtrainjoint)$ and $\optcommontest = \error(\coptcommon;\Dtestjoint)$ (where we have $\optcommon = \optcommontrain+ \optcommontest$).
    %For the purposes of analysis, consider $\Stestlabelled$ to be a sample drawn from $\Dtestjoint$ corresponding to a labelling of $\Stest$ (given $\Stest$, draw labels according to $\Dtestjoint$). Note that the algorithm does not have access to $\Stestlabelled$ but only to $\Stest$. 
    We can bound the error of the hypothesis $h$ on $\Dtestjoint$ as follows
    \begin{align*}
        \error(h;\Dtestjoint) &\le \error(\coptcommon;\Dtestjoint) + \error(\coptcommon,h;\Dtestmarginal) \\
        &= \optcommontest + \E[\error(\coptcommon,h;\Stest)] \,,
    \end{align*}
    where the expectation above is over ${\Stest\sim (\Dtestmarginal)^{\otimes \mtest}}$. Denote $\error(h;\Dtestjoint) = \pr_{\Dtestjoint}[y\neq h(\x)]$ and $\error(h_1,h_2;\Dtestmarginal)=\pr_{\Dtestmarginal}[h_1(\x)\neq h_2(\x)]$ and use the fact that for random variables $y_1,y_2,y_3\in\cube{}$, it holds $\pr[y_1\neq y_2] \le \pr[y_1\neq y_3] + \pr[y_2\neq y_3]$. Since $h$ is the sign of a polynomial with degree at most $\degbound = \degbound(\eps')$ (see \Cref{algorithm:l2-sandwiching}) and the class of functions of this form has VC dimension at most $d^\degbound$ (e.g., by viewing it as the class of halfspaces in $d^\degbound$ dimensions) we have that whenever $\mtest \ge C\cdot \frac{d^\degbound + \log(1/\delta')}{\epsilon'^2}$ for some sufficiently large universal constant $C>0$ the following is true with probability at least $1-\delta'$ over the distribution of $\Stest$.
    \[  
        \E[\error(\coptcommon,h;\Stest)] \le \error(\coptcommon,h;\Stest) + \epsilon'
    \]
    Therefore, it is sufficient to bound the quantity $\error(\coptcommon,h;\Stest)$. We now observe the following simple fact.
    \begin{align*}
        \E_{\x\sim\Stest}[(\coptcommon(\x) - \hat{p}(\x))^2] &\ge \pr_{\Stest}[\coptcommon(\x)=1,\hat p(\x) < 0]+\pr_{\Stest}[\coptcommon(\x)=-1,\hat p(\x) \ge 0] \\
        &= \pr_{\Stest}[\coptcommon(\x)\neq \sign{\hat{p}(\x)}] \\
        &= \error(\coptcommon,h;\Stest)
    \end{align*}
    Therefore, we have $\error(\coptcommon,h;\Stest) \le \|\coptcommon-\hat p\|_{\L_2(\Stest)}^2$. Let $\pup,\pdown$ be $\epsilon'$-approximate $\L_2$ sandwiching polynomials for $\coptcommon$ of degree at most $\degbound = k(\eps')$ and with coefficient bound $\pbound = \pbound(\eps')$. The right hand side can be bounded as follows.
    \begin{align*}
        \|\coptcommon-\hat p\|_{\L_2(\Stest)} &\le \|\coptcommon-\pdown\|_{\L_2(\Stest)} + \|\pdown-\hat p\|_{\L_2(\Stest)} \\
        &\le \|\pup-\pdown\|_{\L_2(\Stest)} + \|\pdown-\hat p\|_{\L_2(\Stest)} 
    \end{align*}
    In the last inequality, we used the fact that $\pdown(\x)\le\coptcommon(\x)\le \pup(\x)$ for any $\x\in \X$. We will now compare $\|\pup-\pdown\|_{\L_2(\Stest)}$ to $\|\pup-\pdown\|_{\L_2(\Dgeneric)}$ (and, similarly, $\|\pdown-\hat p\|_{\L_2(\Stest)}$ to $\|\pdown-\hat p\|_{\L_2(\Dgeneric)}$) using the transfer lemma (\Cref{lemma:transfer-lemma-formal}). The polynomials $\pup,\pdown,\hat{p}$ all have degree at most $\degbound$ and coefficients that are absolutely bounded by $\pbound$. %Therefore, the polynomial $(\pup-\pdown)^2$ (and similarly the polynomial $(\pdown-\hat p)^2$) has degree at most $2\degbound$ and coefficients that are absolutely bounded by $\pbound^2d^{2\degbound}$. 
    Moreover, since \Cref{algorithm:l2-sandwiching} has accepted, we have that for any $\mindex\in\N^d$ with $\|\mindex\|_1 \le 2\degbound$, the following is true
    \begin{equation}\label{equation:moment-matching-guarantee}
        \left|\momentempirical_\mindex - \moment_\mindex \right| \le \mslack\,,
    \end{equation}
    where $\momentempirical = \E_{\x\sim\Stest}[\x^\mindex]$ (recall that $\x^\mindex = \prod_{i\in[d]}\x_i^{\mindex_i}$), $\moment = \E_{\x\sim \Dgeneric}[\x^\mindex]$ and $\mslack = \frac{\eps'}{\pbound^2d^{4\degbound}}$. 
    % Let $p' = (\pup-\pdown)^2 = \sum_{\mindex:\|\mindex\|_1\le 2\degbound} p'_\mindex \x^\mindex$ (with $|p'_\mindex| \le \pbound^2d^{2\degbound}$ as argued above) which gives the following.
    % \begin{align*}
    %     \|\pup-\pdown\|_{\L_2(\Stest)}^2 &= \E_{\x\sim\Stest}\left[ (\pup(\x)-\pdown(\x))^2\right] = \E_{\x\sim\Stest}\left[ p'(\x)\right]
    % \end{align*}
    % It remains to relate $\E_{\x\sim\Stest}\left[ p'(\x)\right]$ to $\E_{\x\sim\Dgeneric}\left[ p'(\x)\right]$, which follows by the moment-matching guarantee \eqref{equation:moment-matching-guarantee}.
    % \begin{align*}
    %     \Bigr| \E_{\x\sim\Stest}\left[ p'(\x)\right] - \E_{\x\sim\Dgeneric}\left[ p'(\x)\right] \Bigr| &= \biggr| \sum_{\mindex:\|\mindex\|_1\le 2\degbound}p'_\mindex\left(\E_{\x\sim\Stest}\left[ \x^\mindex\right] - \E_{\x\sim\Dgeneric}\left[ \x^\mindex\right]\right) \biggr| \\
    %     &\le \sum_{\mindex:\|\mindex\|_1\le 2\degbound}|p'_\mindex|\cdot \left| \E_{\x\sim\Stest}\left[ \x^\mindex\right] - \E_{\x\sim\Dgeneric}\left[ \x^\mindex\right]\right| \\
    %     &=\sum_{\mindex:\|\mindex\|_1\le 2\degbound}|p'_\mindex|\cdot \left| \momentempirical_\mindex - \moment_\mindex\right| \\
    %     &\le d^{2\degbound}\cdot \pbound^2\cdot d^{2\degbound}\cdot \mslack \le \eps' \tag{since $\mslack = \frac{\eps'}{\pbound^2d^{4\degbound}}$}
    % \end{align*}
    Therefore, by applying \Cref{lemma:transfer-lemma-formal}, we obtain that $\|\pup-\pdown\|_{\L_2(\Stest)} \le \|\pup-\pdown\|_{\L_2(\Dgeneric)} + \sqrt{\eps'}$ and, similarly, $\|\pdown-\hat p\|_{\L_2(\Stest)} \le \|\pdown-\hat p\|_{\L_2(\Dgeneric)} + \sqrt{\eps'}$. 
    
    We have assumed that $\pup,\pdown$ are $\eps'$-approximate $\L_2$ sandwiching polynomials for $\coptcommon$ and, therefore $\|\pup-\pdown\|_{\L_2(\Dgeneric)} = \sqrt{\|\pup-\pdown\|_{\L_2(\Dgeneric)}^2} \le \sqrt{\eps'}$ (see \Cref{definition:l2-sandwiching}). We bound the quantity $\|\pdown-\hat p\|_{\L_2(\Dgeneric)}$ as follows.
    \begin{align}
        \|\pdown-\hat p\|_{\L_2(\Dgeneric)} &\le \|\pdown-\coptcommon\|_{\L_2(\Dgeneric)} + \|\coptcommon-\hat p\|_{\L_2(\Dgeneric)} \nonumber\\
        &\le \|\pup-\pdown\|_{\L_2(\Dgeneric)} + \|\coptcommon-\hat p\|_{\L_2(\Dgeneric)} \tag{since $\pdown\le \coptcommon \le \pup$} \\
        &\le \sqrt{\eps'} + \|\coptcommon-\hat p\|_{\L_2(\Dgeneric)}\label{equation:pdown-phat-dgeneric}
    \end{align}
    Recall that $\|\coptcommon-\hat p\|_{\L_2(\Dgeneric)}^2 = \E_{\x\sim\Dgeneric}[(\hat p(\x)-\coptcommon(\x))^2]$. By assumption, $\Dtrainmarginal = \Dgeneric$ and therefore $\E_{\x\sim\Dgeneric}[(\hat p(\x)-\coptcommon(\x))^2] = \E_{\x\sim\Dtrainmarginal}[(\hat p(\x)-\coptcommon(\x))^2]$. Moreover, we can view the expectation to be over the joint distribution $(\x,y)\sim\Dtrainjoint$ (coupling of $\x$ and $y$), but the variable $y$ is ignored, i.e.,  $\E_{\x\sim\Dtrainmarginal}[(\hat p(\x)-\coptcommon(\x))^2] = \E_{(\x,y)\sim\Dtrainjoint}[(\hat p(\x)-\coptcommon(\x))^2]$. We can bound the latter term as follows.
    \begin{align*}
        \E_{(\x,y)\sim\Dtrainjoint}[(\hat p(\x)-\coptcommon(\x))^2]^{1/2} &= \E_{(\x,y)\sim\Dtrainjoint}[(\hat p(\x)-y+y-\coptcommon(\x))^2]^{1/2} \\
        &\le {\E_{\Dtrainjoint}[(\hat p(\x)-y)^2]}^{1/2} + {\E_{\Dtrainjoint}[(y-\coptcommon(\x))^2]}^{1/2}
    \end{align*}
    %Fix some function $g:\X\to\cube{}$ arbitrarily. Then, $\|\copttrain-\hat p\|_{\L_2(\Dgeneric)} \le \|\copttrain-g\|_{\L_2(\Dgeneric)} + \|g-\hat p\|_{\L_2(\Dgeneric)}$. Recall that $\|\copttrain-g\|_{\L_2(\Dgeneric)}^2 = \E_{\x\sim\Dgeneric}[(g(\x)-\copttrain(\x))^2]$ and $\|g-\hat p\|_{\L_2(\Dgeneric)}^2 = \E_{\x\sim\Dgeneric}[(g(\x)-\hat p(\x))^2]$. 
    For the term $\E_{(\x,y)\sim\Dtrainjoint}[(\hat p(\x)-y)^2]$, we use condition \ref{item:generalization} to have with probability at least $1-\delta'$, $|\E_{(\x,y)\sim\Dtrainjoint}[(\hat p(\x)-y)^2] - \E_{(\x,y)\sim\Strain}[(\hat p(\x)-y)^2]|\le\eps'$ whenever $\mtrain \ge \mgen$. We now use the fact that $\hat p$ is an $\eps'$-approximate solution to the least squares problem defined in \Cref{algorithm:l2-sandwiching} and have the following bound
    \begin{align*}
        \E_{(\x,y)\sim\Strain}[(\hat p(\x)-y)^2]^{1/2} &\le \E_{(\x,y)\sim\Strain}[(\pdown(\x)-y)^2]^{1/2} + \sqrt{\eps'}
    \end{align*}
    Therefore, due to the generalization condition we have   
    \begin{align*}
        \E_{(\x,y)\sim\Dtrainjoint}[(\hat p(\x)-y)^2]^{1/2} &\le \E_{(\x,y)\sim\Dtrainjoint}[(\pdown(\x)-y)^2]^{1/2} + 3\sqrt{\eps'}  \\
        &\le \|\pdown - \coptcommon\|_{\L_2(\Dtrainmarginal)} + \E_{(\x,y)\sim\Dtrainjoint}[(y-\coptcommon(\x))^2]^{1/2}+ 3\sqrt{\eps'} \\
        &\le \|\pdown - \pup\|_{\L_2(\Dgeneric)} + \E_{(\x,y)\sim\Dtrainjoint}[(y-\coptcommon(\x))^2]^{1/2}+ 3\sqrt{\eps'} \\
        &\le \E_{(\x,y)\sim\Dtrainjoint}[(y-\coptcommon(\x))^2]^{1/2}+ 4\sqrt{\eps'}
    \end{align*}
    Therefore, we have shown that $\|\coptcommon-\hat p\|_{\L_2(\Dgeneric)} \le 4\E_{\Dtrainjoint}[(y-\coptcommon(\x))^2]^{1/2}+2\sqrt{\eps'}$. Note that $\E_{\Dtrainjoint}[(y-\coptcommon(\x))^2] = 4\pr_{\Dtrainjoint}[y\neq \coptcommon(\x)] = 4\optcommontrain$. Therefore, $\|\coptcommon-\hat p\|_{\L_2(\Dgeneric)} \le 4\sqrt{\optcommontrain}+4\sqrt{\eps'}$. By Equation \eqref{equation:pdown-phat-dgeneric}, this implies $\|\pdown-\hat p\|_{\L_2(\Dgeneric)} \le 4\sqrt{\optcommontrain}+5\sqrt{\eps'}$, which in turn implies $\|\pdown-\hat p\|_{\L_2(\Stest)} \le 4\sqrt{\optcommontrain} + 7\sqrt{\eps'}$. We overall obtain the following bound.
    \begin{align*}
        \error(h;\Dtestjoint) &\le \optcommontest + (4\optcommontrain^{1/2}+7\sqrt{\eps'})^2 \\
        &\le \optcommontest + 32 \optcommontrain + 100\eps' \\
        &\le 32 \optcommon + \eps \tag{since $\eps' = \eps/100$ and $\optcommontest\ge 0$}
    \end{align*}
    Note that, in fact, we have also demonstrated that upon acceptance, the following is true.
    \begin{align*}
        \error(\coptcommon,h;\Dtestmarginal) \le 32 \optcommontrain + \eps
    \end{align*}
    The results above holds with probability at least $1-3\delta' = 1-\delta$ (union bound over two bad events).
    
    \paragraph{Completeness.} For completeness, it is sufficient to ensure that $\mtest \ge \mconc$, because then, the probability of acceptance is at least $1-\delta$, due to condition \ref{item:concentration}, as required.
\end{proof}

\subsection{Applications}

In this section, we apply our main result in \Cref{theorem:l2-sandwiching-implies-tds} to obtain a number of TDS learners for important concept classes with respect to Gaussian and Uniform target marginals. In particular, we will use the following corollary, which follows by \Cref{theorem:l2-sandwiching-implies-tds} and some simple properties of the Gaussian and Uniform distributions (see \Cref{lemma:gaussian-properties,lemma:uniform-properties}).

\begin{corollary}
    Let $\Dgeneric$ be either the standard Gaussian in $d$ dimensions or the uniform distribution over the $d$-dimensional hypercube. Let $\C$ be a concept class whose $\eps$-approximate sandwiching degree with respect to $\Dgeneric$ is $\degbound$. Then, there is an algorithm that runs in time $d^{O(\degbound)}$%, uses $d^{O(\degbound)}$ labelled training examples and $d^{O(\degbound)}$ unlabelled test examples 
    and TDS learns $\C$ up to error $32\optcommon +O(\eps)$ and failure probability at most $0.1$.
\end{corollary}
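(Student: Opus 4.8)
The plan is to instantiate \Cref{theorem:l2-sandwiching-implies-tds} with the given concept class $\C$, the target marginal $\Dgeneric$, and accuracy parameter $100\eps$ (so that the theorem's internal $\eps' = (100\eps)/100 = \eps$ matches the approximation parameter for which the $\L_2$-sandwiching degree of $\C$ is $\degbound$; this inflates the final error by a constant, absorbed into $O(\eps)$). Condition (i) then holds by hypothesis, once I record that the sandwiching polynomials may be taken with monomial-coefficient bound $\pbound = d^{O(\degbound)}$. For the uniform distribution this is immediate from Parseval: since $\pdown \le \concept \le \pup$ pointwise and $\|\pup-\pdown\|_{\L_2(\Dgeneric)}^2 \le \eps$, the Fourier coefficients of $\pdown$ and $\pup$ lie within $\sqrt{\eps}$ of those of the degree-$\degbound$ Fourier truncation of $\concept$, which are bounded by $1$; hence $\pbound = O(1)$. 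For the Gaussian the same argument bounds the Hermite coefficients by $O(1)$, and converting to the monomial basis costs a further factor $\degbound^{O(\degbound)} d^{O(\degbound)} = d^{O(\degbound)}$ (in the only regime of interest, $\degbound \le \poly(d)$).

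Next I would verify condition (ii), moment concentration, at tolerance $\mslack = \eps/(\pbound^2 d^{4\degbound}) = 1/d^{O(\degbound)}$, union-bounding over the at most $d^{\degbound}$ monomials $\x^\mindex$ with $\|\mindex\|_1 \le \degbound$. For the uniform distribution $\x^\mindex \in \cube{}$ is bounded, so a Hoeffding bound gives the required concentration from $\mconc = \poly(d^{\degbound}, 1/\eps, \log(1/\delta))$ samples. For the Gaussian, $\E_{\Gauss(0,I_d)}[(\x^\mindex)^2] = \prod_i (2\mindex_i-1)!! \le (2\degbound)^{O(\degbound)}$ is bounded and higher moments of $\x^\mindex$ are controlled by Gaussian hypercontractivity, so a Bernstein-type bound yields the same $\mconc$. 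These are exactly the contents of \Cref{lemma:uniform-properties} and \Cref{lemma:gaussian-properties}.

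For condition (iii) I would establish uniform convergence of $\E[(y-p(\x))^2]$ over degree-$\degbound$ polynomials $p$ with coefficient bound $\pbound$. Viewing $(y-p(\x))^2$ as a degree-$2\degbound$ polynomial, the sublevel-set indicators $\x \mapsto \ind[(y-p(\x))^2 \le t]$ form a class of pseudo-dimension $d^{O(\degbound)}$, so it suffices to control the range (or the high moments) of the loss. On the hypercube $|p(\x)| \le \pbound d^{\degbound} = d^{O(\degbound)}$, so the loss is bounded by $d^{O(\degbound)}$ and a standard pseudo-dimension / covering-number bound gives $\mgen = \poly(d^{\degbound}, 1/\eps, \log(1/\delta))$. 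For the Gaussian the loss is unbounded, so I would truncate: by hypercontractivity $\|p\|_{\L_q(\Gauss)} \le (q-1)^{\degbound/2}\|p\|_{\L_2(\Gauss)} \le (q-1)^{\degbound/2}\pbound d^{\degbound/2}$, so for a threshold $M = d^{O(\degbound)}$ the contribution of the event $\{|p(\x)|>M\}$ is at most $\eps$, and the truncated (bounded) loss is handled as before. I expect this Gaussian-case uniform-convergence step over an unbounded loss to be the only genuinely fiddly part; everything else is bookkeeping.

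Finally I would read off the conclusion. \Cref{theorem:l2-sandwiching-implies-tds} then requires $\mtrain \ge \mgen$ and $\mtest \ge C(d^{\degbound}+\log(1/\delta))/\eps^2 + \mconc$, all of which are $\poly(d^{\degbound}, 1/\eps, \log(1/\delta))$, and runs in time $\poly(|\Strain|,|\Stest|, d^{\degbound}) = \poly(d^{\degbound}, 1/\eps, \log(1/\delta))$, which is $d^{O(\degbound)}$ whenever $1/\eps$ and $\log(1/\delta)$ are at most $d^{O(\degbound)}$ --- the case in all our applications, where the sandwiching degree $\degbound$ already carries the $\eps$-dependence. By the soundness guarantee of \Cref{theorem:l2-sandwiching-implies-tds} the output $h$ satisfies $\error(h;\Dtestjoint) \le 32\optcommon + 100\eps = 32\optcommon + O(\eps)$, with failure probability $\delta \le 0.1$, and completeness is inherited directly, completing the proof.
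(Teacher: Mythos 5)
Your proposal is correct and takes essentially the same route as the paper: instantiate \Cref{theorem:l2-sandwiching-implies-tds} and verify its three hypotheses via distribution-specific properties of the Gaussian and uniform marginals. The details you fill in (coefficient bound via Parseval/Hermite, moment concentration via Hoeffding/Chebyshev, truncated uniform convergence for the unbounded Gaussian loss) are exactly what the paper packages as \Cref{lemma:gaussian-properties} and \Cref{lemma:uniform-properties}, and you correctly flag the implicit assumption $1/\eps, \log(1/\delta) \le d^{O(\degbound)}$ behind the advertised $d^{O(\degbound)}$ running time.
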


\paragraph{Boolean Classes.} We now bound the $\L_2$ sandwiching degree of bounded size Decision trees and bounded size and depth Boolean Formulas.

\begin{lemma}[$\L_2$ sandwiching degree of Decision Trees]\label{lemma:l2-sandwiching-dts}
    Let $\Dgeneric$ be the uniform distribution over the hypercube $\X = {\cube{}}^d$. For ${\dtsize}\in\N$, let $\C$ be the class of Decision Trees of size $\dtsize$. Then, for any $\eps>0$ the $\L_2$ sandwiching degree of $\C$ is at most $\degbound = O(\log(\dtsize/\eps))$.
\end{lemma}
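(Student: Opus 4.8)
The plan is to construct the sandwiching polynomials by truncating the decision tree at depth roughly $\log(\dtsize/\eps)$, expressing the truncated part exactly as a low-degree multilinear polynomial, and absorbing the remaining (rarely reached) deep part of the tree into a small ``slack'' term. Let $\concept\in\C$ be computed by a decision tree $T$ with $\dtsize$ leaves; as is standard we may assume no variable is queried twice along any root-to-leaf path (otherwise the second query is forced and its node can be bypassed, keeping the same function and size). Then every node $v$ of $T$ at depth $w$ corresponds to a width-$w$ conjunction $C_v(\x)=\prod_{i}\tfrac{1+\sigma_i\x_i}{2}$, which is a multilinear polynomial of degree $w$ whose coefficients all have magnitude at most $2^{-w}\le 1$, with $C_v(\x)\in\{0,1\}$ equal to the indicator that $\x$ reaches $v$. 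Set $\degbound=\lceil\log_2(4\dtsize/\eps)\rceil=O(\log(\dtsize/\eps))$, let $A$ be the set of leaves of $T$ at depth at most $\degbound$, and let $F$ (the ``frontier'') be the set of non-leaf nodes of $T$ at depth exactly $\degbound$. Every $\x\in\cube{d}$ reaches exactly one node of $A\cup F$: its path either terminates at a leaf of $A$, or it reaches a frontier node of $F$.

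Next I would define $g(\x)=\sum_{\ell\in A}b_\ell\,C_\ell(\x)$, where $b_\ell\in\cube{}$ is the label of leaf $\ell$, and $h(\x)=\sum_{v\in F}C_v(\x)$. Both are polynomials of degree at most $\degbound$; since $|A|\le\dtsize$ and $|F|\le\dtsize$ (distinct frontier nodes have disjoint nonempty sets of leaves below them), every coefficient of $g$ and of $h$ has absolute value at most $\dtsize$. The key observation is $\concept(\x)=g(\x)+r(\x)$, where $r(\x)$ is the signed, $\{-1,0,1\}$-valued contribution of whichever frontier node $\x$ reaches: if $\x$ reaches a leaf of $A$ then $h(\x)=0$ and $g(\x)=\concept(\x)$, while if $\x$ reaches a frontier node then $g(\x)=0$, $h(\x)=1$, and $r(\x)=\concept(\x)\in\{\pm1\}$. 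In either case $|r(\x)|\le h(\x)$, so $\pdown:=g-h$ and $\pup:=g+h$ satisfy $\pdown(\x)\le\concept(\x)\le\pup(\x)$ for all $\x\in\cube{d}$, have degree at most $\degbound$, and have coefficients bounded by $\pbound:=2\dtsize$.

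Finally I would bound the $\L_2$ gap. Since $h$ is $\{0,1\}$-valued, $h^2=h$, and the $C_v$ for $v\in F$ have disjoint supports summing (as events) to the event ``$\x$ reaches the frontier'', so
\[
\E_{\x\sim\Dgeneric}\left[(\pup(\x)-\pdown(\x))^2\right]=4\,\E_{\x\sim\Dgeneric}[h(\x)]=4\,|F|\,2^{-\degbound}\le 4\dtsize\,2^{-\degbound}\le\eps,
\]
where we used that a width-$\degbound$ conjunction is satisfied with probability exactly $2^{-\degbound}$ under the uniform distribution, together with the choice of $\degbound$. Hence $\pup,\pdown$ are $\eps$-approximate $\L_2$-sandwiching polynomials for $\concept$ of degree $O(\log(\dtsize/\eps))$ and coefficient bound $O(\dtsize)$, proving the lemma.

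There is no genuine ``hard part'' here: the only subtlety is that a conjunction along a path can a priori have width up to $d$, which is exactly what the truncation at depth $\degbound$ controls, and one must check that the error introduced is precisely the probability of reaching the frontier, namely $|F|\cdot2^{-\degbound}\le\dtsize\cdot2^{-\degbound}$, which equals (up to the factor $4$) the squared $\L_2$-distance between $\pup$ and $\pdown$. The same ``truncate and sandwich'' template, adapted to the natural term/clause decompositions, is what drives the $\L_2$-sandwiching bounds for the other Boolean classes treated in this section.
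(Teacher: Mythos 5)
Your proof is correct and follows essentially the same approach as the paper: truncate the tree at depth $\degbound=O(\log(\dtsize/\eps))$, form $\pup$ (resp.\ $\pdown$) by sending unresolved frontier paths to $+1$ (resp.\ $-1$), and bound the $\L_2$ gap by the probability of reaching the frontier, which is at most $\dtsize\cdot 2^{-\degbound}$. Your $g+h$ and $g-h$ are exactly the paper's truncated trees written out explicitly, and your direct computation $\E[(\pup-\pdown)^2]=4\E[h]$ is just a slightly cleaner way of arriving at the same bound.
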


\begin{proof}
    Let $\concept\in\C$ be a decision tree of size $\dtsize$. Consider the polynomials $\pup,\pdown$ over $\cube{d}$ which correspond to the following truncated decision trees. For $\pup$, we truncate $\concept$ at depth $\degbound$ and substitute the internal nodes at depth $\degbound$ with leaf nodes labelled $1$. Then, $\pup$ corresponds to a sum of polynomials of degree at most $\degbound$, each corresponding to a root-to-leaf path in the truncated decision tree. Clearly, $\pup\ge \concept$ and $\pup$ has degree $\degbound$. We have that $\E_\Dgeneric[(\pup(\x)-\concept(\x))^2]$ is upper bounded by a constant multiple of the probability that $\pup$ takes the value $1$, while $\concept(\x)$ takes the value $-1$, since $\pup$ is itself a Boolean-valued function (it is a decision tree). The probability that this happens is at most $\dtsize\cdot 2^{-k} = O(\eps)$ for $k=O(\log(\dtsize/\eps))$. We obtain $\pdown$ by a symmetric argument. 
\end{proof}

For the following lemma, we make use of an upper bound for the pointwise distance between a Boolean formula and the best approximating low-degree polynomial from \cite{o2003newdegreebounds} (which readily implies the existence of low-degree $\L_2$ sandwiching polynomials).  

\begin{lemma}[$\L_2$ sandwiching degree of Boolean Formulas, modification of Theorem 6 in \cite{o2003newdegreebounds}]\label{lemma:l2-sandwiching-ac0}
    Let $\Dgeneric$ be the uniform distribution over the hypercube $\X = {\cube{}}^d$. For ${\aczsize,\aczdepth}\in\N$, let $\C$ be the class of Boolean formulas of size at most $\aczsize$ and depth at most $\aczdepth$. Then, for any $\eps>0$ the $\L_2$ sandwiching degree of $\C$ is at most $\degbound = (C\log(\aczsize/\eps))^{5\aczdepth/2}\sqrt{{\aczsize}}$, for some sufficiently large universal constant $C>0$.
\end{lemma}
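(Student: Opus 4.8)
The plan is to reduce the lemma to the polynomial approximation bound for Boolean formulas due to O'Donnell--Servedio and then upgrade \emph{pointwise} approximation to $\L_2$-sandwiching by an elementary shift, in the same spirit as the conversion used for decision trees in \Cref{lemma:l2-sandwiching-dts}, but adapted to the fact that the formula approximator we get is real-valued rather than Boolean-valued.

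First I would fix $\concept\in\C$ of size at most $\aczsize$ and depth at most $\aczdepth$, together with a parameter $\gamma\in(0,1)$, and invoke Theorem~6 of \cite{o2003newdegreebounds} (more precisely, the bounded approximating polynomial that its proof produces, in case its statement is phrased as a polynomial-threshold-function bound) to obtain a polynomial $q\colon\cube{d}\to\R$ with $\deg(q)\le \bigl(C_0\log(\aczsize/\gamma)\bigr)^{5\aczdepth/2}\sqrt{\aczsize}$ for a universal constant $C_0$ and with $|q(\x)-\concept(\x)|\le\gamma$ for all $\x\in\cube{d}$. The construction underlying that theorem builds $q$ from bounded, perturbation-robust low-degree approximators of the individual $\mathrm{AND}/\mathrm{OR}$ gates, composed over the $\aczdepth$ levels of the formula, which in particular keeps the output bounded: $|q(\x)|\le 1+\gamma\le 2$ on $\cube{d}$.

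Given such a $q$ I would take $\pup:=q+\gamma$ and $\pdown:=q-\gamma$. The pointwise bound $|q-\concept|\le\gamma$ gives $\pdown(\x)\le\concept(\x)\le\pup(\x)$ for every $\x\in\cube{d}$, while $\pup-\pdown\equiv 2\gamma$, so $\|\pup-\pdown\|_{\L_2(\Dgeneric)}^2=4\gamma^2$ regardless of $\Dgeneric$. Choosing $\gamma=\sqrt{\eps}/2$ makes this at most $\eps$, so $\pup,\pdown$ are $\eps$-approximate $\L_2$-sandwiching polynomials (\Cref{definition:l2-sandwiching}) of degree at most $\bigl(C_0\log(2\aczsize/\sqrt{\eps})\bigr)^{5\aczdepth/2}\sqrt{\aczsize}\le\bigl(C\log(\aczsize/\eps)\bigr)^{5\aczdepth/2}\sqrt{\aczsize}$, the last step just absorbing $\log 2$ and the factor $\tfrac12$ inside the logarithm into a larger universal constant $C$. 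Since over $\cube{d}$ the coefficients of a polynomial in its unique multilinear form equal its Fourier coefficients, each of which is bounded by the sup-norm, every coefficient of $\pup,\pdown$ has absolute value at most $\|q\|_\infty+\gamma\le 3$; hence the coefficient bound can be taken to be $\pbound=O(1)$, which is what the application through \Cref{theorem:l2-sandwiching-implies-tds} requires.

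The step I expect to need the most care is the first one: Theorem~6 of \cite{o2003newdegreebounds} must be used in a form that delivers a genuine $\L_\infty$-approximating polynomial of the stated degree (equivalently a pointwise sandwiching pair), not merely a sign representation, and one must check that the composition over the $\aczdepth$ levels keeps the output bounded on the cube so that the shift above is legitimate. If one instead prefers to treat that theorem as a black box yielding only a constant-error approximator of degree $\bigl(C_0\log\aczsize\bigr)^{5\aczdepth/2}\sqrt{\aczsize}$, one composes it with a standard degree-$O(\log(1/\eps))$ Chebyshev-type error-amplification polynomial (bounded on the interval $[-2,2]$ that contains the range of $q$), pushing the error down to $\sqrt\eps/2$ at the cost of multiplying the degree by $O(\log(1/\eps))$; since $\aczdepth\ge 1$ this factor is absorbed into the $\bigl(C\log(\aczsize/\eps)\bigr)^{5\aczdepth/2}$ term and the stated bound still holds.
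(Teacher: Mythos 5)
Your main route is essentially the paper's proof: obtain a low-degree polynomial $q$ with $\|q-\concept\|_\infty\le\sqrt\eps/2$ by modifying the O'Donnell--Servedio construction, shift by $\pm\sqrt\eps/2$ to get a pointwise sandwiching pair, and observe that the $\L_2$ distance of the pair is $\sqrt\eps$ everywhere. The paper spells out the modification you flag as needing care: one reruns the inductive construction of Lemma~10 in \cite{o2003newdegreebounds} with the leaf-level accuracy set to $O(\sqrt\eps/\aczsize^3)$ rather than $1/\aczsize^3$, so that $\eps$ enters \emph{inside} each logarithm in the degree recursion and the final degree is $(C\log(\aczsize/\eps))^{5\aczdepth/2}\sqrt\aczsize$. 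Your coefficient-bound remark (Fourier coefficients over $\cube{d}$ bounded by $\|q\|_\infty+\gamma=O(1)$) is correct and matches what is needed downstream.

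However, the fallback you offer --- starting from a constant-accuracy approximator of degree $(C_0\log\aczsize)^{5\aczdepth/2}\sqrt\aczsize$ and composing with a degree-$O(\log(1/\eps))$ error-amplification polynomial --- does not give the claimed degree bound. That composition yields degree $\Theta\bigl((\log\aczsize)^{5\aczdepth/2}\sqrt\aczsize\cdot\log(1/\eps)\bigr)$, and the multiplicative $\log(1/\eps)$ factor cannot in general be absorbed into $(C\log(\aczsize/\eps))^{5\aczdepth/2}\sqrt\aczsize$ for a universal $C$: writing $a=\log\aczsize$ and $b=\log(1/\eps)$, you would need $a^{5\aczdepth/2}\,b\le C^{5\aczdepth/2}(a+b)^{5\aczdepth/2}$, which fails as $a=b\to\infty$ (the left side is order $a^{5\aczdepth/2+1}$, the right side only $a^{5\aczdepth/2}$). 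The reason the modified-induction route works is precisely that the accuracy parameter gets threaded through every level of the gate-by-gate construction, so it appears \emph{inside} the logarithm that gets raised to the $5\aczdepth/2$ power rather than as an outside multiplicative factor. So the fallback should be dropped (or replaced by the in-induction accuracy boost); only your primary route proves the stated bound.
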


\begin{proof}
    Let $\concept\in \C$ be an formula of size $\aczsize$ and depth $\aczdepth$. We first construct a polynomial $p$ that satisfies $|p(\x)-\concept(\x)|\le \sqrt{\eps}/2$ for any $\x\in \cube{d}$. This corresponds to a slight modification of the proof of Theorem 6 in \cite{o2003newdegreebounds}, where the basis of the inductive construction of $p$ (see Lemma 10 in \cite{o2003newdegreebounds}) is an $O(\sqrt{\eps}/\aczsize^3)$ bound (instead of the original $1/\aczsize^3$ bound) for the (trivial) approximation of a single variable $\x_i$ by itself. The degree of $p$ is indeed upper bounded by $(C\log(\aczsize/\eps))^{5\aczdepth/2}\sqrt{{\aczsize}}$ and we may obtain $\pup,\pdown$ by setting $\pup(\x) = p(\x)+ \sqrt{\eps}/2$ and $\pdown = p(\x)-\sqrt{\eps}/2$. Clearly, $\pdown(\x)\le \concept(\x)\le \pup(\x)$ and $|\pup(\x)-\pdown(\x)| = \sqrt{\eps}$ for all $\x\in\cube{d}$. Therefore $\|\pup-\pdown\|_{\L_2(\Dgeneric)}^2 \le \eps$.
\end{proof}

% \begin{remark}
%     Note that there are exponentially better upper bounds for the $\L_1$ sandwiching degree of $\mathrm{AC}^0$ circuits (see \cite{braverman2010polylogarithmic}), which imply upper bounds for testable learning (see \cite{gollakota2022moment}) and it is, therefore an important open question whether the bounds on $\L_2$ sandwiching degree of $\mathrm{AC}^0$ circuits (and therefore the complexity of TDS learning) can be substantially improved.
% \end{remark}

We obtain the following results for agnostic TDS learning of boolean concept classes.

\begin{corollary}[TDS Learner for Decision Trees]\label{corollary:tds-dtrees}
    Let $\Dgeneric$ be the uniform distribution over the hypercube in $d$ dimensions. Then, there is an algorithm that runs in time $d^{O(\log(\dtsize/\eps))}$ and TDS learns Decision Trees of size $\dtsize$ with respect to $\Unif(\cube{d})$ up to error $32\optcommon+O(\eps)$.
\end{corollary}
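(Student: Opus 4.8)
The plan is to obtain \Cref{corollary:tds-dtrees} as an immediate instantiation of the generic moment-matching reduction. First I would invoke \Cref{lemma:l2-sandwiching-dts}: for $\C$ the class of size-$\dtsize$ decision trees over $\cube{d}$ and any $\eps>0$, the $\eps$-approximate $\L_2$-sandwiching degree of $\C$ under $\Unif(\cube{d})$ is $\degbound = O(\log(\dtsize/\eps))$. I would also extract the (implicit) coefficient bound $\pbound$ from that construction: each of $\pup,\pdown$ is the multilinear expansion of a depth-$\degbound$ truncated decision tree, hence a sum of at most $\dtsize$ path-indicator polynomials, so every coefficient is absolutely bounded by $\pbound = \poly(\dtsize)$ — which is all the downstream estimates require.

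Next I would verify the two remaining hypotheses of \Cref{theorem:l2-sandwiching-implies-tds} for $\Dgeneric=\Unif(\cube{d})$; these are exactly the facts packaged in \Cref{lemma:uniform-properties}. Moment Concentration holds with $\mconc = \poly(d^{\degbound},\pbound,1/\eps,\log(1/\delta))$, since for $\x\in\cube{d}$ each monomial $\x^\mindex$ is $\pm1$-valued, so Hoeffding plus a union bound over the at most $d^{2\degbound}$ monomials of degree $\le 2\degbound$ yields the required $\mslack = \eps'/(\pbound^2 d^{4\degbound})$ accuracy. The Generalization condition holds with $\mgen = \poly(d^{\degbound},\pbound,1/\eps,\log(1/\delta))$ because, on $\cube{d}\times\cube{}$, the squared loss $(y-p(\x))^2$ of a degree-$\degbound$ polynomial with coefficients in $[-\pbound,\pbound]$ is bounded and ranges over a class parameterized by $O(d^{\degbound})$ coefficients in a bounded box, so a standard uniform-convergence (covering-number) argument applies.

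Finally I would apply \Cref{theorem:l2-sandwiching-implies-tds} with these parameters: \Cref{algorithm:l2-sandwiching} runs in time $\poly(|\Strain|,|\Stest|,d^{\degbound}) = d^{O(\degbound)} = d^{O(\log(\dtsize/\eps))}$, uses $\mtrain = \poly(d^{\degbound})$ labelled samples and $\mtest = O((d^{\degbound}+\log(1/\delta))/\eps^2)+\mconc = \poly(d^{\degbound})$ unlabelled samples, and TDS learns $\C$ with respect to $\Unif(\cube{d})$ up to error $32\optcommon+\eps$. Reparameterizing $\eps \gets \Theta(\eps)$ to absorb the constant relating the sandwiching accuracy $\eps'$ in \Cref{lemma:l2-sandwiching-dts} to the target error accounts for the $O(\eps)$ in the statement; the failure probability is $\delta \le 0.1$ as required.

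There is no genuinely hard step here: the entire content is bookkeeping. The one point that must be checked carefully is that the coefficient bound $\pbound$ coming out of \Cref{lemma:l2-sandwiching-dts} is at most $\poly(\dtsize)$ (so that $\mconc$, $\mgen$, and the moment-matching slack $\mslack$ stay polynomial in $d^{\degbound}$), and that $\degbound = O(\log(\dtsize/\eps))$ indeed produces the claimed $d^{O(\log(\dtsize/\eps))}$ run-time — both of which are straightforward.
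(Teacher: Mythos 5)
Your proposal is correct and follows the same route the paper takes: \Cref{lemma:l2-sandwiching-dts} supplies the degree bound $k=O(\log(s/\eps))$, \Cref{lemma:uniform-properties} (which you re-derive) supplies the coefficient, concentration, and generalization hypotheses for $\Unif(\cube{d})$, and \Cref{theorem:l2-sandwiching-implies-tds} finishes. The only cosmetic difference is that you bound the coefficients directly from the path-indicator decomposition as $\pbound=\poly(\dtsize)$, whereas the paper's \Cref{lemma:uniform-properties}(i) gets the cleaner $\pbound=2$ via Parseval; both bounds yield the claimed $d^{O(\log(\dtsize/\eps))}$ run-time.
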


\begin{corollary}[TDS Learner for Boolean Formulas]\label{corollary:tds-acz}
    Let $\Dgeneric$ be the uniform distribution over the hypercube in $d$ dimensions and $C>0$ some sufficiently large universal constant. Then, there is an algorithm that runs in time $d^{\sqrt{{\aczsize}}(C\log(\aczsize/\eps))^{5\aczdepth/2}}$ and TDS learns Boolean formulas of size at most $\aczsize$ and depth at most $\aczdepth$ with respect to $\Unif(\cube{d})$ up to error $32\optcommon+O(\eps)$.
\end{corollary}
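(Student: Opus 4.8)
The plan is to obtain \Cref{corollary:tds-acz} by instantiating the generic reduction of \Cref{theorem:l2-sandwiching-implies-tds} (specialized to the uniform marginal, i.e.\ the corollary stated immediately after it) with the $\L_2$-sandwiching degree bound for bounded-size, bounded-depth Boolean formulas supplied by \Cref{lemma:l2-sandwiching-ac0}. Concretely, I would set the internal accuracy parameter $\eps' = \eps/100$ as in \Cref{theorem:l2-sandwiching-implies-tds} and apply \Cref{lemma:l2-sandwiching-ac0} at accuracy $\eps'$: every formula $\concept$ of size at most $\aczsize$ and depth at most $\aczdepth$ then admits $\eps'$-approximate $\L_2$-sandwiching polynomials $\pup,\pdown$ under $\Unif(\cube d)$ of degree at most $\degbound = (C_0\log(\aczsize/\eps))^{5\aczdepth/2}\sqrt{\aczsize}$ for an absolute constant $C_0$ (the $\log 100$ factor is absorbed into $C_0$). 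This verifies hypothesis (i) of \Cref{theorem:l2-sandwiching-implies-tds}.

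Next I would pin down the coefficient bound $\pbound$: since the target is uniform on $\cube d$, I may replace $\pup,\pdown$ by their multilinearizations, which agree with them on $\cube d$ (so the pointwise sandwiching and the $\L_2(\Dgeneric)$ gap are unchanged) and have degree no larger than $\degbound$; because $\pup,\pdown$ are $O(1)$-bounded on the cube, each Fourier coefficient of the multilinearization is $O(1)$ in absolute value, so $\pbound = O(1)$. Hypotheses (ii) and (iii) of \Cref{theorem:l2-sandwiching-implies-tds} then hold for the uniform marginal with sample sizes polynomial in $d^{\degbound}$, $1/\eps$ and $\log(1/\delta)$: the at most $d^{2\degbound}$ monomials $\x^\mindex$ take values in $\cube{}$ on the hypercube, so a Hoeffding bound plus a union bound estimates all of them to accuracy $\mslack = \eps'/(\pbound^2 d^{4\degbound})$ using $\mconc = \poly(d^{\degbound},1/\eps,\log(1/\delta))$ unlabelled samples, and the generalization condition follows from a standard uniform-convergence bound for the family of degree-$\le\degbound$ polynomials with $O(1)$ coefficients. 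This is precisely the content of \Cref{lemma:uniform-properties}, which is what the corollary following \Cref{theorem:l2-sandwiching-implies-tds} already packages.

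Invoking that corollary then gives an algorithm TDS-learning the formula class up to error $32\optcommon + O(\eps)$ with failure probability at most $0.1$ (which can be driven down further via \Cref{proposition:boosting-success-probability}) and running time $\poly(|\Strain|,|\Stest|, d^{\degbound}) = d^{O(\degbound)}$. Substituting the value of $\degbound$ and folding the outer $O(\cdot)$ and the constant $C_0$ into a single constant $C$ (using $C(\log x)^b \le (C\log x)^b$ for $b\ge 1$) yields running time $d^{\sqrt{\aczsize}\,(C\log(\aczsize/\eps))^{5\aczdepth/2}}$, as claimed; the argument for \Cref{corollary:tds-dtrees} is identical, using \Cref{lemma:l2-sandwiching-dts} in place of \Cref{lemma:l2-sandwiching-ac0}.

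At the level of this corollary there is essentially no obstacle — all the genuine work lives in the two cited results — but the one point deserving care is the coefficient bound $\pbound$, since $\mconc$ scales polynomially in $\pbound$: one must be sure that the O'Donnell--Servedio-style construction underlying \Cref{lemma:l2-sandwiching-ac0} does not yield polynomials with enormous coefficients. Over the hypercube this is automatic via the multilinearization argument above; more generally, the construction composes only $O_{\aczdepth}(1)$ Chebyshev/threshold approximants, so coefficients grow by at most $2^{O(\degbound\log d)} = d^{O(\degbound)}$, which still leaves $\mconc$ (and hence the overall sample and time complexity) at $d^{O(\degbound)}$.
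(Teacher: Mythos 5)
Your proposal is correct and mirrors the paper's (implicit) derivation: instantiate the corollary following \Cref{theorem:l2-sandwiching-implies-tds} with the $\L_2$-sandwiching degree bound of \Cref{lemma:l2-sandwiching-ac0}, with the uniform-marginal hypotheses supplied by \Cref{lemma:uniform-properties}. The one small divergence is your coefficient bound argument via pointwise boundedness and multilinearization — the paper's \Cref{lemma:uniform-properties}\ref{item:uniform-coefficient-bound} instead derives $\pbound = O(1)$ directly from the $\L_2$-sandwiching property and Parseval, which is marginally more general since it needs no pointwise bound — but both give $\pbound = O(1)$ and the rest of the argument is identical.
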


\paragraph{Intersections and Decision Trees of Halfspaces.} We now provide an upper bound for the $\L_2$-sandwiching degree of Decision Trees of halfspaces, which does not merely follow from a bound on the $\L_\infty$ approximate degree and, in particular, holds under both the Gaussian distribution and the Uniform over the hypercube. The following lemma is based on a powerful result from pseudorandomness literature (Theorem 10.4 from \cite{gopalan2010fooling}) which was originally used to provide a bound for the $\L_1$-sandwiching degree of decision trees of halfspaces, but, as we show, also provides a bound on the $\L_2$-sandwiching degree with careful manipulation. 

\begin{lemma}[$\L_2$-sandwiching degree of Intersections and Decision Trees of Halfspaces]\label{lemma:l2-sandwiching-intersections}
    Let $\Dgeneric$ be either the uniform distribution over the hypercube $\X = {\cube{}}^d$ or the multivariate Gaussian distribution $\Gauss(0,I_d)$ over $\X = \R^d$. For ${\nhalfspaces}\in\N$, let also $\C$ be the class of concepts that can be expressed as an intersection of ${\nhalfspaces}$ halfspaces on $\X$. Then, for any $\eps>0$ the $\L_2$ sandwiching degree of $\C$ is at most $k = \widetilde{O}(\frac{{\nhalfspaces}^6}{\eps^2})$. For Decision Trees of halfspaces of size $\dtsize$ and depth $\aczdepth$, the bound is $\degbound = \widetilde{O}(\frac{\dtsize^2 \aczdepth^6}{\eps^2})$.
\end{lemma}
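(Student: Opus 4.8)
The plan is to upgrade the $\L_1$-sandwiching polynomials of \cite{gopalan2010fooling} to $\L_2$-sandwiching polynomials. Theorem 10.4 of \cite{gopalan2010fooling} (together with its analogue for the Gaussian marginal, which after rotating each halfspace normal reduces to approximating a one-dimensional Gaussian threshold) provides, for a decision tree of halfspaces $\concept$ of size $\dtsize$ and depth $\aczdepth$ --- an intersection of $\nhalfspaces$ halfspaces being the special case of a ``staircase'' tree of size $\dtsize=\nhalfspaces+1$ and depth $\nhalfspaces$ --- and for any $\mu>0$, polynomials $\pdown\le\concept\le\pup$ on $\X$ with $\E_\Dgeneric[\pup-\pdown]\le\mu$ and degree $d_1(\mu)$ polynomial in the relevant parameters. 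The only gap between this and an $\L_2$-sandwiching bound is that $\E_\Dgeneric[\pup-\pdown]\le\mu$ does not by itself control $\E_\Dgeneric[(\pup-\pdown)^2]$, so the work lies in extracting from the construction a pointwise magnitude bound: its halfspace approximators are ``smoothed sign'' polynomials $P$ with $\|P\|_\infty\le 1+o(1)$ and $|P(t)-\sign(t)|\le\mu$ for $t$ bounded away from $0$, and the outer combination of at most $\aczdepth$ such approximators along each of at most $\dtsize$ root-to-leaf paths keeps $\pup,\pdown$ bounded on $\supp(\Dgeneric)$ by some $B$ that is polynomial in $\dtsize,\aczdepth$ (by $\poly(\nhalfspaces)$ in the intersection case) and --- crucially --- independent of the accuracy $\mu$. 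One should also record that the coefficients of $\pup,\pdown$ are then polynomially bounded (immediate over $\cube{d}$, where the multilinear coefficients are the Fourier coefficients and are hence at most $\|\cdot\|_{\L_2(\Dgeneric)}$; over $\R^d$ after a change of basis from the Hermite basis, with a $\poly(d^\degbound)$ blow-up that does not affect the final running time), since \Cref{theorem:l2-sandwiching-implies-tds} asks for a coefficient bound $\pbound$.

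Given the pointwise bound, the conversion is a single line: since $0\le\pup-\pdown\le 2B$ on $\supp(\Dgeneric)$,
\[
    \E_\Dgeneric\big[(\pup-\pdown)^2\big]\;\le\;2B\cdot\E_\Dgeneric[\pup-\pdown]\;\le\;2B\mu,
\]
so choosing $\mu=\eps/(2B)$ makes $(\pdown,\pup)$ an $\eps$-approximate $\L_2$-sandwiching pair for $\concept$. For $\Dgeneric=\Gauss(0,I_d)$ the support is all of $\R^d$; there one instead uses the bound $2B$ only on a box $\{\x:\|\x\|_\infty\le R\}$ with $R$ a sufficiently large polynomial in $d$ and $d_1$, and controls the contribution of the complement by Cauchy--Schwarz together with standard Gaussian tail and moment estimates for degree-$d_1$ polynomials with bounded coefficients; this contribution can be made $o(\eps)$ and absorbed.

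It then remains to do the bookkeeping: substituting $\mu=\eps/(2B)$, with $B$ polynomial in the structural parameters and independent of $\mu$, into the degree bound $d_1(\mu)$ of \cite{gopalan2010fooling} and tracking the dependence on $\nhalfspaces$ (respectively on $\dtsize,\aczdepth$) yields $\degbound=\widetilde O(\nhalfspaces^6/\eps^2)$ for intersections of $\nhalfspaces$ halfspaces and $\degbound=\widetilde O(\dtsize^2\aczdepth^6/\eps^2)$ for decision trees of halfspaces of size $\dtsize$ and depth $\aczdepth$; for the intersection case one uses the sharper intersection-specific instantiation of Theorem 10.4 rather than the generic decision-tree bound specialized to a staircase tree of depth $\nhalfspaces$ (which would instead give $\widetilde O(\nhalfspaces^8/\eps^2)$). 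I expect the main difficulty to be the first step --- verifying that the \cite{gopalan2010fooling} sandwiching polynomials have magnitude (equivalently, coefficients) bounded by a fixed polynomial in the structural parameters, independent of the approximation accuracy $\mu$ --- since the remaining steps are routine once that is in hand.
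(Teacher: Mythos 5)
Your proposal has a genuine gap precisely at the step you flag as "the main difficulty": the claimed pointwise magnitude bound $B$ on $\pup,\pdown$, independent of the accuracy $\mu$, is not actually available from the \cite{gopalan2010fooling} construction. Proposition~\ref{proposition:auxiliary-lemma-l2-sandwiching-intersections} gives, for each single-halfspace approximator $p_j$, only the one-sided pointwise bound $p_j\ge h_j\ge 0$, together with a \emph{probabilistic} upper bound $\pr_\Dgeneric[p_j>1+1/r^2]\le 2^{-T/r}$ and a moment bound $\|p_j\|_{L_{2r}(\Dgeneric)}\le 1+2/r^2$. There is no $\L_\infty$ bound on $p_j$ over $\supp(\Dgeneric)$, and in general a low-degree polynomial achieving small $\L_1$ error against $\sign$ must overshoot by a large amount on a low-probability set; its $\L_\infty$ norm on the cube grows with the degree (via $\sup_\x|p(\x)|\le\sum_\mindex|\hat p(\mindex)|$, which is of order $d^{k/2}\|p\|_2$, not $O(1)$). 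Since the degree $k$ is a function of $\mu$, any such $B$ would depend on $\mu$, and your one-line conversion $\E[(\pup-\pdown)^2]\le 2B\,\E[\pup-\pdown]$ would no longer close by taking $\mu=\eps/(2B)$.

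The paper's proof avoids a pointwise bound altogether. It expresses $p=\prod_j p_j$ and $h=\prod_j h_j$, and uses a hybrid decomposition $\|p-h\|_{\L_2}\le\sum_i\|p^{(i)}-p^{(i-1)}\|_{\L_2}$ where $p^{(i)}=(\prod_{j\le i}p_j)(\prod_{j>i}h_j)$. Each hybrid difference is then bounded by splitting on the event $\mathcal{E}=\{(\prod_{j\ne i}p_j)\sqrt{p_i}<2\}$: on $\mathcal{E}$ one uses the $\L_1$ bound $\E[p_i-h_i]\le\eps_1$, while on $\neg\mathcal{E}$ one notes that some $p_{j'}$ must exceed $1+1/(4\nhalfspaces^2)$ and applies H\"older's inequality with the tail bound (property~(iv)) and the $L_{4\nhalfspaces}$ moment bound (property~(v)). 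This is substantive work that your proposal replaces with a boundedness assumption that does not hold. The route you sketch cannot be repaired without essentially re-deriving the probabilistic structure the paper actually uses.
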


The above result implies the following corollary.

\begin{corollary}[TDS Learner for Intersections and Decision Trees of Halfspaces]\label{corollary:tds-dts-intersections-of-halfspaces}
    Let $\Dgeneric$ be either the standard Gaussian in $\R^d$ or the uniform distribution over the hypercube in $d$ dimensions. Then, there is an algorithm that runs in time $d^{\tilde{O}(\nhalfspaces^6/\eps^2)}$ and TDS learns intersections of $\nhalfspaces$ halfspaces with respect to $\Dgeneric$ up to error $32\optcommon+O(\eps)$. For Decision Trees of halfspaces with size $\dtsize$ and depth $\nhalfspaces$ the bound is $d^{\tilde{O}(\dtsize^2\nhalfspaces^6/\eps^2)}$.
\end{corollary}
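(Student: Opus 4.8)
The plan is to reduce the corollary to the polynomial-approximation machinery already in place: by the corollary of \Cref{theorem:l2-sandwiching-implies-tds} stated just above, to obtain a $d^{O(\degbound)}$-time TDS learner with error $32\optcommon + O(\eps)$ it suffices to exhibit, for each target class, $\eps$-approximate $\L_2$-sandwiching polynomials of degree $\degbound$ with coefficient bound at most (say) $2^{\poly(d)}$. For intersections of $\nhalfspaces$ halfspaces we want $\degbound = \widetilde{O}(\nhalfspaces^6/\eps^2)$ and for size-$\dtsize$, depth-$\aczdepth$ decision trees of halfspaces we want $\degbound = \widetilde{O}(\dtsize^2\aczdepth^6/\eps^2)$ --- this is exactly \Cref{lemma:l2-sandwiching-intersections}, so the corollary follows by plugging that bound into the runtime $d^{O(\degbound)}$ and the error guarantee. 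Thus the substance lies in \Cref{lemma:l2-sandwiching-intersections}, which I sketch next.

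To prove \Cref{lemma:l2-sandwiching-intersections}, I would start from the $\L_1$-sandwiching polynomials of Theorem 10.4 in \cite{gopalan2010fooling}: for any $\concept$ in the class and any $\delta>0$ it produces polynomials with $\pdown(\x)\le \concept(\x)\le \pup(\x)$ for all $\x\in\X$, of degree $\widetilde{O}(\nhalfspaces^6/\delta^2)$ (resp.\ $\widetilde{O}(\dtsize^2\aczdepth^6/\delta^2)$ for decision trees of halfspaces --- a leaf is a conjunction of at most $\aczdepth$ threshold conditions and there are at most $\dtsize$ leaves, so the class is within scope), with $\E_{\x\sim\Dgeneric}[\pup(\x)-\pdown(\x)]\le\delta$; this is available both for $\Dgeneric=\Unif(\cube{d})$ and $\Dgeneric=\Gauss(0,I_d)$ since both are product distributions meeting that theorem's moment hypotheses. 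The issue is that an $\L_1$ bound $\E[\pup-\pdown]\le\delta$ does not by itself bound $\E[(\pup-\pdown)^2]$, because the polynomials can overshoot; the goal is to recover $\E[(\pup-\pdown)^2]\le\eps$ while keeping the degree up to a constant factor.

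The key step is to observe that the \cite{gopalan2010fooling} construction can be taken so that $g:=\pup-\pdown$ satisfies a pointwise bound $0\le g(\x)\le M$ on the support of $\Dgeneric$ for some $M$ that is a universal constant (or at most polylogarithmic in $\nhalfspaces/\eps$, which the $\widetilde{O}$ absorbs): the construction writes $\concept\approx p$ for a polynomial $p$ bounded by $O(1)$ together with an explicit nonnegative error polynomial $e\le O(1)$, and sets $\pup=p+e$, $\pdown=p-e$; in the Gaussian case one additionally checks that the tail region, where $p$ and $e$ cease to be bounded, contributes $o(\delta)$ to the second moment since at these degrees Gaussian decay dominates polynomial growth. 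Granting this, $g\ge 0$ gives $\E_{\Dgeneric}[g^2]\le M\,\E_{\Dgeneric}[g]\le M\delta$, and choosing $\delta=\eps/M=\Theta(\eps)$ makes $\pdown,\pup$ an $\eps$-approximate $\L_2$-sandwiching pair of degree $\widetilde{O}(\nhalfspaces^6/\eps^2)$ (resp.\ $\widetilde{O}(\dtsize^2\aczdepth^6/\eps^2)$); the coefficients stay bounded by $2^{\poly(d)}$, which is all \Cref{theorem:l2-sandwiching-implies-tds} needs. Feeding this into the corollary of \Cref{theorem:l2-sandwiching-implies-tds} finishes the proof.

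I expect the main obstacle to be precisely this boundedness claim --- it is the ``careful manipulation'' referred to in the statement of \Cref{lemma:l2-sandwiching-intersections} and does not follow formally from the $\L_1$-sandwiching \emph{statement}. One has to open up the FT-mollification-based argument in \cite{gopalan2010fooling} and verify that the error polynomial appearing there is uniformly bounded on the bulk of $\Dgeneric$ and has negligible $\L_2$ mass on the tails; everything else (the reduction to the lemma, the $g^2\le Mg$ inequality, the choice $\delta=\Theta(\eps)$, and the degree/coefficient bookkeeping) is routine.
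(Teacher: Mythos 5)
Your reduction of the corollary to \Cref{lemma:l2-sandwiching-intersections}, and from there to the corollary of \Cref{theorem:l2-sandwiching-implies-tds} together with \Cref{lemma:gaussian-properties} / \Cref{lemma:uniform-properties}, is exactly the paper's route and is fine. The gap is in how you propose to prove \Cref{lemma:l2-sandwiching-intersections}, and you flag the right spot yourself but then wave it away.

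You want to derive the $\L_2$ bound from the $\L_1$ bound via $g^2 \le M g$ with $M$ a pointwise bound on $g = \pup - \pdown$ over (the bulk of) the support. The problem is that the \cite{gopalan2010fooling} construction does not give any such pointwise bound, and there is no easy way to make it do so. What the construction (and the paper's modification, \Cref{proposition:auxiliary-lemma-l2-sandwiching-intersections}) supplies, for each single halfspace $h_j$, is a one-sided approximator $p_j\ge h_j$ with (iii) small $\L_1$ error $\E[p_j-h_j]$, (iv) an exponentially small probability that $p_j>1+1/r^2$, and (v) a bounded $L_{2r}(\Dgeneric)$ norm. None of these is an $L_\infty$ bound. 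Already a single $p_j$ can take values that are $\poly(d^{\degbound})$ in magnitude on a tiny-probability region (even on $\cube{d}$, all the coefficient bound gives you is $\|p_j\|_\infty \le \pbound\, d^{\degbound}$), and the product $p = \prod_j p_j$ only makes this worse. There is also no ``explicit nonnegative error polynomial $e \le O(1)$'' in the construction: the sandwiching pair is not of the form $p \pm e$ with both uniformly bounded. Truncating is not an option since a truncation of a polynomial is not a polynomial. So $\E[g^2]\le M\,\E[g]$ with a small $M$ simply does not follow, and the Gaussian tail handwave (``Gaussian decay dominates polynomial growth'') is the entire content of the lemma, not a footnote.

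The paper's actual proof replaces your global pointwise bound with a hybrid argument over the per-halfspace approximators $p^{(i)} = (\prod_{j\le i}p_j)(\prod_{j>i}h_j)$. For each hybrid step it uses the one-sidedness $0\le h_i\le p_i$ to write $(p_i-h_i)^2 \le (p_i-h_i)p_i$ (your $g^2\le Mg$ idea, but applied per-factor rather than globally), then splits the expectation on the event $(\prod_{j\neq i}p_j)\sqrt{p_i}<2$. On that event the $\L_1$ bound (iii) suffices; on its complement some $p_{j'}$ must exceed $1+1/(4\nhalfspaces^2)$, and H\"older's inequality with the tail bound (iv) and the $L_{2r}$-norm bound (v) controls the contribution. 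That is where the ``careful manipulation'' lives, and it genuinely needs properties (iv)--(v) rather than an $L_\infty$ bound. Your write-up as stated would not close this gap.
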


In order to apply the structural result we need from \cite{gopalan2010fooling}, we first provide a formal definition for the notion of hypercontractivity.

\begin{definition}[Hypercontractivity]\label{definition:hypercontractivity}
    Let $\Dgeneric_1$ be a distribution over $\R$ and let $T\in \N$, $T>2$, $\eta\in(0,1)$. We say that $\Dgeneric_1$ is $(T,2,\eta)$-hypercontractive if $\E[x^T] < \infty$ and for any $a\in\R$ we have 
    \[ 
        \E_{x\sim \Dgeneric_1}[(a+\eta x)^T]^{1/T} \le \E_{x\sim \Dgeneric_1}[(a+\eta x)^2]^{1/2}
    \]
\end{definition}

The following result can be used to show \Cref{lemma:l2-sandwiching-intersections}.

\begin{proposition}[Modification of Theorem 10.4 from \cite{gopalan2010fooling}]\label{proposition:auxiliary-lemma-l2-sandwiching-intersections}
    Let $r\in \N$, $\sigma\in(0,1)$, $T\in \N$, $\eta>0$ and $t>4$ be parameters and consider $\Dgeneric$ to be a product distribution over $\X\subseteq\R^d$ such that each of its independent coordinates is $(4,2,\eta)$-hypercontractive, and $(T,2,4/t)$-hypercontractive. Suppose that $T \ge Cr\log(rt)$ for some sufficiently large universal constant $C>0$ and $T$ is even. Then, for any function of the form $h:\X\to \R$, $h(\x) = \ind\{\w\cdot \x \ge \tau\}$, where $\w\in\R^d$ and $\tau\in \R$, there is a polynomial $p:\X\to\R$ such that the following are true.
    \begin{enumerate}[label=\textnormal{(}\roman*\textnormal{)}]
        \item The degree of $p$ is at most $\degbound = \poly(\log t, \frac{1}{\eta})\cdot \frac{1}{\sigma} + O(\frac{T}{r})$.
        \item For any $\x\in \X$ we have $p(\x) \ge h(\x)$.
        \item The expected distance between $p$ and $h$ is bounded by $\E_{\x\sim\Dgeneric}[p(\x)-h(\x)] \le O(\sigma^{\frac{1}{2}}+ \frac{rt\log(rt)}{T})$.
        \item The values of $p$ are upper bounded with high probability, i.e., $\pr_{\x\sim\Dgeneric}[p(\x)>1+\frac{1}{r^2}] \le 2^{-T/r}$.
        \item The $L_{2r}(\Dgeneric)$ norm of $p$ is bounded by $\|p\|_{L_{2r}(\Dgeneric)} \le 1 + \frac{2}{r^2}$.
    \end{enumerate}
\end{proposition}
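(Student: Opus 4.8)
The plan is to follow the blueprint of the proof of Theorem~10.4 in \cite{gopalan2010fooling}, tracking the two quantitative features (iv) and (v) that we need for the $\L_2$-sandwiching application but that are not stated there in this exact form. We may assume $\|\w\|_2=1$ by rescaling $\w,\tau$, and write $X=\w\cdot\x$. Since $\Dgeneric$ is a product distribution whose coordinates are $(4,2,\eta)$- and $(T,2,4/t)$-hypercontractive, the linear form $X$ inherits both hypercontractivity bounds (as in \cite{gopalan2010fooling}), with $\E_{\x\sim\Dgeneric}[X^2]=1$. Taking $a=0$ in \Cref{definition:hypercontractivity} gives $\E_{\x\sim\Dgeneric}[X^T]^{1/T}\le 1$ (recall $T$ is even), hence the tail bound $\pr_{\x\sim\Dgeneric}[|X|>s]\le s^{-T}$ for every $s>0$; in particular $\pr_{\x\sim\Dgeneric}[|X|>t]\le t^{-T}$. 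The $(4,2,\eta)$ condition plays the role it does in \cite{gopalan2010fooling}: it bounds the critical index of $\w$ by $\poly(1/\eta,\log t)$, so $\w$ decomposes into a short ``head'' of large coordinates, treated by an exact case split, and a ``regular'' tail whose partial sum is close in Kolmogorov distance to a Gaussian of variance at most $1$.

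First I would construct the one-sided univariate object: using standard one-sided low-degree approximators for univariate thresholds (cf.\ \cite{diakonikolas2010bounded}), fix a polynomial $P:\R\to\R$ of degree $O(1/\sigma)$ with $P(z)\ge\ind\{z\ge 0\}$ for all $z$, with $\E_{Z\sim\Gauss(0,1)}[P(Z)-\ind\{Z\ge 0\}]=O(\sqrt\sigma)$, and with $1\le P(z)\le 1+\tfrac{1}{2r^2}$ on the bulk interval $[-t,t]$. Composing $P$ with the regular tail, summing over the (few) head configurations, replacing the regular tail by a genuine Gaussian via a Lindeberg/invariance swap whose error is absorbed into $O(\sqrt\sigma)$ using the bounded fourth moment from the $(4,2,\eta)$ assumption together with $\deg P=O(1/\sigma)$, and finally interleaving with a degree-$O(T/r)$ truncation/mollification step that keeps the polynomial from growing too fast outside the bulk, yields a polynomial $p$ with $p\ge h$ pointwise (property (ii)); $\E_{\x\sim\Dgeneric}[p(\x)-h(\x)]=O(\sqrt\sigma+rt\log(rt)/T)$ (property (iii)), the second term absorbing the CLT error on the regular part and the mass outside $[-t,t]$; and $\deg p=\poly(\log t,1/\eta)\cdot\tfrac1\sigma+O(T/r)$ (property (i)). This part is essentially \cite{gopalan2010fooling}, so I would invoke their construction and only make explicit where the constant $\tfrac{1}{2r^2}$ and the truncation degree $O(T/r)$ enter; the truncation is also what supplies the global growth bound $|p(\x)|\le(1+\tfrac1{r^2})\,(1+|X|/t)^{O(T/r)}$, which we use below and whose implied constant we take small enough that $p^{2r}$ has degree $O(T)$ away from the bulk.

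The genuinely new work is (iv) and (v). For (iv): on the event $\{|X|\le t\}$ together with a typical head configuration we have $p(\x)\le 1+\tfrac1{r^2}$, so the set $\{\x:p(\x)>1+\tfrac1{r^2}\}$ is contained in the union of $\{|X|>t\}$ and the (small) atypical-head event, whose total probability is at most $t^{-T}+\poly(1/\eta,\log t)\cdot 2^{-\Omega(T/r)}\le 2^{-T/r}$, using $T\ge Cr\log(rt)$ and $t>4$. For (v), split $\E_{\x\sim\Dgeneric}[p(\x)^{2r}]$ into its contribution on $\{|X|\le t\}$ and on $\{|X|>t\}$. On the bulk, $p(\x)^{2r}\le(1+\tfrac1{r^2})^{2r}\le e^{2/r}$, contributing at most $1+\tfrac1{r^2}$ to $\|p\|_{L_{2r}(\Dgeneric)}$ after the $2r$-th root. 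On the tail, the global growth bound gives $p(\x)^{2r}\ind\{|X|>t\}\le e^{2/r}\,(2|X|/t)^{O(T)}\ind\{|X|>t\}$, and a Markov/Hölder manipulation using $\E_{\x\sim\Dgeneric}[X^T]^{1/T}\le 1$ and $\pr_{\x\sim\Dgeneric}[|X|>t]\le t^{-T}$ bounds this contribution by $2^{-\Omega(T)}$, negligible next to $1/r^2$. Taking the $2r$-th root of the sum gives $\|p\|_{L_{2r}(\Dgeneric)}\le 1+\tfrac{2}{r^2}$, which is (v).

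The main obstacle is the tension built into (v) and, through it, into the degree bound (i): a non-constant polynomial cannot be globally bounded, so outside the bulk $p$ must grow, yet its $2r$-th power has to be integrated against a tail of $X$ that we control only up to the $T$-th moment. This forces the construction to cap the \emph{growth rate} of $p$ at $O(T/r)$ — the extra term in (i) — so that $p^{2r}$ is effectively of degree $O(T)$ away from the bulk and the $(T,2,4/t)$-hypercontractivity suffices, while the genuine approximation is still carried out at degree $\Theta(1/\sigma)$ and the critical-index head is handled; since one cannot simply multiply a high-degree approximator by something to reduce its growth, this reconciliation must happen inside the construction of \cite{gopalan2010fooling}, and checking that the truncation preserves both $p\ge h$ and the $O(\sqrt\sigma+rt\log(rt)/T)$ error is the delicate step. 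A secondary point is that we assume only hypercontractivity (moment \emph{bounds}), not exact Gaussian moment matching, so the Lindeberg swap must be arranged so that its error is genuinely $O(\sqrt\sigma)$ using nothing more than the fourth-moment control from the $(4,2,\eta)$ assumption.
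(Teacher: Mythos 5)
The paper does not actually prove this proposition: it is cited as a modification of Theorem~10.4 of \cite{gopalan2010fooling} and applied directly in the proof of \Cref{lemma:l2-sandwiching-intersections} without an accompanying argument, so there is no paper proof to compare against and I can only evaluate your sketch on its own terms. The skeleton you lay out is the right one --- critical-index decomposition driven by $(4,2,\eta)$-hypercontractivity, a one-sided univariate approximator, an invariance swap, and a separate degree-$O(T/r)$ taming step for (iv)--(v) --- and you rightly isolate (iv) and (v) as the bookkeeping that needs to be made explicit. One slip in the univariate step: you require $P(z)\ge\ind\{z\ge 0\}$ pointwise, $\E_{Z\sim\Gauss(0,1)}[P(Z)-\ind\{Z\ge 0\}]=O(\sqrt\sigma)$, and simultaneously $1\le P(z)\le 1+\tfrac{1}{2r^2}$ on all of $[-t,t]$. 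The last condition is inconsistent with the first two, since $\ind\{z\ge 0\}=0$ on $[-t,0)$ and forcing $P\ge 1$ there gives $\E_{\Gauss(0,1)}[P-\ind]\ge\pr_{\Gauss(0,1)}[-t\le Z<0]=\Omega(1)$, not $O(\sqrt\sigma)$. What you want is $\ind\{z\ge 0\}\le P(z)\le 1+\tfrac{1}{2r^2}$ on $[-t,t]$.

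The substantive gap is that your entire treatment of (iv) and (v) hangs on the asserted \emph{global} growth bound $|p(\x)|\le(1+\tfrac{1}{r^2})\,(1+|X|/t)^{O(T/r)}$. A polynomial of the degree promised in (i), namely $\poly(\log t,1/\eta)\cdot\tfrac1\sigma + O(T/r)$, grows at infinity at a rate equal to that full degree; a bound of the form $(1+|X|/t)^{O(T/r)}$ can hold globally only if the degree is actually $O(T/r)$, i.e.\ only if $\poly(\log t,1/\eta)/\sigma = O(T/r)$. No such relation between $\sigma$, $T$ and $r$ appears among the hypotheses, which impose only $T\ge Cr\log(rt)$; so the bound you invoke does not follow in general. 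You yourself flag this tension as ``the main obstacle'' and then deflect to ``this reconciliation must happen inside the construction of \cite{gopalan2010fooling}'' --- but that is an appeal to the source, not a proof, and it is precisely the step that distinguishes a rederivation from a citation. To close the gap you would need to exhibit the actual two-stage construction and show concretely why the $\Theta(1/\sigma)$-degree approximation part of $p$ does not dominate $\E_{\x\sim\Dgeneric}[p(\x)^{2r}]$ outside the bulk (for instance via the coefficient decay of the Chebyshev/mollifier factors), or produce a bound on that moment that does not go through a pointwise growth estimate at all. As written, properties (iv) and (v) are asserted rather than established.
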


\begin{proof}[Proof of \Cref{lemma:l2-sandwiching-intersections}]
    Let $\concept\in \C$ be an intersection of ${\nhalfspaces}$ halfspaces over $\X$, i.e., $\concept$ can be written in the following form
    \begin{align*}
        \concept(\x) &= 2\prod_{j=1}^{\nhalfspaces} h_j(\x) - 1, \text{ where }h_j(\x) = \ind\{\w_j\cdot \x + \tau_j\} \text{ for some }\w_j\in \R^d, \tau_j\in\R
    \end{align*}
    Note that if $\concept$ is a Decision Tree of halfspaces of size $\dtsize$ and depth $\aczdepth$, then $\concept$ can be written as a sum of at most $\dtsize$ intersections of $\aczdepth$ halfspaces and it suffices to use accuracy parameter $\eps/\dtsize$ for each intersection. 
    
    Back to the case where $\concept$ is an intersection of $\aczdepth$ halfspaces, we will apply \Cref{proposition:auxiliary-lemma-l2-sandwiching-intersections} in a way similar to the proof of Lemma 10.1 in \cite{gopalan2010fooling}. However, our goal here is to show that \Cref{proposition:auxiliary-lemma-l2-sandwiching-intersections} implies the existence of $\L_2$ (rather than $\L_1$) sandwiching polynomials for $\concept$. We use the following standard fact about the Gaussian and Uniform distributions.
    
    \begin{claim}[Hypercontractivity of Gaussian and Uniform marginals, see e.g. \cite{krakowiak1988hypercontraction,wolff2007hypercontractivity,gopalan2010fooling}]
        If $\Dgeneric$ is either the standard Gaussian $\Gauss(0,I_d)$ over $\R^d$ or the uniform distribution over the hypercube $\cube{d}$, then, for some universal constant $C>0$, each of the coordinates of $\Dgeneric$ is $(\lceil Ct^2\rceil,2,\frac{4}{t})$-hypercontractive for any $t>0$ and, in particular, each one is also $(4,2,\frac{1}{\sqrt{3}})$-hypercontractive.
    \end{claim}

    We may apply \Cref{proposition:auxiliary-lemma-l2-sandwiching-intersections} for each $h_j$ with parameters $r = 2\nhalfspaces, \sigma = \frac{\eps^2}{C{\nhalfspaces}^4}, t = C\frac{{\nhalfspaces}^3}{\eps}\log({\nhalfspaces}/\eps)$, $\eta=1/\sqrt{3}$ and $T=Ct^2$, for some sufficiently large universal constant $C$ to obtain a polynomial $p_j$ of degree $k = \widetilde{O}(\frac{{\nhalfspaces}^5}{\eps^2})$ such that the following are true.
    \begin{align}
        &p_j(\x)\ge h_j(\x) \text{ for all }\x\in\X \label{property:intersections-l2-pointwise-bound} \\
        &\eps_1 := \E_\Dgeneric[p_j(\x)-h_j(\x)] = O\Bigr(\frac{\eps}{{\nhalfspaces}^2}\Bigr) \label{property:intersections-l2-eps1} \\
        &\eps_2 := \pr_\Dgeneric\Bigr[p_j(\x) > 1+\frac{1}{4{\nhalfspaces}^2}\Bigr] \le 2^{-\Omega(\frac{{\nhalfspaces}^5}{\eps^2} \log^2({\nhalfspaces}/\eps))} \label{property:intersections-l2-eps2} \\
        &\|p_j\|_{L_{4m}(\Dgeneric)} \le 1+\frac{1}{2{\nhalfspaces}^2} \label{property:intersections-l2-pnorm}
    \end{align}

    We will now construct a polynomial $\pup$ of degree $\widetilde{O}(\frac{{\nhalfspaces}^6}{\eps^2})$ such that $\pup(\x)\ge \concept(\x)$ for all $\x\in \X$ and also $\E_{\Dgeneric}[(\pup(\x)-\concept(\x))^2] \le \eps/4$. This implies the existence of a corresponding polynomial $\pdown$ with $\pdown(\x)\le \concept(\x)$ for all $\x\in\X$ and $\E_{\Dgeneric}[(\pup(\x)-\pdown(\x))^2] \le \eps$ via a symmetric argument. Our proof consists of a hybrid argument similar to the one used in the proof of Lemma 10.1 in \cite{gopalan2010fooling}, modified to provide a bound for the $\L_2$ error of approximation.

    We pick $\pup = 2p-1$, where $p = \prod_{j=1}^{\nhalfspaces}p_j$. Let $p^{(0)} = \prod_{j=1}^{\nhalfspaces} h_j$, $p^{(i)} = (\prod_{j=1}^ip_j)(\prod_{j=i+1}^{\nhalfspaces} h_j)$ and $p^{({\nhalfspaces})}=p$. We then have the following.
    \begin{align*}
        \|p-h\|_{\L_2(\Dgeneric)} &= \|p^{({\nhalfspaces})}-p^{(0)}\|_{\L_2(\Dgeneric)} \le \sum_{i=1}^{\nhalfspaces} \|p^{(i)}-p^{(i-1)}\|_{\L_2(\Dgeneric)} \\
        &=\sum_{i=1}^{\nhalfspaces} {{}}\Bigr\| {{}}\Bigr(\prod_{j=1}^{i-1}p_j{{}}\Bigr) {{}}\Bigr(\prod_{j=i+1}^{\nhalfspaces} h_j{{}}\Bigr) (p_i-h_i) {{}}\Bigr\|_{\L_2(\Dgeneric)} \\
        &\le \sum_{i=1}^{\nhalfspaces} {{}}\Bigr\| {{}}\Bigr(\prod_{j\neq i} p_j{{}}\Bigr) (p_i-h_i) {{}}\Bigr\|_{\L_2(\Dgeneric)} \tag{by property \eqref{property:intersections-l2-pointwise-bound}}
    \end{align*}
    For any fixed $i\in[{\nhalfspaces}]$ we have
    \begin{align*}
        {{}}\Bigr\| {{}}\Bigr(\prod_{j\neq i} p_j{{}}\Bigr) (p_i-h_i) {{}}\Bigr\|_{\L_2(\Dgeneric)}^2 &= \E_{\Dgeneric}{{}}\Bigr[ {{}}\Bigr(\prod_{j\neq i} p_j^2(\x){{}}\Bigr) (p_i(\x)-h_i(\x))^2 {{}}\Bigr] \\
        &\le \E_{\Dgeneric}{{}}\Bigr[ {{}}\Bigr(\prod_{j\neq i} p_j^2(\x){{}}\Bigr) (p_i(\x)-h_i(\x))p_i(\x) {{}}\Bigr] \tag{since $h_i\ge 0$ and $p_i\ge h_i$}
    \end{align*}
    In order to bound the quantity $\E_{\Dgeneric}[ (\prod_{j\neq i} p_j^2) (p_i-h_i)p_i ]$, we split the expectation according to the event $\event$ that $ (\prod_{j\neq i} p_j)\sqrt{p_i} < 2$. In particular, we have that $\E_{\Dgeneric}[ (\prod_{j\neq i} p_j^2) (p_i-h_i)p_i \ind\{\event\} ]$ is at most $4\eps_1$ by property \eqref{property:intersections-l2-eps1} and $\E_{\Dgeneric}[ (\prod_{j\neq i} p_j^2) (p_i-h_i)p_i \ind\{\neg\event\} ]$ is bounded as follows.
    \begin{align*}
        \E_{\Dgeneric}{{}}\Bigr[ {{}}\Bigr(\prod_{j\neq i} p_j^2(\x){{}}\Bigr) &(p_i(\x)-h_i(\x))p_i(\x) \ind{{}}\Bigr\{ {{}}\Bigr(\prod_{j\neq i} p_j(\x){{}}\Bigr)\sqrt{p_i(\x)} \ge 2 {{}}\Bigr\}{{}}\Bigr] \le \\
        &\le \E_{\Dgeneric}{{}}\Bigr[ {{}}\Bigr(\prod_{j\in [{\nhalfspaces}]} p_j^2(\x){{}}\Bigr) \ind{{}}\Bigr\{ {{}}\Bigr(\prod_{j\neq i} p_j(\x){{}}\Bigr)\sqrt{p_i(\x)} \ge 2 {{}}\Bigr\}{{}}\Bigr] \tag{by property \eqref{property:intersections-l2-pointwise-bound}}
    \end{align*}
    We now observe that whenever $(\prod_{j\neq i} p_j(\x){{}})\sqrt{p_i(\x)} \ge 2$, there must exist some index $j'$ such that $p_{j'}(\x)>1+\frac{1}{4\nhalfspaces^2}$ and, therefore, we can further bound the above quantity by the following one.
    \begin{align*}
        &\E_{\Dgeneric}{{}}\Bigr[\sum_{j'=1}^{\nhalfspaces}\ind{{}}\Bigr\{p_{j'}(\x)>1+\frac{1}{4\nhalfspaces^2}{{}}\Bigr\} {{}}\Bigr(\prod_{j\in [{\nhalfspaces}]} p_j^2(\x){{}}\Bigr) {{}}\Bigr] = \sum_{j'=1}^{\nhalfspaces}\E_{\Dgeneric}{{}}\Bigr[\ind{{}}\Bigr\{p_{j'}(\x)>1+\frac{1}{4\nhalfspaces^2}{{}}\Bigr\} {{}}\Bigr(\prod_{j\in [{\nhalfspaces}]} p_j^2(\x){{}}\Bigr) {{}}\Bigr]
    \end{align*}
    In the above expression we used linearity of expectation. We now apply H\"older's inequality and obtain the bound $\sum_{j'=1}^{\nhalfspaces} ({\pr_\Dgeneric[p_{j'}(\x) > 1+\frac{1}{4\nhalfspaces^2}]})^{\frac{1}{2}} \prod_{j=1}^{\nhalfspaces}\bigr( \E_\Dgeneric[p_j^{4\nhalfspaces}(\x)]\bigr)^{\frac{1}{2\nhalfspaces}}$. Due to properties \eqref{property:intersections-l2-eps2} and \eqref{property:intersections-l2-pnorm}, we finally have the bound $\nhalfspaces\sqrt{\eps_2}\cdot \prod_{j=1}^{\nhalfspaces}\|p_j\|_{L_{4\nhalfspaces}}^2 \le \nhalfspaces\sqrt{\eps_2} (1+\frac{1}{2\nhalfspaces^2})^{2\nhalfspaces} \le 3\nhalfspaces\sqrt{\eps_2}$.
    Therefore, in total, we have $\|p-h\|_{L_{2}(\Dgeneric)}^2 \le 4\nhalfspaces^2\eps_1 + 3\nhalfspaces^3 \eps_2 \le \eps$, which implies that $\|\pup-\concept\|_{\L_2{(\Dgeneric)}} \le \eps$ and $\pup\ge \concept$.  
\end{proof}

\section{Lower Bounds}\label{appendix:lower-bounds}

\subsection{Lower Bound for Realizable TDS Learning of Monotone Functions}

We now prove \Cref{thm: tds learning of monotone is hard}, which we restate here for convenience.

\begin{theorem}[Hardness of TDS Learning Monotone Functions]
%\label{thm: tds learning of monotone is hard}
Let the accuracy parameter $\epsilon$ be at most $0.1$ and the success probability parameter $\delta$ also be at most $0.1$. Then,
    in the realizable setting, any TDS learning algorithm for the class of monotone functions over $\{\pm 1\}^d$ with accuracy parameter $\epsilon$ and success probability at least $1-\delta$ requires either $2^{0.04 d}$ training samples or $2^{0.04 d}$ testing samples for all sufficiently large values of $d$.
\end{theorem}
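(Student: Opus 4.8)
The plan is a statistical-indistinguishability (Yao-type) argument: it suffices to exhibit a prior over \emph{realizable} instances of monotone-function TDS learning on which every deterministic learner using $m_{\train},m_{\test}<2^{0.04d}$ samples fails with probability exceeding $\delta$. Sample a uniformly random set $V\subseteq\{\pm 1\}^d$ of size $N=2^{cd}$ for a small constant $c$ (say $c=\tfrac18$), conditioned to lie inside the Hamming band $\{x:|x|\in[0.4d,0.6d]\}$ (a random point lies there except with probability $2^{-\Omega(d)}$, so this is essentially free); sample $g\colon V\to\{\pm1\}$ uniformly; and let $f$ be the monotone function obtained from the fixed monotone ``base'' $f_0(x)=\mathbf 1[|x|>3d/4]$ by raising $f$ to $+1$ on the up-set $\{x:x\ge v\}$ of every $v\in V$ with $g(v)=+1$ (keeping $f(v)=g(v)$ on $V$). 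I then consider two coupled ``worlds'' sharing the same $(V,g,f)$ and the same training sample (drawn from $\Unif(\{\pm1\}^d)$ and labelled by $f$): the \emph{honest} world, whose test marginal is $\Unif(\{\pm1\}^d)$, and the \emph{shifted} world, whose test marginal is $\Unif(V)$. Both are realizable by $f$, so $\optcommon=0$, and in the honest world the test marginal equals the training marginal, so completeness forces the learner to accept with probability $\ge 1-\delta$.

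Three facts drive the argument. \emph{(i) $V$ is almost surely an antichain.} Two independent uniform cube points are comparable with probability $2(3/4)^d$, so the expected number of comparable pairs in $V$ is at most $N^2\cdot 2^{1-0.415 d}=2^{-\Omega(d)}$ for $c<\tfrac15$; hence every labelling of $V$ extends to a monotone function, $f$ is well defined and monotone, and $g$ stays uniform given $V$. \emph{(ii) The training sample carries no information about $g$.} The raised region $\bigcup_{v\in V}\{x\ge v\}$ has $\Unif$-mass at most $N\cdot 2^{-0.4d}=2^{-\Omega(d)}$ (all $v\in V$ have $|v|\ge 0.4d$), and $V$ itself has mass $2^{(c-1)d}$; so $m_{\train}\le 2^{0.04d}$ i.i.d.\ training points miss $V$ and the raised region entirely except with probability $2^{-\Omega(d)}$, and on that event the whole labelled training sample equals $\{(x_i,f_0(x_i))\}_i$, a function of the $x_i$'s alone and independent of $g$. \emph{(iii) The two worlds are statistically indistinguishable to the learner.} Reveal $V$ by deferred decisions driven by the test sample: the $j$-th point of $\Unif(V)$ is, conditioned on the earlier ones, a fresh uniform cube point with probability $1-\Theta(j/N)$ and a repeat otherwise, so its law is within total variation $O(m_{\test}^2/N)$ of $\Unif(\{\pm1\}^d)^{\otimes m_{\test}}$; since by (ii) the training points a.s.\ force no additional element of $V$ to be revealed, the learner's view $(\Strain,\Stest)$ in the shifted world is within $2^{-\Omega(d)}+O(m_{\test}^2/N)=2^{-\Omega(d)}$ of its honest-world law (using $c>0.08$, and noting $V$ is invisible by Hamming weight since it sits in the typical band).

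Combining: by (iii) and completeness the learner accepts in the shifted world with probability at least $1-\delta-2^{-\Omega(d)}$. Whenever it accepts it outputs some $\hat f$, and soundness demands $\pr_{x\sim\Unif(V)}[\hat f(x)\ne f(x)]\le\epsilon\le 0.1$. But on the probability-$(1-2^{-\Omega(d)})$ event of (ii) the learner's entire input is independent of $g$, while $f|_V=g$ is a uniformly random $\{\pm1\}$-string on the $N$ coordinates indexed by $V$; hence $\tfrac1N\sum_{v\in V}\mathbf 1[\hat f(v)\ne g(v)]\ge 0.2$ except with probability $2^{-\Omega(N)}=2^{-\Omega(2^{cd})}$, so the output is $0.1$-bad. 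Thus in the shifted world the learner violates soundness with probability at least $1-\delta-2^{-\Omega(d)}>\delta$ for $\delta\le 0.1$ and $d$ large, contradicting the definition of a TDS learner; taking the contrapositive yields the theorem. Any $N=2^{cd}$ with $0.08<c<\tfrac15$ works, which comfortably accommodates the stated $2^{0.04d}$.

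I expect step (iii) to be the crux, and within it the delicate point is that the test sample reveals \emph{membership} in $V$ while the training labels depend on $g|_V$, so one must argue these correlations stay negligible --- which is exactly where (ii), the statement that training labels are almost surely pure $f_0$-values, does the work, and why $f_0$ must be chosen ``loose'' around the typical band rather than, say, $\mathrm{Maj}$ (for which monotonicity would pin down every point of $V$). The remaining ingredients --- the antichain count, the birthday/collision bound, and the final Chernoff estimate --- are routine. Finally, the same template (plant a uniformly random labelling on a statistically invisible, super-sparse set that is ``free'' for the class) should adapt to \Cref{thm: tds learning of convex is hard}: replace the Hamming band by a thin spherical shell near radius $\sqrt d$, ``antichain'' by ``points in convex position'', and $f_0$ by a fixed halfspace indicator, so that a sparse random set of shell points can be labelled arbitrarily by an indicator of a convex set while remaining invisible to $2^{0.04d}$ Gaussian samples.
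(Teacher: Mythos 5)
Your proof is correct and takes a genuinely different route from the paper's. The paper runs a two-alternative distinguisher argument: it pairs the constant $-1$ function with the uniform test marginal, and pairs the up-set indicator $f_S(\x)=\mathbf 1[\exists \z\in S:\x\succeq \z]$ with $\Unif(S)$ for a random $M$-element multiset $S$. Because $f_S$ is monotone by construction and equals $+1$ on every element of $S$, the paper needs no antichain argument, no Hamming-band conditioning, and no base function; soundness and completeness then force a one-bit behavioral gap (mostly $-1$ vs.\ mostly $+1$ on test points) between the two alternatives, and the hybrid argument collapses that gap via exactly your facts (ii) and (iii) --- the paper's $2(3/4)^d MN$ domination bound and its birthday/subsample lemma (\Cref{fact: substituting uniform with subsample does likely indistinguishable}). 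Your version upgrades the two-alternative test to a prediction game: plant a uniformly random labeling $g$ on a random antichain $V$ inside the typical Hamming band, argue the learner's view is (after the same domination and birthday events) independent of $g$, and conclude that an accepted hypothesis agrees with $g$ on only $\approx\tfrac12$ of $V$ by Hoeffding. This is morally the same lower bound but yields a slightly sharper qualitative statement (the learner cannot beat error $\tfrac12-o(1)$ on the test marginal, not merely $0.1$), at the cost of the extra machinery you introduce (antichain count, band restriction, choice of $f_0$). You also correctly anticipate that the same ``invisible random planted set'' template proves the convex-sets lower bound; the paper's actual proof of \Cref{thm: tds learning of convex is hard} is again the fixed-pair version of that template (a random shell set $S$ with $g_S = \conv(S\cap(\B_b\setminus\B_a),\B_a)$ replacing your random labeling), using convex-position of random Gaussian points in place of the antichain property.
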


We will need the following standard fact, see for example \cite{rubinfeld2022testing} for a proof:
\begin{fact}
\label{fact: substituting uniform with subsample does likely indistinguishable}
    For any distribution $D$ over any domain, let multisets $T_1$ and $T_2$ be sampled as follows:
    \begin{enumerate}
        \item Set $T_1$ is $N$ i.i.d. samples from $D$.
        \item First, multiset $S$ is formed by taking $M$ i.i.d. samples from $D$. Then, multiset $T_2$ is formed by taking $N$ i.i.d. uniform elements from $S$.
    \end{enumerate}
Then, the statistical distance between the distributions of $T_1$ and $T_2$ is at most $\frac{N^2}{M}$.
\end{fact}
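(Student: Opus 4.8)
The plan is to exhibit a coupling of the two sampling processes under which the outputs $T_1$ and $T_2$ coincide except on a low-probability ``collision'' event, and then bound the statistical distance by the probability of that event. First I would view both $T_1$ and $T_2$ as \emph{ordered} tuples in the domain; since a multiset is a deterministic function of the corresponding tuple (forget the order), a coupling at the level of tuples only improves the bound at the level of multisets, so it suffices to work with tuples. Write the generation of $T_2$ explicitly as: draw $s_1,\dots,s_M$ i.i.d.\ from $D$ to form $S=(s_1,\dots,s_M)$, then draw indices $i_1,\dots,i_N$ i.i.d.\ uniform on $[M]$ (independently of $S$) and set $T_2=(s_{i_1},\dots,s_{i_N})$. (Picking a uniform element of the multiset $S$, weighted by multiplicity, is the same as picking a uniform position in the ordered sample, so this matches the stated process.)

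\textbf{Key step: the conditional law of $T_2$.} Let $E$ be the event that $i_1,\dots,i_N$ are pairwise distinct. The crucial claim is that, conditioned on $E$, the tuple $(s_{i_1},\dots,s_{i_N})$ is distributed exactly as $N$ i.i.d.\ draws from $D$. To prove it I would condition further on a fixed realization $i_1=a_1,\dots,i_N=a_N$ with the $a_j$ distinct: since the $s_\ell$ are i.i.d.\ from $D$ and independent of the indices, $(s_{a_1},\dots,s_{a_N})$ is a subtuple of an i.i.d.\ sequence taken at distinct coordinates, hence itself i.i.d.\ from $D$; as this law does not depend on which distinct tuple $(a_1,\dots,a_N)$ was chosen, averaging over all such realizations (i.e.\ conditioning on $E$) preserves it. Now define the coupling: sample $S$ and $i_1,\dots,i_N$ as above and set $T_2=(s_{i_1},\dots,s_{i_N})$; if $E$ occurs, set $T_1:=T_2$, and if $E$ fails, set $T_1$ to be a fresh independent tuple of $N$ i.i.d.\ draws from $D$. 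For any event $A$ this gives $\pr[T_1\in A]=\pr[E]\,\pr[T_2\in A\mid E]+\pr[\neg E]\,D^{\otimes N}(A)=D^{\otimes N}(A)$ by the claim, so $T_1$ has the correct marginal, while $T_2$ has the correct marginal by construction; hence this is a valid coupling with $T_1=T_2$ on $E$.

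\textbf{Bounding the bad event.} It remains to estimate $\pr[\neg E]$, which is a birthday-type bound: a union bound over pairs gives $\pr[\neg E]\le\sum_{1\le j<k\le N}\pr[i_j=i_k]=\binom{N}{2}\cdot\frac1M\le\frac{N^2}{2M}\le\frac{N^2}{M}$. Therefore the statistical distance between the laws of $T_1$ and $T_2$ is at most $\pr[T_1\neq T_2]\le\pr[\neg E]\le N^2/M$, as claimed (and when $M<N$ the right-hand side exceeds $1$, so the statement is vacuous there). I expect the only genuinely delicate point to be the verification that the coupled $T_1$ carries the correct marginal, which rests entirely on the conditional-i.i.d.\ claim; everything else is a routine union bound. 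An equivalent route avoiding the coupling is to write $\pr[T_2\in A]=\pr[E]\pr[T_2\in A\mid E]+\pr[\neg E]\pr[T_2\in A\mid\neg E]$ and substitute $\pr[T_2\in A\mid E]=D^{\otimes N}(A)$ to obtain $\lvert\pr[T_2\in A]-D^{\otimes N}(A)\rvert\le\pr[\neg E]$ directly, then take a supremum over $A$.
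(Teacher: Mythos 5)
Your proof is correct: the coupling via the no-index-collision event, the observation that conditioned on distinct indices the subsample is exactly $N$ i.i.d.\ draws from $D$, and the birthday bound $\binom{N}{2}/M\le N^2/M$ together give the claim. The paper does not prove this fact itself (it cites it as standard from prior work on testable learning), and your argument is precisely the standard one.
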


Now, we prove \Cref{thm: tds learning of monotone is hard}.
\begin{proof}[Proof of \Cref{thm: tds learning of monotone is hard}]

We fix $\delta \leq 0.1$ and also fix $\epsilon \leq 0.1$. Let $\mathcal{A}$ be an algorithm that takes $N\leq 2^{0.04d}$ testing samples and $N\leq 2^{0.04d}$ training samples, and either outputs REJECT, or (ACCEPT, $\hat{f}$) for a function $\hat{f}:\{\pm 1\}^d \rightarrow \{\pm 1\}$. We argue that for, a sufficiently large $d$, the algorithm $\mathcal{A}$ will fail to be a TDS-learning algorithm for monotone functions over $\{\pm 1\}^d$.

Let $f$ be some function mapping $\{\pm 1\}^d \rightarrow \{\pm 1\}$ and let a multiset $S$ consist of elements in $\{\pm 1\}^d$. We define $\mathcal{T}(f, S)$ to be a random
variable supported on $\{\text{Yes}, \text{No}\}$ determined as follows (informally, if $\mathcal{A}$ is a TDS-learner for monotone functions, then $\mathcal{T}(f, S)$ will allow us to distinguish a uniform distribution over $S$ from the uniform distribution over $\{\pm 1\}^d$):
\begin{enumerate}
    \item Let $\Strain \subset \{\pm 1\}^d \times \{\pm 1\}$ consist of $N$ pairs $(\x, f(\x))$, where $\x$ are drawn i.i.d. uniformly from $\{\pm 1\}^d$.
    \item Let $\Stest$ consist of $N$ i.i.d. uniform samples from set $S$.
    \item The algorithm $\mathcal{A}$ is run on $(\Strain, \Stest)$. 
    \item If $\mathcal{A}$ outputs REJECT, then output $\mathcal{T}(f, S)=$No.
    \item If $\mathcal{A}$  outputs (ACCEPT, $\hat{f}$), then
    \begin{enumerate}
        \item Obtain a new set $X_2$ of $10000$ i.i.d. uniform samples from $S$.
        \item If, on the majority of points $\x$ in $X_2$, we have $\hat{f}(\x)=1$, then output No. 
        \item Otherwise, output Yes.
    \end{enumerate}
\end{enumerate}

For a multiset $S$ consisting of elements in $\{\pm 1\}^d$, let $f_S$ be the monotone function defined as follows:
\[
f_S(\x):=
\begin{cases}
    +1 &\text{ if there exists }\z\in S:~\x \succeq \z,\\
    -1 &\text{ otherwise}.
\end{cases}
\]
First, we observe that if $\mathcal{A}$ is indeed a $(\epsilon, \delta)$-TDS learning algorithm for monotone functions over $\{\pm 1\}^d$, then:
\begin{itemize}
    \item $\mathcal{T}(-1, \{\pm 1\}^d)$=Yes with probability at least $\frac{2}{3}$ (from here on, by $-1$ we mean the function that maps every element in $\{\pm 1\}^d$ into $-1$). This is true because, by the definition of a TDS learner, since $\Strain$ comes from the uniform distribution over $\{\pm 1\}^d$, with probability at least $1-2\delta=0.8$ the algorithm $\mathcal{A}$ will output (ACCEPT, $\hat{f}$) for some $\hat{f}$ satisfying $\pr_{\x \sim \{\pm 1\}^d}[\hat{f}(x) \neq -1] \leq \epsilon=0.1$. Then, via a standard Hoeffding bound, with probability at least $0.9$ on the majority of elements $\x$ in $X_2$ we have $\hat{f}(\x)=-1$ and then $\mathcal{T}(-1, \{\pm 1\}^d)$=Yes.
    \item For any multiset $S$ with elements in $\{\pm 1\}^d$, we have
    $\mathcal{T}(f_S, S)=$No with probability at least $\frac{2}{3}$. Indeed, from the definition of a TDS learning algorithm, we see that, with probability at least $1-\delta=0.9$, the algorithm $\mathcal{A}$ will either output 
    \begin{itemize}
        \item Output reject, in which case $\mathcal{T}(f_S, S)=$No.
        \item Output (ACCEPT, $\hat{f}$) with $\pr_{\x \sim S}[\hat{f}(\x) \neq f_S(\x)]\leq \epsilon=0.1$. But we know that $f_S$ takes values $+1$ on all elements in $S$. Therefore, $\pr_{\x \sim S}[\hat{f}(\x) \neq f_S(\x)]\leq 0.1$. Then, via a standard Hoeffding bound, with probability at least $0.9$ on the majority of elements $\x$ in $X_2$ we have $\hat{f}(\x)=+1$ and then $\mathcal{T}(f_S, S)$=No.
    \end{itemize}
\end{itemize}
In particular, if $S$ is obtained by picking $M=2^{0.1d}$ i.i.d. elements from $\{\pm 1\}^d$, we have
\begin{equation}
\label{eq: a TDS learner lets you distinguish}
\left \lvert
\pr_{
\substack{
    S \sim \Unif(\cube{d})^{\otimes M}\\
    \text{Randomness of }\mathcal{T}
}
}
    [
    \mathcal{T}(f_S, S)=\text{Yes}
    ]
    -\pr_{
\substack{
    \text{Randomness of }\mathcal{T}
}
}
    [
    \mathcal{T}(-1, \{\pm 1\}^d)=\text{Yes}
    ]
    \right \rvert
    >\frac{1}{3}.
\end{equation}
The rest of the proof argues, via a hybrid argument, that this is impossible. To be specific, we claim that for sufficiently large $d$ the following two inequalities must hold
\begin{equation}
\label{eq: substituting uniform with subsample does likely indistinguishable}
\left \lvert
\pr_{
\substack{
    S \sim \Unif(\cube{d})^{\otimes M}\\
    \text{Randomness of }\mathcal{T}
}
}
    [
    \mathcal{T}(-1, S)=\text{Yes}
    ]
    -\pr_{
\substack{
    \text{Randomness of }\mathcal{T}
}
}
    [
    \mathcal{T}(-1, \{\pm 1\}^d)=\text{Yes}
    ]
    \right \rvert
     \leq \frac{N^2}{M}.
\end{equation}

\begin{equation}
\label{eq: swapping function to be always minus one is undetectable}
\left \lvert
\pr_{
\substack{
    S \sim \Unif(\cube{d})^{\otimes M}\\
    \text{Randomness of }\mathcal{T}
}
}
    [
    \mathcal{T}(f_S, S)=\text{Yes}
    ]
    -\pr_{
\substack{
    S \sim \Unif(\cube{d})^{\otimes M}\\
    \text{Randomness of }\mathcal{T}
}
}
    [
    \mathcal{T}(-1, S)=\text{Yes}
    ]
    \right \rvert
    \leq  2\left(\frac{3}{4}\right)^d M N.
\end{equation}
We observe that Equation \ref{eq: substituting uniform with subsample does likely indistinguishable} follows immediately from \Cref{fact: substituting uniform with subsample does likely indistinguishable}, because if Equation \ref{eq: substituting uniform with subsample does likely indistinguishable} didn't hold, then we would be able to achieve advantage greater than $\frac{M}{N^2}$ when distinguishing $N$ i.i.d. uniform samples from $\{\pm 1\}^d$ from $N$ i.i.d. uniform examples from $S$.

Now we prove Equation \ref{eq: swapping function to be always minus one is undetectable}. Let $\Strain^{\mathcal{T}(f_S , S)}$ denote the collection of pairs $\{(\x, f_S (\x))\}$ sampled in Step 1 of $\mathcal{T}(f_S , S)$.  
Analogously, let $\Strain^{ \mathcal{T}(-1, S)}$ denote the collection of pairs $(\x, -1)$ in set used in procedure $\mathcal{T}(-1, S)$. In either case, the elements in $\Strain^{\mathcal{T}(f_S , S)}$ and $\Strain^{\mathcal{T}(-1 , S)}$ are i.i.d. uniformly random elements in $\{\pm 1\}^d$. 
Let $E^{ \mathcal{T}(-1, S)}$ be the event, over the choice of $S$ and the choice of $\Strain^{\mathcal{T}(-1, S)}$, that for every $(\x, -1) \in \Strain^{\mathcal{T}(-1, S)}$ there is no $\z$ in $S$ satisfying $\x\succeq \z$. Analogously, let $E^{ \mathcal{T}(f_S, S)}$ be the event, over the choice of $S$ and the choice of $\Strain^{\mathcal{T}(f_S , S)}$, that for every $(\x, f_S(\x))\in \Strain^{\mathcal{T}(f_S , S)}$ there is no $\z$ in $S$ satisfying $\x \succeq \z$. We observe that 
\begin{align}
\label{eq: conditioned on f_S beeing zero on all sample distribution of distinguisher is same}
\pr_{
\substack{
    S \sim \Unif(\cube{d})^{\otimes M}\\
    \text{Randomness of }\mathcal{T}
}
}
    \left[
    \mathcal{T}(f_S, S)=\text{Yes}
\bigg \vert 
    E^{ \mathcal{T}(f_S, S)}
    \right]
    =
    \pr_{
\substack{
    S \sim \Unif(\cube{d})^{\otimes M}\\
    \text{Randomness of }\mathcal{T}
}
}
    \left[
    \mathcal{T}(-1, S)=\text{Yes}
    \bigg \vert 
    E^{ \mathcal{T}(-1, S)}
    \right]
\end{align}
which is true because, subject to $E^{ \mathcal{T}(f_S, S)}$ or $E^{ \mathcal{T}(-1, S)}$, the function $f_S$ takes values of $-1$ on every element $\x$ in $\Strain^{\mathcal{T}(f_S , S)}$ and $\Strain^{\mathcal{T}(-1 , S)}$ respectively. We also see that the random variables $(S,\Strain^{\mathcal{T}(f_S , S)})$ and $(S,\Strain^{\mathcal{T}(-1 , S)})$ are identically distributed (conditioned on $E^{ \mathcal{T}(f_S, S)}$ and $E^{ \mathcal{T}(-1, S)}$ respectively). We also observe that 
\begin{equation}
\label{eq: very likely all labels are zero}
\pr_{
\substack{
    S \sim \Unif(\cube{d})^{\otimes M}\\
    \text{Randomness of }\mathcal{T}
}
}
    \left[
    E^{ \mathcal{T}(f_S, S)}
    \right]
    =
    \pr_{
\substack{
    S \sim \Unif(\cube{d})^{\otimes M}\\
    \text{Randomness of }\mathcal{T}
}
}
    \left[
    E^{ \mathcal{T}(-1, S)}
    \right]
    \leq \left(\frac{3}{4}\right)^d M N,
\end{equation}
where the equality of the two probabilities follows immediately by definition, and the upper bound of $\left(\frac{3}{4}\right)^d M N$ is true for the following reason. Let $\z$ and $\x$ be a pair of i.i.d. uniformly random elements in $\{\pm 1\}^d$, then $\pr[\x \succeq \z] = \left(\frac{3}{4}\right)^d$ as each bit of $\x$ and $\z$ are independent and for each of the bits we have $\x \geq \z$ with probability exactly $3/4$. Now, taking a union bound over every $(\x, -1)\in \Strain^{ \mathcal{T}(-1, S)}$ and $\z \in S$, we obtain the bound in Equation \ref{eq: very likely all labels are zero}.

Overall, combining Equation \ref{eq: substituting uniform with subsample does likely indistinguishable} with Equation \ref{eq: swapping function to be always minus one is undetectable} and substituting $N\leq 2^{0.04 d}$ and $M=2^{0.1 d}$ we get
\begin{multline*}
\left \lvert
\pr_{
\substack{
    S \sim \Unif(\cube{d})^{\otimes M}\\
    \text{Randomness of }\mathcal{T}
}
}
    [
    \mathcal{T}(f_S, S)=\text{Yes}
    ]
    -\pr_{
\substack{
    \text{Randomness of }\mathcal{T}
}
}
    [
    \mathcal{T}(-1, \{\pm 1\}^d)=\text{Yes}
    ]
    \right \rvert
    \leq \\ \frac{N^2}{M}
    +2\left(\frac{3}{4}\right)^d M N
    =2^{-\Omega(d)},
\end{multline*}
which is in contradiction with Equation \ref{eq: a TDS learner lets you distinguish} for a sufficiently large value of $d$. This proves that $\mathcal{A}$ is not a $(\epsilon, \delta)$-TDS learning algorithm for monotone functions. 
\end{proof}

\subsection{Lower Bound for Realizable TDS Learning of Convex Sets}

We now prove \Cref{thm: tds learning of convex is hard} which we restate here for convenience.

\begin{theorem}[Hardness of TDS Learning Convex Sets]
	%\label{thm: tds learning of convex is hard}
	Let the accuracy parameter $\epsilon$ be at most $0.1$ and the success probability parameter $\delta$ also be at most $0.1$. Then,
	in the realizable setting, any TDS learning algorithm for the class of indicators of convex sets under the standard Gaussian distribution on $\R^d$ requires either $2^{0.04 d}$ training samples or $2^{0.04 d}$ testing samples for all sufficiently large values of $d$.
\end{theorem}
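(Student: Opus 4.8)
The plan is to mirror the proof of \Cref{thm: tds learning of monotone is hard}, replacing the monotone closure of a sample by its convex hull. Suppose, towards a contradiction, that $\mathcal{A}$ is a TDS learning algorithm for indicators of convex sets with respect to $\Gauss(0,I_d)$ with accuracy $\eps \le 0.1$, failure probability $\delta \le 0.1$, using at most $N \le 2^{0.04 d}$ training samples and at most $N$ test samples. For a finite multiset $S \subset \R^d$, let $f_S$ be the indicator of $\conv(S)$, which is a valid concept since $\conv(S)$ is convex, and note that $f_S(\z)=1$ for every $\z\in S$; the constant hypothesis $-1$ is the indicator of the empty set and is likewise a valid concept. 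I would reuse essentially verbatim the distinguisher $\mathcal{T}(f,S)$ from the proof of \Cref{thm: tds learning of monotone is hard}: it runs $\mathcal{A}$ on a training set $\Strain$ of $N$ i.i.d.\ Gaussian points labelled by $f$ and a test set $\Stest$ of $N$ i.i.d.\ uniform draws from $S$; it outputs \textnormal{No} if $\mathcal{A}$ rejects, and otherwise draws a fresh batch $X_2$ of $10000$ i.i.d.\ draws from $S$ and outputs \textnormal{No} if the returned hypothesis $\hat f$ equals $1$ on the majority of $X_2$, and \textnormal{Yes} otherwise.

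Exactly as in the monotone case, if $\mathcal{A}$ is a genuine TDS learner then (i) $\mathcal{T}(-1,\Gauss(0,I_d)) = \textnormal{Yes}$ with probability at least $2/3$, since the train and test marginals agree, so by completeness and soundness $\mathcal{A}$ accepts some $\hat f$ with $\pr_{\x\sim\Gauss(0,I_d)}[\hat f(\x)\neq -1]\le\eps$, and a Hoeffding bound over $X_2$ makes the majority $-1$; and (ii) for every multiset $S$, $\mathcal{T}(f_S,S) = \textnormal{No}$ with probability at least $2/3$, since upon acceptance soundness gives $\pr_{\x\sim\Unif(S)}[\hat f(\x)\neq f_S(\x)]\le\eps$ and $f_S\equiv 1$ on $S$, so $\hat f$ equals $1$ on a $(1-\eps)$-fraction of $S$ and the majority over $X_2$ is $1$. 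Taking $S\sim\Gauss(0,I_d)^{\otimes M}$ with $M = 2^{0.1 d}$, this yields $\big|\pr[\mathcal{T}(f_S,S)=\textnormal{Yes}] - \pr[\mathcal{T}(-1,\Gauss(0,I_d))=\textnormal{Yes}]\big| > 1/3$.

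Next I would insert the intermediate experiment $\mathcal{T}(-1,S)$ with $S\sim\Gauss(0,I_d)^{\otimes M}$ and bound the two resulting gaps. First, $\big|\pr[\mathcal{T}(-1,S)=\textnormal{Yes}] - \pr[\mathcal{T}(-1,\Gauss(0,I_d))=\textnormal{Yes}]\big| \le O(N^2/M)$ by \Cref{fact: substituting uniform with subsample does likely indistinguishable}, since the only difference between the two experiments is that $\Stest$ and $X_2$ (together $O(N)$ samples) are drawn from $\Unif(S)$ instead of from $\Gauss(0,I_d)$. Second, let $\mathcal{B}$ be the event that at least one of the $N$ training points drawn inside $\mathcal{T}(\cdot,S)$ lies in $\conv(S)$; conditioned on $\mathcal{B}^c$ the labels produced by $f_S$ on the training sample are all $-1$, so $(S,\Strain,\Stest,X_2)$ is identically distributed in $\mathcal{T}(f_S,S)$ and $\mathcal{T}(-1,S)$, giving $\big|\pr[\mathcal{T}(f_S,S)=\textnormal{Yes}] - \pr[\mathcal{T}(-1,S)=\textnormal{Yes}]\big| \le \pr[\mathcal{B}]$.

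The crux is bounding $\pr[\mathcal{B}]$, i.e.\ showing that the convex hull of $M = 2^{0.1 d}$ i.i.d.\ Gaussians is too ``thin'' to contain a fresh Gaussian point; this is the only genuinely new ingredient and I expect it to be the main obstacle. Writing $z_1,\dots,z_M\sim\Gauss(0,I_d)$ for the elements of $S$ and $\x\sim\Gauss(0,I_d)$ independent, I would use the candidate separating direction $u = \x/\|\x\|$: since $\x$ is independent of the $z_i$, each $u\cdot z_i$ is a standard one-dimensional Gaussian conditioned on $u$, so $\pr[\exists i: u\cdot z_i \ge \sqrt d/2] \le M e^{-d/8}$, while $\pr[\|\x\| < \sqrt d/2] \le e^{-\Omega(d)}$ by $\chi^2$ concentration; on the complement of both events $u\cdot\x = \|\x\| \ge \sqrt d/2 > \max_i u\cdot z_i$, so $u$ separates $\x$ from $S$ and $\x\notin\conv(S)$. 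Hence $\pr_{\x,S}[\x\in\conv(S)] \le e^{-\Omega(d)} + M e^{-d/8}$, and a union bound over the $N$ training points gives $\pr[\mathcal{B}] \le N(e^{-\Omega(d)} + M e^{-d/8}) = 2^{-\Omega(d)}$ for the stated parameters. Chaining the two gaps, $\big|\pr[\mathcal{T}(f_S,S)=\textnormal{Yes}] - \pr[\mathcal{T}(-1,\Gauss(0,I_d))=\textnormal{Yes}]\big| \le O(N^2/M) + 2^{-\Omega(d)} = 2^{-\Omega(d)} < 1/3$ for all sufficiently large $d$, contradicting the previous paragraph. Therefore no such $\mathcal{A}$ can exist.
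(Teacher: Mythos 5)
Your proposal is correct, and it takes a genuinely different — and arguably more elementary — route than the paper's. The paper works with the padded hypotheses $g_S = \ind\{\conv(S\cap(\B_b\setminus\B_a),\B_a)\}$ and $g_\emptyset = \ind\{\B_a\}$, where $a,b$ are chosen via \Cref{fact: concentration of Gaussian norm} so the Gaussian norm concentrates in $[a,b]$; the hybrid step then conditions on all pairs of points in $S\cup\Strain$ being pairwise separated by $2\sqrt{b^2-a^2}$, which is likely by \Cref{fact: pairs of Gaussians far}, and invokes the geometric \Cref{fact: far away points have well behaved convex hull} to argue the convex hull then decomposes into disjoint cones, so every training point labels as $g_\emptyset$. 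You instead use the unpadded hull $\conv(S)$ against the constant $-1$, and replace both geometric facts with a one-shot separating-hyperplane argument: for a fresh $\x\sim\Gauss(0,I_d)$ independent of $S$, the direction $u=\x/\|\x\|$ is independent of $\|\x\|$ and of the $z_i$, so $(u\cdot z_i)_i$ are i.i.d.\ $\Gauss(0,1)$ independent of $\|\x\|$, and $\pr[\x\in\conv(S)]\le\pr[\max_i u\cdot z_i \ge \sqrt d/2] + \pr[\|\x\|<\sqrt d/2] \le Me^{-d/8}+e^{-\Omega(d)}$, which for $M=2^{0.1d}$ and $N\le 2^{0.04d}$ yields $\pr[\mathcal{B}]\le N\cdot 2^{-\Omega(d)}=2^{-\Omega(d)}$ by union bound. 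Your bad event $\mathcal{B}$ (some training point lands in $\conv(S)$) is precisely what is needed for the two label vectors to agree; the paper's pairwise-distance event is a sufficient condition for this, inherited from the construction in \cite{rubinfeld2022testing}. Both routes give the same $2^{\Omega(d)}$ lower bound; yours avoids the annulus and cone-decomposition machinery entirely, at the modest cost of a slightly worse (but still exponential) exponent on the hybrid error.
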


We will need the following standard facts about Gaussian distributions:
\begin{fact}[Concentration of Gaussian norm, see e.g. Lemma 8.1 in \cite{birge1997model}]
	\label{fact: concentration of Gaussian norm}
For any $\eta>0$ it is the case that 
\[
\pr_{\x \in \mathcal{N}(0, I_d)}
\left[
d-2\sqrt{d\ln \left(\frac{2}{\eta}\right)}
\leq
\norm{\x}_2^2
\leq
d+2\sqrt{d\ln \left(\frac{2}{\eta}\right)}
+2\ln \left(\frac{2}{\eta}\right)
\right]
\geq 1- \eta
\]
\end{fact}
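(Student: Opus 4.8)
The plan is to recognize that $\|\x\|_2^2 = \sum_{i=1}^{d}\x_i^2$ is, for $\x\sim\Gauss(0,I_d)$, a sum of $d$ i.i.d.\ squared standard Gaussians, i.e.\ a $\chi^2$ random variable with $d$ degrees of freedom, and then derive the two one-sided tail bounds via the standard Chernoff/moment-generating-function argument (the Laurent--Massart inequalities). The only analytic input is that for every $\lambda<1/2$ and every $i$ we have $\E_{\x\sim\Gauss(0,I_d)}[e^{\lambda \x_i^2}] = (1-2\lambda)^{-1/2}$, so by independence $\E_{\x\sim\Gauss(0,I_d)}[e^{\lambda\|\x\|_2^2}] = (1-2\lambda)^{-d/2}$. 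Everything then reduces to a single-variable optimization over $\lambda$.

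First I would handle the upper tail. For $u>0$ and $\lambda\in(0,1/2)$, Markov's inequality applied to $e^{\lambda\|\x\|_2^2}$ gives
\[
\pr_{\x\sim\Gauss(0,I_d)}\big[\|\x\|_2^2 \ge d+u\big] \;\le\; \exp\!\Big(-\lambda(d+u) - \tfrac{d}{2}\ln(1-2\lambda)\Big).
\]
Using the elementary inequality $-\ln(1-s) \le s + \frac{s^2}{2(1-s)}$ for $s\in(0,1)$ (which follows since the difference of the two sides has derivative $\frac{s^2}{2(1-s)^2}\ge 0$ and vanishes at $s=0$) with $s=2\lambda$, the exponent is at most $-\lambda u + \frac{d\lambda^2}{1-2\lambda}$. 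Taking the Laurent--Massart choice $\lambda = \tfrac12\big(1-\sqrt{d/(d+u)}\big)$ and simplifying shows this exponent is $\le -t$ exactly when $u = 2\sqrt{dt}+2t$, giving $\pr[\|\x\|_2^2 \ge d + 2\sqrt{dt}+2t] \le e^{-t}$. For the lower tail I would run the same computation with $\lambda<0$: Markov on $e^{\lambda\|\x\|_2^2}$ together with $-\ln(1-s)\le s+\tfrac{s^2}{2}$ for $s\le 0$ (same monotonicity argument) yields, after optimizing, $\pr[\|\x\|_2^2 \le d - 2\sqrt{dt}] \le e^{-t}$.

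Finally I would set $t=\ln(2/\eta)$, so each of the two tail probabilities is at most $\eta/2$, and take a union bound: the event
\[
d - 2\sqrt{d\ln(2/\eta)} \;\le\; \|\x\|_2^2 \;\le\; d + 2\sqrt{d\ln(2/\eta)} + 2\ln(2/\eta)
\]
fails with probability at most $\eta$, which is precisely the claimed statement. There is no genuine obstacle here; the only step requiring care is the bookkeeping in the $\lambda$-optimization, so that the optimal choice produces exactly the stated radii $2\sqrt{dt}+2t$ and $2\sqrt{dt}$ rather than some other admissible constants. Alternatively, one may bypass the MGF computation entirely by invoking Lemma~1 of Laurent--Massart with unit weights $a_i=1$ (so $\|a\|_2=\sqrt{d}$, $\|a\|_\infty=1$, and $\sum_i(a_i\x_i^2 - 1) = \|\x\|_2^2 - d$), or simply citing \cite{birge1997model} as in the statement.
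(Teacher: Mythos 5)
The paper does not prove this Fact at all---it is imported as a black-box citation to Birg\'e--Massart---so there is nothing to compare against except the standard proof of the Laurent--Massart $\chi^2$ tail bounds, which is exactly what you reconstruct. Your argument is correct in structure and in all the essential steps: the MGF identity $\E[e^{\lambda\|\x\|_2^2}]=(1-2\lambda)^{-d/2}$, the logarithmic inequalities $-\ln(1-s)\le s+\frac{s^2}{2(1-s)}$ for $s\in(0,1)$ and $-\ln(1-s)\le s+\frac{s^2}{2}$ for $s\le 0$ (both of which you justify correctly), the resulting bounds $\pr[\|\x\|_2^2\ge d+2\sqrt{dt}+2t]\le e^{-t}$ and $\pr[\|\x\|_2^2\le d-2\sqrt{dt}]\le e^{-t}$, and the union bound at $t=\ln(2/\eta)$.

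The one slip is the specific optimizer you name for the upper tail: $\lambda=\tfrac12\bigl(1-\sqrt{d/(d+u)}\bigr)$ does \emph{not} make the exponent $-\lambda u+\frac{d\lambda^2}{1-2\lambda}$ equal to $-t$ at $u=2\sqrt{dt}+2t$. Writing $\rho=1-2\lambda$, the exponent is minimized at $\rho=1/\sqrt{1+2u/d}$, i.e.\ $\lambda=\tfrac12\bigl(1-\sqrt{d/(d+2u)}\bigr)$, and the optimal value is $-\frac{d}{4}\bigl(\sqrt{1+2u/d}-1\bigr)^2$, which equals $-t$ precisely when $u=2\sqrt{dt}+2t$. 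With your $\lambda$ (missing the factor $2$ in front of $u$) one gets, e.g.\ at $u=4d$, an exponent of about $-0.93t$ rather than $-t$, so the stated radius would not quite be achieved. Since you explicitly flag the $\lambda$-bookkeeping as the one delicate point and also offer the fallback of citing Laurent--Massart Lemma~1 directly, this is a minor arithmetic correction rather than a gap in the approach.
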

\begin{fact}[Concentration of Gaussian norm. See e.g. \cite{rubinfeld2022testing}.]
	\label{fact: pairs of Gaussians far}
	For any $r>0$ it is the case that 
	\[
	\pr_{\x^1, \x^2 \in \mathcal{N}(0, I_d)}
	\left[
	\norm{\x^1-\x^2}_2\leq r
	\right]
	 \leq
	\left(
	\frac{64 r^2}{d}\right)^{d/2}
	\]
\end{fact}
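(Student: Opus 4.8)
The plan is to reduce the two–sample event to a single Gaussian vector and then apply a Chernoff bound for the lower tail of a chi-squared distribution. Since $\x^1$ and $\x^2$ are independent, $\x^1-\x^2$ is distributed as $\Gauss(0,2I_d)$, which is the law of $\sqrt2\,\mathbf g$ for $\mathbf g\sim\Gauss(0,I_d)$. Hence $\norm{\x^1-\x^2}_2\le r$ has the same probability as $\norm{\mathbf g}_2^2\le r^2/2$. So it suffices to show that for every $t>0$,
\[
\pr_{\mathbf g\sim\Gauss(0,I_d)}\bigl[\norm{\mathbf g}_2^2\le t\bigr]\;\le\;\Bigl(\tfrac{et}{d}\Bigr)^{d/2},
\]
and then set $t=r^2/2$: this yields the bound $\bigl(er^2/(2d)\bigr)^{d/2}\le\bigl(64r^2/d\bigr)^{d/2}$, since $e/2<64$.

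For the tail bound, write $\norm{\mathbf g}_2^2=\sum_{i=1}^d g_i^2$ with the $g_i$ i.i.d.\ standard Gaussians; a direct computation gives $\ex[e^{-\lambda g_i^2}]=(1+2\lambda)^{-1/2}$ for any $\lambda>0$. Applying Markov's inequality to $e^{-\lambda\norm{\mathbf g}_2^2}$ and independence,
\[
\pr\bigl[\norm{\mathbf g}_2^2\le t\bigr]\;\le\;e^{\lambda t}\,\ex[e^{-\lambda g_1^2}]^d\;=\;e^{\lambda t}(1+2\lambda)^{-d/2}.
\]
If $t\ge d$ the claimed bound $(et/d)^{d/2}$ is at least $e^{d/2}\ge 1$ and there is nothing to prove, so assume $t<d$ and pick $\lambda=(d/t-1)/2>0$, i.e.\ $1+2\lambda=d/t$. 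Substituting gives $\pr[\norm{\mathbf g}_2^2\le t]\le e^{(d-t)/2}(t/d)^{d/2}\le e^{d/2}(t/d)^{d/2}=(et/d)^{d/2}$, which is the desired estimate. Combining this with the reduction above proves the fact for all $r>0$ (for $r^2\ge 2d$ the inequality is anyway trivial because $(64r^2/d)^{d/2}\ge 1$).

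I do not expect a real obstacle here: the argument is a routine large-deviation computation, and the only things to watch are the bookkeeping of constants (the slack between $e/2$ and $64$ is large, so there is ample room) and the degenerate regime $r^2\ge 2d$ where the bound holds trivially since probabilities are at most $1$. As an alternative to the chi-squared Chernoff step, one could bound $\pr[\norm{\mathbf g}_2\le\rho]$ by the product of the uniform density bound $(2\pi)^{-d/2}$ and the volume $\pi^{d/2}\rho^d/\Gamma(d/2+1)$ of a radius-$\rho$ Euclidean ball, then use $\Gamma(d/2+1)\ge (d/(2e))^{d/2}$; this gives the same $(e\rho^2/d)^{d/2}$-type bound and hence the same conclusion.
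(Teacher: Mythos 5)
Your proof is correct. The paper does not actually prove this fact---it is stated with a citation to \cite{rubinfeld2022testing} and used as a black box---so there is no in-paper argument to compare against. Your derivation is a valid self-contained one: the reduction $\x^1-\x^2\sim\Gauss(0,2I_d)$ is right, the MGF computation $\ex[e^{-\lambda g^2}]=(1+2\lambda)^{-1/2}$ and the exponential Markov step are correct, the optimal choice $1+2\lambda=d/t$ gives $e^{(d-t)/2}(t/d)^{d/2}\le (et/d)^{d/2}$, and setting $t=r^2/2$ yields $\bigl((e/2)\,r^2/d\bigr)^{d/2}$, which is dominated by $(64r^2/d)^{d/2}$ with enormous slack. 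The degenerate regime $r^2\ge 2d$ is handled correctly since the claimed bound then exceeds $1$. The alternative you sketch (uniform density bound $(4\pi)^{-d/2}$ for $\Gauss(0,2I_d)$ times the volume of the radius-$r$ ball, plus Stirling for $\Gamma(d/2+1)$) is the argument one typically finds in the cited source and gives the same $(er^2/(2d))^{d/2}$ estimate; the two routes are interchangeable here.
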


Recall that we use $\B_a$ to denote the origin-centered closed ball in $\R^d$ of radius $a$. Using ${\conv}(\cdot)$ to denote the convex hull of a set of points, will state the following geometric observation of \cite{rubinfeld2022testing} about convex hulls of a collection of point. 
\begin{fact}[\cite{rubinfeld2022testing}]
	\label{fact: far away points have well behaved convex hull}
For any $a>0$, let $\{\x^i\}_{i=1}^M$ be a collection of points in $\B_b \setminus \B_a$. If for every pair of points $(\x^i, \x^j)$ the $\norm{\x^i-\x^j}_2$ is greater than $2\sqrt{b^2-a^2}$, then for every $i$ and $j$ we have
\[
{\conv}
(\x^i, \B_a) 
\cap {\conv}
(\x^i, \B_a)
=\B_a 
\]
and also
\[
{\conv}
(\x^1, \cdots, \x^M, \B_a)
=\cup_i {\conv}
(\x^i, \B_a).
\]
\end{fact}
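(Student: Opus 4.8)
The plan is to prove the intended statement (we note that the first display has a typo and should read $\conv(\x^i,\B_a)\cap\conv(\x^j,\B_a)=\B_a$ for $i\neq j$). Throughout write $C_{\x}:=\conv(\x,\B_a)=\{t\x+(1-t)q:\,t\in[0,1],\,q\in\B_a\}$, the cone over $\B_a$ with apex $\x$, and recall that $\x^i\in\B_b\setminus\B_a$ means $a<\norm{\x^i}_2\le b$, so in particular $b>a$. Everything follows from two elementary lemmas. The first is a structural fact about a single cone: \emph{if $y\in C_{\x}$ and $\norm{y}_2>a$, then $\inn{\x,y}\ge\norm{y}_2^2$, hence $\norm{\x-y}_2^2\le\norm{\x}_2^2-\norm{y}_2^2<b^2-a^2$}. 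I would prove this by writing $y=t\x+(1-t)q$ with $q\in\B_a$ and expanding $\norm{y}_2^2=t\inn{\x,y}+(1-t)\inn{q,y}\le t\inn{\x,y}+(1-t)a\norm{y}_2$; rearranging and using $a<\norm{y}_2$ gives $\inn{\x,y}\ge\norm{y}_2^2$, and then $\norm{\x-y}_2^2=\norm{\x}_2^2-2\inn{\x,y}+\norm{y}_2^2\le\norm{\x}_2^2-\norm{y}_2^2$. This already yields the intersection claim: $\B_a$ lies in every $C_{\x^i}$, and if some $y\in C_{\x^i}\cap C_{\x^j}$ had $\norm{y}_2>a$, then $\norm{\x^i-y}_2,\norm{\x^j-y}_2<\sqrt{b^2-a^2}$, so $\norm{\x^i-\x^j}_2<2\sqrt{b^2-a^2}$, contradicting the hypothesis; hence $y\in\B_a$.

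The second lemma converts the pairwise-separation hypothesis into control over convex combinations of the $\x^i$. From $\norm{\x^i-\x^j}_2^2>4(b^2-a^2)$ together with $\norm{\x^i}_2,\norm{\x^j}_2\le b$, the parallelogram law $\norm{\x^i+\x^j}_2^2=2\norm{\x^i}_2^2+2\norm{\x^j}_2^2-\norm{\x^i-\x^j}_2^2$ gives $\norm{\x^i+\x^j}_2<2a$ (equivalently $\inn{\x^i,\x^j}<2a^2-b^2$) for every $i\neq j$. Substituting this into the identity $\norm{\sum_i\lambda_i\x^i}_2^2=\sum_i\lambda_i\norm{\x^i}_2^2-\tfrac12\sum_{i\neq j}\lambda_i\lambda_j\norm{\x^i-\x^j}_2^2$, valid for any convex weights $\lambda_i\ge0$ with $\sum_i\lambda_i=1$, I would deduce: \emph{if $\max_i\lambda_i\le\tfrac12$, then $\norm{\sum_i\lambda_i\x^i}_2<a$}. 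Indeed $\max_i\lambda_i\le\tfrac12$ forces $\sum_i\lambda_i^2\le\tfrac12$, and the identity is then bounded above by $b^2-2(b^2-a^2)\bigl(1-\sum_i\lambda_i^2\bigr)$, which is $<a^2$ whenever $\sum_i\lambda_i^2\le\tfrac12$.

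Next I would establish the union claim; the inclusion $\bigcup_iC_{\x^i}\subseteq\conv(\x^1,\dots,\x^M,\B_a)$ is trivial, so I focus on the reverse. Every point of $\conv(\x^1,\dots,\x^M,\B_a)$ can be written $p=\sum_i\alpha_i\x^i+\beta q$ with $q\in\B_a$ and $\alpha_i,\beta\ge0$, $\sum_i\alpha_i+\beta=1$ (all $\B_a$-contributions absorbed into a single $q$ by convexity of $\B_a$). If $\norm{p}_2\le a$, then $p\in\B_a\subseteq C_{\x^1}$. If $\norm{p}_2>a$, set $\alpha=\sum_i\alpha_i>0$; from $\norm{p}_2\le\alpha\norm{\tfrac1\alpha\sum_i\alpha_i\x^i}_2+\beta a$ we obtain $\norm{\tfrac1\alpha\sum_i\alpha_i\x^i}_2>a$, so by the contrapositive of the second lemma some index $k$ has $\alpha_k/\alpha>\tfrac12$, i.e. $\alpha_k>\sum_{i\neq k}\alpha_i$. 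Put $\tau=\alpha_k-\sum_{i\neq k}\alpha_i\in(0,1]$; then $p-\tau\x^k=\sum_{i\neq k}\alpha_i(\x^k+\x^i)+\beta q$, and using $\norm{\x^k+\x^i}_2<2a$ and $1-\tau=2\sum_{i\neq k}\alpha_i+\beta$ I get $\norm{p-\tau\x^k}_2\le(1-\tau)a$. Hence $q^*:=(p-\tau\x^k)/(1-\tau)\in\B_a$, so $p=\tau\x^k+(1-\tau)q^*\in C_{\x^k}\subseteq\bigcup_iC_{\x^i}$ (the degenerate sub-cases $\tau=1$ or $\sum_{i\neq k}\alpha_i=0$ put $p=\alpha_k\x^k+\beta q$ into $C_{\x^k}$ immediately).

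The main obstacle is precisely this reverse inclusion. The union $\bigcup_iC_{\x^i}$ is \emph{not} convex in general, so the separation hypothesis is genuinely needed, and the naive moves---invoking Carath\'eodory, peeling a single generator $\x^k$ off $p$ and hoping the leftover convex combination of the remaining $\x^i$ lands in $\B_a$, or trying to produce one separating hyperplane for the hull out of separating hyperplanes for the individual cones---all fail. The resolution is the two ingredients combined: the convex-combination lemma forces a \emph{dominant} generator $\x^k$ (coefficient exceeding half the total $\x$-weight), and the carefully tuned scalar $\tau=\alpha_k-\sum_{i\neq k}\alpha_i$ rewrites the residual $p-\tau\x^k$ as a nonnegative combination, of total weight exactly $1-\tau$, of the \emph{short} vectors $\x^k+\x^i$ (each of length below $2a$) and the ball point $q$, pinning it inside $(1-\tau)\B_a$.
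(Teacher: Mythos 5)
The paper does not prove this fact; it is imported verbatim from the cited reference \cite{rubinfeld2022testing}, so there is no in-paper proof to compare against. Your argument is correct and self-contained. The two lemmas are sound: for the first, writing $y=t\x+(1-t)q$ and noting $t>0$ (since $\|y\|_2>a$ rules out $t=0$) does give $\inn{\x,y}\ge\|y\|_2^2$ and hence $\|\x-y\|_2^2\le\|\x\|_2^2-\|y\|_2^2<b^2-a^2$, which yields the intersection claim by the triangle inequality. For the second, the parallelogram law gives $\|\x^i+\x^j\|_2<2a$, and the quadratic-form identity combined with $\sum_i\lambda_i^2\le\max_i\lambda_i\le\tfrac12$ (which also forces at least two positive weights, so the strict inequality survives) gives $\|\sum_i\lambda_i\x^i\|_2<a$. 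The reverse inclusion for the union claim is the only delicate step, and your bookkeeping checks out: $\alpha>0$, the dominant index $k$ exists by the contrapositive of the second lemma, $1-\tau=2\sum_{i\neq k}\alpha_i+\beta$, and the rewriting $p-\tau\x^k=\sum_{i\neq k}\alpha_i(\x^k+\x^i)+\beta q$ pins the residual inside $(1-\tau)\B_a$, with the degenerate cases handled. You are also right that the first display in the statement has a typo ($\x^j$ should appear in the second factor).
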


For the rest of the section we will set 
\begin{align}
	\label{eq: definition of a and b}
	a=\sqrt{d-2\sqrt{d\ln \left(\frac{1}{50}\right)}} && b=\sqrt{d+2\sqrt{d\ln \left(\frac{1}{50}\right)}
	+2\ln \left(\frac{1}{50}\right)},
\end{align}
and from Fact \ref{fact: concentration of Gaussian norm} we see that the norm a standard Gaussian vector in $\R^d$ falls in interval $(a,b)$ with probability at least $0.99$. 

Now, we are ready to prove Theorem \ref{thm: tds learning of convex is hard}.
\begin{proof}[Proof of \Cref{thm: tds learning of monotone is hard}]
	
	We fix $\delta \leq 0.1$ and also fix $\epsilon \leq 0.1$. Let $\mathcal{A}$ be an algorithm that takes $N\leq 2^{0.04d}$ testing samples and $N\leq 2^{0.04d}$ training samples, and either outputs REJECT, or (ACCEPT, $\hat{f}$) for a function $\hat{f}:\R^d \rightarrow \{\pm 1\}$. We argue that for, a sufficiently large $d$, the algorithm $\mathcal{A}$ will fail to be a TDS-learning algorithm for convex sets under the Gaussian distribution on $\R^d$. 
	
	For a set $S$ we will define $g_S$ as the indicator of the convex set ${\conv}(S\cap (\B_b\setminus \B_a), \B_a)$. And in this section we denote the uniform distribution over $S$ as $\mathbb{U}_S$.
	
	Let $f$ be some function mapping $\R^d \rightarrow \{\pm 1\}$ and let a set $D$ be a distribution over $\R^d$. We define $\mathcal{H}(f, D)$ to be a random
	variable supported on $\{\text{Yes}, \text{No}\}$ determined as follows (informally, if $\mathcal{A}$ is a TDS-learner for convex sets, then $\mathcal{H}(f, D)$ will allow us to distinguish $D$ from the Gaussian distribution over $\R^d$):
	\begin{enumerate}
		\item Let $\Strain \subset \R^d \times \{\pm 1\}$ consist of $N$ pairs $(\x, f(\x))$, where $\x$ are drawn i.i.d. from $\mathcal{N}(0, I_d)$.
		\item Let $\Stest$ consist of $N$ i.i.d. uniform samples from $D$.
		\item The algorithm $\mathcal{A}$ is run on $(\Strain, \Stest)$. 
		\item If $\mathcal{A}$ outputs REJECT, then output $\mathcal{H}(f, S)=$No.
		\item If $\mathcal{A}$  outputs (ACCEPT, $\hat{f}$), then
		\begin{enumerate}
			\item Obtain a new set $X_2$ of $10000$ i.i.d. samples from $D$.
			\item If, on the majority of points $\x$ in $X_2$, we have $\hat{f}(\x)=-1$, then output No. 
			\item Otherwise, output Yes.
		\end{enumerate}
	\end{enumerate}

	First, we observe that if $\mathcal{A}$ is indeed a $(\epsilon, \delta)$-TDS learning algorithm for convex sets over $\R^d$ under $\mathcal{N}(0, I_d)$, then:
	\begin{itemize}
		\item $\mathcal{H}(g_{\emptyset}, \mathcal{N}(0, I_d))$=Yes with probability at least $\frac{2}{3}$ (from here on, by $-1$ we mean the function that maps every element in $\{\pm 1\}^d$ into $-1$). This is true because, by the definition of a TDS learner, since $\Strain$ comes from the uniform distribution over $\mathcal{N}(0, I_d)$, with probability at least $1-2\delta=0.8$ the algorithm $\mathcal{A}$ will output (ACCEPT, $\hat{f}$) for some $\hat{f}$ satisfying $\pr_{\x \sim \mathcal{N}(0, I_d)}[\hat{f}(\x) \neq g_{\emptyset}(\x)] \leq \epsilon=0.1$. 
		Since $a$ was chosen is such manner that $\Pr_{\x \in \mathcal{N}(0, I_d)}[\x \in \B_a]<0.01$, and $g_{\emptyset}$ is the indicator function of $\B_a$, we have $\Pr_{\x \in \mathcal{N}(0, I_d)}[g_{\emptyset}(\x) \neq -1]<0.01$. Via a union bound, we see that $\pr_{\x \sim \mathcal{N}(0, I_d)}[\hat{f}(\x) \neq -1] \leq0.11$.
		   Then, via a standard Hoeffding bound, with probability at least $0.9$ on the majority of elements $\x$ in $X_2$ we have $\hat{f}(\x)=-1$ and then $\mathcal{H}(g_{\emptyset}, \mathcal{N}(0, I_d))$=Yes.
		\item For any set $S$ with elements in $\R^d$, we have
		$\mathcal{H}(g_S, \mathbb{U}_S)=$No with probability at least $\frac{2}{3}$. Indeed, from the definition of a TDS learning algorithm, we see that, with probability at least $1-\delta=0.9$, the algorithm $\mathcal{A}$ will either 
		\begin{itemize}
			\item Output reject, in which case $\mathcal{H}(g_S, \mathbb{U}_S)=$No.
			\item Output (ACCEPT, $\hat{f}$) with $\pr_{\x \sim \mathbb{U}_S}[\hat{f}(\x) \neq g_S(\x)]\leq \epsilon=0.1$. But we know that $g_S$ takes values $+1$ on all elements in $S$. Therefore, $\pr_{\x \sim\mathbb{U}_S}[\hat{f}(\x) \neq f_S(\x)]\leq 0.1$. Then, via a standard Hoeffding bound, with probability at least $0.9$ on the majority of elements $\x$ in $X_2$ we have $\hat{f}(\x)=+1$ and then $\mathcal{H}(g_S, \mathbb{U}_S)$=No.
		\end{itemize}
	\end{itemize}
	In particular, if $S$ is obtained by picking $M=2^{0.1d}$ i.i.d. elements from $\mathcal{N}(0, I_d)$, we have
	\begin{equation}
		\label{eq: a TDS learner lets you distinguish Gaussian}
		\left \lvert
		\pr_{
			\substack{
				S \sim \mathcal{N}(0, I_d)^{\otimes M}\\
				\text{Randomness of }\mathcal{H}
			}
		}
		[
		\mathcal{H}(g_S, \mathbb{U}_S)=\text{Yes}
		]
		-\pr_{
			\substack{
				\text{Randomness of }\mathcal{H}
			}
		}
		[
		\mathcal{H}(g_{\emptyset}, \mathcal{N}(0, I_d))=\text{Yes}
		]
		\right \rvert
		>\frac{1}{3}.
	\end{equation}
	The rest of the proof argues, via a hybrid argument, that this is impossible. To be specific, we claim that for sufficiently large $d$ the following two inequalities must hold
	\begin{equation}
		\label{eq: substituting uniform with subsample does likely indistinguishable Gaussian}
		\left \lvert
		\pr_{
			\substack{
				S \sim \mathcal{N}(0, I_d)^{\otimes M}\\
				\text{Randomness of }\mathcal{H}
			}
		}
		[
		\mathcal{H}(g_{\emptyset}, \mathbb{U}_S)=\text{Yes}
		]
		-\pr_{
			\substack{
				\text{Randomness of }\mathcal{H}
			}
		}
		[
		\mathcal{H}(g_{\emptyset}, \mathcal{N}(0, I_d))=\text{Yes}
		]
		\right \rvert
		\leq \frac{N^2}{M}.
	\end{equation}
	
	\begin{align}
		\label{eq: swapping function to be always minus one is undetectable Gaussian}
		\biggr \lvert
		\pr_{
			\substack{
				S \sim \mathcal{N}(0, I_d)^{\otimes M}\\
				\text{Randomness of }\mathcal{H}
			}
		}
		[
		\mathcal{H}(g_S, \mathbb{U}_S)=\text{Yes}
		]
		&-\pr_{
			\substack{
				S \sim \mathcal{N}(0, I_d)^{\otimes M}\\
				\text{Randomness of }\mathcal{H}
			}
		}
		[
		\mathcal{H}(g_{\emptyset}, \mathbb{U}_S)=\text{Yes}
		]
		\biggr \rvert \nonumber\\
		&\leq  \left(
		\frac{64 (b^2-a^2)}{d}\right)^{d/2} (M+N)^2.
	\end{align}
	We observe that Equation \ref{eq: substituting uniform with subsample does likely indistinguishable Gaussian} follows immediately from \Cref{fact: substituting uniform with subsample does likely indistinguishable}, because if Equation \ref{eq: substituting uniform with subsample does likely indistinguishable Gaussian} didn't hold, then we would be able to achieve advantage greater than $\frac{M}{N^2}$ when distinguishing $N$ i.i.d. uniform samples from $\mathcal{N}(0, I_d)$ and $N$ i.i.d. uniform examples from $S$.
	
	Now we prove Equation \ref{eq: swapping function to be always minus one is undetectable Gaussian}. Let $\Strain^{\mathcal{H}(g_S, \mathbb{U}_S)}$ denote the collection of pairs $\{(\x, g_S (\x))\}$ sampled in Step 1 of $\mathcal{H}(g_S, \mathbb{U}_S)$.  
	Analogously, let $\Strain^{ \mathcal{H}(g_{\emptyset}, \mathbb{U}_S)}$ denote the collection of pairs $(\x, -1)$ in set used in procedure $\mathcal{H}(g_{\emptyset}, \mathbb{U}_S)$. In either case, the elements in $\Strain^{\mathcal{H}(g_S, \mathbb{U}_S)}$ and $\Strain^{\mathcal{H}(g_{\emptyset}, \mathbb{U}_S)}$ are i.i.d. elements from $\mathcal{N}(0, I_d)$. 
	Let $\event^{ \mathcal{H}(g_{S}, \mathbb{U}_S)}$ be the event, over the choice of $S$ and the choice of $\Strain^{\mathcal{H}(g_S , \mathbb{U}_S)}$, that for each pair of points $\x^1$ and $\x^2$ in $S \cup \{\x:~ (\x, g_S(x)) \in  \Strain^{\mathcal{H}(g_S , \mathbb{U}_S)}\}$
	 we have $\norm{\x^1 -\x^2}_2 > 2\sqrt{b^2 - a^2}$.
	Analogously, let $\event^{ \mathcal{H}(g_{S}, \mathbb{U}_S)}$ be the event, over the choice of $S$ and the choice of $\Strain^{\mathcal{H}(g_\emptyset , \mathbb{U}_S)}$, that for each pair of points $\x^1$ and $\x^2$ in $S \cup \{\x:~ (\x, g_\emptyset(x)) \in  \Strain^{\mathcal{H}(g_\emptyset , \mathbb{U}_S)}\}$
	we have $\norm{\x^1 -\x^2}_2 > 2\sqrt{b^2 - a^2}$.

	We first observe that subject to $\event^{ \mathcal{H}(g_{\emptyset}, \mathbb{U}_S)}$ it is the case that for every $\{(\x,g_S(\x))\}$ in $\Strain^{\mathcal{H}(g_S, \mathbb{U}_S)}$ it is the case that $g_S=g_\emptyset(x)$. For $\x \in \B_a \cup (\R \setminus \B_b)$ this is immediate because $g_S$ as the indicator of the convex set ${\conv}(S\cap (\B_b\setminus \B_a), \B_a)$. It remains to show this only for points $(\x, g_S(\x))\in \Strain^{\mathcal{H}(g_S , \mathbb{U}_S)}$ that also satisfy $\x \in \B_b\setminus \B_a$. Since $\x$ is outside $\B_a$, we have $g_{\emptyset}(\x)=-1$ and therefore we would like to show that $g_S(\x)$ also equals to $-1$. This is true because from Fact \ref{fact: far away points have well behaved convex hull} it is the case that if $\event^{ \mathcal{H}(g_{\emptyset}, \mathbb{U}_S)}$ takes place, then for every such $\x$ we have 
	\begin{multline*}
	{\conv}
	(\x, \B_a)
	\cap 
	{\conv}
	(S \cap (\B_b \setminus \B_a), \B_a)
	=
	{\conv}
	(\x, \B_a)
	\cap \left(
	\bigcup_{\z \in S \cap (\B_b \setminus \B_a)}
	{\conv}
	(\z \cap (\B_b \setminus \B_a), \B_a)
	\right)=\\
	\bigcup_{\z \in S \cap (\B_b \setminus \B_a)}
	\left(
		{\conv}
	(\x, \B_a)
	\cap \left(
	{\conv}
	(\z \cap (\B_b \setminus \B_a), \B_a)
	\right)
	\right)=\B_a,
	\end{multline*}
	which in particular implies that $\x$ is not in the convex hull ${\conv}
	(S \cap (\B_b \setminus \B_a), \B_a)$ and $g_S(\x)=-1$, concluding the proof of our observation.
	
	 We therefore conclude that distributions of $(S,\Strain^{\mathcal{H}(g_S, \mathbb{U}_S)})$ and $(S, \Strain^{\mathcal{H}(\mathcal{H}(g_\emptyset, \mathbb{U}_S)})$ are identically distributed conditioned on $\event^{ \mathcal{H}(g_S, \mathbb{U}_S)}$ and $\event^{ \mathcal{H}(g_{\emptyset}, \mathbb{U}_S)}$ respectively, which implies that
	\begin{align}
		\label{eq: conditioned on f_S beeing zero on all sample distribution of distinguisher is same Gaussian}
		\pr_{
			\substack{
				S \sim \mathcal{N}(0, I_d)^{\otimes M}\\
				\text{Randomness of }\mathcal{H}
			}
		}
		\biggr[
		\mathcal{H}(g_S, \mathbb{U}_S)=\text{Yes}
		\bigg \vert 
		\event^{ \mathcal{H}(g_S, \mathbb{U}_S)}
		\biggr]
		=
		\pr_{
			\substack{
				S \sim \mathcal{N}(0, I_d)^{\otimes M}\\
				\text{Randomness of }\mathcal{H}
			}
		}
		\biggr[
		\mathcal{H}(g_{\emptyset}, \mathbb{U}_S)=\text{Yes}
		\bigg \vert 
		\event^{ \mathcal{H}(g_{\emptyset}, \mathbb{U}_S)}
		\biggr],
	\end{align}
	 
	  We also observe that 
	\begin{equation}
		\label{eq: very likely all labels are the same Gaussian}
		\pr_{
			\substack{
				S \sim \mathcal{N}(0, I_d)^{\otimes M}\\
				\text{Randomness of }\mathcal{H}
			}
		}
		\left[
		\event^{ \mathcal{H}(g_S, \mathbb{U}_S)}
		\right]
		=
		\pr_{
			\substack{
				S \sim \mathcal{N}(0, I_d)^{\otimes M}\\
				\text{Randomness of }\mathcal{H}
			}
		}
		\left[
		\event^{ \mathcal{H}(g_{\emptyset}, \mathbb{U}_S)}
		\right]
		\leq \left(
		\frac{64 (b^2-a^2)}{d} \right)^{d/2} (M+N)^2,
	\end{equation}
	where the equality of the two probabilities follows immediately by definition, and the upper bound of $\left(\frac{64 (b^2-a^2)}{d}\right)^{d/2} (M+N)^2$ is true by applying Fact \ref{fact: pairs of Gaussians far} to each relevant pair of points.
	Therefore, we obtain the bound in Equation \ref{eq: very likely all labels are the same Gaussian}.
	
	Overall, combining Equation \ref{eq: substituting uniform with subsample does likely indistinguishable Gaussian} with Equation \ref{eq: swapping function to be always minus one is undetectable Gaussian} and substituting $N\leq 2^{0.04 d}$, $M=2^{0.1 d}$ as well as $a=\sqrt{d-2\sqrt{d\ln \left(\frac{1}{50}\right)}}$ and $b=\sqrt{d+2\sqrt{d\ln \left(\frac{1}{50}\right)}
	+2\ln \left(\frac{1}{50}\right)}$, we obtain
	\begin{multline*}
		\left \lvert
		\pr_{
			\substack{
				S \sim \mathcal{N}(0, I_d)^{\otimes M}\\
				\text{Randomness of }\mathcal{H}
			}
		}
		[
		\mathcal{H}(f_S, S)=\text{Yes}
		]
		-\pr_{
			\substack{
				\text{Randomness of }\mathcal{H}
			}
		}
		[
		\mathcal{H}(g_{\emptyset}, \mathcal{N}(0, I_d))=\text{Yes}
		]
		\right \rvert
		\leq \\ \frac{N^2}{M}
		+\left(\frac{64 (b^2-a^2)}{d} \right)^{d/2} (M+N)^2
		=
		2^{-0.02d}
		+\left(O\left(\frac{1}{\sqrt{d}}\right)\right)^{d/2}
		=2^{-\Omega(d)},
	\end{multline*}
	which is in contradiction with Equation \ref{eq: a TDS learner lets you distinguish Gaussian} for a sufficiently large value of $d$. This proves that $\mathcal{A}$ is not a $(\epsilon, \delta)$-TDS learning algorithm for convex sets. 
\end{proof}

\bibliographystyle{alpha}
\bibliography{main}
\appendix

\section{Sample Complexity of TDS Learning}\label{appendix:sc}

In the previous sections, we explored a number of computational aspects of TDS learning, deriving dimension efficient algorithms for several instantiations of our setting. In this section, we focus on the statistical aspects of TDS learning.
%and we provide several impossibility results, as well as an upper bound on the sample complexity of TDS learning (even in the agnostic setting), in terms of the Rademacher complexity. 
%In contrast to the computational landscape of learning under shifting distributions, 
There are several prior works in the literature of domain adaptation that study the statistical landscape of the problem of learning under shifting distributions (see, e.g., \cite{ben2006analysis,blitzer2007learning,mansour2009domadapt,ben2010theory,david2010impossibility}). All of the previous generalization upper bounds on this problem involve some discrepancy term, which quantifies the amount of distribution shift, as well as some additional terms that are typically considered small for reasonable settings. For a concept class $\C:\X\to \cube{}$, considering that the error term $\optcommon$ (see Eq. \eqref{equation:definition:optcommon}) is small is a standard assumption in domain adaptation (see, e.g., \cite{ben2006analysis,blitzer2007learning}). Furthermore, one standard measure of discrepancy is defined as follows.

\begin{definition}[Discrepancy Distance, \cite{blitzer2007learning}]\label{definition:discrepancy}
    Let $\X\subset\R^d$ and let $\C$ be a concept class mapping $\X$ to $\cube{}$. For distributions $\Dgeneric,\Dgeneric'$ over $\X$, we define the discrepancy distance $\disc_{\C}(\Dgeneric,\Dgeneric')$ as follows.
    \[
        \disc_\C(\Dgeneric,\Dgeneric') = \sup_{\concept,\concept'\in \C}\Bigr| \pr_{\Dgeneric}[\concept(\x)\neq \concept'(\x)] - \pr_{\Dgeneric'}[\concept(\x)\neq \concept'(\x)] \Bigr|
    \]
\end{definition}
In particular, \cite{ben2006analysis,blitzer2007learning} observe that for any $\concept\in\C$ and distributions $\Dtrainjoint,\Dtestjoint$ over $\X\times\cube{}$ the following is true.
\begin{equation}
    \error(\concept;\Dtestjoint) \le \error(\concept;\Dtrainjoint) + \disc_\C(\Dtrainmarginal,\Dtestmarginal) + \optcommon(\C;\Dtrainjoint,\Dtestjoint) \label{equation:generalization-bound-population}
\end{equation}
The bound of Eq. \eqref{equation:generalization-bound-population} can be translated to a generalization bound for domain adaptation, through the use Rademacher complexity, whose definition is provided below.
\begin{definition}[Rademacher Complexity]\label{definition:rademacher-complexity}
    Let $\X\subseteq\R^d$, let $\Dgeneric$ be a distribution over $\X$ and let $\C$ be a concept class mapping $\X$ to $\cube{}$. For a set of $m$ samples $\Sunlabelled = (\x^{(1)},\x^{(2)},\dots,\x^{(m)})$ drawn independently from $\Dgeneric$, we define the empirical Rademacher complexity of $\C$ w.r.t. $\Sunlabelled$ as follows
    \[
        \rademacherempirical_\Sunlabelled(\C) = \frac{2}{m}\E \sup_{\concept\in \C} \sum_{j=1}^m \sigma_j \concept(\x^{(j)})\,, \text{ where the expectation is over }\sigma \sim\Unif(\cube{d})
    \]
    Moreover, we define the Rademacher complexity of $\C$ at $m$ w.r.t. $\Dgeneric$ as $\rademacher_m(\C;\Dgeneric) = \E[\rademacherempirical_\Sunlabelled(\C)]$, where the expectation is over ${\Sunlabelled\sim\Dgeneric^{\otimes m}}$.
\end{definition}

Corollaries 6, 7 in \cite{mansour2009domadapt}, demonstrate that the discrepancy between two distributions is upper bounded as follows.
\begin{proposition}[Bounding the Discrepancy, Corollary 7 in \cite{mansour2009domadapt}]\label{proposition:bounding-discrepancy}
    Consider $\X\subseteq\R^d$, a concept class $\C\subseteq\{\X\to\cube{d}\}$ and distributions $\Dgeneric,\Dgeneric'$ over $\X$. Then for any $\delta>0$, $m,m'\in\N$, if $\Sunlabelled,\Sunlabelled'$ are independent examples from $\Dgeneric,\Dgeneric'$, respectively, of sizes $m,m'$, the following is true.
    \[
        \disc_\C(\Dgeneric,\Dgeneric') \le \disc_\C(\Sunlabelled,\Sunlabelled') + 4\rademacherempirical_\Sunlabelled(\C)+4\rademacherempirical_{\Sunlabelled'}(\C)+{3}\,{(\log({4}/{\delta}))^{1/2}}\sqrt{\frac{1}{m}+\frac{1}{m'}}
    \]
\end{proposition}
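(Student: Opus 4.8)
The plan is to view $\disc_\C(\Dgeneric,\Dgeneric')$ as a two-sided uniform deviation over the \emph{disagreement class}
\[
    \mathcal H \;:=\; \bigl\{\, \x\mapsto \ind\{f(\x)\neq f'(\x)\} \;:\; f,f'\in\C \,\bigr\},
\]
since by \Cref{definition:discrepancy} we have $\disc_\C(\Dgeneric,\Dgeneric') = \sup_{h\in\mathcal H}\bigl|\E_{\Dgeneric}[h]-\E_{\Dgeneric'}[h]\bigr|$ and, interpreting $\disc_\C(\Sunlabelled,\Sunlabelled')$ as the discrepancy w.r.t.\ the empirical measures, $\disc_\C(\Sunlabelled,\Sunlabelled') = \sup_{h\in\mathcal H}\bigl|\E_{\Sunlabelled}[h]-\E_{\Sunlabelled'}[h]\bigr|$. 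First I would insert the empirical measures via the triangle inequality: for every $h\in\mathcal H$,
\[
    \bigl|\E_{\Dgeneric}[h]-\E_{\Dgeneric'}[h]\bigr| \;\le\; \bigl|\E_{\Dgeneric}[h]-\E_{\Sunlabelled}[h]\bigr| + \bigl|\E_{\Sunlabelled}[h]-\E_{\Sunlabelled'}[h]\bigr| + \bigl|\E_{\Sunlabelled'}[h]-\E_{\Dgeneric'}[h]\bigr|,
\]
and then take the supremum over $h$, using that the supremum of a sum is at most the sum of the suprema; the middle supremum is exactly $\disc_\C(\Sunlabelled,\Sunlabelled')$. It therefore suffices to control $\sup_{h\in\mathcal H}\bigl|\E_{\Dgeneric}[h]-\E_{\Sunlabelled}[h]\bigr|$ and, symmetrically, $\sup_{h\in\mathcal H}\bigl|\E_{\Dgeneric'}[h]-\E_{\Sunlabelled'}[h]\bigr|$.

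Next I would apply the standard Rademacher-based uniform convergence argument to the class $\mathcal H$. Since every $h\in\mathcal H$ is $\{0,1\}$-valued, altering one of the $m$ points of $\Sunlabelled$ changes $\E_{\Sunlabelled}[h]$ by at most $1/m$, so McDiarmid's bounded-differences inequality applies to $\Phi(\Sunlabelled):=\sup_{h\in\mathcal H}\bigl(\E_{\Dgeneric}[h]-\E_{\Sunlabelled}[h]\bigr)$; combined with the symmetrization lemma, which bounds $\E[\Phi]$ by the empirical Rademacher complexity of $\mathcal H$, one gets that with probability at least $1-\delta/2$,
\[
    \sup_{h\in\mathcal H}\bigl|\E_{\Dgeneric}[h]-\E_{\Sunlabelled}[h]\bigr| \;\le\; 2\,\rademacherempirical_{\Sunlabelled}(\mathcal H) + 3\sqrt{\tfrac{\log(4/\delta)}{2m}},
\]
and identically for $\Dgeneric',\Sunlabelled'$ with $m'$ in place of $m$; the $\delta/2$ budget on each side absorbs the two-sided estimate and leaves room for a union bound between the two samples.

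The last ingredient is to replace $\rademacherempirical_{\Sunlabelled}(\mathcal H)$ by $\rademacherempirical_{\Sunlabelled}(\C)$. Writing $\ind\{f(\x)\neq f'(\x)\}=\tfrac12\bigl|f(\x)-f'(\x)\bigr|$ and noting that $t\mapsto\tfrac12|t|$ is $\tfrac12$-Lipschitz and vanishes at the origin, Talagrand's contraction lemma gives $\rademacherempirical_{\Sunlabelled}(\mathcal H)\le\tfrac12\,\rademacherempirical_{\Sunlabelled}\bigl(\{f-f':f,f'\in\C\}\bigr)$, and subadditivity of empirical Rademacher complexity together with $\rademacherempirical_{\Sunlabelled}(-\C)=\rademacherempirical_{\Sunlabelled}(\C)$ bounds the right-hand side by $\rademacherempirical_{\Sunlabelled}(\C)$ (in the looser bookkeeping that does not invoke contraction, one simply uses $\rademacherempirical_{\Sunlabelled}(\mathcal H)\le 2\,\rademacherempirical_{\Sunlabelled}(\C)$, which together with the symmetrization step yields exactly the constant $4$ in the statement). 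Putting the three pieces together, union-bounding over the two samples, and using $\sqrt{1/(2m)}+\sqrt{1/(2m')}\le\sqrt{1/m+1/m'}$, produces the claimed inequality. I expect the only step requiring any idea to be the passage from $\rademacherempirical_{\Sunlabelled}(\mathcal H)$ to $\rademacherempirical_{\Sunlabelled}(\C)$ — controlling the Rademacher complexity of the disagreement class by that of the base class; everything else is a routine instantiation of the symmetrization/McDiarmid generalization bound.
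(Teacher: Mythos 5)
This proposition is not proved in the paper at all: it is quoted verbatim as Corollary~7 of \cite{mansour2009domadapt} and used as a black box, so there is no ``paper proof'' to compare against. Your reconstruction is the standard argument and, in outline, it is the one given in that reference: split $\sup_{h}|\E_{\Dgeneric}h-\E_{\Dgeneric'}h|$ via the empirical measures by the triangle inequality, control each one-sided deviation over the disagreement class $\mathcal H=\{\x\mapsto\ind\{f(\x)\neq f'(\x)\}\}$ by McDiarmid plus symmetrization, and reduce $\rademacherempirical(\mathcal H)$ to $\rademacherempirical(\C)$ via Lipschitz contraction and subadditivity. That strategy does prove the statement.

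Two pieces of bookkeeping are slightly off and worth flagging. First, your contraction route gives $\rademacherempirical_\Sunlabelled(\mathcal H)\le\tfrac12\rademacherempirical_\Sunlabelled(\{f-f'\})\le\rademacherempirical_\Sunlabelled(\C)$, so after symmetrization you end up with a $2\rademacherempirical_\Sunlabelled(\C)$ term, not the stated $4\rademacherempirical_\Sunlabelled(\C)$; this is \emph{stronger} than the proposition (so it certainly implies it), but you should be aware you are not matching the constant $4$ via your main route, and the parenthetical claim that ``one simply uses $\rademacherempirical(\mathcal H)\le 2\rademacherempirical(\C)$'' without contraction is not actually trivial --- writing $\ind\{f\neq f'\}=(1-ff')/2$ reduces the problem to the Rademacher complexity of the product class $\{ff'\}$, which still requires some comparison argument, so the word ``simply'' is misleading. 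Second, the $\delta$-budget: after a union bound over the two samples, each one-sided McDiarmid step runs at confidence $\delta/4$, giving $\log(8/\delta)$ rather than $\log(4/\delta)$ inside the square root; the final inequality $\sqrt{1/(2m)}+\sqrt{1/(2m')}\le\sqrt{1/m+1/m'}$ does not create enough slack to hide this. To land exactly on the stated constants you would need to track the failure-probability split the way \cite{mansour2009domadapt} does; as written your argument proves the proposition up to mildly worse explicit constants, which is fine for how the paper uses it.
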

Combining inequality \eqref{equation:generalization-bound-population} with \Cref{proposition:bounding-discrepancy} and standard generalization bounds for classification, yields a data-dependent generalization bound for domain adaptation whose only unknown parameter is $\optcommon$. In our setting this readily implies the following sample complexity upper bound in terms of the Rademacher complexity of the concept class $\C$.
\begin{corollary}[Sample Complexity upper bound for TDS learning]\label{corollary:sc-upper-bound}
    Let $\C\subseteq\{\X\to \cube{}\}$ be a hypothesis class and $\Dgeneric$ a distribution over $\X$ such that $\rademacher_m(\C;\Dgeneric) \le \eps/10$.
    The algorithm that runs the Empirical Risk Minimizer on training data and accepts only when both the empirical discrepancy distance between the training and test unlabelled examples, i.e. $\disc_\C(\Sunlabelled_\train,\Sunlabelled_\test)$, and the Rademacher complexity with respect to the test examples, i.e. $\rademacherempirical_{\Sunlabelled_\test}(\C)$, are $O(\eps)$, is an $(\eps,\delta)$-TDS learning algorithm for $\C$ up to error $2\optcommon+\eps$ with sample complexity $O(m+\frac{1}{\eps^2}\log(1/\delta))$. Moreover, if there is a concept in $\C$ with zero training error, the same is true up to error $\optcommon+ \eps$.
\end{corollary}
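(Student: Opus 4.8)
The plan is to instantiate the classical domain-adaptation bound \eqref{equation:generalization-bound-population} at the hypothesis $\hat\concept$ returned by empirical risk minimization on $\Strain$, and to control each of its three terms using \Cref{proposition:bounding-discrepancy}, a standard Rademacher uniform-convergence bound for $0/1$ loss, and the two acceptance conditions of the algorithm. For soundness, suppose the algorithm accepts and outputs $\hat\concept=\argmin_{\concept\in\C}\error(\concept;\Strain)$. Applying \eqref{equation:generalization-bound-population} to $\hat\concept$ gives $\error(\hat\concept;\Dtestjoint)\le\error(\hat\concept;\Dtrainjoint)+\disc_\C(\Dtrainmarginal,\Dtestmarginal)+\optcommon$, so it suffices to show $\error(\hat\concept;\Dtrainjoint)\le\optcommon+O(\eps)$ and $\disc_\C(\Dtrainmarginal,\Dtestmarginal)\le O(\eps)$ with high probability. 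The first is the usual ERM argument: the $0/1$-loss class of $\C$ has Rademacher complexity at most $\tfrac12\rademacher_{|\Strain|}(\C;\Dgeneric)\le\eps/20$, so by symmetrization $\sup_{\concept\in\C}|\error(\concept;\Dtrainjoint)-\error(\concept;\Strain)|\le\eps/10+O(\sqrt{\log(1/\delta)/|\Strain|})$ with probability $1-\delta/4$, which together with $\error(\hat\concept;\Strain)\le\error(\coptcommon;\Strain)$ and $\error(\coptcommon;\Dtrainjoint)\le\optcommon$ yields $\error(\hat\concept;\Dtrainjoint)\le\optcommon+O(\eps)$. For the discrepancy term I would plug into \Cref{proposition:bounding-discrepancy}: acceptance guarantees $\disc_\C(\Sunlabelled_\train,\Sunlabelled_\test)=O(\eps)$ and $\rademacherempirical_{\Sunlabelled_\test}(\C)=O(\eps)$; the remaining empirical Rademacher term $\rademacherempirical_{\Sunlabelled_\train}(\C)$ is $O(\eps)$ with probability $1-\delta/4$ by McDiarmid's inequality around its mean $\rademacher_{|\Sunlabelled_\train|}(\C;\Dgeneric)\le\eps/10$; and the concentration term is $O(\eps)$ once all sample sizes exceed $\Theta(\log(1/\delta)/\eps^2)$. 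Hence $\disc_\C(\Dtrainmarginal,\Dtestmarginal)=O(\eps)$, and we conclude $\error(\hat\concept;\Dtestjoint)\le 2\optcommon+O(\eps)$, which is $2\optcommon+\eps$ after rescaling the hidden constant.

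For completeness, assume $\Dtestmarginal=\Dtrainmarginal=\Dgeneric$, so $\disc_\C(\Dgeneric,\Dgeneric)=0$. The same uniform-convergence argument that underlies \Cref{proposition:bounding-discrepancy} also gives the reverse inequality, bounding $\disc_\C(\Sunlabelled_\train,\Sunlabelled_\test)$ by $0$ plus Rademacher and concentration terms; by the McDiarmid estimates above these are all $O(\eps)$ with probability $1-\delta/4$, so the first acceptance check passes provided its threshold constant is large enough. Similarly $\rademacherempirical_{\Sunlabelled_\test}(\C)\le\rademacher_{|\Sunlabelled_\test|}(\C;\Dgeneric)+O(\sqrt{\log(1/\delta)/|\Sunlabelled_\test|})=O(\eps)$, so the second check passes as well, and a union bound shows the algorithm accepts with probability at least $1-\delta$. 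The sample complexity is $O(m)$ labelled training examples (to make $\rademacher_{|\Strain|}(\C;\Dgeneric)\le\eps/10$ and to run the ERM bound) plus $O(\tfrac{1}{\eps^2}\log(1/\delta))$ further examples so that every Hoeffding/McDiarmid slack is $O(\eps)$.

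The realizable improvement is then immediate: if some concept in $\C$ has zero population (hence zero empirical) training error, then $\error(\hat\concept;\Strain)=0$, so the uniform-convergence bound gives $\error(\hat\concept;\Dtrainjoint)=O(\eps)$ without an $\optcommon$ term, and \eqref{equation:generalization-bound-population} then yields $\error(\hat\concept;\Dtestjoint)\le\optcommon+O(\eps)$. Since this result is essentially a corollary of prior domain-adaptation bounds, I do not expect a deep obstacle; the step requiring the most care is the bookkeeping — invoking \Cref{proposition:bounding-discrepancy} in both directions, bounding the training-marginal empirical Rademacher complexity (which the algorithm never tests, but which is free since $\Dtrainmarginal=\Dgeneric$), relating $\rademacher_m(\C;\Dgeneric)$ to the Rademacher complexity of the associated $0/1$ loss class, and union-bounding over the handful of concentration events while keeping every slack term at $O(\eps)$.
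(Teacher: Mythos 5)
Your proof is correct and follows exactly the route the paper indicates: instantiating the bound \eqref{equation:generalization-bound-population} at the ERM output, controlling the ERM training error via a Rademacher uniform-convergence bound over $\Dgeneric$, controlling the population discrepancy via \Cref{proposition:bounding-discrepancy} combined with the two acceptance tests and concentration of the untested training-sample empirical Rademacher complexity around $\rademacher_m(\C;\Dgeneric)$, and reversing the same uniform-convergence argument for completeness when $\Dtestmarginal=\Dtrainmarginal$. The paper gives no detailed proof of this corollary (it states that it "readily follows" from the cited domain-adaptation results), so your bookkeeping is the appropriate level of detail and I see no gaps.
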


We emphasize that, while \Cref{corollary:sc-upper-bound} readily follows from prior results in the literature of domain adaptation, it highlights an important distinction between domain adaptation and TDS learning: A TDS learning algorithm, upon acceptance, achieves error that does not scale with the discrepancy between the training and test marginal distributions, but only a term that depends on the quantity $\optcommon$, which, as we show in \Cref{theorem:error-lower-bound-easy}, is unavoidable. %The main appeal of TDS learning is that it provides the first rigorous theoretical framework for analyzing the computational complexity of learning under shifting distributions.

\section{PQ Learning and Distribution-Free TDS Learning}
\label{sec: PQ implies TDS}

In recent years, there has been a vast amount of work on the problem of learning under shifting distributions. One of the most relevant models to TDS learning is PQ learning (see \cite{goldwasser2020beyond,kalai2021efficient}), which was defined by \cite{goldwasser2020beyond}. In this section, we establish a connection between PQ learning and TDS learning and, in particular, we show that TDS learning can be reduced to PQ learning, thereby inheriting all of the existing results in the latter framework. Unfortunately, to the best of our knowledge, most of the positive results on the PQ learning framework make strong assumptions regarding oracle access to solvers of learning primitives that are typically hard to solve. Nonetheless, PQ learning is an important theoretical framework for learning under arbitrary covariate shifts and it is an interesting open question whether our methods can be extended to provide positive results for the not-easier problem of PQ learning.

In the PQ learning framework, a learner outputs a pair $(h, \Xregion)$, where $h:\X\to\cube{}$ is a classifier and $\Xregion\subseteq\X$ is a subset of the feature space where one can be confident on the predictions of $h$. In particular, the PQ learning model is defined as follows.

\begin{definition}[PQ Learning, \cite{goldwasser2020beyond,kalai2021efficient}]\label{definition:pq-learning}
    Let $\X\subseteq\R^d$ be a set and $\C\subseteq{\X\to\cube{}}$ a concept class. For $\eps,\delta\in(0,1)$ we say that algorithm $\A$ PQ learns $\C$ up to error $\eps$ and probability of failure $\delta$ if for any distributions $\Dtrainjoint,\Dtestjoint$ over $\X\times\cube{}$ such that there is some $\coptcommon\in\C$ so that $y=\coptcommon(\x)$ for any $(\x,y)$ drawn from either $\Dtrainjoint$ or $\Dtestjoint$, algorithm $\A$, upon receiving a large enough number of labelled samples from $\Dtrainjoint$ and a large enough number of unlabelled samples from $\Dtestmarginal$, outputs a pair $(h,\Xregion)$ such that $h:\X\to\cube{}$, $\Xregion\subseteq\X$ and with probability at least $1-\delta$ the following is true.
    \[
        \pr_{\x\sim\Dtrainmarginal}[\x\not\in\Xregion] \le \eps \text{ and } \pr_{(\x,y)\sim\Dtestjoint}[h(\x)\neq y \text{ and }\x\in\Xregion] \le \eps
    \]
\end{definition}

We note that the above definition of PQ learning is distribution-free, i.e., the guarantees hold for any distribution and not with respect to a specific target distribution. In \Cref{definition:tds-learning} for TDS learning, the completeness criterion is stated with respect to a particular target distribution that is the same as the training distribution. However, in order to demonstrate a connection between PQ learning and TDS learning, we now define Distribution-Free TDS learning.

\begin{definition}[Distribution-free TDS Learning)]
\label{definition:dist-free-tds-learning}
    Let $\X\subseteq \R^d$ and consider a concept class $\C\subseteq\{\X\to \cube{}\}$. For $\eps,\delta\in(0,1)$, we say that an algorithm $\A$ testably learns $\C$ under distribution shifts up to error $\eps$ and probability of failure $\delta$ if the following is true. For any distributions $\Dtrainjoint, \Dtestjoint$ over $\X\times\cube{}$ such that there is some $\coptcommon\in\C$ such that $y=\coptcommon(\x)$ for any $(\x,y)$ drawn from either $\Dtrainjoint$ or $\Dtestjoint$, algorithm $\A$, upon receiving a large enough set of labelled samples $\Strain$ from the training distribution $\Dtrainjoint$ and a large enough set of unlabelled samples $\Stest$ from the test distribution $\Dtestmarginal$, either rejects $(\Strain,\Stest)$ or accepts and outputs a hypothesis $h:\X\to \cube{}$ with the following guarantees.
    \begin{enumerate}[label=\textnormal{(}\alph*\textnormal{)}]
    \item (Soundness.) With probability at least $1-\delta$ over the samples $\Strain,\Stest$ we have: 
    
    If $A$ accepts, then the output $h$ satisfies $\error(h;\Dtestjoint) \leq \eps$.
    \item (Completeness.) Whenever $\Dtestmarginal = \Dtrainmarginal$, $A$ accepts with probability at least $1-\delta$ over the samples $\Strain,\Stest$.
    \end{enumerate}
\end{definition}

We are now ready to prove that distribution-free TDS learning reduces to PQ learning.

\begin{proposition}[TDS learning via PQ learning]\label{theorem:tds-via-pq}
    \Cref{algorithm:tds-via-pq} reduces TDS to PQ learning. In particular, for $\eps,\delta\in(0,1)$, PQ learning algorithm $\A$ and a concept class $\C$, \Cref{algorithm:tds-via-pq}, upon receiving $m_P+\frac{C}{\eps^2}\log(1/\delta)$ labelled examples $\Strain$ from the training distribution and $m_Q+\frac{C}{\eps^2}\log(1/\delta)$ unlabelled examples $\Stest$ from the test distribution where $m_P,m_Q$ are such that $\A$ is an $(\eps/4,\delta)$-PQ learning algorithm for $\C$ given $m_P$ training and $m_Q$ test examples, $(\eps,\delta)$-TDS learns $\C$.
\end{proposition}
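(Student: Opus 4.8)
The plan is to treat the PQ learner $\A$ as a black box and wrap it with a single estimation step on fresh unlabelled test data. First I would have \Cref{algorithm:tds-via-pq} partition $\Strain$ into a block $\Strain^{(1)}$ of $m_P$ labelled examples (with the remaining $\tfrac{C}{\eps^2}\log(1/\delta)$ labelled examples kept in reserve, e.g.\ to re-estimate training-side quantities if a sharper rule is wanted) and partition $\Stest$ into a block $\Stest^{(1)}$ of $m_Q$ unlabelled examples and a disjoint block $\Stest^{(2)}$ of $\tfrac{C}{\eps^2}\log(1/\delta)$ unlabelled examples. Run $\A$ on $(\Strain^{(1)},\Stest^{(1)})$ to get a pair $(h,\Xregion)$ with $h:\X\to\cube{}$ and $\Xregion\subseteq\X$; this is legitimate because the realizability hypothesis in \Cref{definition:dist-free-tds-learning} (some $\coptcommon\in\C$ consistent with both $\Dtrainjoint$ and $\Dtestjoint$) is exactly the hypothesis under which $\A$ is guaranteed to succeed. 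The algorithm then computes the empirical mass $\hat q=\pr_{\x\sim\Stest^{(2)}}[\x\notin\Xregion]$ of the complement of the confidence region, REJECTs if $\hat q>\eps/2$, and otherwise ACCEPTs and returns $h$ (the behaviour of $h$ off $\Xregion$ is irrelevant to the guarantee we need).

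For soundness I would condition on the event of probability at least $1-\delta$ that $\A$ delivers its $(\eps/4,\delta)$-PQ guarantee, i.e.\ $\pr_{\Dtrainmarginal}[\x\notin\Xregion]\le\eps/4$ and $\pr_{(\x,y)\sim\Dtestjoint}[h(\x)\neq y\ \wedge\ \x\in\Xregion]\le\eps/4$. The key structural point is that $\Xregion$ is a deterministic function of $(\Strain^{(1)},\Stest^{(1)})$, so conditioned on these the points of $\Stest^{(2)}$ remain i.i.d.\ from $\Dtestmarginal$ and the indicators $\ind\{\x\notin\Xregion\}$ are i.i.d.\ Bernoulli with mean $\pr_{\Dtestmarginal}[\x\notin\Xregion]$; a Hoeffding bound (with $C$ a large enough constant) then gives $|\hat q-\pr_{\Dtestmarginal}[\x\notin\Xregion]|\le\eps/4$ with probability at least $1-\delta$. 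On acceptance $\hat q\le\eps/2$, hence $\pr_{\Dtestmarginal}[\x\notin\Xregion]\le 3\eps/4$, and using the inclusion $\{h(\x)\neq y\}\subseteq\{h(\x)\neq y\wedge \x\in\Xregion\}\cup\{\x\notin\Xregion\}$,
\[
  \error(h;\Dtestjoint)\le \pr_{(\x,y)\sim\Dtestjoint}[h(\x)\neq y\wedge \x\in\Xregion]+\pr_{\Dtestmarginal}[\x\notin\Xregion]\le \eps/4+3\eps/4=\eps.
\]

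For completeness, suppose $\Dtestmarginal=\Dtrainmarginal$. Then the PQ guarantee gives $\pr_{\Dtestmarginal}[\x\notin\Xregion]=\pr_{\Dtrainmarginal}[\x\notin\Xregion]\le\eps/4$, so the same Hoeffding bound yields $\hat q\le\eps/4+\eps/4=\eps/2$ with probability at least $1-\delta$, and the algorithm accepts. A union bound over the at most two bad events (PQ failure, concentration failure) gives overall failure at most $2\delta$, which is brought down to $\delta$ by running $\A$ with failure parameter $\delta/2$ and taking $|\Stest^{(2)}|=\tfrac{C}{\eps^2}\log(2/\delta)$, an adjustment absorbed into the constant $C$.

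I expect the only genuinely delicate point to be the independence needed for the Hoeffding step: since $\Xregion$ is produced by $\A$ from data, the test-mass estimate $\hat q$ must be formed from a disjoint, unseen batch of test points, which is exactly why the reduction pays $\tfrac{C}{\eps^2}\log(1/\delta)$ unlabelled examples on top of $\A$'s own sample complexity $m_Q$ (and symmetrically the extra labelled examples). Everything else — the error-decomposition inclusion and the Hoeffding/union-bound bookkeeping — is routine.
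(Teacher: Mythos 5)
Your proof is correct and follows essentially the same route as the paper's: run the PQ learner on a fresh block of unlabelled test data to get $(h,\Xregion)$, estimate the test mass of $\X\setminus\Xregion$ on a held-out test batch, reject if it is too large, and otherwise decompose $\error(h;\Dtestjoint)$ into the PQ error inside $\Xregion$ plus the test mass outside $\Xregion$, with a Hoeffding bound justifying the held-out estimate and a union bound over the PQ-failure and concentration events. The only differences are cosmetic — you use an $\eps/2$ rejection threshold where the paper uses $\eps/3$, and you explicitly hold some training examples in reserve (which the algorithm never actually needs) — so the argument is the same.
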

    
\begin{proof}
    Let $C>0$ be a sufficiently large universal constant. For \textbf{soundness}, we observe that upon acceptance, we have $\pr_{\x\sim\Sunlabelled_2}[\x\not\in\Xregion]$ and by a Hoeffding bound, since $m_2 \ge \frac{C}{\eps_2}\log(1/\delta)$, we have $\pr_{\x\sim\Dtestmarginal}[\x\not\in\Xregion] \le 2\eps/3$. By using the fact that $\error(h;\Dtestjoint) \le \pr_{\x\sim\Dtestmarginal}[\x\in\Xregion] + \pr_{\x\sim\Dtestmarginal}[\x\in\Xregion]$ and the guarantee of the PQ learner we obtain $\error(h;\Dtestjoint)\le \eps$, with probability at least $1-\delta$.
    For \textbf{completeness}, we use the definition of PQ learning and a Hoeffding bound to show that with probability at least $1-\delta$, \Cref{algorithm:tds-via-pq} accepts whenever $\Dtestmarginal=\Dtrainmarginal$.
\end{proof}

\begin{algorithm}
	\caption{TDS learning through PQ learning}\label{algorithm:tds-via-pq}
	\KwIn{Sets $\Strain$, $\Stest$, parameters $\eps,\delta\in(0,1)$, $(\eps'=\frac{\eps}{4},{\delta})$-PQ learner $\A$}
		Set $m_1=m_Q$, $m_2=\frac{C}{\eps^2}\log(1/\delta)$ and split $\Stest$ in $\Sunlabelled_1,\Sunlabelled_2$ with sizes $m_1,m_2$.\\
        Run algorithm $\A$ on $(\Strain,\Sunlabelled_1)$ and receive output $(h,\Xregion)$.\\
        \textbf{Reject} if $\pr_{\x\sim \Sunlabelled_2}[\x\not\in\Xregion]> \eps/3$.\\
        \textbf{Otherwise,} output $h$ and terminate.
\end{algorithm}

The simple reduction we provided in \Cref{theorem:tds-via-pq} implies that all of the positive results on PQ learning transfer to TDS learning. Moreover, note that the reduction does not alter the training and test distributions between the corresponding TDS and PQ algorithms and, therefore, would hold even in the distribution specific setting. This is not true, however, about the following corollary which is based on a reduction from PQ learning to reliable agnostic learning, which does not preserve the marginal distributions.

\begin{corollary}[Combination of Theorem 5 in \cite{kalai2021efficient} and \Cref{theorem:tds-via-pq}]
    If a concept class $\C$ is distribution -free reliably learnable, then it is TDS learnable in the distribution-free setting.
\end{corollary}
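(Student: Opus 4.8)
The plan is to compose two known reductions. The first ingredient is Theorem~5 of \cite{kalai2021efficient}, which establishes that distribution-free reliable agnostic learnability of $\C$ implies distribution-free PQ learnability of $\C$. Concretely, I would recall their construction: given a reliable learner, run it once on the labelled training sample to obtain a positive-reliable hypothesis $h^+$ (no false positives) and once more on the complemented labels to obtain a negative-reliable hypothesis $h^-$ (no false negatives), then output $h = h^+$ together with the agreement region $\Xregion = \{\x : h^+(\x) = h^-(\x)\}$. The reliability guarantees, combined with the realizability assumption that a single $\coptcommon \in \C$ is consistent with both $\Dtrainjoint$ and $\Dtestjoint$, imply that $\Xregion$ captures almost all of the training mass and that $h$ errs on almost none of the test mass lying in $\Xregion$. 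Thus for every $\eps,\delta \in (0,1)$ this yields an $(\eps,\delta)$-PQ learner $\A_{\mathrm{PQ}}$ for $\C$ whose training and test sample complexities $m_P, m_Q$ are polynomial in the sample complexity of the reliable learner and in $1/\eps$ and $\log(1/\delta)$.

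The second step is to invoke \Cref{theorem:tds-via-pq} with $\A = \A_{\mathrm{PQ}}$. Running \Cref{algorithm:tds-via-pq} with accuracy parameter $\eps/4$ and failure probability $\delta$, on $m_P + \frac{C}{\eps^2}\log(1/\delta)$ labelled training examples and $m_Q + \frac{C}{\eps^2}\log(1/\delta)$ unlabelled test examples, produces by \Cref{theorem:tds-via-pq} an $(\eps,\delta)$-distribution-free TDS learner for $\C$. Nothing new needs to be argued beyond bookkeeping: soundness and completeness are precisely those proven in \Cref{theorem:tds-via-pq} (accept only if the reported confidence region covers enough of the test sample, otherwise reject), the error upon acceptance is $\eps$ since we are in the realizable setting with $\optcommon = 0$, and the overall sample complexity stays polynomial in that of the reliable learner and in $1/\eps, \log(1/\delta)$.

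The one point deserving care --- and the reason the statement is phrased for \emph{distribution-free} TDS learning (\Cref{definition:dist-free-tds-learning}) rather than the distribution-specific model of \Cref{definition:tds-learning} --- is that the reliable-to-PQ reduction of \cite{kalai2021efficient} reweights and relabels examples and therefore does not preserve the marginal distributions. Hence the PQ learner it produces, and in turn the TDS learner we build from it, carry only a distribution-free guarantee: completeness must hold whenever $\Dtestmarginal = \Dtrainmarginal$ for an arbitrary common marginal, which is exactly what \Cref{definition:dist-free-tds-learning} requires. I do not expect a genuine obstacle here; once Theorem~5 of \cite{kalai2021efficient} and \Cref{theorem:tds-via-pq} are available, the corollary is their direct composition, and the only substantive task is to confirm that the PQ learner obtained is genuinely distribution-free so that \Cref{theorem:tds-via-pq} applies as stated.
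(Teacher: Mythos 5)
Your proposal is correct and follows exactly the route the paper intends: invoke Theorem~5 of \cite{kalai2021efficient} as a black box to pass from distribution-free reliable learning to distribution-free PQ learning, then apply \Cref{theorem:tds-via-pq} to turn the PQ learner into a distribution-free TDS learner; you also correctly isolate why the conclusion is only distribution-free (the reliable-to-PQ reduction reweights and relabels, so marginals are not preserved), matching the paper's own remark. One small caveat: your illustrative sketch of the Kalai--Kanade construction (running the reliable learner on the training sample and on the complemented training labels) does not actually reflect their reduction, which feeds the reliable learner a \emph{mixture} in which test examples are injected as negatives --- that mixing is precisely what distorts the marginal --- but since you use Theorem~5 only as a black box and correctly flag the marginal-preservation issue, the sketch's inaccuracy does not affect the argument.
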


We remark that, in fact, (distribution-free) PQ learning is equivalent to (distribution-free) reliable learning (see Theorems 5, 6 in \cite{kalai2021efficient}). For a definition of reliable learning we refer the reader to \cite{kalai2012reliable}. It is known that reliable learning is no harder than agnostic learning and no easier than PAC learning.

\section{Amplifying success probability}\label{appendix:success-probability-boosting}
We will now demonstrate that it is possible to amplify the probability of success of a TDS learner through repetition. Note that this is not immediate for TDS learning as it is, for example, in agnostic learning, where one may repeat an agnostic learning algorithm and choose the hypothesis with the smallest error estimate among the outputs of the independent runs. The main obstacle is that test labels are not available. Nonetheless, we obtain the following theorem regarding amplifying the probability of success.

\begin{proposition}[Amplifying Success Probability]\label{proposition:boosting-success-probability}
    Let $\C$ be a hypothesis class, $\Dgeneric$ a distribution and suppose $\A$ is a TDS learner for $\C$ with respect to $\Dgeneric$ with error guarantee $\psi(\optcommon)+\eps$ and failure probability at most $0.1$. Then, there is a TDS learner $\A'$ for $\C$ with respect to $\Dgeneric$ with error guarantee $4\psi(\optcommon)+4\eps$ and failure probability at most $\delta$. In particular, $\A'$ repeats $\A$ for $T = O(\log(\frac{1}{\eps\delta}))$ times and rejects if most of the repetitions reject. If most repetitions accept, $\A'$ outputs the hypothesis $h = \majority(h_1,\dots,h_{T/2})$ ($h$ outputs the majority vote of $h_i$), where $h_1,\dots,h_{T/2}$ are the outputs of the first $T/2$ repetitions of $\A$ that accepted.
\end{proposition}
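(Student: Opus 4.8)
The plan is to run $\A$ independently $T = O(\log(1/(\eps\delta)))$ times on fresh samples, take a majority vote over the rejections/acceptances, and if the majority accepts, output the pointwise majority of the accepting hypotheses among the first $T/2$ runs. The completeness of $\A'$ is immediate: when $\Dtestmarginal = \Dtrainmarginal$, each run of $\A$ accepts with probability at least $0.9$ independently, so by a Chernoff bound the fraction of accepting runs is at least $0.8 > 1/2$ with probability at least $1-\delta/2$ for $T = O(\log(1/\delta))$; hence $\A'$ accepts, and in particular at least $T/2$ of the first $T/2$ runs... more carefully, we want at least one of the first $T/2$ runs to accept so that $h$ is well-defined — this follows from the same Chernoff estimate. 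So the substance is in soundness.

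For soundness, condition on $\A'$ accepting, i.e.\ at least $T/2$ of the $T$ runs accepted. The key point is that, by a union bound over the (at most $T$) accepting runs together with the $0.1$ per-run failure probability and a Chernoff argument, with probability at least $1-\delta/2$ \emph{at least half of the accepting runs} produce a hypothesis $h_i$ with $\error(h_i;\Dtestjoint) \le \psi(\optcommon)+\eps$; call these runs "good". (Here one uses that each individual accepting run is good with probability $\ge 0.9$ by the soundness guarantee of $\A$, these events being independent across runs, so among any fixed-size batch the majority are good whp.) Restricting to the first $T/2$ accepting runs $h_1,\dots,h_{T/2}$ — of which a majority are good — the pointwise majority vote $h = \majority(h_1,\dots,h_{T/2})$ satisfies: for any point $\x$, if $h(\x)$ disagrees with the true label $\copt(\x)$ on $\Dtestjoint$... wait, $\Dtestjoint$ need not be realizable. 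The right statement: whenever $h(\x) \ne \copt(\x)$ for the optimal common classifier $\copt$, at least half of the $h_i(\x)$ disagree with $\copt(\x)$, so at least $(1/2 - 1/4) = 1/4$ fraction of the good $h_i$ disagree with $\copt(\x)$ at $\x$. Taking expectations over $\x\sim\Dtestmarginal$ and averaging over the good runs, $\pr_{\Dtestmarginal}[h(\x)\ne\copt(\x)] \le 4\cdot \frac{1}{|\text{good}|}\sum_{\text{good }i}\pr_{\Dtestmarginal}[h_i(\x)\ne\copt(\x)] \le 4(\psi(\optcommon)+\eps+\optcommontest)$, where $\optcommontest=\error(\copt;\Dtestjoint)\le\optcommon$ accounts for the gap between $\error(h_i;\Dtestjoint)$ and $\error(h_i,\copt;\Dtestmarginal)$. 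Then $\error(h;\Dtestjoint) \le \error(\copt;\Dtestjoint) + \pr_{\Dtestmarginal}[h(\x)\ne\copt(\x)] \le 4\psi(\optcommon) + 4\eps$ after absorbing constants (using $\psi(\optcommon)\ge$ the contribution of $\optcommon$ terms, or simply noting $\optcommon\le\psi(\optcommon)$ in all our instantiations, or adjusting the constant in front of $\eps$). A union bound over the completeness-failure and soundness-failure events gives total failure probability $\delta$.

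The main obstacle is the soundness direction, and specifically the subtlety that we cannot verify on the test side which of the accepting runs are good — we have no test labels. The fix, as above, is purely probabilistic: independence across runs forces the \emph{majority} of accepting runs to be good whp, and a majority vote over hypotheses that are mostly good incurs only a constant-factor blowup in error (the factor $4$). One has to be a little careful that the batch we vote over (first $T/2$ accepting runs) still has a majority of good runs — this needs the Chernoff estimate applied to that batch, or more simply applied to all $T$ runs with margin slack so that it survives restriction; choosing $T = O(\log(1/(\eps\delta)))$ with a large enough constant handles all these union bounds simultaneously. The $\eps$-dependence in $T$ is needed only to make the additive Chernoff slack small compared to $\eps$ when converting "fraction of good runs" bounds into the error bound; alternatively one can keep $T = O(\log(1/\delta))$ and absorb a constant slack into the $4\eps$ term.
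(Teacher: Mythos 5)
Your argument is genuinely different from the paper's. The paper does not count good hypotheses at all: it defines a \emph{bad region} of test pairs
$\badregion = \{(\x,y): \hat\pr[h_1(\x)\neq y\mid\goodevent_1] > 1/3\}$, bounds $\pr_{\Dtestjoint}[(\x,y)\in\badregion]\le 3\eta$ by Markov, then for any fixed $(\x,y)\notin\badregion$ shows each $h_i$ is correct with probability $\ge 3/5$ (over the randomness of the run), so by independence the majority vote errs with probability $e^{-\Omega(T)}$; a second Markov converts this into the soundness guarantee. You instead argue combinatorially that whp a constant fraction of the $T/2$ accepting runs produce good hypotheses and that a pointwise majority of mostly-good hypotheses incurs only a constant-factor blowup. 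Your mechanism, once tightened, is arguably simpler than the paper's double-Markov, and it achieves the same bound (in fact with $T = O(\log(1/\delta))$, whereas the paper's argument genuinely needs the $1/\eps$ factor inside the log).

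There are two concrete gaps to fix before the argument closes, though. First, the constants: a majority of good hypotheses is not enough. You need at least $3/4$ of the $T/2$ hypotheses to be good, so that when $h(\x)$ errs (i.e.\ $\ge 1/2$ of the $h_i$ err at $\x$) one gets at least $1/2 - 1/4 = 1/4$ of the \emph{full batch} to be good-and-erring. This is achievable: the per-run bad event (``accepts and has error $>\eta$'') has probability $\le 0.1$, and a Chernoff bound over the $T$ independent runs gives at most $T/8$ bad runs except with probability $\delta/2$; given $\ge T/2$ acceptances, at most $T/8$ of the first $T/2$ accepting runs are bad, i.e.\ $\ge 3/4$ good. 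Second, the detour through $\copt$ costs you an additive $O(\optcommon)$ term: tracking your chain of inequalities yields roughly $4\psi(\optcommon)+4\eps+5\optcommon$, not $4\psi(\optcommon)+4\eps$, and the fixes you offer (``$\optcommon\le\psi(\optcommon)$ in all our instantiations'') are not assumptions of the statement. The clean fix is to compare $h$ against the test labels $y$ directly rather than against $\copt(\x)$: whenever $h(\x)\neq y$, at least half the $h_i$ have $h_i(\x)\neq y$, so at least a $1/4$-fraction of all $h_i$ are good with $h_i(\x)\neq y$; averaging over $(\x,y)\sim\Dtestjoint$ gives $\error(h;\Dtestjoint)\le 4\max_{\text{good }i}\error(h_i;\Dtestjoint)\le 4\eta = 4\psi(\optcommon)+4\eps$ with no $\optcommon$ leakage, matching the paper's bound.
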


\begin{proof}
    We split the proof into two parts, one for soundness and one for completeness.
    
    \paragraph{Soundness.} For soundness, suppose that $\A'$ accepts. We denote with $\hat\pr$ (resp. $\hat\E$) the probabilities (resp. expectations) over the randomness of $h_1,\dots,h_{T/2}$ (which originates to the randomness of the samples given to $\A$) and with $\pr$ (resp. $\E$) the probabilities (resp. expectations) over the randomness of a pair $(\x,y)$ drawn from $\Dtestjoint$. In what follows, let $\eta = \psi(\optcommon)+\eps$. We have that for any $i=1,2,\dots,T/2$, $\hat\pr[\error(h_i,\Dtestjoint)\le \eta] \ge 0.9$, by the guarantees of $\A'$. We will show that $\hat\pr[\error(h,\Dtestjoint)\le 4\eta]\ge 1-\delta$ for a sufficiently large $T=O(\log(\frac{1}{\eps\delta}))$.

    We define $\goodevent_i$ to be the event (over the randomness of $h_i$) that $h_i$ is `good', i.e., that $\pr[h_i(\x)\neq y] \le \eta$. We define $\badregion$ to be the `bad' region of $(\x,y)$, i.e., $\badregion = \{(\x,y)\in\X\times\cube{}: \hat\pr[h_1(\x)\neq y|\goodevent_1] > 1/3\}$. Note that $\badregion$ would be the same even if we substituted $(h_1,\goodevent_1)$ above with an arbitrary $(h_i,\goodevent_i)$.

    First, we observe that $\pr[h(\x)\neq y] \le \pr[(\x,y)\in\badregion] + \pr[h(\x)\neq y|(\x,y)\not\in\badregion]$.

    We now observe that $\pr[(\x,y)\in \badregion] = \pr[\hat\pr[h_1\neq y|\goodevent_1]>1/3] \le 3 \E\hat\pr[h_1(\x)\neq y|\goodevent_1]$ by Markov's inequality. Now, we may swap the expectations to obtain $\pr[(\x,y)\in \badregion] \le 3\hat\E[\pr[h_1(\x)\neq y]|\goodevent_1] \le 3\eta$. 

    So far, we have shown $\pr[h(\x)\neq y] \le 3\eta + \pr[h(\x)\neq y|(\x,y)\not\in\badregion]$. We will bound the probability over $h_1,\dots,h_{T/2}$ that $\pr[h(\x)\neq y|(\x,y)\not\in\badregion]>\eta$. In particular, we have the following due to Markov's inequality $\hat\pr[\pr[h(\x)\neq y|(\x,y)\not\in\badregion]>\eta] \le \frac{1}{\eta}\hat\E[\pr[h(\x)\neq y|(\x,y)\not\in\badregion]]$. Once more, we may swap the expectations to obtain $\hat\pr[\pr[h(\x)\neq y|(\x,y)\not\in\badregion]>\eta] \le \frac{1}{\eta}\E[\hat\pr[h(\x)\neq y]|(\x,y)\not\in\badregion]$.

    Moreover, if we fix $(\x,y)\not\in\badregion $, then $\hat\pr[h_i(\x)=y] \ge \hat\pr[h_i(\x)=y \text{ and } \goodevent_i] \ge \frac{2}{3}\cdot \frac{9}{10} \ge 3/5$. Because $\hat\pr[\goodevent_i]\ge 0.9$ and $\hat\pr[h_i(\x)=y | \goodevent_i] \ge 2/3$ whenever $(\x,y)\not\in\badregion$, by the definition of $\badregion$. Therefore, since $h_1,\dots,h_{T/2}$ are independent, we have that $\hat\pr[h(\x)\neq y] \le \exp(-T/C)$ for some sufficiently large universal constant $C>0$, for any $(\x,y)\not\in\badregion$.

    Therefore, in total, $\hat\pr[\pr[h(\x)\neq y|(\x,y)\not\in\badregion]>\eta] \le \frac{1}{\eta}\exp(-T/C)$. We set $T = C\ln(\frac{1}{\eps\delta}) \ge C\ln(\frac{1}{\eta\delta})$ to obtain $\hat\pr[\pr[h(\x)\neq y|(\x,y)\not\in\badregion]>\eta] \le \delta$ and, hence, with probability at least $1-\delta$ over the randomness of $h$ we overall have $\pr[h(\x)\neq y]\le 4\eta$.
    
    \paragraph{Completeness.} Completeness follows by a standard Hoeffding bound.
\end{proof}

\section{Auxiliary Propositions}\label{appendix:auxiliary-propositions}

Let $\Gauss(0,I_d)$ denote the standard multivariate Gaussian distribution over $\R^d$ and $\Unif(\cube{d})$ denote the uniform distribution over the hypercube $\cube{d}$. For each of these distributions, we show that the sandwiching polynomials of any binary concept have coefficients that are absolutely bounded, that the empirical moments concentrate around the true ones and that the empirical squared error of polynomials with bounded degree and coefficients uniformly converges to the true squared error. These properties are used in order to apply \Cref{theorem:l2-sandwiching-implies-tds} to obtain TDS learning algorithms for a number of classes under the Gaussian and Uniform distributions.

\subsection{Properties of Gaussian Distribution}

We prove the following fact about the Gaussian distribution.

\begin{lemma}[Properties of the Gaussian Distribution]\label{lemma:gaussian-properties}
    Let $\Dgeneric$ be the standard Gaussian $\Gauss(0,I_d)$ over $\R^d$. Then the following are true.
    \begin{enumerate}[label=\textnormal{(}\roman*\textnormal{)}]
        \item\label{item:gaussian-coefficient-bound} (Coefficient Bound) Suppose that for some $\eps\in (0,1]$, $\degbound>0$ and some concept class $\C\subseteq{\R^d\to \cube{}}$, the $\eps$-approximate $L_2$ sandwiching degree of $\C$ w.r.t. $\Gauss(0,I_d)$ is at most $\degbound$. Then, the coefficients of the sandwiching polynomials for $\C$ are absolutely bounded by $\pbound=O(d)^\degbound$.
        \item\label{item:gaussian-concentration} (Concentration) For any $\delta\in (0,1), \mslack>0$ and $\degbound>0$, if $\Sunlabelled$ is a set of independent samples from $\Dgeneric$ with size at least $\mconc = \frac{O(d\degbound)^\degbound}{\Delta^2 \cdot \delta}$ then, with probability at least $1-\delta$ over the randomness of $\Sunlabelled$, we have that for any $\mindex\in \N^d$ with $\|\mindex\|_1\le k$ it holds $|\E_\Sunlabelled[\x^\mindex] - \E_\Dgeneric[\x^\mindex]| \le \Delta$.
        \item\label{item:gaussian-generalization} (Generalization) For any $\eps>0$, $\delta\in (0,1), \pbound>0$, $\degbound>0$, and any distribution $\Dgenericjoint$ over $\R^d\times\cube{}$ whose marginal on $\R^d$ is $\Dgeneric$, if $\Slabelled$ is a set of independent samples from $\Dgenericjoint$ with size at least $\mgen = \tilde{O}(\frac{\pbound^8}{\eps^4\delta})\cdot d^{O(\degbound)}$ then, with probability at least $1-\delta$ over the randomness of $\Slabelled$, we have that for any polynomial $p$ of degree at most $\degbound$ and coefficients that are absolutely bounded by $\pbound$ it holds $|\E_\Slabelled[(y-p(\x))^2]-\E_{\Dgenericjoint}[(y-p(\x))^2]|\le {\eps}$.
    \end{enumerate}
\end{lemma}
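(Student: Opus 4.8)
The plan is to handle each of the three items via elementary estimates on Gaussian moments, the second-moment (Chebyshev) method, and a VC/covering-number generalization bound, respectively. For item \ref{item:gaussian-coefficient-bound}, I would not try to control the coefficients of an arbitrary pair of sandwiching polynomials (they could be huge); instead, I would observe that a standard orthogonalization/projection argument lets us replace any $\eps$-approximate $\L_2$-sandwiching pair $(\pdown,\pup)$ of degree $\le \degbound$ by one whose coefficients are bounded in terms of $d$ and $\degbound$. Concretely: write $p_{\mathrm{up}}$ in the Hermite basis; since $\|\pup\|_{\L_2(\Dgeneric)}^2 \le (\|\concept\|_{\L_2(\Dgeneric)} + \|\pup-\pdown\|_{\L_2(\Dgeneric)})^2 \le (1+\sqrt{\eps})^2 \le 4$, every squared Hermite coefficient is at most $4$, so each Hermite coefficient is at most $2$; converting from the (normalized) Hermite basis of degree $\le \degbound$ to the monomial basis multiplies coefficients by at most $d^{O(\degbound)} \cdot \degbound^{O(\degbound)} = O(d)^{\degbound}$ (using that a degree-$\le\degbound$ Hermite polynomial in $d$ variables has at most $d^{\degbound}$ monomials, each with coefficient bounded by $(\degbound!!)\le \degbound^{\degbound}$, and that one can absorb the $\degbound^{\degbound}$ factor by redefining the universal constant, or by noting $\degbound \le d$ in the regime of interest). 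Since the pointwise sandwiching property is preserved under this basis change (it is the same polynomial), this gives the claimed bound $\pbound = O(d)^{\degbound}$. The mild subtlety here is making sure the existence statement is phrased correctly: we are claiming \emph{some} valid pair of $\L_2$-sandwiching polynomials with bounded coefficients exists, which is all that \Cref{theorem:l2-sandwiching-implies-tds} requires.

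For item \ref{item:gaussian-concentration}, fix $\mindex$ with $\|\mindex\|_1 \le \degbound$ and write $Z = \x^\mindex = \prod_i \x_i^{\mindex_i}$. The random variable $\frac{1}{|\Sunlabelled|}\sum_{\x\in\Sunlabelled} Z$ has mean $\moment_\mindex = \E_\Dgeneric[\x^\mindex]$ and variance $\frac{1}{|\Sunlabelled|}\var_\Dgeneric(\x^\mindex) \le \frac{1}{|\Sunlabelled|}\E_\Dgeneric[\x^{2\mindex}]$. Since each $\x_i$ is an independent standard Gaussian, $\E_\Dgeneric[\x^{2\mindex}] = \prod_i \E[\x_i^{2\mindex_i}] = \prod_i (2\mindex_i - 1)!! \le (2\degbound)!! \le (2\degbound)^{\degbound}$. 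By Chebyshev's inequality, the probability that this single moment estimate deviates by more than $\mslack$ is at most $\frac{(2\degbound)^{\degbound}}{\mslack^2 |\Sunlabelled|}$. There are at most $d^{\degbound}$ choices of $\mindex$ with $\|\mindex\|_1 \le \degbound$, so a union bound gives total failure probability $\le \frac{d^{\degbound}(2\degbound)^{\degbound}}{\mslack^2 |\Sunlabelled|} = \frac{(2d\degbound)^{\degbound}}{\mslack^2 |\Sunlabelled|}$, which is $\le \delta$ once $|\Sunlabelled| \ge \mconc = \frac{O(d\degbound)^{\degbound}}{\mslack^2 \delta}$, as claimed. (One could use higher moments / Bernstein instead to improve the $\delta$-dependence to logarithmic, but the stated bound only needs Chebyshev.)

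For item \ref{item:gaussian-generalization}, I would prove uniform convergence of $\E_\Slabelled[(y-p(\x))^2]$ to $\E_{\Dgenericjoint}[(y-p(\x))^2]$ over the class $\mathcal{P}$ of degree-$\le\degbound$ polynomials with coefficients bounded by $\pbound$. The loss function $(\x,y)\mapsto (y-p(\x))^2$ is not bounded, so the main work is a truncation argument: on the event $\|\x\|_2 \le R$ for $R = \poly(d,\degbound,\log(1/(\eps\delta)))$ (which holds with probability $\ge 1 - \delta/4$ by standard Gaussian norm concentration, e.g. \Cref{fact: concentration of Gaussian norm}), every $p\in\mathcal{P}$ satisfies $|p(\x)| \le \pbound \cdot d^{\degbound} \cdot R^{\degbound} =: M$ and hence $(y-p(\x))^2 \le (1+M)^2$; the contribution of the complementary tail event to both the true and empirical expectations is negligible (use Cauchy--Schwarz together with the fact that $\E_\Dgeneric[\|\x\|_2^{4\degbound}]$ is bounded by $d^{O(\degbound)}$ to control $\E_\Dgeneric[p(\x)^4]$, so the tail contributes $\le \eps/4$ in expectation; a Markov bound controls the empirical tail with sample size $\mathrm{poly}$ in the relevant parameters). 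On the truncated, bounded-loss sub-problem I would apply a standard Rademacher-complexity or covering-number bound: $\mathcal{P}$ restricted to $\{\|\x\|\le R\}$ is a subset of a finite-dimensional ($d^{O(\degbound)}$ parameters) bounded set, its $\gamma$-covering number in sup norm on the ball is $(M/\gamma)^{d^{O(\degbound)}}$, and the squared-loss functions are Lipschitz in $p$ with constant $O(M)$ on this domain, so by the standard uniform convergence bound the empirical mean is within $\eps/4$ of the true mean for all $p\in\mathcal{P}$ once $|\Slabelled| \ge \tilde{O}\!\left(\frac{M^4 d^{O(\degbound)}}{\eps^2}\log(1/\delta)\right)$. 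Substituting $M = \pbound\, d^{O(\degbound)} R^{\degbound}$ and $R = \tilde{O}(\sqrt{d\degbound})$ and simplifying yields $\mgen = \tilde{O}\!\left(\frac{\pbound^8}{\eps^4\delta}\right)\cdot d^{O(\degbound)}$ after combining with the tail-probability requirements. The $1/\delta$ (rather than $\log(1/\delta)$) and $\eps^{-4}$ (rather than $\eps^{-2}$) in the statement are exactly the slack one picks up from the Markov bound on the unbounded tail, so I would route the tail bound through Markov to stay consistent with the stated $\mgen$.

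The main obstacle is item \ref{item:gaussian-generalization}: the squared loss is unbounded under a Gaussian marginal, so the crux is the truncation bookkeeping --- choosing $R$ large enough that the tail is negligible but small enough that the effective range $M$ of the loss (which enters the sample complexity polynomially) stays $\pbound^{O(1)} d^{O(\degbound)}$, and then checking that the final sample complexity matches the advertised $\tilde{O}(\pbound^8/(\eps^4\delta)) \cdot d^{O(\degbound)}$. Items \ref{item:gaussian-coefficient-bound} and \ref{item:gaussian-concentration} are routine Hermite-analysis and second-moment computations.
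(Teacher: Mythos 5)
Your proposal follows essentially the same route as the paper for all three parts, so only brief remarks are warranted. For part (i), the paper does exactly what you do: bound $\|\pdown\|_{\L_2(\Dgeneric)}$ and $\|\pup\|_{\L_2(\Dgeneric)}$ by a constant via the triangle inequality, observe that under the Gaussian this forces every Hermite coefficient to be $O(1)$, and then pass from the Hermite basis to monomials. Note that no ``replacement/projection'' of the pair is needed, contrary to what your opening sentence suggests --- the norm bound applies to \emph{every} $\eps$-approximate pair, since $\pup$ and $\pdown$ are each within $\L_2$-distance $\sqrt{\eps}\le 1$ of a $\pm 1$-valued $\concept$; the rest of your argument already reflects this. (Both you and the paper are a bit cavalier about whether the resulting bound is $O(d)^{\degbound}$ or $d^{O(\degbound)}$, and this slack propagates into the statement, but it is harmless downstream.) Part (ii) is identical: Chebyshev on a single empirical moment using the double-factorial bound on Gaussian moments, then a union bound over the $\le d^{\degbound}$ multi-indices, giving the $1/\delta$ (rather than $\log(1/\delta)$) dependence. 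For part (iii), the strategy coincides --- truncate, apply a uniform-convergence bound on the truncated (bounded) loss class, and control the tail contribution via Cauchy--Schwarz against Gaussian moments plus Markov on the empirical side --- with two cosmetic differences: you truncate on the norm event $\{\|\x\|_2\le R\}$ whereas the paper truncates directly on $\{\forall q\in\Polys: |q(\x)|\le T\}$, and you invoke covering numbers where the paper invokes Rademacher complexity (Theorems 5.5 and 10.3 of Mohri et al.). These are interchangeable and land on the same $\mgen = \tilde O(\pbound^8/(\eps^4\delta))\cdot d^{O(\degbound)}$, with the $\eps^{-4}$ and $1/\delta$ coming from exactly the Markov step on the unbounded tail, as you correctly identify.
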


\begin{proof}
    We will prove each part of the Lemma separately.

    \noindent\textit{Part \ref{item:gaussian-coefficient-bound}.} Suppose that $\pup,\pdown$ are $1$-sandwiching polynomials for some concept $\concept\in\C$ with degree at most $\degbound$. Then, we have the following.
    \begin{align*}
        \|\pdown\|_{L_2(\Dgeneric)} &\le \|\pup-\concept\|_{L_2(\Dgeneric)} + \|\concept\|_{L_2{\Dgeneric}} \\
        &\le \|\pup-\pdown\|_2 + 1 \le 2
    \end{align*}
    Since $\Dgeneric$ is the standard Gaussian distribution, the quantity $\|\pdown\|_{L_2(\Dgeneric)}^2$ equals to the sum of the squares of the coefficients of the Hermite expansion of $\pdown$ (see e.g. \cite{o2014analysis}). Therefore, each Hermite coefficient of $\pdown$ is absolutely bounded by $2$. Each Hermite polynomial of degree at most $\degbound$ has coefficients that are absolutely bounded by $2^{\degbound}$. Since $\pdown$ has degree at most $\degbound$, each coefficient of $\pdown$ is absolutely bounded by $d^{O(\degbound)}$.

    \noindent\textit{Part \ref{item:gaussian-concentration}.} Suppose that $\mindex\in\N^d$ with $\|\mindex\|_1 \le k$. Then, the worst case regarding moment concentration is $\mindex_1 = \degbound$. For a sample $\Sunlabelled$ from $\Dgeneric$, we apply Chebyshev's inequality on the random variable $z = |\E_\Sunlabelled[\x_1^\degbound] - \E_{\Dgeneric}[\x_1^\degbound]|$ and by bounding $\E[z^2]$ by $\E_\Dgeneric[\x_1^{2\degbound}]$ we have that for any $\Delta>0$, $z\le \Delta$ with probability at least $1-\frac{(C\degbound)^\degbound}{|\Sunlabelled|\Delta^2}$, where the randomness is over the random choice of $\Sunlabelled$ and $C>0$ is a sufficiently large universal constant (for bounds on the Gaussian moments, see, e.g., Proposition 2.5.2 in \cite{vershynin2018high}). Since we need the result to hold for all $\mindex$ simultaneously, the result follows by a union bound.

    \noindent\textit{Part \ref{item:gaussian-generalization}.} We define $\Polys$ to be the class of polynomials over $\R^d$ with degree at most $\degbound$ and coefficients that are absolutely bounded by $\pbound$. Let $T>0$ to be disclosed later and $m = |\Slabelled|$. We will first show that with probability at least $1-\delta/2$ over the choice of $\Slabelled$, we have 
    \[
        \E_{\Dgenericjoint}[(y-p(\x))^2] \le \E_{\Slabelled}[(y-p(\x))^2] + \eps \text{ for all }p\in \Polys
    \]
    We aim to apply some standard uniform convergence argument, but in order to do so we first need to ensure certain boundedness conditions as follows.
    \[
        \E_{\Dgenericjoint}[(y-p(\x))^2] = \E_{\Dgenericjoint}[(y-p(\x))^2 \cdot \ind\{\forall q\in\Polys: |q(\x)|\le T\}] + \E_{\Dgenericjoint}[(y-p(\x))^2 \cdot \ind\{\exists q\in \Polys: |q(\x)|> T\}]
    \]
    where we have $\E_{\Dgenericjoint}[(y-p(\x))^2 \cdot \ind\{\forall q\in\Polys: |q(\x)|\le T\}]\le \E_{\Dgenericjoint}[(y-p(\x))^2 \;|\;\forall q\in\Polys: |q(\x)|\le T]$. Let $\Dgenericjoint'$ be the distribution that corresponds to $\Dgenericjoint$ conditioned on the event $\{\forall q\in\Polys: |q(\x)|\le T\}$ and let $\Slabelled' = \{(\x,y)\in \Slabelled: |q(\x)|\le T, \forall q\in \Polys\}$. By standard arguments using Rademacher complexity bounds for bounded losses (see, e.g., Theorems 5.5 and 10.3 in \cite{mohri2018foundations}) we have that for some sufficiently large universal constant $C>0$, with probability at least $1-\delta/4$, we have for any $p\in \Polys$
    \begin{equation}\label{equation:rademacher-bound}
        \E_{\Dgenericjoint'}[(y-p(\x))^2] \le \E_{\Slabelled'}[(y-p(\x))^2] + T^4\cdot \frac{\pbound + \sqrt{\log(1/\delta)}}{\sqrt{m/C}}
    \end{equation}
    We now need to link $\E_{\Slabelled'}[(y-p(\x))^2]$ to $\E_{\Slabelled}[(y-p(\x))^2]$. We have the following.
    \begin{align*}
        \E_{\Slabelled}[(y-p(\x))^2] &\ge (1-\pr_\Slabelled[\exists q\in \Polys: |q(\x)|> T])\E_{\Slabelled'}[(y-p(\x))^2] \\
        &\ge \E_{\Slabelled'}[(y-p(\x))^2] - \pr_\Slabelled[\exists q\in \Polys: |q(\x)|> T] \cdot 2T^2 \tag{since $y\in\cube{}$ and $p\in \Polys$}
    \end{align*}
    We will upper bound the quantity $\pr_\Slabelled[\exists q\in \Polys: |q(\x)|> T]$. We have
    \begin{align}
        \pr_\Slabelled[\exists q\in \Polys: |q(\x)|> T] &= \pr_\Slabelled{{}}\Bigr[\exists (q_\mindex)_{\|\mindex\|_1\le \degbound}\in [-\pbound,\pbound]^{d^\degbound}:  {{}}\Bigr|\sum_{\mindex}q_\mindex \x^\mindex{{}}\Bigr|> T {{}}\Bigr] \nonumber \\
        &\le \sum_{\mindex: \|\mindex\|_1\le \degbound} \pr_{\Slabelled}{{}}\Bigr[ |\x^\mindex| \ge \frac{T}{\pbound d^{\degbound}} {{}}\Bigr] \nonumber\\
        &\le \sum_{\mindex: \|\mindex\|_1\le \degbound} \pr_{\Dgeneric}{{}}\Bigr[ |\x^\mindex| \ge \frac{T}{\pbound d^{\degbound}} {{}}\Bigr] + \frac{d^{\degbound}}{\sqrt{2m}} \log\Bigr(\frac{8}{\delta}\Bigr), \text{ w.p. at least } 1-\delta/4\label{equation:bound-empirical-concentration-of-polynomials}
    \end{align}
    In the last step, we used a standard Chernoff-Hoeffding bound. We now bound $\sum_{\mindex: \|\mindex\|_1\le \degbound} \pr_{\Dgeneric}[ |\x^\mindex| \ge \frac{T}{\pbound d^{\degbound}} ]$. Recall that $\Dgeneric = \Gauss(0,I_d)$ and therefore the worst case for $\mindex$ regarding concentration is the case $\mindex_1 = \degbound$. We therefore obtain the following via Gaussian concentration.
    \begin{align}
        \sum_{\mindex: \|\mindex\|_1\le \degbound} \pr_{\Dgeneric}{{}}\Bigr[ |\x^\mindex| \ge \frac{T}{\pbound d^{\degbound}} {{}}\Bigr] &\le d^{\degbound} \pr_{\Dgeneric}{{}}\Bigr[ |\x_1^\degbound| \ge \frac{T}{\pbound d^{\degbound}} {{}}\Bigr] \nonumber \\
        &\le d^\degbound \exp\Bigr( -\frac{1}{2}\cdot \frac{T^{1/\degbound}}{\pbound^{1/\degbound} d} \Bigr) \label{equation:bound-monomial-tail-probability}
    \end{align}
    
    It remains to bound the term $\E_{\Dgenericjoint}[(y-p(\x))^2 \cdot \ind\{\exists q\in \Polys: |q(\x)|> T\}]$. By applying the Cauchy-Schwarz inequality, it is sufficient to bound $\sqrt{\E_{\Dgenericjoint}[(y-p(\x))^4]}\cdot\sqrt{\pr_\Dgeneric[\exists q\in \Polys: |q(\x)|> T]}$. For the second term, we use Equation \eqref{equation:bound-monomial-tail-probability}. For the first term, we have the following for some sufficiently large constant $C>0$.

    \begin{align}
        \E_{\Dgeneric}[(y-p(\x))^4] &\le 8+8\E_{\Dgeneric}[p^4(\x)] \nonumber\\
        &\le 8 + \pbound^4d^{4\degbound}\sum_{\|\mindex\|_1\le 4\degbound} \prod_{i:\mindex_i>0}\E_{\Dgeneric}[\x_i^{\mindex_i}] \tag{since $\deg(p^4)\le 4\degbound$ and $|(p^4)_{\mindex}| \le \pbound^4d^{4\degbound}$} \nonumber \\
        &\le \pbound^4d^{8\degbound}(C\degbound)^{2\degbound} \tag{since $\Dgeneric = \Gauss(0,I_d)$, see Proposition 2.5.2 in \cite{vershynin2018high}}
    \end{align}

    Using the above inequality along with \eqref{equation:rademacher-bound}, \eqref{equation:bound-empirical-concentration-of-polynomials} and \eqref{equation:bound-monomial-tail-probability} we obtain that $\E_{\Dgenericjoint}[(y-p(\x))^2] - \E_{\Slabelled}[(y-p(\x))^2]$ is upper bounded by the following quantity for some sufficiently large universal constant $C>0$
    \begin{align*}
        &T^4\cdot \frac{\pbound + \sqrt{\log(1/\delta)}}{\sqrt{m/C}} + 2T^2d^\degbound \exp\left( -\frac{1}{2}\cdot\left( \frac{T}{\pbound d^{\degbound}} \right)^{1/\degbound} \right)  + \\
        &+ 2T^2\frac{d^{\degbound}}{\sqrt{2m}} \log\Bigr(\frac{10}{\delta}\Bigr)+\pbound^2d^{4\degbound}(C\degbound)^{\degbound}d^{\degbound/2} \exp\left( -\frac{1}{4}\cdot\left( \frac{T}{\pbound d^{\degbound}} \right)^{1/\degbound} \right)\,,
    \end{align*}
    which is at most $\epsilon$ when we choose $m,T$ as follows for some universal constant $C>0$ (possibly larger than the previously defined constants for which we used the same letter) for the choice $T= C\pbound(4d)^{\degbound}\degbound \log(\frac{\pbound d \degbound}{\eps})$ and $m=\frac{C}{\eps^2}(\pbound^2+\log(\frac{1}{\delta}))\pbound^8(4d)^{8\degbound}\degbound^8 \log(\frac{\pbound d \degbound}{\eps}) = \tilde{O}(\frac{\pbound}{\eps^2}) \cdot O(d)^{8\degbound}\cdot \log(1/\delta)$. 

    In order to bound the symmetric difference, we also need to bound the quantity $\E_{\Slabelled}[(y-p(\x))^2] - \E_{\Dgenericjoint}[(y-p(\x))^2]$, which we may do following a similar reasoning, but requiring, at times, bounds on quantities that correspond to empirical expectations (instead of expectations over the population distribution). In particular, we will require a bound on $\E_\Slabelled[(y-p(\x))^4]$, which can be reduced to bounding $\E_\Slabelled[p^4(\x)]$, for which we may use part \ref{item:gaussian-concentration}, demanding $m\ge d^{O(\degbound)}/\delta$ to obtain
    \[
        \E_\Slabelled[p^4(\x)] \le 2\pbound^4d^{4\degbound}(C\degbound)^{\degbound}
    \]
    Overall, this step will introduce the additional requirement that $m\ge \frac{\pbound^8}{\eps^4\delta}d^{16\degbound}(C\degbound)^{4\degbound}\log^2(\frac{1}{\delta})$. Therefore, overall, for $m\ge \mgen = \tilde{O}(\frac{\pbound^8}{\eps^2\delta})\cdot d^{O(\degbound)} \cdot \log^2(\frac{1}{\delta})$, we have the desired result. 
\end{proof}

\subsection{Properties of Uniform Distribution}

We prove the following fact about the uniform distribution.

\begin{lemma}[Properties of the Uniform Distribution]\label{lemma:uniform-properties}
    Let $\Dgeneric$ be the uniform distribution over the hypercube $\Unif(\cube{d})$ and $C>0$ some sufficiently large constant. Then the following are true.
    \begin{enumerate}[label=\textnormal{(}\roman*\textnormal{)}]
        \item\label{item:uniform-coefficient-bound} (Coefficient Bound) Suppose that for some $\eps\in(0,1]$, $\degbound>0$ and some concept class $\C\subseteq{\R^d\to \cube{}}$, the $\eps$-approximate $L_2$ sandwiching degree of $\C$ w.r.t. $\Dgeneric$ is at most $\degbound$. Then, the coefficients of the sandwiching polynomials for $\C$ are absolutely bounded by $\pbound=2$.
        \item\label{item:uniform-concentration} (Concentration) For any $\delta\in (0,1), \mslack>0$ and $\degbound>0$, if $\Sunlabelled$ is a set of independent samples from $\Dgeneric$ with size at least $\mconc = \frac{C\degbound}{\Delta^2}\log(\frac{d}{\delta})$ then, with probability at least $1-\delta$ over the randomness of $\Sunlabelled$, we have that for any $\mindex\in \N^d$ with $\|\mindex\|_1\le k$ it holds $|\E_\Sunlabelled[\x^\mindex] - \E_\Dgeneric[\x^\mindex]| \le \Delta$.
        \item\label{item:uniform-generalization} (Generalization) For any $\eps>0$, $\delta\in (0,1), \pbound>0$, $\degbound>0$, and any distribution $\Dgenericjoint$ over $\R^d\times\cube{}$ whose marginal on $\R^d$ is $\Dgeneric$, if $\Slabelled$ is a set of independent samples from $\Dgenericjoint$ with size at least $\mgen =  \tilde{O}(\frac{1}{\eps^2})\cdot \pbound^{O(1)}\cdot d^{O(\degbound)} \cdot \log(\frac{1}{\delta})$ then, with probability at least $1-\delta$ over the randomness of $\Slabelled$, we have that for any polynomial $p$ of degree at most $\degbound$ and coefficients that are absolutely bounded by $\pbound$ it holds $|\E_\Slabelled[(y-p(\x))^2]-\E_{\Dgenericjoint}[(y-p(\x))^2]|\le {\eps}$.
    \end{enumerate}
\end{lemma}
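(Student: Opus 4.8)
The plan is to follow the structure of the proof of \Cref{lemma:gaussian-properties}, exploiting that $\Unif(\cube{d})$ is a bounded product distribution and that every polynomial over $\cube{d}$ admits a unique multilinear representation whose coefficients are its Walsh--Fourier coefficients. For part \ref{item:uniform-coefficient-bound}, given $\concept\in\C$ together with $\eps$-approximate $\L_2$-sandwiching polynomials $\pup,\pdown$ of degree at most $\degbound$, I would first replace $\pup,\pdown$ by their multilinearizations: this preserves their values on $\cube{d}$, so they remain pointwise sandwiching polynomials with the same $\L_2(\Dgeneric)$-gap, and it does not increase the degree. Since $\pdown\le\concept\le\pup$ pointwise we get $|\pdown-\concept|\le|\pup-\pdown|$ pointwise, hence $\|\pdown-\concept\|_{\L_2(\Dgeneric)}\le\|\pup-\pdown\|_{\L_2(\Dgeneric)}\le\sqrt{\eps}\le1$; as $\concept$ is $\pm1$-valued this gives $\|\pdown\|_{\L_2(\Dgeneric)}\le2$, and likewise $\|\pup\|_{\L_2(\Dgeneric)}\le2$. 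By Parseval over the hypercube (see \cite{o2014analysis}) the sum of squares of the coefficients of $\pdown$ equals $\|\pdown\|_{\L_2(\Dgeneric)}^2\le4$, so every coefficient has absolute value at most $2$; the same holds for $\pup$, so $\pbound=2$ suffices.

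For part \ref{item:uniform-concentration} I would use Hoeffding's inequality together with a union bound: for any $\mindex\in\N^d$ with $\|\mindex\|_1\le\degbound$ and any $\x\in\cube{d}$ one has $\x^\mindex=\prod_{i:\,\mindex_i\text{ odd}}\x_i\in\{\pm1\}$, so $\E_\Sunlabelled[\x^\mindex]$ is an empirical average of $|\Sunlabelled|$ i.i.d.\ $\pm1$ random variables with mean $\E_\Dgeneric[\x^\mindex]$, whence $\pr[\,|\E_\Sunlabelled[\x^\mindex]-\E_\Dgeneric[\x^\mindex]|>\mslack\,]\le2\exp(-|\Sunlabelled|\mslack^2/2)$. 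There are at most $\binom{d+\degbound}{\degbound}=d^{O(\degbound)}$ such multi-indices, so union-bounding over all of them shows that $|\Sunlabelled|\ge\mconc=\frac{C\degbound}{\mslack^2}\log(d/\delta)$ suffices for a suitable universal constant $C$.

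Part \ref{item:uniform-generalization} carries the bulk of the work, but it is simpler than in the Gaussian case because everything is bounded. Let $\Polys$ be the class of polynomials of degree at most $\degbound$ with coefficients in $[-\pbound,\pbound]$. On $\cube{d}$ every $p\in\Polys$ satisfies $|p(\x)|\le\sum_{\mindex:\,\|\mindex\|_1\le\degbound}|p_\mindex|\le\pbound\binom{d+\degbound}{\degbound}=\pbound\cdot d^{O(\degbound)}=:R$, so the loss $(y-p(\x))^2$ is pointwise bounded by $M:=(1+R)^2=\pbound^{O(1)}d^{O(\degbound)}$ and is $2(1+R)$-Lipschitz in $p(\x)$. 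Writing $p|_{\cube{d}}$ as the linear functional $\x\mapsto\langle w,\phi(\x)\rangle$ with $w\in[-\pbound,\pbound]^N$ and $\phi(\x)=(\x^\mindex)_{\|\mindex\|_1\le\degbound}\in\cube{N}$, where $N=\binom{d+\degbound}{\degbound}=d^{O(\degbound)}$, the empirical Rademacher complexity of $\Polys$ is $O(\pbound N\sqrt{\log(N)/|\Slabelled|})$; combining Talagrand's contraction lemma with a standard Rademacher-complexity generalization bound for bounded losses (see, e.g., Theorems 5.5 and 10.3 in \cite{mohri2018foundations}) yields, with probability at least $1-\delta$ over $\Slabelled\sim\Dgenericjoint^{\otimes m}$, that $|\E_\Slabelled[(y-p(\x))^2]-\E_{\Dgenericjoint}[(y-p(\x))^2]|\le\eps$ holds for all $p\in\Polys$ simultaneously once $m\ge\mgen=\tilde{O}(M^2/\eps^2)\cdot(N+\log(1/\delta))=\tilde{O}(1/\eps^2)\cdot\pbound^{O(1)}\cdot d^{O(\degbound)}\cdot\log(1/\delta)$. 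Alternatively, one could rerun the $\eta$-net-plus-union-bound argument from the proof of \Cref{lemma:gaussian-properties} with the tail-truncation step deleted, since here the loss is already uniformly bounded.

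I do not expect a genuine obstacle: each of the three parts is essentially a one-shot application of standard tools. The only point requiring care is part \ref{item:uniform-generalization}, where one must check that both the pointwise loss bound $M$ and the effective dimension $N$ are $\pbound^{O(1)}d^{O(\degbound)}$, so that the resulting sample complexity takes the claimed form. The fact that $\cube{d}$ is bounded is precisely what removes the tail-truncation argument needed in the Gaussian analysis and reduces the generalization step to off-the-shelf uniform convergence for bounded, Lipschitz losses over a finite-dimensional linear class.
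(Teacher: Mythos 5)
Your proposal is correct and follows essentially the same route as the paper's own proof: Parseval plus the triangle inequality for the coefficient bound, Hoeffding plus a union bound over multi-indices for concentration, and the observation that $(y-p(\x))^2$ is uniformly bounded by $\pbound^{O(1)}d^{O(\degbound)}$ on $\cube{d}$ followed by a standard Rademacher-complexity generalization bound (citing the same Theorems 5.5 and 10.3 of \cite{mohri2018foundations}) for uniform convergence. The only differences are cosmetic: you make the multilinearization step and the Lipschitz/contraction reasoning explicit where the paper leaves them implicit.
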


\begin{proof}
    We will prove each part of the Lemma separately.

    \noindent\textit{Part \ref{item:uniform-coefficient-bound}.} Suppose that $\pup,\pdown$ are $1$-sandwiching polynomials for some concept $\concept\in\C$ with degree at most $\degbound$. Then, we have the following.
    \begin{align*}
        \|\pdown\|_{L_2(\Dgeneric)} &\le \|\pup-\concept\|_{L_2(\Dgeneric)} + \|\concept\|_{L_2{\Dgeneric}} \\
        &\le \|\pup-\pdown\|_2 + 1 \le 2
    \end{align*}
    Since $\Dgeneric$ is the uniform distribution, the quantity $\|\pdown\|_{L_2(\Dgeneric)}^2$ equals to the sum of the squares of the coefficients of $\pdown$ (see e.g. \cite{o2014analysis}). Therefore, each coefficient of $\pdown$ is absolutely bounded by $2$.

    \noindent\textit{Part \ref{item:uniform-concentration}.} Suppose that $\mindex\in\{0,1\}^d$ with $\|\mindex\|_1 \le k$. For a sample $\Sunlabelled$ from $\Dgeneric$, we apply Hoeffding's inequality on the random variable $z = |\E_\Sunlabelled[\x^{\mindex}] - \E_{\Dgeneric}[\x^\mindex]|$ and by observing that $\x^\mindex\in\{\pm 1\}$ we have that the probability that $z>\Delta$ is at most $2\exp(-|\Sunlabelled|\Delta^2/10)$. We obtain the desired result by a union bound.

    \noindent\textit{Part \ref{item:uniform-generalization}.} We define $\Polys$ to be the class of polynomials over $\cube{d}$ with degree at most $\degbound$ and coefficients that are absolutely bounded by $\pbound$. Let $T>0$ to be disclosed later and $m = |\Slabelled|$. We will show that with probability at least $1-\delta$ over the choice of $\Slabelled$, we have 
    \[
        |\E_{\Dgenericjoint}[(y-p(\x))^2] - \E_{\Slabelled}[(y-p(\x))^2]| \le \eps \text{ for all }p\in \Polys
    \]
    We apply some standard uniform convergence argument, by observing that $(y-p(\x))^2 \le 2+2\pbound^2d^k$. In particular by standard arguments using Rademacher complexity bounds for bounded losses (see, e.g., Theorems 5.5 and 10.3 in \cite{mohri2018foundations}) we obtain the desired result.
\end{proof}

% \subsection{Agnostic TDS Learner for Homogeneous Halfspaces}\label{appendix:improved tds for halfspaces agnostic}

% We now prove \Cref{theorem:improved tds for halfspaces agnostic} which we restate here for convenience.

\end{document}